\RequirePackage{etex}
\documentclass[11pt,letterpaper]{article}

\usepackage[margin=1in]{geometry}
\usepackage{amsmath,amssymb,amsfonts,amsthm}
\usepackage{mathtools}
\usepackage{microtype}
\usepackage{graphicx}
\usepackage{booktabs,array}
\usepackage{float}
\usepackage[dvipsnames,table]{xcolor}
\usepackage{tikz}
\usetikzlibrary{arrows.meta,positioning,fit,calc}
\usepackage{enumitem}
\usepackage{xspace}
\usepackage{natbib}
\bibpunct{[}{]}{,}{a}{,}{,}
\usepackage{url}
\usepackage{authblk}

\usepackage{tcolorbox}
\usepackage{hyperref}
\usepackage{cleveref}
\newcommand{\declarecolor}[2]{\definecolor{#1}{RGB}{#2}\expandafter\newcommand\csname #1\endcsname[1]{\textcolor{#1}{##1}}}
\declarecolor{White}{255, 255, 255}
\declarecolor{Black}{0, 0, 0}
\declarecolor{Maroon}{128, 0, 0}
\declarecolor{Coral}{255, 127, 80}
\declarecolor{Red}{182, 21, 21}
\declarecolor{LimeGreen}{50, 205, 50}
\declarecolor{DarkGreen}{0, 80, 0}
\declarecolor{Purple}{146, 42, 158}
\declarecolor{Navy}{0, 0, 128}
\declarecolor{LightBlue}{84, 101, 202}
\definecolor{mydarkblue}{rgb}{0,0.08,0.45}

\usepackage{tabularx}
\usepackage{xcolor}	

\definecolor{CardinalRed}{HTML}{C41E3A}
\definecolor{Dartmouth}{HTML}{00693E}
\definecolor{SapphireBlue}{HTML}{0F52BA}

\colorlet{MyRed}{CardinalRed}
\colorlet{MyGreen}{Dartmouth}

\colorlet{MyLightRed}{MyRed!25}
\colorlet{MyLightGreen}{MyGreen!25}

\colorlet{AlertColor}{MyRed}	
\colorlet{BadColor}{MyRed}	
\colorlet{FocusColor}{MyRed}	
\colorlet{GoodColor}{MyGreen}	
\colorlet{MacroColor}{MyRed}	

\hypersetup{ %
    hypertexnames=false,
    pdftitle={},
    pdfkeywords={},
    pdfborder=0 0 0,
    pdfpagemode=UseNone,
    colorlinks=true,
    linkcolor=MyRed,
    citecolor=MyGreen,
    filecolor=Purple,
    urlcolor=blue,
}

\usepackage{tcolorbox}
\definecolor{TheoremBlueFrame}{RGB}{25,92,170}
\definecolor{LemmaGreenFrame}{RGB}{46,125,50}
\definecolor{TheoremBlueBack}{RGB}{245,249,255}
\definecolor{LemmaGreenBack}{RGB}{247,252,247}

\usepackage{dsfont}

\tcbuselibrary{breakable}

\newenvironment{definitionbox}
  {\begin{tcolorbox}[
    colframe=MyRed,
    colback=red!5,
    boxsep=0pt,
    top=3pt,
    bottom=3pt,
    left=4pt,
    right=4pt,
    before upper={
      \setlength{\abovedisplayskip}{0pt}%
      \setlength{\belowdisplayskip}{0pt}%
      \setlength{\abovedisplayshortskip}{0pt}%
      \setlength{\belowdisplayshortskip}{0pt}%
    }
  ]}
  {\end{tcolorbox}}

\newenvironment{redstatementbox}
  {\begin{tcolorbox}[
    breakable,
    colframe=MyRed,
    colback=red!5,
    top=2pt,
    left=2pt,
    right=2pt,
    bottom=2pt
  ]}
  {\end{tcolorbox}}

\newenvironment{statementbox}
  {\begin{tcolorbox}[
    colframe=TheoremBlueFrame,
    colback=TheoremBlueBack,
    top=2pt,
    left=2pt,
    right=2pt,
    bottom=2pt
  ]}
  {\end{tcolorbox}}

\allowdisplaybreaks
\setlist[itemize]{leftmargin=2em,itemsep=2pt,topsep=4pt}
\setlist[enumerate]{leftmargin=2em,itemsep=2pt,topsep=4pt}

\theoremstyle{plain}
\newtheorem{theorem}{Theorem}[section]
\newtheorem{lemma}[theorem]{Lemma}
\newtheorem{proposition}[theorem]{Proposition}
\newtheorem{corollary}[theorem]{Corollary}

\theoremstyle{definition}
\newtheorem{definition}[theorem]{Definition}

\theoremstyle{remark}
\newtheorem{remark}[theorem]{Remark}

\newcommand{\PPAD}{\ensuremath{\mathsf{PPAD}}\xspace}
\newcommand{\PPADS}{\ensuremath{\mathsf{PPADS}}\xspace}
\newcommand{\PPA}{\ensuremath{\mathsf{PPA}}\xspace}
\newcommand{\PLS}{\ensuremath{\mathsf{PLS}}\xspace}
\newcommand{\SOPL}{\ensuremath{\mathsf{SOPL}}\xspace}
\newcommand{\EOPL}{\ensuremath{\mathsf{EOPL}}\xspace}

\newcommand{\FIXP}{\ensuremath{\mathsf{FIXP}}\xspace}
\newcommand{\LinFIXP}{\textnormal{\textsc{Linear-FIXP}}\xspace}
\newcommand{\SoL}{\textnormal{\textsc{Sink-of-Line}}\xspace}
\newcommand{\PIN}{\textnormal{\textsc{Positive-Index Nash}}\xspace}
\newcommand{\PINNR}{\textnormal{\textsc{Positive-Index or Nonregular Nash}}\xspace}
\newcommand{\NE}{\ensuremath{\mathrm{NE}}}
\newcommand{\supp}{\operatorname{supp}}
\newcommand{\BR}{\operatorname{BR}}
\newcommand{\ind}{\operatorname{ind}}
\newcommand{\sign}{\operatorname{sign}}

\newcommand{\one}{\mathbf{1}}
\newcommand{\zero}{\mathbf{0}}

\newcommand{\poly}{\operatorname{poly}}
\newcommand{\transpose}{\mathsf{T}}

\title{\fontsize{18}{22}\selectfont Finding a Positive Index Nash Equilibrium is PPADS-Complete}

\author[1,2]{Andreas Kontogiannis}
\author[3]{Ioannis Panageas}
\author[2,4]{Vasilis Pollatos}
\author[3]{Jingming Yan}

\affil[1]{National Technical University of Athens}
\affil[2]{Archimedes, Athena Research Center, Greece}
\affil[3]{University of California, Irvine}
\affil[4]{National and Kapodistrian University of Athens}

\date{}

\begin{document}
\maketitle

\begin{abstract}
Every nondegenerate bimatrix game has a Nash equilibrium of Shapley index $+1$, since all equilibria are isolated, have index $\pm1$, and their indices sum to $+1$. We prove that the following promise search problem is $\PPADS$-complete: given a rational bimatrix game promised to be nondegenerate, find an exact Nash equilibrium of index $+1$. The nondegeneracy promise can be removed if nonregular equilibria are also accepted: finding either a regular Nash equilibrium of index $+1$ or a nonregular Nash equilibrium in an arbitrary rational bimatrix game is also $\PPADS$-complete. To our knowledge, this is the first $\PPADS$-complete equilibrium search problem whose instances are explicit rational normal form payoff matrices, rather than succinct circuits or Turing machines, and thereby addresses an open question posed by Daskalakis
\citep{Daskalakis2019}.
\end{abstract}

\newpage

\tableofcontents

\newpage

\section{Introduction}
Nash equilibrium is the standard solution concept in noncooperative game
theory \citep{Nash1951}. In a bimatrix game, two players choose probability
distributions over finite sets of actions, and a Nash equilibrium is a pair of
strategies from which neither player can profitably deviate. Nash's theorem
guarantees that an equilibrium always exists, but it does not specify which
equilibrium should be selected when there are several. For nondegenerate
bimatrix games, a classical way to distinguish among equilibria is the
\emph{index} introduced by Shapley \citep{Shapley1974}; see also the
determinant formulation of \citet[Definition~11 and
Theorem~13]{vonStengel2021}. Every equilibrium then has index either $+1$ or
$-1$, and the sum of the indices of all equilibria is $+1$
\citep{Shapley1974,vonStengel2021}. In particular,
every nondegenerate game has at least one equilibrium of index $+1$.
Positive index is also an established equilibrium selection criterion with
independent game theoretic interpretations
\citep{VonSchemdeVonStengel2008,GovindanLarakiPahl2023}.

The index is closely connected to the geometry of the Lemke--Howson algorithm
\citep{LemkeHowson1964}. After adjoining the artificial point $(0,0)$ and
assigning it orientation $-1$, the complementary-pivot graph associated with each
fixed missing label forms paths and cycles, and the two endpoints of every
path have opposite index \citep{Shapley1974}. Thus the index
is not merely an algebraic label attached to an equilibrium: it determines the
direction in which the classical equilibrium path enters or exits that
equilibrium. This raises a basic computational question:

\begin{center}
    \emph{How hard is it to compute a Nash equilibrium with positive index?}
\end{center}

The unrestricted problem of computing a Nash equilibrium of a bimatrix game
is \PPAD-complete \citep{ChenDengTeng2009}. The class \PPAD, introduced by
Papadimitriou \citep{Papadimitriou1994}, is based on the following directed
parity argument.
An implicitly represented directed graph with indegree and outdegree at most
one is given together with a distinguished source; the task is to find another
source or a sink. The sink only version of this principle defines the class
\PPADS: given the same input, the output must be a sink. Since every sink is a
valid end of line solution, $\PPAD\subseteq\PPADS$. No reverse inclusion is
known. Beame et al.\ establish generic relativized separations among these
search classes \citep{BeameEtAl1998}, showing that such a reverse inclusion
cannot be obtained by relativizing techniques.

Despite its long history, \PPADS\ remains considerably less understood
through natural complete problems than \PPAD. Comparatively few complete
problems are known for the class, and its standard representatives such as
\SoL, Positive Sperner, the circuit defined Injective Pigeonhole problem,
and their variants closely expose the underlying directed parity argument
\citep{Papadimitriou1994,BeameEtAl1998,GoosEtAl2024}. Indeed, Beame, Cook, Edmonds, Impagliazzo, and
Pitassi \citep{BeameEtAl1998} explicitly described Positive Sperner as a natural
\PPADS-complete problem. Its computational formulation is nevertheless
succinct: the coloring of an exponentially large triangulation is specified
intensionally rather than listed as explicit finite data. Friedl, Ivanyos,
Santha, and Verhoeven
\citep{FriedlEtAl2006} gave a \(\PPADS\)-complete locally two-dimensional
Sperner problem on a specially constructed surface. Its input, however,
contains a Turing machine that succinctly labels an exponentially large
triangulation. Similarly, the multiple-source and multiple-sink variants
studied by Hollender and Goldberg \citep{HollenderGoldberg2018} remain
succinct path problems. This distinction between complete problems defined
by concise representations and familiar problems on explicitly presented
data is part of the broader representation-aware study of total search
\citep{GoldbergPapadimitriou2018}.
In his ICM lecture, \citet[Open Question~8]{Daskalakis2019} asked for a
natural $\PPADS$-complete problem whose input contains neither a circuit nor
a Turing machine.

We answer this question with a classical equilibrium selection problem.
We prove that computing an exact positive index Nash equilibrium of an
explicitly given nondegenerate bimatrix game is $\PPADS$-complete.
The input consists only of the two rational payoff matrices of an
ordinary bimatrix game, and the sign of an equilibrium plays exactly the role
of the direction of an endpoint: negative index equilibria correspond to
sources, whereas positive index equilibria correspond to sinks.  Restricting the output of
equilibrium computation from an arbitrary Nash equilibrium to one of index
$+1$ therefore changes the canonical parity problem from
\textsc{End-of-Line} to its sink only counterpart.

The nondegeneracy promise is essential to the formulation.  It makes every
equilibrium isolated and regular, and hence gives every equilibrium a point
index in $\{-1,+1\}$.  A degenerate game can instead have an equilibrium
continuum: the zero payoff game, for example, has no isolated equilibrium to
which the point-index definition applies.  The promise is also algorithmically
substantive: deciding degeneracy is NP-complete, or equivalently deciding
nondegeneracy is coNP-complete, even for sparse bimatrix games \citep{Du2013}.
We therefore treat nondegeneracy as part of the input specification.  The hardness reduction outputs a globally
nondegenerate rational game.

It is natural to ask whether the nondegeneracy promise can be removed while
retaining a genuine total search problem.  A standard approach in the
complexity literature is to totalize a promise problem by admitting
\emph{violations} as alternative solutions: besides the intended solution,
the output relation accepts an efficiently verifiable witness that the
promise has failed
\citep{DaskalakisPapadimitriou2011,
FearnleyGordonMehtaSavani2020,
FearnleyGoldbergHollenderSavani2023}.
Following this convention, we define a promise free version of our problem
on arbitrary rational bimatrix games.  A valid output is either a regular
Nash equilibrium of index \(+1\), or a nonregular Nash equilibrium, which
serves as a violation of the nondegeneracy condition relevant to our
indexed formulation.  Our result therefore also implies that every rational
bimatrix game has either a regular Nash equilibrium of index \(+1\) or a
nonregular Nash equilibrium; equivalently, if all of its Nash equilibria are
regular, then at least one has index \(+1\).  This yields a genuine total
search problem without weakening the hardness result: the games produced by
our reduction are globally nondegenerate, so no violation can occur on the
hard instances.

To our knowledge, this is the first $\PPADS$-complete equilibrium search
problem whose instances are explicit rational normal form payoff matrices.
Positive Sperner is also a natural complete problem, but its input
succinctly specifies the coloring of an exponentially large triangulation
\citep{BeameEtAl1998}.

\subsection{Our results}
We study the following exact search problem.

\medskip
\noindent\PIN
\begin{itemize}
    \item \emph{Input:} Two rational matrices $(A,B)$ defining a bimatrix
    game, under the promise that the game is nondegenerate.
    \item \emph{Output:} An exact Nash equilibrium $(x,y)$ satisfying
    $\ind_{A,B}(x,y)=+1$.
\end{itemize}

\medskip
\noindent
In this paper, we prove the following theorem.

\begin{statementbox}
\begin{theorem}[Main theorem]\label{thm:main-intro}
\PIN\ is $\PPADS$-complete.  Its membership
reduction is required only on promised inputs, and its hardness reduction
always outputs promised, globally nondegenerate games.
\end{theorem}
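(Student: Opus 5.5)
The proof has two parts, membership and hardness.

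\emph{Membership.} I reduce a promised nondegenerate game $(A,B)$ to \SoL\ using the Lemke--Howson graph for a fixed missing label, say label $1$. The vertices are the almost-complementary vertex pairs of the two best-response polytopes, encoded by their label sets, so each is a polynomial-size string and membership is checkable by exact rational linear algebra. Nondegeneracy makes every vertex simple, so the pivot graph has degree at most two. The artificial equilibrium $(0,0)$ and the genuine equilibria are exactly the degree-one vertices.

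To direct the graph I use the Shapley orientation, namely the sign of the determinant formulation of \citet{vonStengel2021}. This orientation is locally computable and is preserved along complementary pivots. With $(0,0)$ assigned orientation $-1$, the standard fact that path endpoints have opposite index \citep{Shapley1974} lets me orient the edges so that $(0,0)$ is the distinguished source, index $-1$ equilibria are sources, and index $+1$ equilibria are sinks. Any sink returned by \SoL\ is then an equilibrium with index $+1$, and its index can be verified directly.

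The fiddly step is the circuit-level encoding. Strings that are not valid vertices must become isolated points, and the successor and predecessor circuits must agree on the orientation. I handle this by computing each vertex's sign from the determinant and the label parity, which follows the Todd/Shapley conventions. Since all of this is needed only on promised inputs, degenerate inputs can be mapped arbitrarily.

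\emph{Hardness.} This is the main obstacle. I reduce from \SoL, or more conveniently from a \PPADS-complete variant such as Positive Sperner or an oriented end-of-line problem. I follow the \citet{ChenDengTeng2009} pipeline: circuit, then generalized circuit, then polymatrix game, then bimatrix game. The difficulty is that this pipeline only preserves the set of equilibria, not their indices, and that it produces approximate and degenerate games.

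My plan is to use the route through a Brouwer/fixed-point encoding in which the \SoL\ instance's sinks correspond to fixed points of positive local degree and spurious fixed points always come in cancelling pairs. I would then realize the resulting piecewise-linear map exactly as a bimatrix game, via a Linear-FIXP-style exact construction such as the Etessami--Yannakakis reduction of linear fixed points to bimatrix Nash. I also need to track the index through each gadget, using the fact that the index equals the local degree of the fixed point of an associated map and is multiplicative under the gadget compositions.

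The hardest part is guaranteeing that every index $+1$ equilibrium decodes to a sink. Gadget-internal equilibria must either be unique with a forced index, or the whole construction must be index-preserving up to a global sign that I can fix by adding a dummy-strategy flip. I would first try to make every gadget have a unique equilibrium that is a regular point with index $+1$, so that the global index is the product of the gadget indices and the index of the encoded fixed point.

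Finally I enforce global nondegeneracy by a generic rational perturbation with polynomial bit length, for example a lexicographic $\varepsilon$-perturbation with $\varepsilon$ symbolic, made concrete. This preserves regular equilibria, their indices and their decodings. The decoding map from an equilibrium back to a \SoL\ vertex must be polytime and exact.
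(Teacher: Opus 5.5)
Your membership half is essentially the paper's argument (Lemma~\ref{lem:oriented-lh} and Theorem~\ref{thm:membership}). The hardness half has a genuine gap. It restates the requirement that every index $+1$ equilibrium decode to a sink, but the mechanisms you propose do not deliver it.

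At the fixed-point stage, ``spurious fixed points always come in cancelling pairs'' is exactly what must be excluded. Each such pair contains an index $+1$ fixed point that corresponds to no sink, and the solver is free to return it. You need a map whose fixed points are \emph{exactly} the sinks and noncanonical sources, with index $+1$ and $-1$ respectively. There must be none at the canonical source and none created on $\partial[0,1]^d$ by clipping.

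The paper obtains this from an explicit route coloring with Kuhn interpolation (Theorem~\ref{thm:axis-unconditional-map}):
\begin{enumerate}
\item away from endpoints, every simplex misses a color;
\item each sink and noncanonical source has exactly one fully colored simplex, and its orientation fixes the sign;
\item attaching the canonical route destroys the only fully colored simplex of the box $B_\star$;
\item the boundary colors make the field point weakly inward, with no zero on the boundary.
\end{enumerate}
Equally important, the circuit computing the map is built so that no max gate has a tie at any fixed point. Your plan never mentions this property, and every later stage depends on it.

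The passage from the map to a game is where your plan fails outright. The \citet{ChenDengTeng2009} pipeline is approximate and produces degenerate games. The claim that ``the global index is the product of the gadget indices'' has no basis. An equilibrium of the composite game is a single point, and its index is the sign of one bordered determinant of its whole support system. A gadget having a unique equilibrium in isolation says nothing about the equilibria of the composite game. An exact \LinFIXP-to-game reduction likewise controls neither the index nor the absence of extra equilibria. A ``dummy flip'' is no substitute for an exact sign identity.

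The paper's chain is as follows.
\begin{enumerate}
\item Strictness makes the one-block LCP of \citet{Mehta2018} strictly complementary. Each solution therefore has a single active block, and $\det M_{II}=\det(I_d-D\Phi)$ by Sylvester's identity and $\det A_{II}=1$ (Lemma~\ref{lem:lcp-fixed-point-determinant}).
\item The anchor gates $p_\ell,o_\ell$ force the strategy $\star$ into every symmetric equilibrium (Lemma~\ref{lem:star-positive}). So the game has no equilibria besides the LCP solutions.
\item In the imitation game $(S,I_N)$, the bordered determinants give $\ind=\sign\det M_{II}$ exactly (Lemma~\ref{lem:lcp-game-index}).
\end{enumerate}

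Finally, your perturbation step is unsound as stated. A small generic perturbation preserves regular equilibria and their indices. However, a nonregular equilibrium or equilibrium component can split into several equilibria of both signs, and those of index $+1$ decode to nothing. The step is safe only under two conditions. First, every equilibrium before perturbation must already be regular and quasi-strict, which the paper derives from strictness. Second, one must prove that no new supports appear, as in Lemma~\ref{lem:effective-perturbation}. A plan that tolerates gadget-internal or spurious equilibria provides neither.
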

\end{statementbox}

The nondegeneracy promise can be removed if nonregular equilibria are also
accepted.  Here regularity concerns the returned equilibrium, not the
entire game; its precise definition is given in
Section~\ref{sec:preliminaries}.  We consider the following problem.

\medskip
\noindent\PINNR
\begin{itemize}
    \item \emph{Input:} Two rational matrices $(A,B)$ defining an arbitrary
    bimatrix game.
    \item \emph{Output:} An exact Nash equilibrium $(x,y)$ that is either regular with $\ind_{A,B}(x,y)=+1$, or
    nonregular.
\end{itemize}

\begin{statementbox}
\begin{corollary}\label{cor:positive-or-nonregular}
\PINNR\ is $\PPADS$-complete.
\end{corollary}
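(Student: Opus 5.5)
The proof has two directions. Hardness comes from Theorem~\ref{thm:main-intro}, and membership comes from reducing an arbitrary game to a search problem in which every accepted end of line is a solution of \PINNR.

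\emph{Hardness.} Take the hardness reduction of Theorem~\ref{thm:main-intro}. It maps any \PPADS\ instance to a rational bimatrix game $(A,B)$ that is globally nondegenerate. In a nondegenerate game every Nash equilibrium is regular, so no game produced by the reduction has a nonregular equilibrium. Any output of \PINNR\ on $(A,B)$ is therefore a regular equilibrium of index $+1$, that is, a valid \PIN\ solution. The solution-mapping of Theorem~\ref{thm:main-intro} then sends it back to a sink of the original instance. This also uses the fact that \PINNR\ has no promise, so the same instance map applies verbatim.

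\emph{Membership.} This direction carries the real content, since the input may be degenerate. My plan is to reuse the Lemke--Howson based \SoL\ construction behind the membership part of Theorem~\ref{thm:main-intro}, with a symbolic (lexicographic) perturbation to handle degeneracy.

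\begin{enumerate}
\item Fix a missing label and run lexicographic Lemke--Howson, i.e.\ Lemke--Howson on a lexicographically perturbed game $(A_\varepsilon,B_\varepsilon)$ with $\varepsilon$ symbolic. Each vertex is encoded by a basis, which has polynomial size and can be checked in polynomial time.
\item The perturbed game is nondegenerate. So the complementary-pivot graph consists of paths and cycles, and orienting it by the determinant sign of the basis gives a \SoL\ instance. The artificial equilibrium $(0,0)$ is the distinguished source, and sinks are perturbed equilibria of index $+1$.
\item Given a sink basis, set $\varepsilon=0$ to obtain an exact equilibrium $(x,y)$ of $(A,B)$. If $(x,y)$ is nonregular, output it directly. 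It remains to handle the case where $(x,y)$ is regular.
\end{enumerate}

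The main obstacle is that regular case: showing that $(x,y)$ then has index $+1$. I would argue as follows. If $(x,y)$ is regular, its support pair is square and the corresponding payoff submatrices are nonsingular, so the basis at $\varepsilon=0$ is uniquely determined by $(x,y)$. The unperturbed index $\ind_{A,B}(x,y)$ is defined as the sign of a determinant built from that basis. This sign is continuous in $\varepsilon$ near $0$, so it coincides with the lexicographic index of the sink, which is $+1$. The point needing care is that regularity, as defined in Section~\ref{sec:preliminaries}, must rule out the sink basis containing a degenerate zero variable whose presence would change the determinant. I would therefore phrase regularity precisely as "the equilibrium is the unique solution of its complementary system with nonzero determinant", and verify that this matches the paper's definition.

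The remaining pieces are routine. Nonregularity can be checked in polynomial time by rank tests on the support submatrices, so \PINNR\ lies in TFNP. Totality follows from the same argument, since a \SoL\ instance always has a sink.
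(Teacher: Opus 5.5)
Your hardness direction is the paper's, and your membership plan has the same architecture: perturb to a nondegenerate game, run signed Lemke--Howson, evaluate the sink basis at the unperturbed parameter, and output the result if it is nonregular. Using a symbolic lexicographic perturbation instead of the paper's explicit right-hand-side perturbation $q(t)=\one+(t,\ldots,t^d)^{\transpose}$ at $t=2^{-\kappa}$ is workable. The cost is that you cannot cite the nondegenerate membership theorem as a black box and must redo the oriented pivoting with a lexicographic ratio test.

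The gap is in the regular case, which is the only step with real content. You infer that because the decoded $(x,y)$ has a square support pair with nonsingular submatrices, the basis at $\varepsilon=0$ is uniquely determined by $(x,y)$. That inference fails exactly in the situation you are worried about. Suppose the sink basis has supports $(I,J)$ and, at $\varepsilon=0$, both $u_{i_0}$ and $v_{j_0}$ vanish. The decoded point then has supports $(I\setminus\{i_0\},J\setminus\{j_0\})$, which can be square with nonsingular submatrices. Yet at least $d+2$ constraints bind there, the point is the basic solution of at least two complementary bases, and nothing in your argument relates its index to the sign of the sink. Rephrasing regularity as ``the unique solution of its complementary system with nonzero determinant'' does not repair this. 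The definition used by \PINNR\ is fixed, and if ``its complementary system'' means the support system, your rephrasing drops precisely the clause you need.

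The missing observation, which is how the paper closes the argument, is that the strategies in the sink basis remain best responses at $\varepsilon=0$. Write $(u^0,v^0)$ for the basic solution of the sink basis at $\varepsilon=0$. For $i\in I$ the slack $s_i$ is nonbasic, so $(Av^0)_i=1$ holds exactly, while feasibility gives $Av^0\le\one$. Hence $i$ is still a best response, and if $u^0_i=0$ it is an \emph{unused} best response. The same holds for $j\in J$, and an off-support slack that vanishes at $\varepsilon=0$ creates an unused best response directly.

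The third clause of the paper's regularity definition forbids unused best responses. It therefore forces $\supp(x)=I$, $\supp(y)=J$ and strict off-support inequalities, so exactly $d$ constraints bind and the sink basis is the equilibrium's own basis. Only then does your determinant-sign comparison apply, giving $\ind_{A,B}(x,y)=(-1)^{k+1}\sign(\det A_{I,J}\det B_{I,J})=+1$. For the same reason, verifying a \PINNR\ output must test that clause, not only ranks of the support submatrices.
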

\end{statementbox}

The proof is given in Section~\ref{subsec:positive-or-nonregular}.
Hardness follows from the same nondegenerate output games as in
Theorem~\ref{thm:main-intro}.

For membership, we give a polynomial time reduction
\[
   \PINNR\ \leq_p\ \PIN,
\]
whose output game is nondegenerate.  No index is assigned to a nonregular
equilibrium.

Together with the $\PPAD$-completeness of unrestricted bimatrix Nash
equilibrium, this gives a sharp contrast: allowing an arbitrary Nash
equilibrium captures $\PPAD$, whereas requiring a positive index equilibrium
captures the sink only class $\PPADS$.

The output in Theorem~\ref{thm:main-intro} is exact.  This is
important for two reasons.  First, the index is a local orientation and cannot
in general be recovered from a weak approximate equilibrium.  Second, every
regular equilibrium of a rational bimatrix game is rational and has polynomial
encoding length: once its supports are known, its probabilities are obtained
from nonsingular rational linear systems.  Thus exactness does not take the
problem outside the usual discrete search framework.

\begin{table}[t]
\centering
\caption{Directed-parity search and equilibrium computation.}
\label{tab:landscape}
\renewcommand{\arraystretch}{1.12}
\small
\begin{tabular}{@{}>{\raggedright\arraybackslash}p{0.18\linewidth}
                    >{\raggedright\arraybackslash}p{0.23\linewidth}
                    >{\raggedright\arraybackslash}p{0.33\linewidth}l@{}}
\toprule
Search task & Accepted output & Representation / promise & Complexity \\
\midrule
\textsc{End-of-Line} & another source or a sink & succinct successor/predecessor circuits & $\PPAD$-complete \\
Positive Sperner & a positively oriented simplex & succinctly labeled triangulation & $\PPADS$-complete \\
Bimatrix Nash & an arbitrary Nash equilibrium & explicit rational payoff matrices & $\PPAD$-complete \\
\rowcolor{green!8}\PIN & a $+1$ Nash equilibrium & explicit rational payoff matrices; nondegeneracy promise & $\PPADS$-complete \\
\bottomrule
\end{tabular}
\end{table}

Table~\ref{tab:landscape} places the result alongside the canonical directed
parity problems and unrestricted bimatrix equilibrium computation.
Theorem~\ref{thm:main-intro} has two parts with rather different proofs.
Membership follows directly from the oriented Lemke--Howson graph of the
nondegenerate input game.  The hardness proof is sign sensitive: it starts
from \SoL\ and constructs a nondegenerate game in which every equilibrium of
index $+1$ decodes a sink of the input graph.  It is not enough to invoke an
arbitrary reduction from \textsc{End-of-Line} to Nash equilibrium, since a
reduction that merely preserves the existence of a solution may exchange,
merge, or create fixed points of opposite index.  Our reduction instead lets
one follow the determinant defining the local index through every stage.
\begin{remark}[Independent work]
For \textsc{End-of-Line} instances in which no noncanonical vertex is both
a source and a sink our reduction is parsimonious: it gives a bijection between the noncanonical sources and sinks of the
\textsc{End-of-Line} instance and the negative, positive index equilibria of the constructed game.
This resolves an open question raised by \citet{ChenDengTeng2009}.  Ghosh, Goldberg, and Hollender independently and before us
obtained this result in their work on computing a unique Nash
equilibrium~\citep{GhoshGoldbergHollender2026}.  Their manuscript was not
publicly available while our proof was developed and we never reviewed their paper. We are aware of their result from a talk Alexandros Hollender gave. Our reduction and its proof
were obtained independently and our result is stronger: beyond parsimony, our reduction preserves endpoint signs as equilibrium
indices and produces a globally nondegenerate game, ensuring that every
positive index equilibrium decodes a sink and thereby yielding
$\PPADS$-hardness.
\end{remark}
\paragraph{Notation.}
For a positive integer $d$, write $[d]:=\{1,\ldots,d\}$.  For
$t\in\mathbb R$, let $[t]_+:=\max\{t,0\}$.  Given $a,b\in\mathbb R$ with
$a\le b$, define the clipping operator
\[
 [t]_a^b:=\min\{b,\max\{a,t\}\}
          =a+[t-a]_+-[t-b]_+.
\]
For $z\in\mathbb R^d$, both operations are applied coordinatewise; thus
$[z]_a^b:=([z_i]_a^b)_{i\in[d]}$.  In particular, $[z]_0^1$ is the
coordinatewise projection of $z$ onto $[0,1]^d$. We write $e_i$ for the $i$th
standard basis vector, $I_d$ for the $d\times d$ identity matrix, and
$\mathbf 0$ and $\mathbf 1$ for the all-zero and all-one vectors of the
dimension indicated by the context.  For a vector $z$ and a matrix $M$, $z_I$
and $M_{I,J}$ denote their restrictions to the indicated coordinates, with
$M_{II}:=M_{I,I}$; the superscript $\transpose$ denotes transpose.  For a
map $F$, $DF(p)$ denotes its Jacobian at $p$, and
$\sign(c)\in\{-1,+1\}$ denotes the sign of a nonzero scalar $c$. A linear complementarity problem (LCP) asks for vectors \(x,w\geq0\) satisfying
\(w=Mx-b\) and \(x_iw_i=0\) for every coordinate. For nonempty sets \(A,B\subseteq\mathbb R^d\), define their
\(\ell_\infty\)-distance by
$\operatorname{dist}_\infty(A,B)
    :=
    \inf\left\{
        \|x-y\|_\infty :
        x\in A,\ y\in B
    \right\}.$

\subsection{Technical overview}

Figure~\ref{fig:overview}
summarizes the reduction chain and the sign carried through its three hardness
stages.

\begin{figure}[t]
\centering
\begin{tikzpicture}[
  node distance=8mm and 7mm,
  box/.style={draw=black!55,rounded corners=2pt,fill=blue!5,
              minimum height=10mm,text width=28mm,align=center,font=\small},
  arr/.style={-{Latex[length=2mm]},thick,draw=black!65},
  sign/.style={font=\footnotesize,align=center,text=black!75}
]
\node[box] (sink) {\SoL};
\node[box,right=of sink] (fixp) {certified signed\\\LinFIXP\ map};
\node[box,right=of fixp] (lcp) {regular and strictly\\complementary LCP};
\node[box,right=of lcp] (game) {nondegenerate\\bimatrix game};
\draw[arr] (sink) -- (fixp);
\draw[arr] (fixp) -- (lcp);
\draw[arr] (lcp) -- (game);
\node[sign,below=5mm of sink] {$v$ is a sink};
\node[sign,below=5mm of fixp] {sink endpoint: $\ind_\Phi(p)=+1$};
\node[sign,below=5mm of lcp] {$\det M_{II}>0$};
\node[sign,below=5mm of game] {$\ind_{\NE}=+1$};
\end{tikzpicture}
\caption{Overview of the hardness reduction and the preserved positive sign.}
\label{fig:overview}
\end{figure}

\paragraph{Membership via Lemke--Howson.}
After adding a sufficiently large constant to every entry of each player's
payoff matrix, which preserves best responses, Nash equilibria, and their
indices, the two best response polyhedra have the standard labeled form.
For a fixed missing label, the complementary bases and pivots form the
Lemke--Howson graph.  By the nondegeneracy promise, this graph is a disjoint
union of paths and cycles.  Its endpoints are completely labeled vertices:
the artificial point \((0,0)\) and vertices corresponding to Nash
equilibria.  Assigning orientation sign \(-1\) to the artificial endpoint,
the Shapley--Todd determinant orientation gives opposite signs to the two
endpoints of every path and directs the path from its negative endpoint to
its positive endpoint
\citep{Shapley1974,Todd1976,VeghVonStengel2015}.

Each node is represented by a complementary basis.  Exact rational pivot and
determinant computations give its predecessor and successor in polynomial
time, yielding a polynomial-time reduction to \(\SoL\).  In particular, every
sink of the resulting directed graph corresponds to a Nash equilibrium of
index \(+1\).  No perturbation is used in this membership reduction.

\paragraph{A certified signed field.}
Section~\ref{sec:signed-fixed-point-unconditional} constructs a rational
max affine field \(G\) and the map \(\Phi(x)=[x+G(x)/32]_0^1\).
Its fixed points correspond to the sinks and noncanonical sources of a
split graph. They are rational, regular, and at distance at least \(1/8\)
from the boundary. Every max gate in the circuit computing \(\Phi\) is
\emph{strict} at every fixed point, meaning that its two inputs have
unequal values. The index is \(+1\) at a sink and \(-1\) at a
noncanonical source. A polynomial time decoder recovers a solution of
\(\SoL\) from any positive index fixed point. All circuits have
polynomial size, and all rational numbers have polynomial bit length.
These properties form the certified signed
field of Theorem~\ref{thm:axis-unconditional-map}.
Even when a map is affine near a fixed point, an internal max gate
of a circuit computing it may have equal inputs there. Strictness
of every max gate ensures that, in the corresponding LCP solution,
each complementary pair contains exactly one positive variable.
The positive variables determine a principal submatrix of the LCP
matrix, and Section~\ref{sec:fixp-to-lcp} shows that its determinant
has the same sign as the fixed point index.

\paragraph{Construction of the signed field.}
The main idea is to assign colors to the vertices of one fixed grid and
then interpolate affinely, following the discrete approach of
\citet{DaskalakisGoldbergPapadimitriou2009}.
The route determines the local coloring. We
identify the simplices containing all colors and their signs, establish
the boundary conditions, and give an exact circuit for the interpolation.

Each Boolean vertex \(x\) is replaced by
\(x^{\rm in}\to x^{\rm out}\). A verified edge \(x\to S(x)\), with
\(P(S(x))=x\), becomes \(x^{\rm out}\to S(x)^{\rm in}\).
The distinguished source is \((0^n)^{\rm in}\).
A route state uses two registers and one logical phase bit \(\beta\).
Each register is a string of \(n+1\) bits specifying a vertex and its
in or out label. One register retains an endpoint while the other changes
one coordinate at a time. For a split edge \(u\to v\), the route is
\[
 (u,u,0)\leadsto(u,v,0)\longrightarrow(u,v,1)
 \leadsto(v,v,1)\longrightarrow(v,v,0).
\]
Each \(\leadsto\) changes the differing coordinates in increasing order.
For every split vertex \(u\), the connector
\((u,u,1)\to(u,u,0)\) is included, regardless of whether
\(u\) has an incoming or outgoing edge. As a result, a source \(u\)
of the split graph corresponds to the route source \((u,u,1)\),
and a sink \(u\) corresponds to the route sink \((u,u,0)\). Local Boolean
predicates, computed from \(P,S\), recognize valid states and return
their neighbors.

The geometric representation places bits \(0,1\) at levels
\(1/4,3/4\), respectively, and stores the complemented phase
\(1-\beta\). Thus the canonical source has every stored bit at level
\(1/4\). Every connector changes the logical phase from \(1\) to \(0\),
so its stored phase increases from \(0\) to \(1\). Thus the
incident directed segment points in the positive phase direction
at every source and sink. We add one auxiliary coordinate, constant along the route,
giving dimension \(d=2n+4\). It distinguishes the straight
colorings for opposite directions and provides the third
coordinate used at turns.
The resulting directed segments change one coordinate at a time, and
segments without a common endpoint are separated.

We use the grid \(\{0,1/M,\ldots,1\}^d\), where \(M=128\).
Every grid cube is divided into simplices by increasing its coordinates
one at a time, in every possible order, from its lower corner to its
upper corner. This is the standard Kuhn triangulation. A grid vertex of
color \(i\) receives field value \(q_i\), where
\(q_0=-\mathbf1\) and \(q_i=e_i\) for \(i\in[d]\).
Affine interpolation defines a continuous field on the entire cube,
since neighboring simplices use the same values on their common face.

A convex combination of these colors vanishes exactly when every color
has coefficient \(1/(d+1)\). Hence a simplex contains a zero exactly
when its vertices have all \(d+1\) colors, and the zero is its
barycenter. We call such a simplex fully colored. Along straight
segments, ordered tests of the perpendicular coordinates ensure that
every simplex omits a color. This remains true when the order changes.
At a turn from \(\sigma e_i\) to \(\tau e_j\), four fixed tables
handle the choices \(\sigma,\tau\in\{-1,+1\}\) in three coordinates.
Their extension to the remaining coordinates preserves the absence of
a fully colored simplex and agrees with the straight rules at both ends.

Each noncanonical endpoint has exactly one fully colored simplex.
We denote its vertices by \(p_0,\ldots,p_d\), with \(G(p_i)=q_i\).
Its edge matrix \(E\) has column \(i\) equal to the vector from the
vertex of color zero to the vertex of color \(i\). At its barycenter
\(p\), affine interpolation gives
\[
 E=[p_1-p_0\mid\cdots\mid p_d-p_0],\qquad
 A_p=(I_d+\mathbf1\mathbf1^{\transpose})E^{-1},\qquad
 \sign\det(-A_p)=(-1)^d\sign\det E.
\]
Here \(G(x)=A_p(x-p)\) near \(p\). Since
\(\det(I_d+\mathbf1\mathbf1^{\transpose})=d+1\), the matrix \(A_p\)
is nonsingular. At a sink, the unique simplex increments the route
coordinate last. At a source, it increments this coordinate first.
The resulting signs of \(\det E\) are \((-1)^d\) and \((-1)^{d-1}\),
respectively, giving index \(+1\) at sinks and \(-1\) at sources.

The canonical source uses a separate coloring of a box extending from
the origin to one grid step beyond its route center in every coordinate.
Before attaching its outgoing segment,
this box has one fully colored simplex. The attachment changes one
vertex from color zero to the phase coordinate's color, removing that
simplex without creating another. In particular, \(G(0)\) is a unit
vector, so the origin is not fixed. On every lower boundary face color
zero is absent, and on the upper face \(x_i=1\) color \(i\) is absent.
Thus the field points weakly into the cube. A boundary fixed point of
the clipped map would have to be a zero of \(G\), which is impossible
because no boundary simplex has all colors.

The color of any grid vertex is computed in polynomial time from the
local route predicates. The interpolation circuit sorts the thresholds
\(Mx_i-(r-1)\), for \(r\in[M]\), together with \(0,1\).
Consecutive differences give interval lengths, and the intervals inside
\([0,1]\) give the barycentric coefficients. For an interval of length
\(w\), Boolean values are represented by \(0,w\), using
\(\operatorname{NOT}_w(u)=w-u\) and
\(\operatorname{AND}_w(u,v)=\max\{0,2u+2v-3w\}\).
These operations evaluate the color circuit and weight its contribution
without multiplying variable real quantities.

At every simplex barycenter, the fractional coordinates of \(Mx\)
are \(1/(d+1),\ldots,d/(d+1)\) in some order. All sorting inputs
are therefore distinct, and every interval has positive length.
The weighted Boolean gates are strict because they compare zero with
one of \(-3w,-w,w\). The threshold tests are strict as well.
Intervals outside \([0,1]\) retain their full positive lengths during
the computation, so their gates remain strict even though their final
contributions are zero. This proves strictness of the entire circuit.

Finally, \(\|G\|_\infty\le1\), and the interior fixed points satisfy
\(D\Phi(p)=I_d+A_p/32\). Thus
\(\ind_\Phi(p)=\sign\det(-A_p)\), and the additional clipping gates
are strict there. Rounding the stored coordinates to \(1/4\) or
\(3/4\) and complementing the phase recovers the endpoint. A positive
index fixed point gives \((u,u,0)\), where \(u=x^{\rm out}\) and
\(P(S(x))\ne x\). Removing the in or out label returns the required
solution. Sections~\ref{sec:fixp-to-lcp} and~\ref{sec:lcp-to-game}
use only these certified properties.

\paragraph{An exact \LinFIXP to LCP reduction.}
Building on the circuit to complementarity reduction of
\citet{Mehta2018}, we construct an LCP using a single set of complementarity
variables and a unit lower triangular matrix \(A\), unrelated to the
local matrices \(A_p\) above.  Working directly with the complementarity conditions of the max gates
avoids auxiliary dual variables and gives a bijection between fixed
points and complete LCP solution vectors. This formulation also makes
the determinant calculation for index preservation particularly direct.

After putting each max gate in the normalized form
\(x_i=\max\{0,\ell_i\}\), strictness at a fixed point \(p\) gives
\(\ell_i(p)\ne0\) for every max gate, where \(\ell_i(p)\) denotes the value
of this affine expression during the circuit evaluation at \(p\).  We denote
\(I=\{i:\ell_i(p)>0\}=\{i:x_i>0\}\).
Identifying the circuit outputs with its inputs imposes the fixed point
equations and produces the LCP matrix \(M=A-UV^{\transpose}\).
Here \(U\) and \(V\) encode the equations identifying the circuit outputs
with the corresponding inputs.  Although \(A\) is lower triangular, the
final matrix \(M\) need not be triangular.  Section~\ref{sec:fixp-to-lcp}
proves the identity
\[
\det M_{II}=\det\!\bigl(I_d-D\Phi(p)\bigr),
\qquad
\sign\det M_{II}=\ind_\Phi(p).
\]
Strictness ensures that the same affine input remains selected
at every max gate in a neighborhood of \(p\). Thus the determinant
identity requires no limiting or approximate argument.

\paragraph{From the LCP to a nondegenerate bimatrix game.}
We first apply Mehta's symmetric game construction and then the imitation game
correspondence of \citet{McLennanTourky2010}.  For an \(m\)-dimensional LCP
with \(M\in\mathbb{R}^{m\times m}\) and \(b\in\mathbb{R}^m\), satisfying
\(x\ge0\), \(Mx-b\ge0\), and \(x^{\transpose}(Mx-b)=0\), we define
\[
S=
\begin{pmatrix}
-M & b+\mathbf1\\
0^{\transpose} & 1
\end{pmatrix}.
\]
The first $m$ strategies correspond to the LCP coordinates, while the last row
and column introduce one additional anchoring strategy, denoted by $\star$.
This first produces the symmetric game \((S,S^{\transpose})\).
Lemma~\ref{lem:star-positive} uses the identities that anchor the distinguished
output coordinates, together with the unit lower triangular structure of
\(A\), to show that the additional strategy \(\star\) has positive probability
in every symmetric equilibrium.
Thus, an LCP solution \(x\) corresponds to
\(y(x)=(x,1)/(1+\mathbf1^{\transpose}x)\), and \(x\) is recovered by
dividing the first \(m\) coordinates of \(y(x)\) by \(y(x)_\star\).

We denote \(I(x)=\{i:x_i>0\}\) and \(K(x)=I(x)\cup\{\star\}\).  Strict
complementarity gives
\(\operatorname{supp}y(x)=\operatorname{BR}_S(y(x))=K(x)\), where $\operatorname{BR}_{S}(y)$ denotes the set of pure row-player best
responses to $y$ in the payoff matrix $S$.  We then
consider the imitation game \((S,I_{m+1})\).
Lemma~\ref{lem:full-imitation} gives a bijection between LCP solutions and
Nash equilibria \((u(x),y(x))\) of this game.  The equilibrium index
calculation gives
\[
\ind_{S,I_{m+1}}\bigl(u(x),y(x)\bigr)
=
\sign\det M_{I(x),I(x)}
=
\ind_\Phi(p).
\]

Every equilibrium of the imitation game is regular and has strict
best response inequalities outside its support, but the game may
still be degenerate at other mixed strategies.
Section~\ref{subsec:global-perturbation} first shifts the row payoffs to be positive and, for a
small parameter $\varepsilon>0$, divides row $i$ by $1+\varepsilon^i$. Any violation of nondegeneracy would force
one of a finite family of nonzero polynomials in \(\varepsilon\)
to vanish. Uniform degree and coefficient bounds give an explicit
positive rational \(\varepsilon\) of polynomial bit length for
which the game is nondegenerate and the inequalities and determinant
signs for every original equilibrium support are preserved.
The proof also excludes additional supports, giving a bijection
between equilibria that preserves supports and indices.

The row and column supports of an equilibrium of the perturbed game are equal
to a set \(K\) containing \(\star\).  We set \(I=K\setminus\{\star\}\) and
recover the original LCP solution exactly by solving \(M_{I,I}x_I=b_I\) and
setting \(x_{[m]\setminus I}=0\).
If the equilibrium has index \(+1\), the corresponding fixed point also has
index \(+1\), and the decoder recovers a sink of the initial \(\SoL\)
instance in polynomial time.

\subsection{Further related work}
\paragraph{Total search and \PPADS.}
Megiddo and Papadimitriou \citep{MegiddoPapadimitriou1991} initiated the
systematic study of total NP search problems, and Papadimitriou
\citep{Papadimitriou1994} introduced the polynomial parity-argument classes.
The class now called \PPADS originally appeared as the positive, or sink-only,
version of the directed parity principle; the terminology \PPADS was adopted
in the subsequent study of the relative complexity of NP search problems by
Beame, Cook, Edmonds, Impagliazzo, and Pitassi \citep{BeameEtAl1998}. Recent
work has clarified the structure around the class. Hollender and Goldberg
\citep{HollenderGoldberg2018} studied multiple-source and multiple-sink variants
of \textsc{End-of-Line}; in particular, insisting on at least one sink retains
\PPADS-completeness. G{\"o}{\"o}s, Hollender, Jain, Maystre, Pires, Robere, and
Tao \citep{GoosEtAl2024} proved the companion identity
$\SOPL=\PLS\cap\PPADS$, alongside their collapse result for
$\EOPL=\PLS\cap\PPAD$. These results further emphasize that selecting a sink,
rather than an arbitrary endpoint, is a genuine computational requirement.

Geometric formulations of directed parity go back to Sperner's lemma and its
oriented variants. Positive Sperner is a canonical complete problem for
\PPADS \citep{Papadimitriou1994,BeameEtAl1998}. Friedl, Ivanyos, Santha, and
Verhoeven \citep{FriedlEtAl2006} constructed locally two-dimensional Sperner
problems complete for \PPAD, \PPADS, and \PPA on specially constructed
surfaces; their \PPADS representative asks for a fully labeled triangle of
one specified orientation. Their sign convention is
the simplicial analogue of the source/sink distinction used here. The crucial
representational difference is that their exponentially large labeled surface
is supplied by a Turing machine, whereas our input consists directly of two
explicit payoff matrices.

\paragraph{Equilibrium computation and the index.}
Lemke and Howson \citep{LemkeHowson1964} gave the classical complementary
pivoting algorithm for bimatrix games. Shapley \citep{Shapley1974} introduced
the equilibrium index and proved that the endpoints of each Lemke--Howson path
have opposite signs.  Harsanyi's oddness theorem and Jansen's analysis of
regular bimatrix equilibria place this sign in the broader stability theory of
finite games \citep{Harsanyi1973,Jansen1981}.  Todd \citep{Todd1976} developed
the determinant orientation for complementary pivots. Savani and von Stengel \citep{SavaniVonStengel2006} showed
that Lemke--Howson paths can be exponentially long. V\'egh and von Stengel
\citep{VeghVonStengel2015} later placed the sign argument in the general
framework of oriented pivoting systems and oriented Euler complexes. Our
membership proof can be viewed as extracting the sink-only computational
content of this classical orientation theory.  V\'egh and von Stengel
explicitly observed that the sink-only class $\PPADS$ corresponds, in an
oriented pivoting system, to asking for a second completely labeled state of
opposite sign \citep{VeghVonStengel2015}.  Thus the connection between sign and
sink-only parity is classical; our contribution is completeness for an
explicit equilibrium problem and a reduction under which \emph{every}
positive index equilibrium decodes a sink.  More recently, Ickstadt,
Theobald, and von Stengel \citep{IckstadtTheobaldVonStengel2025} used the same
orientation to obtain combinatorial bounds on the arrangement and number of
equilibria in nondegenerate bimatrix games.

The positive sign is also an established equilibrium selection criterion,
rather than merely an algorithmic label.  For nondegenerate bimatrix games, a
positive index equilibrium admits a strategic characterization in terms of
becoming the unique equilibrium after suitable one-player strategy extensions
\citep{VonSchemdeVonStengel2008}.  More generally, an isolated equilibrium is
sustainable exactly when it has index $+1$ \citep{GovindanLarakiPahl2023}.
Classical tracing and complementary-pivoting procedures generically select
positive index equilibria \citep{Balthasar2010}.  Computing an equilibrium
selected by a specified homotopy or Lemke--Howson path is nevertheless a
different search task: it is PSPACE-complete
\citep{GoldbergPapadimitriouSavani2013}, whereas our relation accepts every
positive index equilibrium.

The computational complexity of finding an unrestricted Nash equilibrium was
settled through a celebrated sequence of results. Daskalakis, Goldberg, and
Papadimitriou \citep{DaskalakisGoldbergPapadimitriou2009} proved
\PPAD-completeness for sufficiently accurate approximate equilibria in
three-player games, and Chen, Deng, and Teng \citep{ChenDengTeng2009}
established exact \PPAD-completeness for bimatrix games. Their
reductions permit one to decode a valid solution of the underlying
\textsc{End-of-Line} instance from an equilibrium,
but they do not establish an all-equilibria bijection that preserves local
index. Our result is orthogonal: the game still has only two players, but the
allowed output is restricted according to its local orientation.

\paragraph{Linear fixed points, complementarity, and imitation games.}
Etessami and Yannakakis \citep{EtessamiYannakakis2010} introduced \FIXP and
showed that its piecewise-linear fragment \LinFIXP coincides with \PPAD.
Mehta \citep{Mehta2018} strengthened this connection and developed direct
reductions from low-dimensional \LinFIXP to LCPs and bimatrix games; for
background on complementarity we refer to \citet{CottlePangStone1992}.  These
reductions led, among other results, to \PPAD-hardness for constant-rank
two-player games. McLennan and Tourky \citep{McLennanTourky2010} showed that
symmetric equilibrium computation can be represented by imitation games of
the form $(C,I)$. We use these constructions for a different purpose. The
strictness of our circuit permits us to identify one active principal minor,
and the determinant lemma and the bordered support formula show that its sign
is preserved through every stage. We then add a quantitative general-position
perturbation because the unperturbed correspondence alone need not produce a
globally nondegenerate game. These index-preservation and nondegeneracy steps
are not consequences of the earlier \PPAD-hardness reductions and constitute
the algebraic core of our hardness proof.
\section{Preliminaries}\label{sec:preliminaries}

\paragraph{Bimatrix games.}
A bimatrix game is specified by two rational matrices
$A,B\in\mathbb{Q}^{m\times n}$.  The row player selects a mixed
strategy $x\in\Delta_m$ and the column player selects a mixed strategy
$y\in\Delta_n$, where
\[
    \Delta_k:=\{z\in\mathbb{R}_{\geq 0}^k:\one^{\transpose} z=1\}.
\]
Their expected utilities are $x^{\transpose} Ay$ and $x^{\transpose} By$, respectively.
For $x\in\Delta_m$ and $y\in\Delta_n$, let
\[
 \BR_2(x):=\operatorname*{arg\,max}_{j\in[n]}x^{\transpose} Be_j,
 \qquad
 \BR_1(y):=\operatorname*{arg\,max}_{i\in[m]}e_i^{\transpose} Ay.
\]

\begin{definition}[Nash equilibrium]\label{def:ne}
 A pair $(x,y)\in\Delta_m\times\Delta_n$ is a Nash equilibrium if
 \[
   \supp(x)\subseteq \BR_1(y)
   \qquad\text{and}\qquad
   \supp(y)\subseteq \BR_2(x).
 \]
 The equilibrium is \emph{exact} when the rational vectors $x$ and $y$
 are returned in binary, rather than through an approximation oracle.
\end{definition}

Adding a constant to every entry of one player's payoff matrix changes
neither best responses nor Nash equilibria.  We shall consequently assume,
whenever we invoke the Lemke--Howson construction, that all entries of $A$
and $B$ are strictly positive.

\begin{definition}[Nondegeneracy]\label{def:nondegenerate}
 A bimatrix game $(A,B)$ is \emph{nondegenerate} if, for every
 $x\in\Delta_m$ and $y\in\Delta_n$,
 \[
     |\BR_2(x)|\leq |\supp(x)|
     \qquad\text{and}\qquad
     |\BR_1(y)|\leq |\supp(y)|.
 \]
\end{definition}

At every equilibrium of a nondegenerate game, the two supports have the
same cardinality.  Indeed, the best response conditions and
Definition~\ref{def:nondegenerate} give
\[
 |\supp(x)|\leq |\BR_1(y)|\leq |\supp(y)|
 \leq |\BR_2(x)|\leq |\supp(x)|.
\]
Thus equality holds throughout.  In particular, if
$I:=\supp(x)$ and $J:=\supp(y)$, then $|I|=|J|$.

\paragraph{The index of an equilibrium.}
For a matrix $C\in\mathbb{R}^{k\times k}$, define its bordered
determinant by
\begin{equation}\label{eq:bordered-determinant}
 \Delta(C):=
 \det\begin{pmatrix}
      C&-\one\\
      \one^{\transpose}&0
 \end{pmatrix}.
\end{equation}
Notice that
\begin{equation}\label{eq:shift-invariance}
       \Delta(C+c\one\one^{\transpose})=\Delta(C)
       \qquad\text{for every }c\in\mathbb{R};
\end{equation}
indeed, start from the bordered matrix defining
$\Delta(C+c\one\one^{\transpose})$ and add $c$ times its last column to
each of its first $k$ columns.  Since the top block of the last column is
$-\one$, this removes $c\one\one^{\transpose}$ and recovers the bordered
matrix defining $\Delta(C)$.

\begin{definition}[Shapley index]\label{def:index}
 Let $(x,y)$ be an equilibrium of a nondegenerate bimatrix game, and let
 $I=\supp(x)$, $J=\supp(y)$, and $|I|=|J|=k$.  Its \emph{index} is
 \begin{equation}\label{eq:index}
  \ind_{A,B}(x,y)
  :=(-1)^{k+1}\sign\!\left(
       \Delta(A_{I,J})\Delta\bigl((B_{I,J})^{\transpose}\bigr)
     \right)\in\{-1,+1\}.
 \end{equation}
\end{definition}

\noindent Here and throughout, restriction precedes transposition: thus
\((B_{I,J})^{\transpose}\) is the transpose of the \(I\) by \(J\) support
submatrix, equivalently \((B^{\transpose})_{J,I}\).  In particular, the second
factor in~\eqref{eq:index} is
\(\Delta\bigl((B_{I,J})^{\transpose}\bigr)\).

The two bordered determinants in~\eqref{eq:index} are nonzero in a
nondegenerate game.  After shifting the payoffs to be strictly positive,
nondegeneracy implies that the displayed labeled inequality descriptions of
the two best response polytopes are nondegenerate, i.e., at every vertex of the best response polytope in \(\mathbb{R}^m\), exactly
\(m\) displayed inequalities bind, and at every vertex of the best response
polytope in \(\mathbb{R}^n\), exactly \(n\) displayed inequalities bind.  In
each case, the outward normals of the binding inequalities are linearly
independent.  Equivalently, no point of the first polytope carries more than
\(m\) labels, and no point of the second carries more than \(n\) labels.  In particular, the underlying geometric
polytopes are simple and no redundant displayed inequality is binding.  At an
equilibrium, nondegeneracy also implies that the two support systems in
\eqref{eq:support-systems} are nonsingular; see
\citet[Theorem~14]{vonStengel2021}.  Their coefficient matrices are exactly
the two bordered matrices whose determinants occur in~\eqref{eq:index}.
Definition~\ref{def:index} is independent of the
ordering of $I$ and $J$, and~\eqref{eq:shift-invariance} shows that it is
unchanged when the payoffs are shifted.  When $A$ and $B$ are strictly
positive, nondegeneracy also implies full rank of $A_{I,J}$ and
$(B_{I,J})^{\transpose}$; thus the inverses below exist, and the definition agrees
with Shapley's formula
\[
 \ind_{A,B}(x,y)
   =-\sign\det\begin{pmatrix}
       0&A_{I,J}\\ (B_{I,J})^{\transpose}&0
     \end{pmatrix}
   =(-1)^{k+1}\sign\bigl(\det A_{I,J}\det B_{I,J}\bigr).
\]
For example, every pure equilibrium has index $+1$.  The equivalence follows
from the Schur-complement identity
\[
 \Delta(A_{I,J})
 =\det(A_{I,J})\one^{\transpose} A_{I,J}^{-1}\one
\]
and its analogue for $B$.  More explicitly, the factor
$\one^{\transpose} A_{I,J}^{-1}\one$ is the reciprocal of the row player's \textit{positive}
equilibrium payoff, and
$\one^{\transpose}\bigl((B_{I,J})^{\transpose}\bigr)^{-1}\one$ is the corresponding reciprocal for
the column player.  We refer to
Shapley~\citep{Shapley1974} and von Stengel~\citep{vonStengel2021} for
equivalent definitions and further properties of the index.

For later use, we call an equilibrium $(x,y)$ \emph{regular} (equivalently,
\emph{locally nondegenerate} in a bimatrix game) if $|I|=|J|$, both bordered matrices in
Definition~\ref{def:index} are nonsingular, and no unused pure strategy is a
best response.  The index~\eqref{eq:index} is then well-defined even if the
game is degenerate elsewhere.  Every equilibrium of a nondegenerate game is
regular.

\paragraph{Exact representations.}
If $I,J$ are the
actual supports of an equilibrium and have common size $k$, that equilibrium
is recovered from the two systems
\begin{equation}\label{eq:support-systems}
 \begin{pmatrix}A_{I,J}&-\one\\\one^{\transpose}&0\end{pmatrix}
 \begin{pmatrix}y_J\\u\end{pmatrix}
   =\begin{pmatrix}\zero\\1\end{pmatrix},
 \qquad
 \begin{pmatrix}(B_{I,J})^{\transpose}&-\one\\\one^{\transpose}&0\end{pmatrix}
 \begin{pmatrix}x_I\\v\end{pmatrix}
   =\begin{pmatrix}\zero\\1\end{pmatrix}.
\end{equation}
In a nondegenerate game these systems are nonsingular.  For arbitrary
equal-size sets $I,J$, the same equations define only a candidate but it is possible that a
coefficient matrix may be singular, or the solution may violate positivity
or an off-support best response inequality.  Cramer's rule and the standard
determinant bound imply that every coordinate of an actual equilibrium has
encoding length polynomial in the encoding length of $(A,B)$.  Conversely,
the Nash inequalities and the determinant signs in~\eqref{eq:index} can all
be verified using exact rational arithmetic in polynomial time.

\paragraph{Linear fixed point maps and their index.}
A rational Linear-$\FIXP$ map $\Phi:[0,1]^d\to[0,1]^d$ is computed by a
straight-line circuit using rational affine operations together with
$\max$.  We say that the circuit is \emph{strict at $p$} if the two inputs
of every max gate have different values when the circuit is evaluated at
$p$, i.e., no max gate has a tie.  At such a point, the branch
selected by each max gate remains unchanged in a neighborhood of $p$.
Replacing each max gate by this selected affine input gives the \emph{active
affine piece} at $p$, an affine formula that agrees with $\Phi$ throughout
that neighborhood. An interior fixed
point $p\in(0,1)^d$ is \emph{regular} if
\[
    \det\!\bigl(I_d-D\Phi(p)\bigr)\ne0.
\]
Its fixed point index is
\begin{equation}\label{eq:fixed-point-index}
    \ind_\Phi(p):=\sign\det\!\bigl(I_d-D\Phi(p)\bigr)
    \in\{-1,+1\}.
\end{equation}
Here $I_d$ is the $d\times d$ identity matrix.
Sections~\ref{sec:fixp-to-lcp} and~\ref{sec:lcp-to-game} prove that this
fixed point index equals the index of the corresponding Nash equilibrium.
Because $I_d-D\Phi(p)$ is a real nonsingular matrix, $
    \ind_\Phi(p)=(-1)^\nu,
$ where $\nu$ is the number of its negative real eigenvalues, counted with
algebraic multiplicity.  Nonreal eigenvalues occur in conjugate pairs, whose
products are positive, and hence do not alter this parity.

\paragraph{The class \PPADS.}
The canonical problem for \PPADS\ is \textsc{Sink-of-Line}.  We use the
following equivalent normalized version.

\begin{definition}[Normalized \textsc{Sink-of-Line}]\label{def:sink}
The input consists of Boolean circuits
$S,P:\{0,1\}^n\to\{0,1\}^n$ satisfying
\[
    P(0^n)=0^n\neq S(0^n)
    \qquad\text{and}\qquad
    P(S(0^n))=0^n.
\]
The goal is to find a string $z\in\{0,1\}^n$ such that
\[
    P(S(z))\neq z.
\]
\end{definition}

There is a directed edge $z\to z'$ precisely when $S(z)=z'$ and
$P(z')=z$; self-loops are allowed.  The additional condition
$P(S(0^n))=0^n$ ensures that the distinguished vertex $0^n$ has a valid
outgoing edge.  The usual formulation does not impose this condition.  If
it fails, then $0^n$ is already a valid solution; the reduction can therefore
output a fixed normalized instance and map every solution of that instance
back to $0^n$.  Thus the normalized and usual formulations are
polynomial-time equivalent.

The class \PPADS\ consists of the search problems that admit a
polynomial-time reduction to \textsc{Sink-of-Line}
\citep{Papadimitriou1994,HollenderGoldberg2018}.  Since \PIN\ is a promise
problem, its membership reduction is required to be correct on nondegenerate
input games, whereas its hardness reduction must produce a nondegenerate
game for every \textsc{Sink-of-Line} instance.
\section{Membership in \texorpdfstring{\PPADS}{PPADS}}\label{sec:membership}

Membership is a signed version of the usual Lemke--Howson argument.  On a
nondegenerate game, Shapley's orientation makes the origin a source and makes precisely the equilibria of index $+1$
sinks. Consequently,
the sink promised by the canonical $\PPADS$ problem decodes to the required
equilibrium.

\subsection{The oriented Lemke--Howson graph}\label{sec:oriented-lh}

Shift each payoff matrix by a rational constant so that $A>0$ and $B>0$
entrywise.  This changes neither best responses nor the index, see
\eqref{eq:shift-invariance}.  Consider the best response polytopes
\begin{equation}\label{eq:br-polytopes}
 \mathcal P:=\{x\in\mathbb{R}^m:x\geq0,\ B^{\transpose} x\leq\one\},
 \qquad
 \mathcal Q:=\{y\in\mathbb{R}^n:y\geq0,\ Ay\leq\one\}.
\end{equation}
They are bounded because the payoff matrices are strictly positive.  We label the inequalities defining $\mathcal P$ and $\mathcal Q$ as follows. Give
the inequality $x_i\geq0$ of $\mathcal P$ and the 
inequality $(Ay)_i\leq1$ of $\mathcal Q$ label $i\in[m]$.  Give the
 inequality $(B^{\transpose} x)_j\leq1$ of $\mathcal P$ and the 
inequality $y_j\geq0$ of $\mathcal Q$ label $m+j$, for $j\in[n]$.  Thus,
with $d=m+n$ and
\[
 C:=\begin{pmatrix}0&A\\B^{\transpose}&0\end{pmatrix},
\]
the product $\mathcal R:=\mathcal P\times\mathcal Q$ can be written as
\[
\mathcal R=\{z\in\mathbb{R}^d:-z\leq0,\ Cz\leq\one\},
\]
with two labeled inequalities per label in $[d]$.

A vertex is \emph{completely labeled} if its binding inequalities
contain all labels in $[d]$.  The origin $(0,0)$ is completely labeled and is called the
\emph{artificial equilibrium}.  Every other completely labeled point
$z=(x,y)$ has both $x\neq0$ and $y\neq0$, and it gives the Nash equilibrium
\begin{equation}\label{eq:normalize-cl}
       \overline x=\frac{x}{\one^{\transpose} x},
       \qquad
       \overline y=\frac{y}{\one^{\transpose} y}.
\end{equation}
Conversely, if a Nash equilibrium $(\overline x,\overline y)$ has row
payoff $u>0$ and column payoff $v>0$, then
$(\overline x/v,\overline y/u)$ is a completely labeled point.  By nondegeneracy, these labeled descriptions of
$\mathcal P$ and $\mathcal Q$ are nondegenerate, namely, at every vertex exactly as
many inequalities bind as the ambient dimension, and their outward
normals are linearly independent \cite[Theorem~14]{vonStengel2021}.  Hence
the underlying polytopes are simple, no redundant inequality is
binding, and the product description of $\mathcal R$ has exactly $d$
linearly independent binding inequalities at every vertex.

Fix one label $\ell$ once and for all (for example, $\ell=1$).  The
\emph{missing-$\ell$ graph} consists of the
vertices and one dimensional faces of $\mathcal R$ whose binding labels
contain $[d]\setminus\{\ell\}$.  Every component is a path or a cycle.
Indeed, a completely labeled vertex is an endpoint.  Every other vertex in
the graph is missing $\ell$ and has one duplicated label. As a result, relaxing either
of the two binding inequalities carrying that duplicated label
gives its two incident edges.

An
edge is a line segment of the $d$ dimensional
polytope $\mathcal R$.  In the nondegenerate  presentation, the
relative interior of an edge has exactly $d-1$ linearly independent binding
 inequalities.  A missing-$\ell$ edge must contain all $d-1$ labels
in $[d]\setminus\{\ell\}$.  It therefore lies in \emph{exactly one} binding
 inequality of each such label, i.e., there is no room for a duplicated
label along the interior of the edge.  A duplicate appears only at an
internal vertex, where one additional  inequality becomes binding.
Figure~\ref{fig:lh-edge-orientation} illustrates the endpoint incidence
convention used to orient each such edge.

\begin{figure}[t]
\centering
\begin{tikzpicture}[
  >=Latex,
  every node/.style={font=\small},
  edge/.style={very thick,blue!65!black,-{Latex[length=2mm]}},
  normal/.style={thick,BrickRed,-{Latex[length=2mm]}},
  facet/.style={thick,black!65}
]
  \coordinate (u) at (1.2,0);
  \coordinate (v) at (6.3,0);
  \draw[very thick] (u)--(v);
  \draw[edge] (2.25,0)--(4.85,0)
       node[midway,above=3pt] {$t_E$};
  \fill (u) circle (2pt) node[below=5pt] {$u$};
  \fill (v) circle (2pt) node[below=5pt] {$v$};
  \draw[normal] (u)--++(-.75,.75)
       node[left] {$N_u(E)$};
  \draw[normal] (v)--++(.75,.75)
       node[right] {$N_v(E)$};
  \node[align=center] at (1.2,-1.55)
       {$N_u(E)^{\transpose} t_E<0$\\outgoing incidence};
  \node[align=center] at (6.3,-1.55)
       {$N_v(E)^{\transpose} t_E>0$\\incoming incidence};
  \node[align=center,text width=8.2cm] at (3.75,1.65)
       {The $d-1$ binding inequalities containing $E$ determine the tangent
        direction.
        The additional binding inequality at each endpoint determines whether
        that tangent points into or out of the segment.};
\end{tikzpicture}
\caption{Orientation of a missing-label edge.  The planar drawing suppresses
the $d-1$  inequalities containing \(E\); their ordered outward normals define the
generalized cross product $t_E$.}
\label{fig:lh-edge-orientation}
\end{figure}

We use the following signed form of the Lemke--Howson theorem.  The statement
is specialized to the product best response polytope, so that its endpoint
sign has the game theoretic meaning used below.

\begin{lemma}[Oriented Lemke--Howson paths]\label{lem:oriented-lh}
 Let $(A,B)$ be a nondegenerate rational bimatrix game with strictly positive
 payoffs, and let $\mathcal R=\mathcal P\times\mathcal Q$ be its labeled
 product best response polytope defined above.  For every missing label
 $\ell$, the missing-$\ell$ graph has a canonical orientation, computable
 locally from a vertex basis, with the following properties.
 \begin{enumerate}
  \item Every internal vertex has one incoming and one outgoing edge.
  \item The artificial equilibrium is a source and has sign $-1$.
  \item A non-artificial endpoint is a source when its Shapley index is
        $-1$ and a sink when its Shapley index is $+1$.
 \end{enumerate}
 The predecessor and successor bases can be found in polynomial time.
\end{lemma}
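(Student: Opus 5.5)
I would follow the Shapley--Todd determinant orientation, using the generalized cross product indicated in Figure~\ref{fig:lh-edge-orientation}.

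\textbf{Orienting an edge.} Let $E$ be an edge of the missing-$\ell$ graph. For each label $k\neq\ell$ exactly one binding inequality of $E$ carries label $k$, and its outward normal is either $-e_k$ or the row $C_k$. List these $d-1$ normals in increasing label order as $N_1,\dots,N_{d-1}$. Define $t_E\in\mathbb R^d$ by $t_E^{\transpose}w=\det[N_1;\dots;N_{d-1};w]$, with $w$ placed in row position $\ell$ and the remaining rows in label order. By linear independence of the normals, $t_E\neq0$, and $t_E$ is orthogonal to every $N_i$, so it is parallel to $E$.

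At an endpoint $u$ of $E$ there is one additional binding inequality, with normal $N_u(E)$. We declare the incidence outgoing when $N_u(E)^{\transpose}t_E<0$ and incoming when this quantity is positive. Since $N_u(E)^{\transpose}t_E$ is a full $d\times d$ determinant of binding normals at $u$, all these quantities are rational and can be computed from a basis in polynomial time. Stepping from $u$ into $E$ decreases slack in none of the $N_i$ and strictly leaves the extra inequality. Hence the sign of $N_u(E)^{\transpose}t_E$ tells whether $t_E$ points into $E$ from $u$, and at the other endpoint $v$ the sign is opposite. So every edge acquires a consistent direction.

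\textbf{Internal vertices (item 1).} Let $u$ be an internal vertex with duplicated label $k$, carried by two binding normals $a$ and $b$. The two incident edges are obtained by dropping $a$ or dropping $b$. In each case the determinant $N_u(E)^{\transpose}t_E$ is the determinant of the same $d$ normals, except that $a$ and $b$ swap which of them occupies the label-$k$ slot and which occupies slot $\ell$. These two determinants therefore differ by one row transposition and have opposite signs. So one incident edge is incoming and the other outgoing, which proves item 1.

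\textbf{Endpoints (items 2 and 3).} At a completely labeled vertex $z$, the sign of the unique incident edge is the sign of $\det N(z)$, where $N(z)$ is the matrix of its $d$ binding normals in label order. At the origin, $N(z)=-I_d$, giving $(-1)^d$. I would fix the global convention by multiplying the orientation by $(-1)^{d+1}$, so that the origin gets sign $-1$ and is a source; this proves item 2.

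For an equilibrium with supports $I$ and $J$, $|I|=|J|=k$, the binding normals are:
\begin{itemize}
\item $-e_i$ for $i\notin I$ and $-e_{m+j}$ for $j\notin J$;
\item $(0,A_i)$ for $i\in I$;
\item $(B_j^{\transpose},0)$ for $j\in J$.
\end{itemize}
Expanding along the $-e$ rows reduces the determinant, up to the sign $(-1)^{d-2k}=(-1)^d$, to
\[
\det\begin{pmatrix}0&A_{I,J}\\(B_{I,J})^{\transpose}&0\end{pmatrix}
\]
with rows and columns in label order. By Shapley's formula in Section~\ref{sec:preliminaries} this equals $-\ind_{A,B}$. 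After the global normalization, the endpoint sign is therefore $(-1)^{d+1}(-1)^d(-\ind)=\ind$. Since a sign of $-1$ means outgoing, an index $-1$ endpoint is a source and an index $+1$ endpoint is a sink, which is item 3.

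\textbf{Computability.} Predecessor and successor bases are found by one ratio-test pivot in the duplicated label. Which of the two pivots is the forward one is decided by the determinant sign above, using exact rational arithmetic of polynomial bit length.

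\textbf{Main obstacle.} The main obstacle is the sign bookkeeping. One must check that reordering rows into label order is done consistently, so that the label-$\ell$ slot convention is identical at every vertex. One must also check that the reduction of the endpoint determinant to the block matrix matches the ordering used in Definition~\ref{def:index}. I would verify this on pure equilibria, which have index $+1$ and must come out as sinks.
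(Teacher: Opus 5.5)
Your proposal is correct and follows essentially the same route as the paper: the generalized cross product with the test vector placed in slot $\ell$, the row-transposition argument at an internal vertex, and the block reduction of the endpoint normal determinant to $\det\begin{pmatrix}0&A_{I,J}\\(B_{I,J})^{\transpose}&0\end{pmatrix}$ (the transpose of the paper's block matrix, so the determinant is the same). Your a posteriori global factor $(-1)^{d+1}$ is exactly the paper's built-in normalization $t_E=(-1)^{d+1}c_E$, so the two orientations coincide.
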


\begin{proof}
 We first define the orientation, including the convention at an internal
 vertex.  If $E$ is an edge of the missing-$\ell$ graph, then exactly one
  inequality of each label $r\neq\ell$ binds throughout $E$.  For an inequality
 $a^{\transpose} z\leq b$, its \emph{outward normal} is $a$ and it points from the
 feasible halfspace toward the infeasible side.  Write the outward normal of
 the label-$r$ binding inequality as $N_r(E)$, and order these normals by
 increasing label.
 We define $c_E$ as the generalized cross product of these
 $d-1$ normals. Its coordinates are their signed $(d-1)\times(d-1)$ minors.
 Equivalently, it is the unique rational vector defined by the identity
 \begin{equation}\label{eq:edge-cofactor}
 q^{\transpose} c_E=
 \det\bigl[N_1(E)\ \cdots\ N_{\ell-1}(E)\ q\
               N_{\ell+1}(E)\ \cdots\ N_d(E)\bigr]
 \qquad(q\in\mathbb R^d),
 \end{equation}
 and orient $E$ in the direction
 \begin{equation}\label{eq:edge-tangent}
                     t_E:=(-1)^{d+1}c_E.
 \end{equation}
 The vector $c_E$ is nonzero because the $d-1$ normals containing $E$ are
 independent.  Moreover, $N_r(E)^{\transpose} c_E=0$ for every $r\neq\ell$, since
 the determinant in~\eqref{eq:edge-cofactor} then has two equal columns.
 Thus $t_E$ is a tangent vector to the affine line containing $E$, i.e., $t_E$ is parallel to that line.  There are two possible parallel
 directions, and~\eqref{eq:edge-tangent} chooses one canonically from the
 increasing order of the labels so it is not an arbitrary choice made
 separately on each edge.

 Let $v$ be an endpoint of the line segment $E$ in the polytope, and let
 $N_v(E)$ be the unique additional outward normal that is binding at $v$ but
 not along $E$.
 Define the incidence sign
 \begin{equation}\label{eq:incidence-sign}
             \iota(v,E):=\sign\bigl(N_v(E)^{\transpose} t_E\bigr).
 \end{equation}
 Because of nondegeneracy, the $d$ normals binding at $v$ are
 linearly independent, hence this scalar is nonzero.  Since all inequalities
 use outward normals, $\iota(v,E)=-1$ precisely when $t_E$ points from $v$ into $E$ and
 this is an outgoing incidence.  The sign $+1$ is an incoming incidence.

 We now verify the pivot rule at an internal vertex.  Such a vertex is
 missing label $\ell$ and has a duplicated label $k\neq\ell$.  Order the two
  inequalities of label $k$ by putting the nonnegativity inequality
 first and the payoff inequality second, and denote their outward normals by
 $U$ and $V$.  The edge $E_U$ is obtained by relaxing the payoff inequality so it retains $U$ among its
 $d-1$ binding normals and drops $V$.  The other edge $E_V$ relaxes the
 nonnegativity inequality, retaining $V$ and dropping $U$.  Thus $V$ is the
 additional endpoint normal for $E_U$, while $U$ is the additional endpoint
 normal for $E_V$.  By~\eqref{eq:edge-cofactor},
 \begin{align}
 V^{\transpose} c_{E_U}
   &=\det\bigl[\ldots,V\text{ in position }\ell,
                    \ldots,U\text{ in position }k,\ldots\bigr],\label{eq:pivot-one}\\
 U^{\transpose} c_{E_V}
   &=\det\bigl[\ldots,U\text{ in position }\ell,
                    \ldots,V\text{ in position }k,\ldots\bigr].\label{eq:pivot-two}
 \end{align}
 All other columns in these determinants are identical and in the same label
 order.  Swapping columns $\ell$ and $k$ therefore gives the determinant
 pivot identity
 \begin{equation}\label{eq:incidence-pivot-identity}
                 V^{\transpose} t_{E_U}=-U^{\transpose} t_{E_V}.
 \end{equation}
 Both sides are nonzero because the $d$ active normals at $v$ are
 independent.  Hence $\iota(v,E_U)=-\iota(v,E_V)$: exactly one incidence is
 incoming and the other is outgoing.  In particular, this is a statement
 about the two directed incidences, not that two abstract signs add to
 zero.  At the two ends of any one geometric
 edge, the same tangent points into the segment at one end and out of it at
 the other, so the local incidence rules consistently orient every path and
 cycle.

 At a completely labeled endpoint $v$, order the outward normals of its
 binding  inequalities by their labels and write them as
 $N_1,\ldots,N_d$.  Its
 unique incident edge drops $N_\ell$, so
 \eqref{eq:edge-cofactor}--\eqref{eq:incidence-sign} give
 \begin{equation}\label{eq:normal-orientation}
 \iota(v,E)=\sigma(v):=(-1)^{d+1}
                   \sign\det[N_1\ \cdots\ N_d].
 \end{equation}
At the origin, the binding inequalities are the \(d\) nonnegativity
inequalities \(-z_r\leq 0\), for \(r\in[d]\) and all payoff inequalities are
strict.  Hence \(N_r=-e_r\) and the ordered normal matrix is
\(N=-I_d\).  Therefore
\[
 \sigma(0,0)
 =(-1)^{d+1}\sign\det(-I_d)
 =(-1)^{d+1}(-1)^d
 =-1.
\]
More explicitly, after dropping the inequality with the missing label
\(\ell\), the orientation rule gives \(t_E=e_\ell\).  Since the outward
normal of the dropped inequality is \(-e_\ell\), we have
\((-e_\ell)^{\transpose}t_E=-1\).  Thus \(t_E\) points from the origin into
\(E\), so the incidence is outgoing and the artificial equilibrium is a
source. This is Shapley's determinant orientation of the Lemke--Howson
 paths \citep{Shapley1974,VeghVonStengel2015}.

 For completeness, we now derive the endpoint sign.  Consider a non-artificial endpoint with
 supports $I=\supp(x)$ and $J=\supp(y)$.  Every strategy in $I$ is a best
 response to $y$, so nondegeneracy gives $|I|\le |J|$; symmetrically,
 $|J|\le |I|$.  We denote their common size by \(k\).  In the ordered
 normal matrix $N=[N_1\ \cdots\ N_d]$, the column of label $i\in[m]$ is
 \[
  N_i=
  \begin{cases}
    -e_{x_i},&i\notin I,\\
    \bigl(0,(A_{i,*})^{\transpose}\bigr),&i\in I,
  \end{cases}
 \]
 while the column of label $m+j$ is
 \[
  N_{m+j}=
  \begin{cases}
    -e_{y_j},&j\notin J,\\
    (B_{*,j},0),&j\in J.
  \end{cases}
 \]
 Here the rows are ordered first by the $x$-coordinates and then by the
 $y$-coordinates.  Let $\Pi$ be the permutation matrix sending this row
 order to $(x_{I^c},y_{J^c},x_I,y_J)$, and apply the same permutation to
 the label columns.  The following matrix is $\Pi N\Pi^{\transpose}$, so
 its determinant equals $\det N$ because $(\det\Pi)^2=1$.  It is block
 upper triangular:
 \begin{equation}\label{eq:normal-matrix-support-reduction}
 \det N\ =\
 \det\begin{pmatrix}
  -I_{m-k}&0&0&B_{I^c,J}\\
  0&-I_{n-k}&\bigl(A_{I,J^c}\bigr)^{\transpose}&0\\
  0&0&0&B_{I,J}\\
  0&0&\bigl(A_{I,J}\bigr)^{\transpose}&0
 \end{pmatrix}.
 \end{equation}
 Thus
 \[
  \det N=(-1)^{(m-k)+(n-k)}
  \det\!\begin{pmatrix}0&B_{I,J}\\
              (A_{I,J})^{\transpose}&0\end{pmatrix}
  =(-1)^d
  \det\!\begin{pmatrix}0&B_{I,J}\\
              (A_{I,J})^{\transpose}&0\end{pmatrix},
 \]
 because $2k$ is even.  Consequently,
 \begin{equation}
 \begin{aligned}
 \sigma(v)
  &=(-1)^{d+1}\sign\det N\\
  &=-\sign\det\begin{pmatrix}
          0&B_{I,J}\\ \bigl(A_{I,J}\bigr)^{\transpose}&0
       \end{pmatrix}\\
  &=(-1)^{k+1}\sign\bigl(\det A_{I,J}\det B_{I,J}\bigr).
 \end{aligned}
 \tag{16}
 \end{equation}
 The last one swaps the two
 blocks of $k$ columns, contributing $(-1)^{k^2}=(-1)^k$.
 Since the endpoint basis is nonsingular and the payoffs are positive, the
 two support matrices are nonsingular.
 Moreover, $A_{I,J}^{-1}\one$ and
 $\bigl((B_{I,J})^{\transpose}\bigr)^{-1}\one$ are the positive, unnormalized equilibrium
 strategies. The last term equals the bordered-determinant
 definition~\eqref{eq:index} of the Shapley index.  Thus negative endpoints
 are sources and positive endpoints are sinks.

 It remains to show that, from the basis of a vertex, we can compute its
neighboring bases and the orientations of its incident edges in polynomial
time.  To this end, list the $2d$ inequalities in a fixed order as
 $a_h^{\transpose} z\leq b_h$.  A basis is the set of its $d$ binding
 inequalities.  Given such a set, Gaussian elimination tests nonsingularity
 and computes the vertex $v$.  The neighboring basis is computed by the
 exact minimum positive ratio test.  Namely, delete the inequality indexed by
 $r$, compute a nonzero direction $q$ orthogonal to the $d-1$ retained
 normals, and choose its sign so that $a_r^{\transpose} q<0$, i.e., the dropped
 inequality becomes slack.  For every inequality that is not binding at \(v\) indexed by $h$ for which
\(a_h^{\transpose}q>0\), we compute
 \begin{equation}
       \theta_h=\frac{b_h-a_h^{\transpose} v}{a_h^{\transpose} q}.
 \tag{17}
 \end{equation}
 The entering inequality is the unique minimizer among the positive ratios
 $\theta_h$, where uniqueness follows from nondegeneracy. The dropped inequality has negative denominator and hence is
 not reconsidered. Moreover, the pivot direction~\eqref{eq:edge-tangent} and every
 incidence sign~\eqref{eq:incidence-sign} are given by the signed minors defining
 $c_E$ and by determinant signs of rational matrices.  Hadamard's determinant
 bound, together with Cramer's rule, shows that all intermediate integers and
 rationals have polynomial encoding length, see for example,
 \citet[Chapter~3]{Schrijver1986}.  This proves local polynomial computability.
\end{proof}

\begin{statementbox}
\begin{theorem}\label{thm:membership}
 \PIN belongs to $\PPADS$.
\end{theorem}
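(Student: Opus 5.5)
We reduce \PIN to the normalized \textsc{Sink-of-Line} problem of Definition~\ref{def:sink} by encoding the oriented missing-$\ell$ graph of Lemma~\ref{lem:oriented-lh} as a pair of successor and predecessor circuits. Given a promised nondegenerate game $(A,B)$, we first shift both payoff matrices by a rational constant so that $A,B>0$; by \eqref{eq:shift-invariance} this preserves equilibria and indices. We fix $\ell=1$. Nodes are encoded as strings of length $n'=2d$, read as a subset of the $2d$ labeled inequalities. A string is \emph{valid} if it has exactly $d$ elements, the corresponding normals are nonsingular, the resulting point $z$ is feasible in $\mathcal R$, and the binding labels contain $[d]\setminus\{\ell\}$. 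All of these checks are exact rational Gaussian elimination, so they take polynomial time.

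We place the artificial vertex, whose basis consists of the $d$ nonnegativity inequalities, at $0^{n'}$ after a fixed relabeling of strings. For a valid basis $\beta$, the circuit $S(\beta)$ computes the outgoing neighbor guaranteed by Lemma~\ref{lem:oriented-lh}. If $\beta$ is completely labeled and its unique incidence is incoming (a sink, hence Shapley index $+1$), then $S(\beta):=\beta$. Symmetrically, $P(\beta)$ is the incoming neighbor, and $P(\beta):=\beta$ at a source endpoint. In particular $P(0^{n'})=0^{n'}$, because the origin is a source with sign $-1$ (item~2). For an invalid string $z$ we set $S(z)=P(z)=z$. Such strings are isolated self-loops, which have $P(S(z))=z$, so they are never solutions. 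Lemma~\ref{lem:oriented-lh} also gives $S(0^{n'})\ne 0^{n'}$ and $P(S(0^{n'}))=0^{n'}$, since the pivot out of the origin lands on an internal vertex or an equilibrium, and that neighbor's predecessor is the origin. The normalization of Definition~\ref{def:sink} therefore holds.

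Next we verify that solutions decode. Suppose $P(S(z))\neq z$. Then $z$ is valid, because invalid strings are fixed by both maps. If $z$ were internal, item~1 and the consistency of orientations along each edge give $P(S(z))=z$. The same holds if $z$ is a source endpoint, because its outgoing neighbor's predecessor is $z$. Hence $z$ is a completely labeled sink, so $S(z)=z$, and then $P(z)\neq z$ only says that $z$ has an incoming neighbor, which a sink does. By item~3 such a $z$ has Shapley index $+1$. It is non-artificial because the origin is a source. We output the normalized equilibrium \eqref{eq:normalize-cl}, computed exactly.

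The main obstacle is checking that $S$ and $P$ are well defined circuits that agree with the edge relation. This means verifying that the incidence signs \eqref{eq:incidence-sign} computed independently at the two ends of each edge are consistent, so that $P(S(z))=z$ at every non-sink valid node. Lemma~\ref{lem:oriented-lh} supplies this consistency, together with the polynomial-time ratio test and the encoding bounds.

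On non-promised inputs the circuits may behave arbitrarily. This is permitted because the membership reduction is required only on nondegenerate games, as noted after Definition~\ref{def:sink}.
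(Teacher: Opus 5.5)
Your proposal is correct and follows essentially the same route as the paper. Both encode each basis as the characteristic vector of its $2d$ inequalities, XOR-shifted so that the artificial basis becomes $0^{2d}$, and define $S,P$ from the oriented Lemke--Howson graph of Lemma~\ref{lem:oriented-lh}, with self-loops at invalid strings and on the missing side of each endpoint. This leaves the non-artificial sinks, which have index $+1$, as the only strings violating $P(S(z))=z$.
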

\end{statementbox}

\begin{proof}
 We reduce to the canonical \SoL instance of Definition~\ref{def:sink}.  Fix a
 missing label $\ell$ and retain the fixed ordering of the $2d$ 
 inequalities used above.  A graph vertex $v$ has a unique basis, because
 exactly $d$ inequalities bind there. We represent it by the
 characteristic vector $\chi(v)\in\{0,1\}^{2d}$ of those inequalities.  If
 $\chi_0$ is the characteristic vector of the artificial basis, use the
 canonical binary representation of $v$
 \begin{equation}
                        \operatorname{rep}(v):=\chi(v)\mathbin{\mathsf{xor}}\chi_0.
 \tag{18}
 \end{equation}
 Thus the artificial equilibrium is represented by $0^{2d}$.

 On a valid internal representation, let $P$ and $S$ return the oriented
 predecessor and successor computed by the ratio test above.  At a source
 endpoint, set $P(v)=v$ and let $S(v)$ be its unique outgoing neighbor.  At a
 sink endpoint, let $P(v)$ be its unique incoming neighbor and set $S(v)=v$.
 The self-predecessor convention at a non artificial source does not create an
 additional solution because if $u=S(v)$ is its outgoing neighbor, then $P(u)=v$, so
 $P(S(v))=v$.  Only the artificial source is used as the distinguished
 all-zero start vertex.
 Vertices on directed cycles use their oriented predecessor and successor.
 On every invalid string, set both circuits equal to the identity.  Validity
 of a string is checked by undoing the xor, testing that it selects exactly
 $d$ inequalities with independent normals, solving for their
 intersection, checking feasibility and the complete binding set, and
 checking membership in the missing-$\ell$ graph.  All these operations use
 exact rational arithmetic and polynomial-size Boolean circuits.

 Lemma~\ref{lem:oriented-lh} gives
 \begin{equation}
      P(0^{2d})=0^{2d}\neq S(0^{2d})
      \quad\text{and}\quad
      P(S(0^{2d}))=0^{2d}.
 \tag{19}
 \end{equation}
 Internal vertices, source endpoints, directed cycles, and invalid representations
 all satisfy $P(S(v))=v$.  Hence every solution of the resulting \SoL
 instance is a positive non-artificial endpoint.
 Lemma~\ref{lem:oriented-lh} and~\eqref{eq:normalize-cl} decode it to an exact
 Nash equilibrium of index $+1$.  Exactness has polynomial cost by the
 support systems~\eqref{eq:support-systems}.
\end{proof}

\section{An explicit signed fixed point field}
\label{sec:signed-fixed-point-unconditional}
\setcounter{equation}{19}

From a normalized \SoL\ instance, we construct a rational vector field by
assigning colors to the vertices of one fixed grid and then interpolating
affinely. The route representation determines the colors near each directed
edge and endpoint. Every sink and every noncanonical source gives exactly
one simplex containing all colors, with the required determinant sign.
A separate rule near the canonical source removes its simplex and gives
the boundary conditions needed for coordinatewise clipping.

\begin{redstatementbox}
\begin{definition}[Certified signed circuit field]
\label{def:certified-signed-field}
For a \SoL\ instance \(\mathcal I\), a \emph{certified signed circuit field}
consists of a rational max affine circuit computing
\(G\colon[0,1]^d\to\mathbb R^d\), a rational bound \(M_G\), and a decoder,
satisfying the following properties.
\begin{enumerate}[label=\textup{(C\arabic*)}]
\item The zeros of \(G\) form disjoint finite sets \(Z_+,Z_-\) of rational
points in the interior of the cube. At each zero \(p\), there is a rational
nonsingular matrix \(A_p\) such that \(G(x)=A_p(x-p)\) in a neighborhood of
\(p\). The sign of \(\det(-A_p)\) is \(+1\) on \(Z_+\) and \(-1\) on
\(Z_-\).
\item For every boundary point \(x\) and every \(\eta>0\),
\([x+\eta G(x)]_0^1\ne x\).
\item At every zero, the two inputs of every max gate in the circuit are
unequal.
\item From the coordinates of any \(p\in Z_+\), the decoder returns a
solution of \(\mathcal I\) in polynomial time.
\item The dimension and circuit size are polynomially bounded, all rational
coefficients have polynomial bit length, and
\(\lvert G_i(x)\rvert\le M_G\) on the cube. The construction supplies a
positive rational lower bound on the distance of every zero from the boundary.
The zeros and their local matrices have polynomial bit length and are
computable from their endpoint descriptions in polynomial time.
\end{enumerate}
The endpoint descriptions specify the sets \(Z_+,Z_-\) implicitly. They
do not require enumeration of an exponentially large set of points.
\end{definition}
\end{redstatementbox}

These are the properties used in Sections~\ref{sec:fixp-to-lcp}
and~\ref{sec:lcp-to-game}. We will establish the following result.

\begin{statementbox}
\begin{theorem}[Indexed fixed point map]
\label{thm:axis-unconditional-map}
Given a normalized \SoL\ instance
\(S,P\colon\{0,1\}^n\to\{0,1\}^n\), one can construct in polynomial time
a rational \LinFIXP\ circuit for
\(\Phi\colon[0,1]^d\to[0,1]^d\), where \(d=2n+4\), with the following properties.
\begin{enumerate}
\item Its fixed points correspond bijectively to the sinks and noncanonical
sources of the split graph \(H\) in~\eqref{eq:split-H-edges}.
\item Every fixed point is regular, lies at distance at least \(1/8\) from
the cube boundary, and makes every max gate strict.
\item The fixed point index is \(+1\) at a sink and \(-1\) at a
noncanonical source.
\item Every positive index fixed point decodes in polynomial time to an
\(x\in\{0,1\}^n\) satisfying \(P(S(x))\ne x\).
\end{enumerate}
\end{theorem}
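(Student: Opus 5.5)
The proof will assemble the construction sketched in the technical overview and verify the four properties in turn. First, from \(S,P\) I build the split graph \(H\): each \(x\in\{0,1\}^n\) becomes \(x^{\rm in}\to x^{\rm out}\), and each verified edge \(x\to S(x)\) with \(P(S(x))=x\) becomes \(x^{\rm out}\to S(x)^{\rm in}\). Its sources and sinks are checked directly. A sink of \(H\) is always some \(x^{\rm out}\) with \(P(S(x))\ne x\), and the canonical source is \((0^n)^{\rm in}\).

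Next I define the route states \((a,b,\beta)\) with two \((n+1)\)-bit registers and a phase bit. For each split edge I take the path \((u,u,0)\leadsto(u,v,0)\to(u,v,1)\leadsto(v,v,1)\to(v,v,0)\), together with the connectors \((u,u,1)\to(u,u,0)\). I then give local predicates, computed from polynomially many calls to \(S,P\), that return the predecessor and successor of any valid state. I will prove by case analysis on the phase and on which register is mid-transition that every state has in- and outdegree at most one. I will also show that the endpoints of the route graph are exactly \((u,u,1)\) for sources \(u\) of \(H\) and \((u,u,0)\) for sinks. These states are embedded at levels \(1/4,3/4\) in coordinates \(1,\dots,2n+3\), with the phase complemented, plus one auxiliary constant coordinate, so \(d=2n+4\). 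Each directed segment then changes one coordinate, and non-adjacent segments are at \(\ell_\infty\)-distance at least \(1/2\).

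The core step is the coloring of the grid \(\{0,1/M,\dots,1\}^d\), \(M=128\), with colors \(\{0,\dots,d\}\) and field values \(q_0=-\mathbf 1\), \(q_i=e_i\). Points far from the route get color \(0\) in the interior and boundary-compatible colors near faces: color \(0\) is absent on lower faces and color \(i\) on the face \(x_i=1\). Along a straight segment in direction \(\sigma e_i\), a vertex is colored by ordered tests on its perpendicular offsets. At turns I use explicit three-coordinate tables for each pair \((\sigma,\tau)\). Because \(G\) at a point of a Kuhn simplex is a convex combination of the vertex values, and \(\sum_i\lambda_i q_i=0\) forces all \(\lambda_i=1/(d+1)\), zeros of \(G\) are exactly barycenters of fully colored simplices. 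The main obstacle is the combinatorial lemma that no Kuhn simplex is fully colored except exactly one at each noncanonical endpoint, and none at the canonical source after its box is attached. I would prove it by showing that every simplex meeting a segment neighborhood omits the color of the first perpendicular test that fails uniformly on it. I would then check the turn tables by finite (computer-verifiable, \(2^3\)-type) enumeration and their extension to the other coordinates by the same omitted-color argument. The canonical box is handled by the single recoloring described in the overview.

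With that lemma in hand, the remaining steps are computational. At the unique simplex of an endpoint, \(G(x)=A_p(x-p)\) with \(A_p=(I_d+\mathbf1\mathbf1^\transpose)E^{-1}\), so \(\sign\det(-A_p)=(-1)^d\sign\det E\). The route coordinate is incremented last at a sink and first at a source, which gives \(\det E\) of sign \((-1)^d\) and \((-1)^{d-1}\) respectively, since moving the route coordinate from last to first is a \(d\)-cycle. This yields index \(\pm1\), because \(I_d-D\Phi(p)=-A_p/32\) for \(\Phi(x)=[x+G(x)/32]_0^1\) at interior points. Each zero lies within \(1/M\) of a route point with coordinates in \(\{1/4,3/4\}\), so its distance to the boundary is at least \(1/8\). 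The boundary colorings give (C2), hence no boundary fixed points.

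For the circuit I sort the thresholds \(Mx_i-(r-1)\) to obtain the Kuhn simplex and its barycentric weights. I evaluate the color circuit with the weighted gates \(\operatorname{NOT}_w\) and \(\operatorname{AND}_w\). Strictness at barycenters follows because the fractional parts there are the distinct values \(k/(d+1)\), and each weighted gate compares \(0\) with \(-3w\), \(-w\) or \(w\), where \(w>0\). Finally, the decoder rounds coordinates to \(1/4\) or \(3/4\), reads off \((u,u,0)\) with \(u=x^{\rm out}\), and returns \(x\).
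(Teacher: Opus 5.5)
Your overall architecture is the paper's: split graph, two-register route with complemented phase so that every endpoint segment points along $+e_t$, Kuhn interpolation of $q_0=-\mathbf 1$, $q_i=e_i$, the endpoint determinant computation, the sorted-threshold circuit with weighted gates, and $\Phi=[x+G/32]_0^1$. The gap is in the combinatorial lemma you yourself call the main obstacle. You only gesture at it, and the gesture does not work as written.

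For straight segments, ``every simplex omits the colour of the first perpendicular test that fails uniformly on it'' covers only one of three cases.
\begin{itemize}
\item In a cube whose perpendicular intervals are all $[0,1]$, no test fails uniformly, and the missing colour is $0$.
\item In a cube where every perpendicular interval other than $[0,1]$ meets $\{0,1\}$ only in the tested value $\beta_j^\sigma$, again no test fails uniformly. That test succeeds at every supported vertex, so the missing colour is the route colour $i$.
\item Only when such an interval meets $\{0,1\}$ in the other value is colour $j$ missing.
\end{itemize}
The paper's three-case argument has one further property that you need but do not mention: it is independent of the priority order. This matters because the orders demanded at the two ends of a segment differ. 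Near the start it is the remaining coordinates, then $a$, then the previous axis. Near the end it is the remaining coordinates, then $a$, then the next axis. At an endpoint it is the increasing order. So the order must switch midway along each segment.

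The more serious omission is orientation. You never say how the straight rule depends on $\sigma$, and you never produce the turn tables, yet their existence is not automatic.
\begin{itemize}
\item In a cross-section of a straight tube, exactly one $(d-1)$-face carries the colours $1,\ldots,d$. Its orientation $o_i$ relative to $+e_i$ is the same in every cross-section, and, by the order-independence above, for every priority order.
\item Take a box around a turn from $\sigma e_i$ to $\tau e_j$ whose boundary meets the pattern only in the incoming and outgoing cross-sections. The oriented Sperner count shows that the absence of fully coloured simplices forces $\tau o_j=\sigma o_i$.
\item If the straight colouring depends only on the axis, the turns $(+,+)$ and $(+,-)$ between the same two axes (both can occur) force $o_j=o_i$ and $o_j=-o_i$. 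Then no zero-free tables exist, and your finite enumeration would fail.
\end{itemize}
This is the first role of the auxiliary coordinate in the paper. When $\sigma=-1$, the coordinate $a$ is tested against $1$ instead of $0$ ($\beta_a^\sigma=1$). This reflects the cross-sectional pattern in $z_a$ and reverses $o_i$, so the orientation is the same relative to the direction of travel on every segment. With this rule, the four pairs of turn tables exist. The paper writes them down, checks their agreement with the incoming and outgoing straight rules, and certifies every Kuhn tetrahedron cube by cube. Your proof needs both the $\sigma$-dependent straight rule and the tables.

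A minor point: your sign comparison via a $d$-cycle ignores that the colours also shift cyclically along the chain, a $(d+1)$-cycle of sign $(-1)^d$. This is harmless only because $d$ is even. You also still need one absolute computation; for example, at a sink $E$ is lower triangular with diagonal entries $-1/128$ in the order $(j_1,\ldots,j_{d-1},t)$.
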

\end{statementbox}




\paragraph{Roadmap of the construction.}
The proof of Theorem~\ref{thm:axis-unconditional-map} has three main
stages.  We first embed the split \SoL\ graph as a locally recognizable
axis-aligned route on a fixed grid.  We then color the grid so that the
affinely interpolated vector field has exactly the desired zeros, with
opposite determinant signs at sinks and noncanonical sources.  Finally,
we realize this interpolation by a polynomial-size max-affine circuit
whose max gates are strict at every zero.
Figure \ref{fig:oriented-simplex-construction} illustrates the main components of our construction.

The geometric construction is organized around the following invariants.
We use the \(d+1\) color vectors
\[
q_0=-\mathbf 1,
\qquad
q_i=e_i \quad (i\in[d]).
\]
First, every simplex of the  Kuhn triangulation away from a
noncanonical route endpoint is arranged to omit at least one color. 
By the unique linear dependence among the color vectors,
\[
\sum_{i=0}^d \alpha_i q_i=0
\quad\Longleftrightarrow\quad
\alpha_0=\cdots=\alpha_d,
\]
so omitting any color prevents the affine interpolation from vanishing.
Therefore, \(G(x)\neq 0\) for every point \(x\) in such a simplex.
This is the main mechanism
excluding spurious interior solutions: the only simplices on which the
interpolated field can vanish are the fully colored ones.  At every sink
and noncanonical source, by contrast, there is exactly one fully colored
simplex, and its barycenter is the corresponding zero of \(G\).

Second, the orientation of this unique endpoint simplex records whether
the endpoint is a sink or a source.  If \(p\) is its barycenter and
\(G(x)=A_p(x-p)\) locally, then
\[
\operatorname{sign}\det(-A_p)
=
\begin{cases}
+1, & \text{if the endpoint is a sink},\\
-1, & \text{if the endpoint is a noncanonical source}.
\end{cases}
\]
Thus both the location and the sign of every zero are determined by the
local combinatorics of the coloring.

Third, the local coloring rules must fit together without introducing
additional solutions.  The straight and turn rules agree at their
joining layers, changes of priority along a straight segment do not
create a fully colored simplex, and the special rule at the canonical
source removes the simplex that would otherwise correspond to the
distinguished source.  The boundary coloring is chosen separately so
that coordinatewise clipping cannot create a fixed point on
\(\partial[0,1]^d\).  Consequently, once the boundary condition is
established, the fixed points of the final map are exactly the intended
zeros of \(G\).

Section~\ref{sec:signed-route-geometry} constructs the split graph and
its locally recognizable two-register route, with nonincident segments
geometrically separated. Section~\ref{sec:discrete-grid} defines \(G\)
by affine interpolation on the Kuhn triangulation.
Lemma~\ref{lem:kuhn-barycenter} identifies its zeros as the barycenters
of fully colored simplices and expresses their local signs in terms
of simplex orientation.

Sections~\ref{sec:discrete-local-rules} and~\ref{sec: endpoints}
establish that straight segments and turns contain no fully colored
simplices, while each sink and noncanonical source contributes exactly
one, with sign \(+1\) and \(-1\), respectively.
Section~\ref{sec:discrete-global-coloring} removes the canonical-source
simplex and combines the local rules into a polynomial-time computable
global coloring. It also excludes additional fully colored simplices
and establishes the boundary conditions needed for clipping.

Section~\ref{sec:discrete-interpolation} realizes \(G\) by an exact
polynomial-size max-affine circuit whose gates are strict at every zero.
Finally, Section~\ref{sec:discrete-completion} defines
\(\Phi(x)=[x+G(x)/32]_0^1\) and shows that its fixed points correspond
bijectively to the sinks and noncanonical sources, with the prescribed
indices. The endpoint decoder and the circuit-size, strictness, and
bit-complexity bounds then verify properties \textup{(C1)--(C5)} of
Definition~\ref{def:certified-signed-field}, completing the proof of
Theorem~\ref{thm:axis-unconditional-map}.
\begin{figure}[t]
\centering
\resizebox{\linewidth}{!}{%
\begin{tikzpicture}[
  >=Latex,
  route/.style={very thick,-{Latex[length=2mm]}},
  thinroute/.style={thick,-{Latex[length=1.8mm]}},
  statept/.style={circle,draw,fill=white,inner sep=1.5pt},
  lab/.style={font=\scriptsize,align=center},
  gridline/.style={draw=blue!65,thick},
  title/.style={lab,font=\small\bfseries}]

  \begin{scope}[yshift=0.45cm]
    \node[lab,font=\small\bfseries] at (2.2,2.75) {two-register logical route};
    \coordinate (r0) at (0,0.8);
    \coordinate (r1) at (0.9,0.8);
    \coordinate (r2) at (1.65,1.45);
    \coordinate (r3) at (2.4,1.45);
    \coordinate (r4) at (3.15,2.1);
    \coordinate (r5) at (4.05,2.1);
    \draw[route] (r0)--(r1)--(r2)--(r3)--(r4)--(r5);
    \node[statept] at (r0) {};
    \node[statept] at (r2) {};
    \node[statept] at (r3) {};
    \node[statept] at (r5) {};
    \node[lab,below] at (r0) {\((v,v,0)\)};
    \node[lab,above left] at (r2) {\((v,w,0)\)};
    \node[lab,below right] at (r3) {\((v,w,1)\)};
    \node[lab,above] at (r5) {\((w,w,1)\)};
    \draw[thinroute] (4.05,1.55)--(4.05,0.95);
    \node[lab,right] at (4.05,1.2) {diagonal connector};
    \node[lab,below] at (4.05,0.95) {\((w,w,0)\)};
    \node[lab] at (2.05,0.25)
      {\(B\)-prefix \(\longrightarrow\) phase \(\longrightarrow\) \(A\)-prefix};
  \end{scope}

  \begin{scope}[xshift=7.0cm]
    \node[title] at (2.25,3.20) {a sink simplex};
    \coordinate (a) at (1.85,0.95);
    \coordinate (b) at (1.85,2.20);
    \coordinate (c) at (3.10,2.20);
    \coordinate (d) at (3.10,0.95);
    \fill[blue!7] (0.30,0.95) rectangle (1.85,2.20);
    \draw[gridline] (0.30,0.95)--(a) (0.30,2.20)--(b);
    \draw[blue!40,thin] (0.60,0.95)--(1.85,2.20);
    \fill[green!13] (a)--(b)--(c)--cycle;
    \fill[black!4] (a)--(d)--(c)--cycle;
    \draw[gridline] (a)--(b);
    \draw[ForestGreen,thick] (b)--(c)--(a);
    \draw[black!50,thick] (a)--(d)--(c);
    \foreach \v in {a,b,c,d}
      \fill[black] (\v) circle (1.4pt);
    \node[lab,below=4pt] at (a) {\(q_j\)};
    \node[lab,above=4pt] at (b) {\(q_t\)};
    \node[lab,above=4pt] at (c) {\(q_0\)};
    \node[lab,below=4pt] at (d) {\(q_0\)};
    \coordinate (p) at (barycentric cs:a=1,b=1,c=1);
    \draw[ForestGreen!70!black,thin,shorten >=3pt]
      (3.55,1.52)--(p);
    \filldraw[fill=ForestGreen,draw=white,line width=0.5pt]
      (p) circle (2.2pt);
    \node[lab,text=ForestGreen!70!black,right] at (3.55,1.52)
      {\(p\)};
    \draw[thinroute,blue!65!black] (0.50,1.56)--(1.45,1.56);
    \node[lab,above] at (0.94,1.62) {\(+e_t\)};
    \node[lab,text=black!75] at (2.20,0.20)
      {barycenter \(p\),\quad index \(+1\)};
    \node[lab,text=black!65] at (2.20,-0.20)
      {two dimensional illustration};
  \end{scope}

  \begin{scope}[xshift=12.5cm]
    \node[title] at (2.40,3.20) {the canonical source};
    \coordinate (O) at (0.35,0.55);
    \coordinate (s) at (1.95,1.20);
    \coordinate (u) at (2.70,1.95);
    \fill[orange!7] (O) rectangle (u);
    \fill[blue!7] (s) rectangle (4.40,1.95);
    \draw[black!65,thick] (O) rectangle (u);
    \draw[black!30,densely dashed] (0.35,1.20)--(2.70,1.20);
    \draw[black!30,densely dashed] (1.95,1.20)--(1.95,1.95);
    \draw[blue!40,thin] (s)--(u);
    \draw[gridline] (2.70,1.20)--(4.40,1.20)
                    (u)--(4.40,1.95);
    \node[lab] at (1.00,0.83) {\(q_j\)};
    \node[lab] at (1.00,1.57) {\(q_t\)};
    \node[lab,text=black!65] at (3.90,0.76) {\(q_0\)};
    \node[lab,text=black!65] at (4.05,2.45) {\(q_0\)};
    \fill[black] (O) circle (1.5pt);
    \node[lab,below left=2pt] at (O) {\(0\)};
    \node[lab,left=3pt] at (0.35,1.55) {\(B_\star\)};
    \fill[black] (s) circle (1.5pt);
    \node[lab,below=3pt] at (s) {\(c_\star\)};
    \filldraw[fill=ForestGreen,draw=white,line width=0.5pt]
      (u) circle (2.2pt);
    \node[lab,above=4pt,text=ForestGreen!70!black] at (2.70,2.24)
      {\(u\)\quad \(q_0\longmapsto q_t\)};
    \draw[ForestGreen!70!black,thin,shorten >=3pt]
      (2.70,2.21)--(u);
    \draw[thinroute,blue!65!black] (2.95,1.57)--(4.15,1.57);
    \node[lab] at (3.55,1.80) {\(+e_t\)};
    \node[lab,text=black!75] at (2.40,0.10)
      {the recolored simplex has no zero};
  \end{scope}
\end{tikzpicture}%
}
\caption{The discrete construction. The split graph is represented by a route
that changes one coordinate at a time. The middle panel shows a sink in two
dimensions, with route direction \(e_t\) and perpendicular direction \(e_j\).
The green triangle has all three colors and its barycenter is the unique
zero. The last panel illustrates the boundary box \(B_\star\) near the canonical
source. The outgoing route changes the color at \(u=c_\star+\mathbf1\) from
\(0\) to \(t\), removing the only fully colored simplex there. The last two
panels illustrate the corresponding rules in dimension \(d\).}
\label{fig:oriented-simplex-construction}
\end{figure}

\subsection{The split graph and a locally computable route}
\label{sec: 4.1-predicates}
\label{sec:signed-route-geometry}

We first construct a locally recognizable route for the input graph. Two
registers ensure that one endpoint of an edge remains available while the
other register changes. The resulting route has no unintended intersections,
and its neighbors can be found without enumerating its vertices.

\paragraph{Splitting each graph vertex.}
In our construction, each vertex of the source problem is split into two vertices.
Every graph edge in this split form changes the in or out tag.  As a result, a middle phase edge is never diagonal and cannot
coincide with the diagonal connector edge, even when the original instance
has a self loop.  This removes that intersection case without changing which
original bit strings are represented by sinks.
The normalized input satisfies
\[
 P(0^n)=0^n\ne S(0^n),\qquad P(S(0^n))=0^n.
\]
We form a graph \(H\) with vertices \(x^{\rm in},x^{\rm out}\) and edges
\begin{equation}
 x^{\rm in}\longrightarrow x^{\rm out},\qquad
 x^{\rm out}\longrightarrow S(x)^{\rm in}
       \quad\text{if }P(S(x))=x.
\label{eq:split-H-edges}
\end{equation}
Every vertex has indegree and outdegree at most one, and
\(v_0=(0^n)^{\rm in}\) is the distinguished source.  The sinks are precisely
the \(x^{\rm out}\) with \(P(S(x))\ne x\).  The sources are exactly the \(x^{\rm in}\) satisfying
\(S(P(x))\ne x\).

The geometric construction below must be able to determine, from the bit representation of a split vertex, whether it has an incoming or outgoing edge and, when such an edge exists, which neighboring split vertex it reaches. 
Since the graph \(H\) is represented implicitly through the circuits \(S\) and \(P\), we encode this information by total successor and predecessor functions together with Boolean flags certifying whether the corresponding candidate edge is actually
present. 
These primitives provide the local access to \(H\) that will be used throughout the route construction.

We set \(m=n+1\) and represent split vertices by
\[
 \mathsf{rep}(x^{\rm in})=(x,0),\qquad
 \mathsf{rep}(x^{\rm out})=(x,1).
\]
For a string \(u=(x,t)\) of \(m\) bits representing a split vertex, we define
the following successor and predecessor strings on every input.
\[
 \mathsf{suc}_H(x,t)=
 \begin{cases}(x,1),&t=0,\\(S(x),0),&t=1,
 \end{cases}\qquad
 \mathsf{pred}_H(x,t)=
 \begin{cases}(P(x),1),&t=0,\\(x,0),&t=1,
 \end{cases}
\]
For a Boolean statement \(\mathcal P\), we use
\([\mathcal P]\in\{0,1\}\) for its truth value.  We define the flags
\[
 \mathsf{out}_H(x,t)=[t=0]\vee([t=1]\wedge[P(S(x))=x]),
\]
\[
 \mathsf{in}_H(x,t)=[t=1]\vee([t=0]\wedge[S(P(x))=x]).
\]
Thus
\(\mathsf{Edge}_H(v,w):=\mathsf{out}_H(v)\wedge
[\mathsf{suc}_H(v)=w]\) is a polynomial Boolean predicate for
\eqref{eq:split-H-edges}.

\paragraph{Route states and local predecessor and successor rules.}

We now refine the split graph description into the logical states that will be embedded geometrically.
A route state records two split vertices together with a phase bit indicating which of the two is currently being updated. 
The local rules below determine which such states belong to the intended route and how to move to the immediately preceding or succeeding state by changing only one Boolean coordinate
at a time.

\medskip
We define a logical route state $r$ by
\begin{equation}
                      r=(A,B,\beta)\in\{0,1\}^{2m+1},
\label{eq:logical-route-string}
\end{equation}
where $A, B$ are split vertices and $\beta \in \{0,1\}$. 
A route state is called \emph{diagonal} when \(A=B\).
For \(v,w\in\{0,1\}^m\), we denote by \(H_j(v,w)\) the bit string obtained by replacing
the first \(j\) coordinates of \(v\) by those of \(w\), and define
\[
 \mathsf{Pref}(v,z,w):=\bigvee_{j=0}^m[z=H_j(v,w)].
\]
Repeated hybrids caused by equal coordinates do not create edges. The
successor rule below flips only an actually differing bit.  The complete
validity predicate is
\begin{align}
 \mathsf{Valid}(A,B,\beta):={}&[A=B]\notag\\
 &{}\vee\bigl([\beta=0]\wedge\mathsf{out}_H(A)
       \wedge\mathsf{Pref}(A,B,\mathsf{suc}_H(A))\bigr)\notag\\
 &{}\vee\bigl([\beta=1]\wedge\mathsf{in}_H(B)
       \wedge\mathsf{Pref}(\mathsf{pred}_H(B),A,B)\bigr).
\label{eq:route-valid}
\end{align}
In addition to the coordinate by coordinate routes representing verified
edges of \(H\), the route contains, for every \(u\in V(H)\), the universal
diagonal connector edge
\begin{equation}
                         (u,u,1)\longrightarrow(u,u,0).
\label{eq:universal-record}
\end{equation}
It is independent of the existence of an incoming or outgoing verified
edge.

For completeness, we also define the predecessor and successor circuits.  For
\(x\ne y\), we define
\[
 f_j(x,y)=[x_j\ne y_j]\wedge\bigwedge_{k<j}[x_k=y_k],
 \qquad
 \ell_j(x,y)=[x_j\ne y_j]\wedge\bigwedge_{k>j}[x_k=y_k].
\]
The \(f_j\)'s select the first differing coordinate and the \(\ell_j\)'s the
last.  On a valid route state, \(\mathsf{Next}\) is given by the following
exhaustive rule.
\begin{itemize}
\item If \(\beta=0\), we set \(w=\mathsf{suc}_H(A)\).  If
  \(\mathsf{out}_H(A)=1\), the successor flips coordinate \(j\) of \(B\) for which \(f_j(B,w)=1\) when
  \(B\ne w\), and changes \(\beta:0\to1\) when \(B=w\).  If
  \(\mathsf{out}_H(A)=0\), the successor flag is zero.
\item If \(\beta=1\) and \(A\ne B\), the successor flips coordinate \(j\) of \(A\) for which \(f_j(A,B)=1\).
  If \(A=B\), the successor uses the universal diagonal connector edge and changes
  \(\beta:1\to0\).
\end{itemize}
The predecessor is the reverse rule.
\begin{itemize}
\item If \(\beta=0\) and \(A=B\), the predecessor changes \(\beta:0\to1\).  If
  \(A\ne B\), the predecessor flips coordinate \(j\) of \(B\) for which \(\ell_j(B,A)=1\).
\item If \(\beta=1\), we set \(v=\mathsf{pred}_H(B)\).  When
  \(\mathsf{in}_H(B)=1\) and \(A\ne v\), the predecessor flips coordinate \(j\) of \(A\)
  for which \(\ell_j(A,v)=1\). When \(A=v\ne B\), it changes
  \(\beta:1\to0\).  At a diagonal state with
  \(\mathsf{in}_H(B)=0\), the predecessor flag is zero.
\end{itemize}
For every route state \(r=(A,B,\beta)\), including invalid states, we define
\begin{align}
 \mathsf{hasNext}(r)
   &:={\mathsf{Valid}(r)}\wedge
      \bigl(([\beta=0]\wedge\mathsf{out}_H(A))\vee[\beta=1]\bigr),\notag\\
 \mathsf{hasPrev}(r)
   &:={\mathsf{Valid}(r)}\wedge
      \bigl([\beta=0]\vee([\beta=1]\wedge\mathsf{in}_H(B))\bigr).
\label{eq:total-route-flags}
\end{align}
We set
\(\mathsf{Next}(r)=r\) whenever \(\mathsf{Valid}(r)=0\) or
\(\mathsf{hasNext}(r)=0\), and set \(\mathsf{Prev}(r)=r\) whenever
\(\mathsf{Valid}(r)=0\) or \(\mathsf{hasPrev}(r)=0\).  Thus later formulas
may evaluate both circuits on every Boolean input. The accompanying flags
decide whether the candidate is used. Every nondiagonal flag includes the
corresponding validity test and the predicate verifying the edge.  In particular, at a diagonal state the flags are
\begin{equation}
\begin{array}{c|cc}
 r=(u,u,\beta)&\mathsf{hasPrev}(r)&\mathsf{hasNext}(r)\\ \hline
 \beta=0&1&\mathsf{out}_H(u)\\
 \beta=1&\mathsf{in}_H(u)&1.
\end{array}
\label{eq:route-diagonal-flags}
\end{equation}

The predecessor and successor rules return valid states whenever
their corresponding flags are one, and satisfy
\[
\begin{aligned}
 \mathsf{hasNext}(r)=1
 &\Longrightarrow \mathsf{Prev}(\mathsf{Next}(r))=r,\\
 \mathsf{hasPrev}(r)=1
 &\Longrightarrow \mathsf{Next}(\mathsf{Prev}(r))=r.
\end{aligned}
\]
Along a register update, the successor flips the first remaining
differing bit, and the predecessor reverses the last completed
flip. At a middle phase edge, both rules identify the same verified
split edge. At a diagonal connector, the phase changes are inverses.

Equations~\eqref{eq:route-valid}--\eqref{eq:route-diagonal-flags} are
polynomial size Boolean circuits, making only a constant number of calls to
\(S\) and \(P\).
Table~\ref{tab:route-functions} summarizes these functions and predicates.

\begin{table}[t]
\centering
\small
\renewcommand{\arraystretch}{1.15}
\begin{tabularx}{\textwidth}{
    >{\raggedright\arraybackslash}p{0.27\textwidth}
    >{\raggedright\arraybackslash}X}
\toprule
\textbf{Function / predicate}
&
\textbf{Functionality}
\\
\midrule

\(\mathsf{suc}_H\)
&
Returns the candidate successor of a split vertex.
\\

\(\mathsf{pred}_H\)
&
Returns the candidate predecessor of a split vertex.
\\

\(\mathsf{out}_H\)
&
Checks whether a split vertex has an outgoing edge in \(H\).
\\

\(\mathsf{in}_H\)
&
Checks whether a split vertex has an incoming edge in \(H\).
\\

\(\mathsf{Edge}_H\)
&
Checks whether a directed edge \(v\to w\) belongs to \(H\).
\\

\(\mathsf{Pref}\)
&
Checks whether a register value is a prefix hybrid along a
coordinate by coordinate route.
\\

\(\mathsf{Valid}\)
&
Checks whether \((A,B,\beta)\) is a valid logical route state.
\\

\(f_j\)
&
Identifies the first differing coordinate between two bit strings.
\\

\(\ell_j\)
&
Identifies the last differing coordinate between two bit strings.
\\

\(\mathsf{hasNext}\)
&
Checks whether a route state has a valid successor.
\\

\(\mathsf{hasPrev}\)
&
Checks whether a route state has a valid predecessor.
\\

\(\mathsf{Next}\)
&
Returns the next route state when one exists, and otherwise returns the
input state.
\\

\(\mathsf{Prev}\)
&
Returns the previous route state when one exists, and otherwise returns the
input state.
\\

\bottomrule
\end{tabularx}
\caption{Functions and predicates used to describe the split graph and the
logical route.}
\label{tab:route-functions}
\end{table}

Here the predicates are Boolean operations on logical route states.  Section~\ref{sec:discrete-interpolation} will use a weighted representation of
these Boolean operations to compute the interpolated field.  We already specify the physical
encoding and decoder, since both the geometry and the endpoint argument use them.

The physical
Boolean levels and a decoder that is exact on the stated neighborhoods are
\begin{equation}
 c(0)=\frac14,\qquad c(1)=\frac34,\qquad
 \mathsf{bit}(w)=\left[4w-\frac32\right]_0^1.
 \label{eq:route-boolean-levels}
 \tag{Decoder}
\end{equation}
Whenever \(w\) is within \(1/8\) of one of these Boolean levels, \(\mathsf{bit}(w)\) is
exactly the nearer bit.

\paragraph{The geometric route has no unintended intersections.}
We define
\[
 D=2m+1=2n+3,\qquad d=D+1=2n+4.
\]
The first \(D\) coordinates store the two registers and the phase. We denote the
\textit{phase coordinate} by \(t=D\), and the \textit{auxiliary coordinate} index by
\(a=d=D+1\). This coordinate is fixed along every route segment and
will be used together with the two changing coordinates at a turn.
Only the phase is stored complemented. Let
\begin{equation}
             \bar\beta=1-\beta,\qquad x_t=c(\bar\beta).
\label{eq:complemented-phase}
\end{equation}
Thus the decoder supplies
\(\beta=1-\mathsf{bit}(x_t)\).  A verified edge \(v\to w\) first
changes the differing bits of \(B\) from \(v\) to \(w\) in increasing order,
then changes \(\beta:0\to1\), then changes \(A\) from \(v\) to \(w\) in
increasing order.  The diagonal connector
edge~\eqref{eq:universal-record} follows.  In physical coordinates a middle phase move is \(1\to0\), while a diagonal connector move
is \(0\to1\).  Thus the distinguished logical source
\((0,0,\beta=1)\) is the unique all zero decoded physical bit pattern. All
of its route coordinates are at level \(1/4\), rather than at the cube
origin.  Its diagonal connector edge has tangent \(+e_t\).  The state
\((0,0,\beta=0)\) receives this
connector and, by the definition of the source problem, it has a verified outgoing coordinate by coordinate route. Thus this state is a turn.

We next verify that the route representation introduces no unintended coincidences among its states or phase edges. We fix a verified split graph edge \(v\to w\), and let \(z^0=v,z^1,\ldots,z^r=w\) be the distinct intermediate strings obtained by successively flipping the coordinates on which \(v\) and \(w\) differ. The first half of the route consists of states \((v,z^j,0)\). If two such states, possibly arising from different verified edges, coincide, then their first registers give the same tail \(v\). Since every split graph vertex has outdegree at most one, this determines the same verified edge \(v\to w\). The distinctness of the prefix hybrids then gives the same index \(j\). Similarly, second half states have the form \((z^j,w,1)\). Equality of two such states first determines the same head \(w\), then, by indegree at most one, the same verified edge, and hence the same \(j\). The phase bit distinguishes first half from second half states. Finally, the phase changing edge in the middle of a routed edge has the form \((v,w,0)\to(v,w,1)\) with \(v\neq w\), because every edge of the split graph changes the in or out tag. In contrast, a universal diagonal connector has the form \((u,u,1)\to(u,u,0)\), with equal register values. Thus these two kinds of phase edge can never coincide, even when the original graph contains a self loop.

It remains to verify that distinct route segments have no unintended intersections. Each segment changes exactly one Boolean coordinate. We consider two distinct edges of the binary cube. If they move the same coordinate, then, because the edges are distinct, they must differ in some other coordinate that is fixed along both edges. If instead they move different coordinates and agree on every coordinate fixed along both, then they necessarily share an endpoint. On each edge we choose the value of its moving coordinate that is fixed by the other edge. Thus any two distinct cube edges either share an endpoint or differ in a coordinate that is fixed on both. The preceding register and phase argument has already identified all route segments that are intended to share an endpoint. Every other pair therefore differs in some fixed Boolean coordinate. Under the physical encoding \(0\mapsto 1/4\), \(1\mapsto3/4\), the two segments are separated by \(1/2\) in that coordinate. In particular, nonconsecutive segments of the same coordinate by coordinate route cannot intersect, since some bit changed on an intervening segment has different fixed values on the two pieces. Thus, segments meet only at their common route state. Consecutive segments change different coordinates, so their tangent directions are distinct signed coordinate vectors and hence are orthogonal. Therefore every route vertex of degree two is a right angle turn.

We now express the embedding in integer coordinates, before dividing by
\(M=128\). If \(r\in\{0,1\}^D\) is a stored physical bit string, its center is
\begin{equation}
c(r)_i=32+64r_i\quad(i\in[D]),\qquad c(r)_a=35.
 \label{eq:discrete-route-centers}
 \nonumber
\end{equation}
The embedded route joins \(c(r)/M\) to \(c(r')/M\) whenever the corresponding
logical states are consecutive. Every edge has length \(64\) in integer
coordinates. Nonincident edges differ by \(64\) in a coordinate fixed on
both. The canonical center \(c_\star=c(0^D)\) has its first \(D\)
coordinates equal to \(32\), and auxiliary coordinate \(35\).

Every route source is a diagonal state \((u,u,1)\), and every route sink
is a diagonal state \((u,u,0)\). Their incident directed edges are the
universal diagonal edges. After complementing the phase, all these edges
point in direction \(+e_t\). In particular, every source lies at phase
level \(32\), every sink lies at phase level \(96\), and the canonical
source has only its outgoing phase edge. This common endpoint direction
will simplify the determinant calculation.

\subsection{One grid and its affine interpolation}
\label{sec:discrete-grid}

The coloring is defined on the integer grid
$\{0,\ldots,M\}^d$, where $M=128$.
An integer grid vertex \(v\in\{0,\ldots,M\}^d\) represents the point
\(v/M\in[0,1]^d\). Each such vertex is assigned one of \(d+1\) colors,
and each color specifies the value of the vector field at that grid
point. We then extend the field to all of \([0,1]^d\) by using one fixed
triangulation of every grid cube and interpolating affinely on each
simplex. In particular, once two neighboring regions assign the same
colors to the vertices of their common face, the resulting affine fields
automatically agree on that face.

\begin{redstatementbox}
We use the \(d+1\) color field values
\medskip
\begin{equation}
q_0=-\one,
\qquad
q_i=e_i
\quad (i\in[d]).
\label{eq:discrete-palette}\tag{Colors}
\end{equation}
The unit grid cubes are indexed by their lower corners
$v\in\{0,\ldots,M-1\}^d$,
since the cube with lower corner \(v\) has integer vertices
\(v+\delta\), where \(\delta\in\{0,1\}^d\).

We triangulate each such cube by the standard Kuhn triangulation.
For every permutation \(\sigma\) of \([d]\), we introduce the simplex
whose vertices are
\begin{equation}
\frac{v}{M},
\quad
\frac{v+e_{\sigma(1)}}{M},
\quad
\frac{v+e_{\sigma(1)}+e_{\sigma(2)}}{M},
\quad
\ldots,
\quad
\frac{v+e_{\sigma(1)}+\cdots+e_{\sigma(d)}}{M}.
\label{eq:kuhn-simplex}
\tag{KuhnT}
\end{equation}
Equivalently, starting from the lower corner \(v/M\), the simplex is
obtained by increasing the \(d\) coordinates one at a time in the order
specified by \(\sigma\). The \(d!\) choices of \(\sigma\) partition each
grid cube into \(d!\) simplices.

A coloring is a map
$$
    \chi:\{0,\ldots,M\}^d\longrightarrow\{0,\ldots,d\}.
$$
At each grid vertex we set
$$
    G(v/M)=q_{\chi(v)}.
$$
We then define \(G\) at every point of \([0,1]^d\) by affine
interpolation on each Kuhn simplex. A simplex is called
\emph{fully colored} if its \(d+1\) vertices receive all \(d+1\) colors
\(0,1,\ldots,d\).
\end{redstatementbox}

\noindent
For \(x\in[0,1]^d\), set
\begin{equation}
 b_i=\min\{\lfloor Mx_i\rfloor,M-1\},\qquad
 f_i=Mx_i-b_i \quad(i\in[d]).
 \label{eq: triang}
\end{equation}
Then \(Mx=b+f\), where \(b\in\{0,\ldots,M-1\}^d\)
is the lower integer corner of the selected grid cube and
\(f\in[0,1]^d\) is the vector of local coordinates.
In particular, \(f_i=1\) when \(x_i=1\).
If the local coordinates are distinct and
$$
    f_{\sigma(1)}
    >
    f_{\sigma(2)}
    >
    \cdots
    >
    f_{\sigma(d)},
$$
then \(x\) lies in the simplex corresponding to the permutation
\(\sigma\). Its barycentric coefficients are
$$
    1-f_{\sigma(1)},
    \quad
    f_{\sigma(1)}-f_{\sigma(2)},
    \quad
    \ldots,
    \quad
    f_{\sigma(d-1)}-f_{\sigma(d)},
    \quad
    f_{\sigma(d)}.
$$
When some local coordinates are equal, the point lies on a face
shared by the corresponding simplices.

The same triangulation is used in every grid cube. On a face shared by
two neighboring cubes, the induced triangulations coincide: one simply
orders the local coordinates corresponding to the directions
tangent to that face. Since the two affine interpolations use the same
field values at all vertices of the shared face, they agree everywhere
on that face. Hence the resulting field \(G\) is continuous on
\([0,1]^d\).

Permuting coordinate axes preserves the Kuhn triangulation. Reflecting a
single coordinate generally does not, which is why the signed turn rules
below are specified separately for the different choices of direction.

\begin{lemma}[Zeros and their local matrices]
\label{lem:kuhn-barycenter}
The zeros of \(G\) are exactly the barycenters of its fully colored
simplices. Every zero is regular and lies in the interior of a simplex of
dimension \(d\). If its vertices are ordered as \(p_0,\ldots,p_d\)
according to their colors, then
\begin{equation}
 E=[p_1-p_0\mid\cdots\mid p_d-p_0],\qquad
 A_p=(I_d+\one\one^{\transpose})E^{-1},\qquad
 \sign\det(-A_p)=(-1)^d\sign\det E.
 \label{eq:kuhn-local-matrix}
\end{equation}
Moreover, \(\lvert\det A_p\rvert=(d+1)M^d\) and
\(\|G(x)\|_\infty\le1\) throughout the cube.
\end{lemma}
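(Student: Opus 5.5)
The plan is to argue simplex by simplex, using that \(G\) is affine on each Kuhn simplex and that the color vectors satisfy the unique linear dependence recorded above. Let \(x\) lie in a \(d\)-simplex with vertices \(v_0,\ldots,v_d\) and barycentric coefficients \(\lambda_0,\ldots,\lambda_d\ge0\) summing to one. Then
\[
G(x)=\sum_k\lambda_k q_{\chi(v_k)}=\sum_{c=0}^d\alpha_c q_c,
\]
where \(\alpha_c\) is the total weight of the vertices colored \(c\).

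Since \(q_1,\ldots,q_d\) form a basis and \(q_0=-\one\), this sum vanishes iff \(\alpha_1=\cdots=\alpha_d=\alpha_0\). As the \(\alpha\)'s sum to one, each equals \(1/(d+1)\), so every color occurs. A simplex has only \(d+1\) vertices, so it must be fully colored with each color on exactly one vertex. Then \(\lambda_k=1/(d+1)\) for all \(k\), and \(x\) is the barycenter. Conversely, the barycenter of a fully colored simplex gives \(\sum_c q_c/(d+1)=0\).

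The barycenter has all barycentric coefficients positive, so it lies in the interior of that \(d\)-simplex. In local coordinates its \(f\)-values are the distinct numbers \(j/(d+1)\), so it lies on no shared face. Hence \(G\) is given near \(p\) by the single affine formula of that simplex.

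Next I compute the local matrix. Order the vertices so that \(G(p_i)=q_i\). Affine interpolation gives \(G(x)=A_p(x-p)\) with \(A_p(p_i-p_0)=q_i-q_0=e_i+\one\). Thus \(A_pE=I_d+\one\one^{\transpose}\), and \(E\) is invertible because the simplex is nondegenerate. This gives \(A_p=(I_d+\one\one^{\transpose})E^{-1}\).

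By the matrix determinant lemma, \(\det(I_d+\one\one^{\transpose})=d+1>0\). Hence
\[
\sign\det(-A_p)=(-1)^d\sign\det E,
\]
and \(A_p\) is nonsingular, so every zero is regular.

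For the magnitude of \(\det A_p\), I use that a Kuhn simplex has edge matrix, relative to its lower vertex, equal to a permutation of a triangular \(0/1\) matrix scaled by \(1/M\). That matrix has determinant \(\pm M^{-d}\). Changing the base vertex to \(p_0\) preserves \(|\det E|\), since the volume is invariant. Therefore \(|\det A_p|=(d+1)M^d\).

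Finally, \(G(x)\) is a convex combination of the \(q_c\), each with \(\|q_c\|_\infty=1\), so \(\|G(x)\|_\infty\le1\).

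The only step needing care is checking that "zero implies each color on exactly one vertex" is airtight. A point on a lower-dimensional face has some \(\lambda_k=0\), so it cannot carry total weight \(1/(d+1)\) on every one of the \(d+1\) colors using only \(d\) vertices. That rules out zeros on faces, including zeros that would otherwise appear on shared faces of two simplices.
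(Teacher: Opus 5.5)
Your proposal is correct and follows essentially the same route as the paper's proof. You group the barycentric weights by color to force all color weights to equal $1/(d+1)$, which yields a fully colored simplex with the zero at its barycenter. You derive $A_pE=I_d+\one\one^{\transpose}$ from affine interpolation, use the matrix determinant lemma for the sign, compare $E$ with the triangular Kuhn edge matrix via volume to get $|\det E|=M^{-d}$, and bound $\|G\|_\infty$ by convexity. Your remark that the barycenter's local coordinates are the distinct values $j/(d+1)$ is just a more explicit version of the paper's observation that it lies in the relative interior, so the same affine piece is active nearby.
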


\begin{proof}
At any point of a simplex, the field is a convex combination of its vertex
values. If \(\mu_i\) denotes the total coefficient assigned to color
\(i\), then the \(j\)th coordinate of the field is \(\mu_j-\mu_0\).
It vanishes in every coordinate exactly when
\(\mu_0=\cdots=\mu_d=1/(d+1)\). A simplex has at most \(d+1\) vertices,
so this is possible only when every color occurs exactly once. All its
barycentric coefficients must then equal \(1/(d+1)\), giving its
barycenter. In particular, no proper face contains a zero.

Consider a simplex containing a zero \(p\). By the preceding argument, the
simplex is fully colored and \(p\) is its barycenter. Denote by
\(p_0,p_1,\ldots,p_d\) its vertices, ordered so that \(p_i\) has color
\(i\). Define the edge matrix
$E=[p_1-p_0\mid\cdots\mid p_d-p_0]$.
Since a Kuhn simplex is \(d\)-dimensional, its \(d+1\) vertices are
affinely independent. Equivalently, the vectors
\(p_1-p_0,\ldots,p_d-p_0\) are linearly independent, so \(E\) is
invertible.
The field \(G\) is affine on this simplex. Hence there is a unique matrix
\(A_p\) such that
$$
    G(x)=G(p_0)+A_p(x-p_0)
$$
throughout the simplex. Evaluating this identity at \(p_i\), \(i\in[d]\),
gives
$$
    A_p(p_i-p_0)
    =
    G(p_i)-G(p_0)
    =
    q_i-q_0.
$$
Collecting these \(d\) identities columnwise yields
$$
    A_pE
    =
    [q_1-q_0\mid\cdots\mid q_d-q_0]
    =:Q.
$$
Since \(q_0=-\one\) and \(q_i=e_i\), we have
$$
    q_i-q_0=e_i+\one,
$$
and therefore
$$
    Q=I_d+\one\one^{\transpose}.
$$
The matrix determinant lemma gives
$$
    \det Q
    =
    \det(I_d+\one\one^{\transpose})
    =
    1+\one^{\transpose}\one
    =
    d+1.
$$
In particular, \(Q\) is nonsingular. Since \(E\) is also nonsingular,
$$
    A_p=QE^{-1},
$$
so \(A_p\) is nonsingular as well.
Let \(v_0,v_1,\ldots,v_d\) be the vertices of the same simplex in Kuhn
chain order, with \(v_0\) its lower corner, and let
\(\sigma\in S_d\) be the permutation defining this simplex in
\eqref{eq:kuhn-simplex}.  Define
\[
    F=[v_1-v_0\mid\cdots\mid v_d-v_0].
\]
Using~\eqref{eq:kuhn-simplex}, after reordering the rows of \(F\) in the
order \(\sigma(1),\ldots,\sigma(d)\), the resulting matrix is
\[
    \frac1M
    \begin{pmatrix}
        1&1&\cdots&1\\
        0&1&\cdots&1\\
        \vdots&\ddots&\ddots&\vdots\\
        0&\cdots&0&1
    \end{pmatrix}.
\]
For any full-dimensional simplex with vertices
$z_0,\ldots,z_d$, the absolute determinant of its edge matrix satisfies
\[
\left|
\det[z_1-z_0\mid\cdots\mid z_d-z_0]
\right|
=
d!\,\operatorname{Vol}\bigl(\operatorname{conv}\{z_0,\ldots,z_d\}\bigr).
\]
Changing the base vertex or reordering the vertices can only change the
sign of this determinant, not its absolute value.  Since $E$ and $F$ are
edge matrices of the same simplex, it follows that $|\det E|=|\det F|$.
After permuting the rows of $F$, we obtain an upper triangular matrix
with every diagonal entry equal to $1/M$.  Hence
\[
|\det F|=M^{-d},
\]
and therefore
\[
|\det E|=|\det F|=M^{-d}.
\]
It follows that
$$
    |\det A_p|
    =
    \frac{\det Q}{|\det E|}
    =
    (d+1)M^d,
$$
and
$$
    \sign\det(-A_p)
    =
    (-1)^d\sign\det(A_p)
    =
    (-1)^d\sign\det(E),
$$
because \(\det Q=d+1>0\).

Finally, since \(p\) is the barycenter of a full-dimensional simplex, it
lies in its relative interior. Thus the same affine piece is active in a
neighborhood of \(p\), and, using \(G(p)=0\),
$$
    G(x)=A_p(x-p)
$$
throughout that neighborhood. Moreover, every coordinate of every
palette vector \(q_i\) belongs to \([-1,1]\). Since \(G(x)\) is a convex
combination of the vertex values on each simplex, we have
$\|G(x)\|_\infty\le 1$
throughout the cube.

\end{proof}

The rest of the geometric argument is therefore discrete. It suffices to
identify every fully colored simplex and compute the orientation of its
color ordered vertices. We use this approach in the spirit of the route
coloring in \citet{DaskalakisGoldbergPapadimitriou2009}. The explicit rules
below also keep track of each individual sign and the exact number of
fully colored simplices at every endpoint.

\subsection{Colors along straight segments and turns}
\label{sec:discrete-local-rules}

The purpose of the following local rules is to exclude a fully colored simplex except at an endpoint (i.e., noncanonical source or sink). Along a straight segment, the rule makes one color absent
from each relevant simplex. At a turn, four fixed tables give the same
property in three coordinates, and a simple rule extends it to all $d$
coordinates. The tables agree with the straight rule at their two ends.
The only remaining choice is the order in which the perpendicular
coordinates are tested. This order can change along a straight segment
without introducing a fully colored simplex.




\paragraph{The rule along a straight segment.}
We work in the integer coordinates introduced above. Consider a directed
route segment with route center $c$ and direction
$\sigma e_i$,
with $i\le D$ and
$\sigma\in\{-1,+1\}.$
Thus coordinate \(i\) is the coordinate that changes along the segment.
All coordinates \(j\ne i\) remain fixed at the corresponding coordinates
of a route center \(c\).

For an integer grid vertex \(v\), define its offset from the route center by $z_j=v_j-c_j$.
The coloring rule depends only on the coordinates \textit{perpendicular} to the
segment, namely the coordinates \(j\ne i\). In particular, it does not depend on \(z_i\), so the same coloring rule is used at every position along the straight segment.
A vertex can receive a positive color only if $z_j\in\{0,1\}$ for every $j\ne i.$
Equivalently, in every coordinate different from the route direction,
the vertex must have coordinate either \(c_j\) or \(c_j+1\). If this
condition fails for at least one \(j\ne i\), then the vertex receives
color \(0\).

We now specify which positive color is assigned. Choose an order
$\pi$ of the \(d-1\) indices different from \(i\). This order gives a priority
among the perpendicular coordinates: we inspect them one by one in the
order \(\pi\).

For every \(j\ne i\), define

$$
    \beta_j^\sigma=
    \begin{cases}
        1, & \text{if }j=a\text{ and }\sigma=-1,\\
        0, & \text{otherwise}.
    \end{cases}
$$
So the auxiliary coordinate is exactly the part of the above rule that makes the local coloring different for \(+e_i\) and \(-e_i\).
This distinction between the two signs is what allows the coloring of a straight segment
to match the signed turn rules introduced below.
We define
\(J_\sigma(z):=\{j\ne i\mid z_j=\beta_j^\sigma\}\),
the set of perpendicular indices whose test is satisfied.
The coloring rule is as follows:
\begin{definitionbox}
\begin{equation}
\chi_{i,\sigma,\pi}(z)=
\begin{cases}
0,
& \text{if some } z_j\notin\{0,1\},\ j\ne i, \\[1mm]
\min_\pi J_\sigma(z),
& \text{if } J_\sigma(z)\ne\varnothing, \\[1mm]
i,
& \text{if } J_\sigma(z)=\varnothing.
\end{cases}
\label{eq:discrete-straight-rule}
\tag{Str}
\end{equation}
\end{definitionbox}

Here \(\min_\pi J_\sigma(z)\) denotes the first element of
\(J_\sigma(z)\) in the order \(\pi\), rather than the numerically
smallest index. 
For a segment with direction \(+e_i\), the rule therefore assigns the
first perpendicular color, in the order \(\pi\), whose offset is \(0\).
If all perpendicular offsets are equal to \(1\), then no test is
satisfied and the assigned color is \(i\).
For a segment with direction \(-e_i\), the rule is identical except that
the auxiliary coordinate \(a\) is tested against \(1\) instead of \(0\).
Again, the first successful test determines the color, and if none is
successful, the color is \(i\).


\begin{lemma}[Straight segments and changes of order]
\label{lem:discrete-straight}
The coloring in~\eqref{eq:discrete-straight-rule} has no fully colored
Kuhn simplex. The same holds if the order $\pi$ is allowed to depend on
the integer value of the route coordinate $v_i$.
\end{lemma}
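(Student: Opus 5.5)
The plan is to show that in any Kuhn simplex the color $0$ and the full set of positive colors cannot appear together. The argument will use only which offset values occur in the grid cube containing the simplex. It never looks at the order $\pi$ in which the perpendicular coordinates are tested, so the same argument will cover an order $\pi$ that depends on $v_i$.

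First, fix a Kuhn simplex lying in the grid cube with lower corner $v$, and write $z=v-c$ for the offset of its lower corner. Every vertex of the simplex is of the form $v+\delta$ with $\delta\in\{0,1\}^d$. Hence, in each perpendicular coordinate $j\ne i$, the offsets of all its vertices lie in the two-element set $\{z_j,z_j+1\}$.

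Suppose, for contradiction, that the simplex is fully colored. I will extract two facts about each perpendicular coordinate $j\ne i$. Such a coordinate exists because $d\ge2$.
\begin{enumerate}
\item Some vertex has color $i$. By \eqref{eq:discrete-straight-rule}, this vertex has all perpendicular offsets in $\{0,1\}$ and $J_\sigma=\varnothing$. So its offset in coordinate $j$ equals $1-\beta_j^\sigma$.
\item Some vertex has color $j$. By the same rule, its perpendicular offsets lie in $\{0,1\}$ and $j\in J_\sigma$. So its offset in coordinate $j$ equals $\beta_j^\sigma$.
\end{enumerate}
This holds whatever $\pi$ is used at these two vertices. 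The only property of the rule needed is that a positive color $j\ne i$ forces $z_j=\beta_j^\sigma$, and color $i$ forces $z_j=1-\beta_j^\sigma$ for all $j\ne i$.

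Since $\beta_j^\sigma\in\{0,1\}$, both offset values $0$ and $1$ occur in coordinate $j$ among vertices of one cube. Therefore $\{z_j,z_j+1\}=\{0,1\}$, that is, $z_j=0$, for every $j\ne i$. It follows that every vertex of the cube, and in particular of the simplex, has all perpendicular offsets in $\{0,1\}$. By the first case of \eqref{eq:discrete-straight-rule}, no vertex then receives color $0$, which contradicts full coloring.

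For the second statement, let $\pi$ depend on $v_i$, so that different vertices of one simplex may be colored using different orders $\pi$. The argument above is unaffected: it used only two facts, neither of which depends on $\pi$. These are that color $0$ is decided by the $\pi$-independent test $z_j\notin\{0,1\}$, and that a positive color $k$ certifies the value of $z_k$ (respectively of all $z_j$ when $k=i$). The point that needs the most care is checking that color $i$ genuinely forces $z_j=1-\beta_j^\sigma$ in every perpendicular coordinate, including the auxiliary coordinate $a$ when $\sigma=-1$. This follows directly from $J_\sigma(z)=\varnothing$ together with $z_j\in\{0,1\}$.
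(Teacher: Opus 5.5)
Your proof is correct and is essentially the paper's argument in contrapositive form. The paper does a case analysis on the perpendicular intervals of the grid cube: if all equal $[0,1]$, color $0$ is absent; if some interval meets $\{0,1\}$ in a single value $b$, then color $i$ or color $j$ is absent according to whether $b=\beta_j^\sigma$. It uses exactly your two $\pi$-independent facts, and like you it observes that only membership in $J_\sigma(z)$ matters, never which element $\pi$ selects first, so the order may vary with $v_i$.
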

\begin{proof}
A unit cube whose perpendicular intervals are all $[0,1]$ has only
positive colors, so it misses color $0$. A cube with no vertex satisfying
the perpendicular restrictions has only color $0$.

Consider now a unit grid cube \(C\) which is neither of the two cases
described above. For each perpendicular coordinate \(j\ne i\), let
\[
    I_j=[\ell_j,\ell_j+1]
\]
be the interval of \(z_j\)-values attained by the vertices of \(C\),
where \(\ell_j\in\mathbb Z\).

Since \(C\) contains at least one vertex that can receive a positive
color, we must have
\[
    I_j\cap\{0,1\}\ne\varnothing
    \qquad\text{for every }j\ne i.
\]
On the other hand, since not all perpendicular intervals are equal to
\([0,1]\), there exists some \(j\ne i\) for which
$I_j\ne[0,1]$.
Because \(I_j\) is a unit interval with integer endpoints and intersects
\(\{0,1\}\), it follows that
$I_j\cap\{0,1\}=\{b\}$ for some \(b\in\{0,1\}\).

We distinguish two cases.

\begin{itemize}

\item First suppose that
$b=\beta_j^\sigma$.
Let \(v\) be any vertex of \(C\) receiving a positive color. By the
definition of the straight rule, all its perpendicular offsets belong to
\(\{0,1\}\). Since the only value of \(z_j\) in \(I_j\cap\{0,1\}\) is
\(b\), we necessarily have
\[
    z_j=b=\beta_j^\sigma.
\]
Hence \(j\in J_\sigma(z)\), so \(J_\sigma(z)\ne\varnothing\). The rule
therefore cannot assign color \(i\), since color \(i\) is assigned only
when \(J_\sigma(z)=\varnothing\). Thus color \(i\) is absent from \(C\).

\item Now suppose that
$b\ne\beta_j^\sigma$.
Again, every vertex \(v\) of \(C\) receiving a positive color must satisfy
\(z_j=b\). Hence
\[
    z_j\ne\beta_j^\sigma,
\]
so \(j\notin J_\sigma(z)\). But the straight rule can assign color \(j\)
only if \(j\in J_\sigma(z)\). Therefore no positive-colored vertex of
\(C\) has color \(j\). The remaining vertices have color \(0\), so they
cannot supply color \(j\) either. Thus color \(j\) is absent from \(C\).
\end{itemize}

\noindent
In either case, the cube \(C\), and hence every Kuhn simplex contained in
\(C\), misses at least one color. Therefore no such simplex is fully
colored.

Finally, the argument above depends only on whether
\(J_\sigma(z)\) is empty and on whether a fixed index \(j\) belongs to
\(J_\sigma(z)\); it does not depend on which element of
\(J_\sigma(z)\) is selected first by the order \(\pi\). Thus,
the same conclusion remains valid if the order \(\pi\) changes between
consecutive values of the route coordinate \(v_i\).
\end{proof}

\noindent
Next, we define the rules taking place at a turn of our construction.

\paragraph{\emph{Main idea of the turn coloring}.}
At a turn, the route changes direction from $\sigma e_i$ to $\tau e_j$, with $i\neq j$. The straight-segment rule is therefore replaced in a small neighborhood of the corner by an explicit coloring in the three coordinates $(i,j,a)$. 
The four possible sign combinations $(\sigma,\tau)\in\{-1,+1\}^2$ are handled by four fixed tables.
Each table is chosen so that it agrees with the incoming straight rule on one side of the corner and with the outgoing straight rule on the other side, while ensuring that no Kuhn tetrahedron in the turn region is fully colored. The extension assigns color \(0\) if, for some remaining
coordinate \(k\), the difference \(v_k-c_k\) lies outside
\(\{0,1\}\), or if the three coordinate rule gives color \(0\).
Otherwise, the remaining coordinates are tested in order, and
the positive table color is used only when all these differences
equal \(1\). In this way, the coloring changes direction consistently without creating any new zero of the interpolated field.

\paragraph{Four tables for turns.}
We consider a route center whose incoming direction is $\sigma e_i$
and the outgoing direction is $\tau e_j$, where $i\ne j$.

We first give the rule in the three coordinates $(i,j,a)$. Their offsets
from the center are denoted by $(x,y,z)$. In the tables below, the colors
$1,2,3$ stand for $i,j,a$, respectively. Color $0$ keeps its meaning.

The incoming straight segment reaches the plane $x=-2\sigma$. On that
plane its perpendicular order is $(a,j)$. The outgoing segment starts at
the plane $y=2\tau$, with perpendicular order $(a,i)$. We denote these two rules by
$P_x^\sigma$ and $P_y^\tau$. Let $\tilde \beta(+1)=0$ and $\tilde\beta(-1)=1$. We define the incoming and outgoing rules as follows:
\begin{equation}
\begin{aligned}
 P_x^\sigma(y,z)&=
 \begin{cases}
 0,&(y,z)\notin\{0,1\}^2,\\
 3,&z=\tilde\beta(\sigma),\\
 2,&z\ne\tilde\beta(\sigma),\ y=0,\\
 1,&z\ne\tilde\beta(\sigma),\ y=1,
 \end{cases}
 &
 P_y^\tau(x,z)&=
 \begin{cases}
 0,&(x,z)\notin\{0,1\}^2,\\
 3,&z=\tilde\beta(\tau),\\
 1,&z\ne\tilde\beta(\tau),\ x=0,\\
 2,&z\ne\tilde\beta(\tau),\ x=1.
 \end{cases}
\end{aligned}
\label{eq:discrete-turn-ends}
\tag{In-Out}
\end{equation}
The cases are evaluated from top to bottom.
It is easy to see that \eqref{eq:discrete-turn-ends} and \eqref{eq:discrete-straight-rule} match: $P_x^\sigma(y,z)$ is exactly the straight rule with priority order $(a,j)$ and $P_y^\tau(x,z)$ is exactly the straight rule with priority order $(a,i)$.

Between those two sides, we do not use a single formula but prescribe a finite set of colors explicitly.
The following lookup tables explicitly specify the colors for $x,y\in\{-2,-1,0,1,2\}$ and $z\in\{0,1\}$.
Their columns are ordered by increasing $x$, and their rows by decreasing
$y$, namely $2,1,0,-1,-2$. The superscript is the value of $z$. The subscripts are the signs of $\sigma$ and $\tau$.
\begingroup
\setlength{\arraycolsep}{3pt}
\begin{align}
 T_{++}^{0}&=\begin{pmatrix}
0&0&3&3&0\\3&3&3&0&0\\3&3&0&0&0\\0&0&0&0&0\\0&0&0&0&0
\end{pmatrix},&
 T_{++}^{1}&=\begin{pmatrix}
0&0&1&2&0\\1&1&2&0&0\\2&2&0&0&0\\0&0&0&0&0\\0&0&0&0&0
\end{pmatrix},\notag\\[4pt]
 T_{+-}^{0}&=\begin{pmatrix}
0&0&0&0&0\\3&0&0&0&0\\3&3&1&0&0\\0&3&1&0&0\\0&0&1&2&0
\end{pmatrix},&
 T_{+-}^{1}&=\begin{pmatrix}
0&0&0&0&0\\1&1&1&0&0\\2&2&2&2&0\\0&0&3&2&0\\0&0&3&3&0
\end{pmatrix},\notag\\[4pt]
 T_{-+}^{0}&=\begin{pmatrix}
0&0&3&3&0\\0&0&1&3&1\\0&0&1&1&2\\0&0&0&2&0\\0&0&0&0&0
\end{pmatrix},&
 T_{-+}^{1}&=\begin{pmatrix}
0&0&1&2&0\\0&0&0&2&3\\0&0&0&2&3\\0&0&0&0&0\\0&0&0&0&0
\end{pmatrix},\notag\\[4pt]
 T_{--}^{0}&=\begin{pmatrix}
0&0&0&0&0\\0&0&0&0&1\\0&0&0&1&2\\0&0&1&2&0\\0&0&1&2&0
\end{pmatrix},&
 T_{--}^{1}&=\begin{pmatrix}
0&0&0&0&0\\0&0&0&0&3\\0&0&0&3&3\\0&0&0&3&0\\0&0&3&3&0
\end{pmatrix}.
\label{eq:discrete-turn-tables}
\tag{Tables}
\end{align}
\endgroup

We denote the complete coloring in three coordinates by
\(\theta_{\sigma,\tau}\), defined as follows:
\begin{definitionbox}
\begin{equation}
\theta_{\sigma,\tau}(x,y,z)=
\begin{cases}
0,
& \text{if } z\notin\{0,1\},\\[1mm]

P_x^\sigma(y,z),
& \text{if } z\in\{0,1\}\text{ and }\sigma x\le -2,\\[1mm]

P_y^\tau(x,z),
& \text{if } z\in\{0,1\}\text{ and }\tau y\ge 2,\\[1mm]

T_{\sigma,\tau}^{\,z}(x,y),
& \text{if } z\in\{0,1\},\ (x,y)\in[-2,2]^2,
\\
& \qquad \sigma x>-2,\ \tau y<2,\\[1mm]

0,
& \text{otherwise}.
\end{cases}
\label{eq:turn-coloring}
\tag{Turn}
\end{equation}
\end{definitionbox}

If both straight conditions apply, both values are zero.

\medskip
\noindent
The following lemma demonstrates the above properties and verifies that \eqref{eq:turn-coloring} does not create any zeros.

\begin{lemma}[Turn tables]
\label{lem:discrete-turn-tables}
Each table in \eqref{eq:discrete-turn-tables} agrees with~\eqref{eq:discrete-turn-ends} on its incoming and
outgoing planes. Its remaining vertices with $|x|=2$ or $|y|=2$ have
color $0$. Each coloring $\theta_{\sigma,\tau}$ defined in \eqref{eq:turn-coloring} has no fully colored Kuhn tetrahedron.
\end{lemma}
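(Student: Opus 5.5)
The proof splits into three parts: a finite check of the tables against (In-Out), a check of the remaining boundary entries, and the exclusion of fully colored tetrahedra. Everything reduces to a finite combinatorial verification, organized so that most of it can be discharged by the argument already used in Lemma~\ref{lem:discrete-straight}.

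\emph{Agreement with (In-Out).} For each sign pair $(\sigma,\tau)$ I would identify the incoming plane $x=-2\sigma$ and the outgoing plane $y=2\tau$. These are the first or last column and the first or last row of $T^z_{\sigma,\tau}$, read with rows ordered by decreasing $y$. I would then compare entrywise with $P_x^\sigma(y,z)$ and $P_y^\tau(x,z)$. For example, for $(+,+)$ the column $x=-2$ should read, from $y=2$ down to $y=-2$, the values $P_x^+(y,z)$. These are $3$ at $y\in\{0,1\}$ and $z=0$, and $1,2$ at $y=1,0$ and $z=1$. The row $y=2$ should read $P_y^+(x,z)$, namely $3,3$ at $x=0,1$ and $z=0$, and $1,2$ at $z=1$.

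\emph{Remaining boundary entries.} The same inspection shows that the other entries with $|x|=2$ or $|y|=2$ are $0$. This is what makes (Turn) continuous in the combinatorial sense: the table region glues to the straight regions, and beyond them only color $0$ occurs. Since this step is a lookup, I would present it compactly per table.

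\emph{No fully colored Kuhn tetrahedron.} A tetrahedron is fully colored only if its four vertices carry the colors $0,1,2,3$. I would argue cube by cube, as in Lemma~\ref{lem:discrete-straight}, showing that each unit cube either misses a color or, failing that, has no Kuhn tetrahedron carrying all four colors.

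\begin{itemize}
\item \emph{Cubes meeting the straight regions.} Cubes lying in the regions $\sigma x\le-2$ or $\tau y\ge 2$, or outside the table box, are colored by the straight rule or are all $0$. Lemma~\ref{lem:discrete-straight} applies there, since $P_x^\sigma,P_y^\tau$ are straight rules with fixed orders. Cubes straddling a plane such as $x=-2\sigma$ have one face given by the straight rule. Their other face, in the straight region, carries the same straight colors because the straight rule is independent of the route coordinate. So these cubes are also covered by Lemma~\ref{lem:discrete-straight}. On the side opposite the route, the entries with $|x|=2$ or $|y|=2$ are $0$, and those cubes need a direct check.
\item \emph{Cubes in the $z$-direction.} Since $z\in\{0,1\}$ is required for a positive color, only cubes with $z$-interval $[0,1]$ matter. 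Cubes with $z$-interval $[-1,0]$ or $[1,2]$ have one face entirely $0$ and the other face a single table layer, so a fully colored tetrahedron there needs three positive colors in one layer on a triangle. I would verify that this never happens in the relevant boundary layers, or else reduce it to the same check.
\item \emph{Interior cubes.} The remaining cubes have $x,y$-intervals inside $[-2,2]$ and $z$-interval $[0,1]$, giving $16$ cubes per table and $64$ in total. For each, I would list the eight colors. If some color is missing, the cube is done. Otherwise, I would check the six Kuhn tetrahedra explicitly. Each consists of a monotone chain from the lower corner, so it suffices to check the six chains of the cube.
\end{itemize}

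\emph{Main obstacle.} The hard step is the last one: the cubes near the corner, where all four colors do appear in the $2\times2\times2$ block. There one must use the actual Kuhn chain structure, and the reflection asymmetry is why the four sign cases must be treated separately. I would first try to show that in each such cube the positive colors occupy a set of vertices containing no monotone chain of length three with distinct positive colors together with a $0$. If that fails, I would fall back on an explicit enumeration, which is finite and could be presented as a table or machine-verified check.
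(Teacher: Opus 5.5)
Your proposal is correct and takes essentially the paper's route: the agreement and zero claims are read off the four sides of each table; cubes in the straight half-spaces are handled by Lemma~\ref{lem:discrete-straight} or because all their vertices have color $0$; the remaining cubes are settled by a finite check, where only cubes showing all four colors need their six Kuhn tetrahedra examined. The paper carries out that enumeration in Table~\ref{tab:discrete-turn-certificate} (seventeen exceptional cubes, each with a missing color for every tetrahedron), and its one exception with $z$-interval $[-1,0]$, namely $(\sigma,\tau)=(-,+)$ at lower corner $(1,0,-1)$, is exactly the case your reduction to two-dimensional Kuhn triangles in a single table layer handles, since neither triangle there carries all of $1,2,3$.
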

\begin{proof}
The values on the four sides of each matrix give the first two claims
directly. For the last claim, a cube that can meet the finite table has
lower corner
$(x,y,z) \in \{-3,-2,-1,0,1,2\}^2\times\{-1,0,1\}$.
Outside these ranges, a cube either has only color $0$ or belongs to a
single straight segment together with its neighboring vertices of color $0$.
Lemma~\ref{lem:discrete-straight} covers the latter case.

Within the stated ranges, most cubes already miss a color among their
eight vertices.
So if the entire cube, across all \(8\) vertices, is already missing some color, then certainly every tetrahedron inside that cube also misses that color.
Table~\ref{tab:discrete-turn-certificate} lists all the
exceptions\footnote{The Wolfram Mathematica script on our \href{https://github.com/andreaskontogiannis/positive-index/tree/main}{GitHub repository} checks all \(432\) relevant cubes and \(2{,}592\) Kuhn tetrahedra using exact rational arithmetic, verifies the missing-color certificates in Table~\ref{tab:discrete-turn-certificate} and the incoming and outgoing boundary agreements, and uses \texttt{Resolve} to certify the absence of zeros throughout each closed tetrahedron.}. For each exception, it gives a missing color for each of the
six Kuhn tetrahedra in the cube. The order of their coordinate
permutations is
$(x,y,z)$, $(x,z,y)$, $(y,x,z)$, $(y,z,x)$, $(z,x,y)$, $(z,y,x)$.

The entries are checked directly from~\eqref{eq:discrete-turn-tables}
and~\eqref{eq:discrete-turn-ends}. For example, in the \((+,+)\)
case, the lower corner \((-1,0,0)\) and coordinate order \((x,y,z)\)
give the vertices
$(-1,0,0)$, $(0,0,0)$, $(0,1,0)$ and  $(0,1,1)$,
whose colors are \(3,0,3,2\), respectively. Thus color \(1\) is absent, as
recorded in the first entry of Table~\ref{tab:discrete-turn-certificate}.
All other entries are verified in the same way, so every Kuhn
tetrahedron misses a color.
\end{proof}

\begin{table}[t]
\centering
\small
\begin{tabular}{ccc}
\toprule
Directions $(\sigma,\tau)$ & Lower corner $(x,y,z)$ & Missing colors in the six tetrahedra\\
\midrule
$(+,+)$ & $(-1,0,0)$ & $(1,1,0,0,1,0)$\\
        & $(-1,1,0)$ & $(0,0,2,2,0,2)$\\
        & $(0,1,0)$  & $(1,1,0,0,1,0)$\\
\midrule
$(+,-)$ & $(-2,0,0)$  & $(2,0,2,0,0,0)$\\
        & $(-1,-1,0)$ & $(0,0,0,0,1,1)$\\
        & $(-1,0,0)$  & $(2,0,2,2,0,0)$\\
        & $(0,-2,0)$  & $(3,0,3,0,0,0)$\\
        & $(0,-1,0)$  & $(3,3,3,0,0,0)$\\
\midrule
$(-,+)$ & $(0,0,0)$   & $(0,0,0,3,3,3)$\\
        & $(0,1,0)$   & $(0,0,0,0,3,3)$\\
        & $(1,-1,0)$  & $(1,1,0,0,1,1)$\\
        & $(1,0,-1)$  & $(2,3,2,2,3,2)$\\
        & $(1,1,0)$   & $(2,2,1,1,1,1)$\\
\midrule
$(-,-)$ & $(0,-2,0)$ & $(0,0,0,2,0,2)$\\
        & $(0,-1,0)$ & $(0,0,2,2,2,2)$\\
        & $(1,-1,0)$ & $(1,1,0,0,1,0)$\\
        & $(1,0,0)$  & $(0,0,2,2,0,2)$\\
\bottomrule
\end{tabular}
\caption{The complete list of cubes having all four colors among their
vertices near a turn. The six entries in the last column give one missing
color in each Kuhn tetrahedron, in the permutation order stated in the
proof of Lemma~\ref{lem:discrete-turn-tables}.}
\label{tab:discrete-turn-certificate}
\end{table}

\paragraph{Extending the turn rule to $d$ dimensions.}
The coordinates $i,j,a$ use the tables defined in \eqref{eq:discrete-turn-tables}. The remaining indices
form the set $J=:[d]\setminus\{i,j,a\}$, which we order increasingly.
Their offsets from the center are denoted by $z_k$, for $k\in J$.
So, the \textit{extended turn rule} is as follows: If some $z_k\notin\{0,1\}$, or if the three coordinate rule defined in \eqref{eq:turn-coloring} gives color
$0$, then the assigned color is $0$. Otherwise, it is the first
$k\in J$ with $z_k=0$. If all these coordinates equal $1$, the color is
the table color, with $1,2,3$ interpreted as $i,j,a$.

\begin{lemma}[Extended turn rule in $d$ dimensions]
\label{lem:discrete-turn-lift}
The extended turn rule creates no fully colored Kuhn simplex.
On the incoming plane it agrees with the straight rule whose perpendicular
order $\pi$ consists of $J$, followed by $a,j$. On the outgoing plane the
perpendicular order $\pi$ consists of $J$, followed by $a,i$.
\end{lemma}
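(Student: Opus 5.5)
The proof has two parts. First we check that the extended turn rule agrees with the straight rule on the incoming and outgoing planes. Then we show that no Kuhn simplex is fully colored, by a cube-by-cube case analysis in the spirit of Lemma~\ref{lem:discrete-straight}.

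\textbf{Agreement.} On the incoming plane \(\sigma x\le-2\), the three-coordinate rule reduces to \(P_x^\sigma(y,z)\). By the remark after \eqref{eq:discrete-turn-ends}, this is the straight rule in the coordinates \((j,a)\) with priority order \((a,j)\). The extended rule gives color \(0\) if some \(z_k\notin\{0,1\}\) or if \((y,z)\notin\{0,1\}^2\). Otherwise it returns the first \(k\in J\) with \(z_k=0\), and if there is none it returns the \(P_x^\sigma\) color. This is exactly \eqref{eq:discrete-straight-rule} with perpendicular order \(\pi=(J,a,j)\), because \(\beta_k^\sigma=0\) for \(k\in J\). The outgoing plane is handled the same way with \(P_y^\tau\) and order \((J,a,i)\).

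\textbf{No fully colored simplex.} Fix a unit cube \(C\), and for each \(k\in J\) let \(I_k=[\ell_k,\ell_k+1]\) be its range of \(z_k\).
\begin{enumerate}
\item If some \(I_k\) misses \(\{0,1\}\), every vertex of \(C\) has color \(0\).
\item Suppose some \(k\in J\) has \(I_k\cap\{0,1\}=\{b\}\). Every positively colored vertex then has \(z_k=b\).
\begin{enumerate}
\item If \(b=0\), the color of each such vertex is some \(k'\in J\) that precedes or equals \(k\). So none of the colors \(i,j,a\) appears, and \(C\) misses color \(i\).
\item If \(b=1\), color \(k\) can never be assigned, so \(C\) misses color \(k\).
\end{enumerate}
This repeats the two-case argument of Lemma~\ref{lem:discrete-straight}.
\item The remaining case is \(I_k=[0,1]\) for every \(k\in J\). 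Then every vertex has all \(z_k\in\{0,1\}\), so its color is \(0\) exactly when the three-coordinate color \(\theta\) of its \((x,y,z)\)-projection is \(0\).
\end{enumerate}

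\textbf{The main case.} Take a Kuhn simplex in \(C\) in the situation of case~3. It is a chain \(v_0\le v_1\le\cdots\le v_d\), increasing one coordinate at a time. Suppose it is fully colored.
\begin{itemize}
\item It needs every color \(k\in J\). Color \(k\) appears only at vertices with \(z_k=0\). Since \(z_k\) is nondecreasing along the chain, all such vertices form an initial segment of the chain.
\item It also needs the colors \(i,j,a\). These appear only at vertices with all \(z_k=1\), which form a final segment.
\item The chain raises each \(J\)-coordinate exactly once. Let \(s\) be the step at which the last \(J\)-coordinate is raised. The vertices \(v_s,\dots,v_d\) carry all \(z_k=1\), and they differ only in the coordinates \(i,j,a\).
\end{itemize}
I expect this counting to be the main obstacle. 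The goal is to show that the color multiset of a fully colored \(d\)-simplex forces the sub-chain \(v_s,\dots,v_d\), together with the \(\theta\)-value \(0\) contributed by the vertices before step \(s\), to contain a fully colored Kuhn tetrahedron of the three-dimensional coloring \(\theta_{\sigma,\tau}\).

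The plan for this step is to project the simplex onto \((x,y,z)\). The projection of \(v_0,\dots,v_d\) is a Kuhn chain of four points \(w_0\le w_1\le w_2\le w_3\) in the three-dimensional cube. Each vertex \(v_r\) has projection \(w_{c(r)}\), where \(c(r)\) is the number of \(\{i,j,a\}\)-steps among the first \(r\) steps.
\begin{itemize}
\item Colors \(i,j,a\) occur only at vertices after step \(s\), with \(\theta\)-values \(1,2,3\).
\item Color \(0\) requires a vertex whose projection has \(\theta=0\). Every vertex before step \(s\) has some \(z_k=0\) and therefore receives a \(J\)-color when its \(\theta\) is nonzero.
\end{itemize}
So all four values \(0,1,2,3\) must appear among \(\theta(w_0),\dots,\theta(w_3)\). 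Then \(\operatorname{conv}\{w_0,\dots,w_3\}\) is a fully colored Kuhn tetrahedron of \(\theta_{\sigma,\tau}\), contradicting Lemma~\ref{lem:discrete-turn-tables}. The care lies in checking that the value \(\theta=0\) must be realized at some \(w_c\), and cannot be supplied only by \(J\)-coordinates outside \(\{0,1\}\), which is already excluded in case~3.
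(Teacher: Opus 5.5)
Your proof is correct and follows the paper's argument. The agreement check on the incoming and outgoing planes is the same, and so is your case split on the $J$-intervals: all vertices get color $0$ if some interval misses $\{0,1\}$, color $k$ is absent if the interval is $[1,2]$, and colors $i,j,a$ are absent if it is $[-1,0]$. In the remaining case, where every $J$-interval is $[0,1]$, your projection argument is valid and makes explicit the reduction to Lemma~\ref{lem:discrete-turn-tables}, which the paper only asserts: the values $\theta=0,1,2,3$ are forced onto the four distinct projected chain points $w_0,\dots,w_3$, giving a fully colored Kuhn tetrahedron of $\theta_{\sigma,\tau}$.
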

\begin{proof}
By the extended rule above, a cube whose interval in some coordinate $k\in J$ is $[1,2]$ misses
color $k$. If that interval is $[-1,0]$, then any vertex with \(z_k=-1\) gets color $0$. 
A vertex that has a positive color must therefore have $z_k=0$. 
But then at least one coordinate in \(J\) equals zero, so the extension rule assigns a color from \(J\), not one of the three table colors $i,j,a$. 
Thus, all $i,j,a$ colors are absent from that cube. 
If the interval is even farther away, for example $[2,3]$ or $[-2,-1]$, then every vertex of this cube gets color $0$.
Therefore, if a fully colored simplex exists, every coordinate \(k\in J\) must stay in the supported interval $[0,1]$.
Now assume every \(J\)-coordinate lies in \([0,1]\). Thus, at every grid vertex, $z_k\in\{0,1\}$ for $k \in J$.
So, in order a vertex to receive one of the three turn colors $\{i,j,a\}$, all $z_k$ must be equal to $1$.
And in that case, the global color is exactly the color supplied by \eqref{eq:discrete-turn-tables}, so by Lemma \ref{lem:discrete-turn-tables} the Kuhn simplex cannot be fully colored.



Regarding the incoming/outgoing claims, after the conditions for color \(0\)
have been ruled out, the coordinates in \(J\) are tested before
the two perpendicular table coordinates.
When they all equal $1$, the remaining rule is exactly
\eqref{eq:discrete-turn-ends}. 
This proves the two agreement claims. 
\end{proof}

\paragraph{Joining consecutive segments.}
Each turn table in \eqref{eq:discrete-turn-tables} is used on the full set
$[-2,2]^2\times\{0,1\}^{d-2}$ in its specified coordinate order.
Its zero entries are part of the rule and are not replaced by the
straight rule. Outside this set, the two incident straight segments
continue the incoming and outgoing values. The preceding lemmas thus
apply also to cubes meeting the boundary of the table.


For a straight edge from $c$ to $c+64\sigma e_i$, the priority order
$\pi$ required near the initial endpoint may differ from the order required
near the final endpoint, since the two endpoints may belong to different
turns. We therefore allow the straight-segment rule to use two different
orders along the same edge. Writing
$s=\sigma(v_i-c_i)$ for the directed position of a grid vertex along the
edge, we use the order prescribed at the initial endpoint when $s\le 32$,
and the order prescribed at the final endpoint when $s>32$. Thus each end
of the straight segment agrees with the local coloring of the turn or
endpoint attached there. The change of order occurs near the midpoint of
the edge, whereas each turn table extends only two integer units from its
route center, so the order change is far from both turn regions. By
Lemma~\ref{lem:discrete-straight}, changing the priority order between two
consecutive layers of a straight segment cannot create a fully colored
simplex. Hence the two endpoint requirements can be joined consistently
along the same edge.

\subsection{The simplices at noncanonical sources and sinks}\label{sec: endpoints}
Consider a noncanonical endpoint with center \(c\). The only route segment
incident to such an endpoint is parallel to the phase coordinate \(t\). 
Recall that at a source the incident route segment leaves the endpoint toward larger \(t\)-coordinates, whereas at a sink it approaches the endpoint from smaller \(t\)-coordinates

The \textit{endpoint rule} is as follows: We use the straight-segment coloring up to the endpoint center \(c\), but we
do not continue it past the endpoint on the side where no route segment
exists. Instead, all grid vertices one unit beyond the endpoint in that
direction are assigned color \(0\). Concretely, at a sink the vertices with
\(t\)-coordinate \(c_t+1\) are assigned color \(0\), while at a source the
vertices with \(t\)-coordinate \(c_t-1\) are assigned color \(0\). This
creates the transition from the positive straight-segment colors to color
\(0\) exactly at the endpoint.

Near the endpoint, the straight-segment rule uses the perpendicular
coordinates in increasing order. Writing
$j_1<\cdots<j_{d-1}$
for the increasing list of \([d]\setminus\{t\}\), so the priority order is \((j_1,\ldots,j_{d-1})\). This fixed order will determine the unique fully colored Kuhn simplex associated with the endpoint and its orientation.

\begin{definitionbox}
\begin{lemma}[Endpoint simplices and their signs]
\label{lem:discrete-endpoints}
Each sink and each noncanonical source has exactly one fully colored
simplex in its endpoint region. The zero of the interpolated field in
that simplex is its barycenter. Its local matrix $A_p$ is nonsingular,
and $\sign\det(-A_p)$ is $+1$ at a sink and $-1$ at a source.
If $c$ is the endpoint center in integer coordinates, the
zero points $p^+$ and $p^-$ corresponding to the sink or source respectively satisfy
\begin{equation}
\begin{aligned}
 128p_t^+&=c_t+\frac{1}{d+1},
 &128p_{j_r}^+&=c_{j_r}+\frac{d-r+1}{d+1},\\
 128p_t^-&=c_t-\frac{1}{d+1},
 &128p_{j_r}^-&=c_{j_r}+\frac{d-r}{d+1}
 \qquad(1\leq r\leq d-1).
\end{aligned}
\label{eq:discrete-endpoint-points}
\tag{Endpoint}
\end{equation}
\end{lemma}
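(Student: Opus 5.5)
The plan is to work entirely in integer coordinates near the endpoint center \(c\) and to reduce everything to one unit cube. Throughout, \(z=v-c\) denotes the offset of a grid vertex \(v\), and \(t=D\) is the phase coordinate. The incident route segment is parallel to \(e_t\) with direction \(+e_t\), so \(\sigma=+1\) and \(\beta^\sigma_j=0\) for every \(j\ne t\), including \(j=a\). The perpendicular priority order is \(j_1<\dots<j_{d-1}\).

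First I would pin down the coloring near a sink. A vertex with \(z_t\le 0\) and all perpendicular offsets in \(\{0,1\}\) receives the first \(j_r\) with \(z_{j_r}=0\), or color \(t\) if there is none. Every vertex with \(z_t=1\) receives color \(0\), as does every vertex with some perpendicular offset outside \(\{0,1\}\). By the separation established in Section~\ref{sec:signed-route-geometry}, nonincident segments are \(64\) apart in a fixed coordinate, so no other route rule is active within a few units of \(c\). I would then classify cubes by their \(t\)-interval and their perpendicular intervals, as follows.
\begin{enumerate}
\item A cube with \(t\)-interval \([1,2]\) or farther beyond the endpoint is entirely color \(0\).
\item A cube whose \(t\)-interval lies in \(z_t\le 0\) is a straight-segment cube, so Lemma~\ref{lem:discrete-straight} applies.
\item A cube with \(t\)-interval \([0,1]\) but some perpendicular interval different from \([0,1]\) still misses a color. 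The case analysis in the proof of Lemma~\ref{lem:discrete-straight} goes through verbatim, because recoloring vertices to \(0\) never introduces a positive color.
\end{enumerate}
Hence the only candidate is the cube with lower corner \(c\).

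Second, I would find the fully colored simplices in that cube. In a Kuhn chain, once \(t\) has been incremented every later vertex has \(z_t=1\) and therefore color \(0\). So color \(0\) occurs exactly once precisely when \(t\) is incremented last. The first \(d\) vertices then run from the perpendicular offset \(\zero\), whose color is \(j_1\), to the perpendicular offset \(\one\), whose color is \(t\). They carry \(d\) distinct colors only if each increment removes the current minimal zero coordinate. This forces the order \(j_1,j_2,\dots,j_{d-1},t\), so the simplex is unique. Its barycenter gives the coordinate incremented at step \(r\) the value \((d+1-r)/(d+1)\), which yields \eqref{eq:discrete-endpoint-points} for \(p^+\).

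The source case is analogous, using the cube with \(t\)-interval \([-1,0]\) and the color-\(0\) layer at \(z_t=-1\). Now color \(0\) must be the unique bottom vertex, so \(t\) is incremented first, followed by \(j_1,\dots,j_{d-1}\) in increasing order. This gives \(p^-\). By Lemma~\ref{lem:kuhn-barycenter}, each barycenter is the zero, and \(A_p\) is nonsingular.

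Third, and this is the step I expect to be the main obstacle, I would compute \(\sign\det E\) with columns ordered by color, where color \(i\) is the coordinate \(i\). Write \(v_0,\dots,v_d\) for the Kuhn chain.
\begin{itemize}
\item At a sink, \(p_0=v_d\), \(p_{j_r}=v_{r-1}\) and \(p_t=v_{d-1}\).
\item At a source, \(p_0=v_0\), \(p_t=v_1\) and \(p_{j_r}=v_{r+1}\).
\end{itemize}
Each column \(p_i-p_0\) is \(\pm\) a sum of unit vectors of a prefix or suffix of the chain order. I would express \(E\) as a permutation matrix times a triangular \(0/\pm1\) matrix and track the sign of the permutation that places \(t\) among \(j_1<\dots<j_{d-1}\). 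At a sink the target is \(\sign\det E=(-1)^d\), and at a source it is \((-1)^{d-1}\). By \eqref{eq:kuhn-local-matrix} these give \(\sign\det(-A_p)=+1\) and \(-1\), respectively.

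The delicate bookkeeping is that \(t=D\) sits just before \(a=d\) in the increasing order, so \(t\) is not the last index. The resulting transposition sign must cancel consistently between the sink and the source. I would verify this by checking \(d=2\) and \(d=3\) explicitly before writing the general permutation-sign argument.
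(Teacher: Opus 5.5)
\textbf{Verdict: there is a genuine gap, and it sits in the sign step.} Your reduction to the single cube $c+[0,1]^d$ (resp.\ $c-e_t+[0,1]^d$), the forced increment orders $(j_1,\dots,j_{d-1},t)$ and $(t,j_1,\dots,j_{d-1})$, and the barycenter formulas all match the paper's proof. The problem is not the one you anticipate.

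\textbf{The source coloring is wrong.} After the first increment at a source, $v_1=c$ has every perpendicular offset equal to $0$. The straight rule therefore gives it color $j_1$, not $t$. Color $t$ is assigned only when all perpendicular offsets equal $1$, which happens at $v_d$. The correct assignment is $p_0=v_0$, $p_{j_r}=v_r$, $p_t=v_d$, i.e.\ colors $(0,j_1,\dots,j_{d-1},t)$ along the chain.

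This is not cosmetic. With your assignment, the color-$\omega_h$ column of $E$ is $v_h-v_0$, where $\omega$ is the increment order. So $E$ is the unit upper triangular all-ones matrix conjugated by the permutation matrix of $\omega$, and $\sign\det E=+1$ for every $d$. Then $\sign\det(-A_p)=(-1)^d=+1$, because $d=2n+4$ is even. That is the same sign as at a sink. The sign computation, which you only announce, would therefore fail to separate sources from sinks. Your planned $d=2$ check already exposes this: $\det[e_1\mid e_1+e_2]=+1$ instead of the target $-1$.

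\textbf{How to finish with the correct assignment.}
\begin{enumerate}
\item The positive-colored vertices coincide in the two cases, as in the paper's proof: $128p_{j_r}=c+\sum_{\ell<r}e_{j_\ell}$ and $128p_t=c+\sum_{\ell=1}^{d-1}e_{j_\ell}$. Only $p_0$ changes: $128p_0$ is $c+e_t+\sum_\ell e_{j_\ell}$ at a sink and $c-e_t$ at a source.
\item $\det E$ equals the determinant of the $(d+1)\times(d+1)$ matrix with columns $(1,p_0),\dots,(1,p_d)$. It is therefore an affine function of $p_0$, not identically zero, that vanishes on the affine hull of $p_1,\dots,p_d$. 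That hull is the hyperplane $128x_t=c_t$.
\item Since $128p_{0,t}-c_t=\pm1$ in the two cases, the sink and source determinants have opposite signs. Only the sink needs computing.
\item At the sink, reorder the homogeneous columns from color order to chain order. This gives $\det E=\operatorname{sgn}(\gamma)\operatorname{sgn}(\omega)\,128^{-d}$. Here $\omega=\rho:=(j_1,\dots,j_{d-1},t)$ is the increment order, and $\gamma=(j_1,\dots,j_{d-1},t,0)$ is the color sequence along the chain, viewed as a permutation of $\{0,\dots,d\}$.
\item $\gamma$ is $\rho$ (extended by $0\mapsto0$) composed with a $(d+1)$-cycle. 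Hence $\sign\det E=\operatorname{sgn}(\rho)^2(-1)^d=(-1)^d$.
\end{enumerate}
The position of $t$ among the $j_r$ enters only through $\operatorname{sgn}(\rho)$, which appears squared. It is not the delicate point; the correct color labeling of the chain is.
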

\end{definitionbox}

\begin{proof}
Every endpoint is incident to a segment whose physical direction is
$+e_t$. Hence the straight rule near an endpoint is
$\chi_{t,+1,\pi}$, with
$\pi=(j_1,\ldots,j_{d-1})$, for $j_1<\cdots<j_{d-1}$,
and, since $\beta_j^{+1}=0$ for every $j\ne t$, a supported vertex is
colored by the first index $j_r$ whose perpendicular offset is zero,
or by $t$ if all perpendicular offsets are one.

We first identify the cubes in which a fully colored simplex could
occur. By the same argument as in the proof of
Lemma~\ref{lem:discrete-straight}, if the unit interval in some
perpendicular coordinate $j\ne t$ is not
$[c_j,c_j+1]$, then the cube misses a positive color. Thus it suffices
to consider the perpendicular unit cube
$\prod_{j\ne t}[c_j,c_j+1]$.
All cubes on the route side of the endpoint are governed entirely by
the straight rule and hence contain no fully colored simplex by
Lemma~\ref{lem:discrete-straight}. Therefore the only new candidate
cube is the cube across the transition from the straight coloring to
color $0$, namely
\[
    C^+ = c+[0,1]^d
    \quad\text{at a sink},
    \qquad
    C^- = c-e_t+[0,1]^d
    \quad\text{at a source}.
\]
For $C^+$ the face $x_t=c_t$ carries the straight coloring and the
opposite face $x_t=c_t+1$ has color $0$. For $C^-$ the face
$x_t=c_t-1$ has color $0$ and the opposite face $x_t=c_t$ carries the
straight coloring.

Consider first a sink and a Kuhn chain
$v_0,v_1,\ldots,v_d$
in $C^+$. Let $\omega=(\omega_1,\ldots,\omega_d)$ be its coordinate
increment order. If the $t$-coordinate is incremented before the last step, then due to the endpoint rule, all subsequent vertices lie on the face of color $0$. 
There are then fewer than $d$ vertices on the straight-colored face, so the chain cannot contain all $d$ positive colors. 
Therefore, in any fully colored chain the $t$-coordinate must be incremented last.

It remains to determine the order of the perpendicular increments.
At a vertex on the straight-colored face, let $S\subseteq
\{j_1,\ldots,j_{d-1}\}$ be the set of perpendicular coordinates that
have already been incremented. Its color is
\begin{equation}
    \begin{cases}
    j_r,
    & r=\min\{s:j_s\notin S\},\\
    t,
    & S=\{j_1,\ldots,j_{d-1}\}.
    \end{cases}
    \label{*}
\end{equation}
If some $j_s$ is incremented before an earlier coordinate $j_r$ with $r<s$, then $j_s$ enters $S$ while at least one of $j_1,\ldots,j_{s-1}$ is still absent from $S$.  Since $S$ only grows along the Kuhn chain, by the time all $j_1,\ldots,j_{s-1}$ belong to $S$, we also have $j_s\in S$.  
Hence there is no vertex at which $j_s$ is the first element of the priority order absent from $S$.
Therefore, by \eqref{*} color $j_s$ never appears.
Thus a chain contains all positive colors if and only if the perpendicular coordinates are incremented in the
order
$j_1,j_2,\ldots,j_{d-1}$.
Hence the unique fully colored sink simplex has coordinate order $(j_1,\ldots,j_{d-1},t)$ and
color sequence $(j_1,\ldots,j_{d-1},t,0)$.

The source case is analogous, with the position of the $t$-increment
reversed. In $C^-$ every vertex before the $t$-coordinate is
incremented lies on the face of color $0$. If the $t$-increment occurs
after the first step, fewer than $d$ subsequent vertices remain on the
straight-colored face, so not all $d$ positive colors can occur.
Therefore $t$ must be incremented first. Applying~\eqref{*} to the
remaining vertices shows that the perpendicular increments must again
occur in the order $j_1,\ldots,j_{d-1}$. Thus the unique fully colored
source simplex has coordinate order $(t,j_1,\ldots,j_{d-1})$ and color order $(0,j_1,\ldots,j_{d-1},t)$.
This proves uniqueness in both cases.

By Lemma~\ref{lem:kuhn-barycenter}, the zero in either fully colored
simplex is its barycenter and its unique affine interpolation matrix $A_p$ is non-singular.
The coordinates in~\eqref{eq:discrete-endpoint-points} now follow
directly by averaging the vertices.
It remains to compute the signs. 
Let
$E=[\,p_1-p_0\mid\cdots\mid p_d-p_0\,]$,
where $p_i$ is the vertex of color $i$. The two endpoint chains give
\[
\begin{aligned}
128p_{j_r}
    &=c+\sum_{\ell=1}^{r-1}e_{j_\ell}
      &&(1\le r\le d-1),\\
128p_t
    &=c+\sum_{\ell=1}^{d-1}e_{j_\ell},\\
128p_0
    &=
    \begin{cases}
    c+e_t+\displaystyle\sum_{\ell=1}^{d-1}e_{j_\ell},
        &\text{at a sink},\\
    c-e_t,
        &\text{at a source}.
    \end{cases}
\end{aligned}
\]
At a sink,
$\det E =
\frac{(-1)^d}{128^d}$.
At a source, we have
$\det E =
\frac{(-1)^{d-1}}{128^d}.$
Finally, by Lemma~\ref{lem:kuhn-barycenter},
\[
\begin{aligned}
&\det(-A_p)
=(-1)^d
  \frac{\det(I_d+\mathbf{1}\mathbf{1}^{\mathsf T})}{\det E}
 =(-1)^d\frac{d+1}{\det E}
\Rightarrow  \ \operatorname{sign}(\det(-A_p)) =
\begin{cases}
+1,  & \text{at a sink},\\
-1, & \text{at a source}.
\end{cases}
\end{aligned}
\]

\end{proof}

\subsection{The canonical source and the global coloring}
\label{sec:discrete-global-coloring}

The local rules defined above  give a zero point within a fully colored simplex at every noncanonical
source and sink. The canonical source must contribute none. We achieve this with a box whose coloring has one fully colored simplex before the
canonical route is inserted. The outgoing route changes the color of one
vertex and removes that simplex. The same box rule extends to the cube
boundary with the required inward directions.

\paragraph{The box at the canonical source and the default rule.}
All coordinates in this subsection are integer grid coordinates.
We recall that \(c_\star\) has route coordinates \(32\) and auxiliary coordinate \(35\). 
We define
\begin{equation}
B_\star=\prod_{j=1}^d[0,c_{\star,j}+1], \qquad u_{\star}=c_\star+\one.
\nonumber
\end{equation}
The order \(\pi_\star\) lists all coordinate indices other than \(t\) in increasing order, followed by \(t\). 
\textit{Inside} \(B_\star\), the color rule is as the follows: 
the default color of a vertex \(v\) is the first index \(j\) in this order for which \(v_j\le c_{\star,j}\); if there is no such index, its color is zero.
\textit{Outside} \(B_\star\), the default color is the first index in the same order with \(v_j=0\), or zero when every coordinate is positive.
We denote this \textit{default rule} by \(\chi_\star\).

The only vertex of color zero inside the box is \(u_{\star}\). 
In the unit cube \(c_\star+[0,1]^d\), a Kuhn chain contains every color exactly when it increments coordinates in the order \(\pi_\star\). 
Indeed, its first color is the first coordinate in this order, and that color remains until this coordinate is incremented. 
The same argument applies successively to the remaining coordinates. Thus the default rule has exactly one fully colored simplex inside this cube.

There are no other fully colored simplices under the box coloring rule.
A unit cube contained in \(B_\star\) but not containing \(u_{\star}\) has no vertex of color zero. 
If a unit cube is not contained in the box, some coordinate \(k\) has its lower endpoint at least \(c_{\star,k}+1\).
At every vertex of this cube, the threshold for color \(k\) fails inside the box, and the equality \(v_k=0\) fails outside the box. Color \(k\) is therefore absent. 
These cases also cover cubes crossing a box face.

\begin{lemma}[Removal of the canonical simplex]
\label{lem:canonical-discrete-attachment}
Consider the canonical box \(B_\star\) with default coloring
\(\chi_\star\), and attach the outgoing canonical route segment from \(c_\star\) in direction \(+e_t\).
On the vertices of \(B_\star\) supported by this segment, the straight-segment coloring \eqref{eq:discrete-straight-rule} agrees with \(\chi_\star\) except at \(u_\star=c_\star+\mathbf 1\), whose color changes from \(0\) to \(t\).
Therefore, the unique fully colored simplex of the default coloring in \(c_\star+[0,1]^d\) is destroyed due to the straight-segment rule.

Moreover, no unit cube meeting both the canonical attachment and the
boundary of \(B_\star\) contains a fully colored Kuhn simplex.
Finally, no straight or turn pattern associated with any other route
edge meets a unit cube containing this attachment.
\end{lemma}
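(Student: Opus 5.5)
The plan is to compare the two colorings vertex by vertex on the region where the attached segment is supported, and then reuse the missing-color arguments already given for the box and for straight segments. Near its initial endpoint \(c_\star\), the outgoing canonical segment has direction \(+e_t\). Its straight rule is \(\chi_{t,+1,\pi}\), where \(\pi\) lists \([d]\setminus\{t\}\) in increasing order; this is the endpoint order of Section~\ref{sec: endpoints} and agrees with \(\pi_\star\) with its final entry \(t\) removed. Since \(\beta^{+1}_j=0\) for every \(j\), a vertex supported by the segment has \(v_j\in\{c_{\star,j},c_{\star,j}+1\}\) for all \(j\ne t\). Its straight color is the first \(j\) in \(\pi\) with \(v_j=c_{\star,j}\), or \(t\) if there is none.

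\textbf{Step 1: agreement with \(\chi_\star\).} Take such a vertex inside \(B_\star\) with \(v_t\ge c_{\star,t}\), so \(v_t\in\{c_{\star,t},c_{\star,t}+1\}\). For \(j\ne t\), the box test \(v_j\le c_{\star,j}\) holds exactly when \(v_j=c_{\star,j}\), i.e.\ when \(z_j=0\). Hence both rules return the same first perpendicular index whenever one exists. If none exists, \(\chi_\star\) goes on to test \(t\): it returns \(t\) when \(v_t=c_{\star,t}\) and returns \(0\) when \(v_t=c_{\star,t}+1\). The straight rule returns \(t\) in both cases. So the only disagreement is at \(u_\star\), which is recolored \(0\mapsto t\). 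Since \(u_\star\) was the only color-\(0\) vertex of the box and \(c_\star+[0,1]^d\subseteq B_\star\), that cube now has no vertex of color \(0\). Its unique fully colored simplex, which had the chain order \(\pi_\star\) ending at \(u_\star\), is therefore destroyed, and no new one appears in that cube.

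\textbf{Step 2: cubes meeting both the attachment and \(\partial B_\star\).} I would split these cubes by their intervals.
\begin{enumerate}
\item[(a)] Suppose some perpendicular coordinate \(k\ne t\) has lower endpoint at least \(c_{\star,k}+1\). Color \(k\) is then absent from the cube, by the argument already used for \(\chi_\star\). The threshold \(v_k\le c_{\star,k}\) fails inside the box, \(v_k=0\) fails outside it, and the straight rule requires \(z_k=0\), which also fails. This holds whichever of the two rules colors each vertex.
\item[(b)] Suppose some perpendicular interval lies below \([c_{\star,j},c_{\star,j}+1]\). Vertices there are unsupported, so the cube is colored by \(\chi_\star\) alone, whose only fully colored simplex has already been accounted for.
\item[(c)] Suppose instead the \(t\)-interval is \([c_{\star,t}+1,c_{\star,t}+2]\) and all perpendicular intervals are \([c_{\star,j},c_{\star,j}+1]\). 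The supported vertices carry the straight coloring on both faces \(v_t=c_{\star,t}+1\) and \(v_t=c_{\star,t}+2\); on the first face this uses Step 1. Lemma~\ref{lem:discrete-straight} then applies.
\item[(d)] Finally, for cubes with \(t\)-interval \([c_{\star,t}-1,c_{\star,t}]\), Step 1 shows that the coloring coincides with \(\chi_\star\), which has no fully colored simplex outside \(c_\star+[0,1]^d\).
\end{enumerate}

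\textbf{Step 3: separation from other route edges.} Every other route edge differs from the attachment by \(64\) in some coordinate fixed on both, using the separation established in Section~\ref{sec:signed-route-geometry}. Straight patterns reach only one unit in perpendicular directions, and turn tables only two units from their centers. The far end of the canonical segment is a turn at \(t\)-level \(96\), which is well outside \(B_\star\) since the box reaches only \(c_{\star,t}+1=33\) in that coordinate. Hence no other pattern meets a unit cube containing the attachment.

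I expect Step 2 to be the main obstacle: the cubes straddling the box faces have to be enumerated carefully, checking in each case that the rule assignment (straight versus \(\chi_\star\)) leaves the chosen missing color absent.
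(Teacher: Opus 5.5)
Your Step 1, cases (a), (c) and (d) of Step 2, and Step 3 essentially match the paper's proof. For Step 3, the paper observes that every other route edge has some route coordinate fixed at $96$; this is equivalent to your nonincident-separation argument combined with the far-end turn at $t$-level $96$. The gap is in case (b) of Step 2.

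\textbf{Where (b) fails.} A cube that meets the attachment and has a perpendicular interval below the supported one must have that interval equal to $[c_{\star,j}-1,c_{\star,j}]$. Vertices on the face $v_j=c_{\star,j}$ can then be supported, so ``vertices there are unsupported'' is false. This matters when the same cube also crosses the upper phase face, i.e.\ has $t$-interval $[c_{\star,t}+1,c_{\star,t}+2]$.

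Take $j=1$ and a vertex with $v_1=32$, $v_t=34$, and $v_k\in\{c_{\star,k},c_{\star,k}+1\}$ for the other $k\ne t$.
\begin{itemize}
\item It is supported by the canonical edge, so it receives straight color $1$.
\item It lies outside $B_\star$ with all coordinates positive, so $\chi_\star$ would give it color $0$.
\end{itemize}
Hence such a cube is \emph{not} colored by $\chi_\star$ alone. It mixes straight colors with default colors, including color $0$ on its unsupported vertices at $v_t=c_{\star,t}+2$. The fact that $\chi_\star$ has a single fully colored simplex therefore says nothing about it. If you meant ``strictly below'' instead, then this mixed case is simply missing from your enumeration, since (c) requires every perpendicular interval to be supported.

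\textbf{The missing argument.} This is exactly the case the paper handles separately, and it needs a different missing color, namely $t$. Every vertex of the cube has $v_j\le c_{\star,j}$, so color $t$ cannot arise in any of the three ways a vertex is colored:
\begin{itemize}
\item A supported vertex has $z_j=0$, so $j\in J_{+1}(z)$ and the straight rule never returns $t$.
\item An unsupported vertex inside $B_\star$ has $v_j\le c_{\star,j}$, so $\chi_\star$ stops at an index no later than $j$ in $\pi_\star$, which comes before $t$.
\item Outside $B_\star$, $\chi_\star$ returns $t$ only when $v_t=0$, which is impossible here.
\end{itemize}
So color $t$ is absent from the whole cube, and no Kuhn simplex in it is fully colored.

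The remaining cubes that (b) was meant to cover are those with $t$-interval at most $[c_{\star,t},c_{\star,t}+1]$ and no perpendicular interval $[c_{\star,k}+1,c_{\star,k}+2]$. These lie in $B_\star$ and avoid $u_\star$, so Step 1 gives that they are colored exactly by $\chi_\star$. Your conclusion for them is correct, but for this reason rather than the one you gave.
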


\begin{proof}
The outgoing segment starts at \(c_\star\) and points in direction
\(+e_t\). Near \(c_\star\), its priority order is the increasing order
of the coordinates different from \(t\). Hence a grid vertex supported
by this segment satisfies
$v_j\in\{c_{\star,j},c_{\star,j}+1\}$
for $j\ne t$.
The portion of the segment contained in \(B_\star\) meets exactly the two phase levels
$v_t=c_{\star,t}$ and $v_t=c_{\star,t}+1$.
We first compare the straight-segment rule with the default coloring
\(\chi_\star\) on the supported vertices of \(B_\star\).
Suppose that some coordinate \(j\ne t\) has its lower value
\(v_j=c_{\star,j}\). Under both rules, the color is the first such
coordinate in the common priority order. If instead
$v_j=c_{\star,j}+1$ for every $j\ne t$, then at phase level \(v_t=c_{\star,t}\) both rules assign color \(t\).
At the only remaining supported vertex,
$u_\star=c_\star+\mathbf 1$,
the straight rule assigns color \(t\), whereas the default rule assigns
color \(0\). Thus the two colorings differ at exactly one vertex of
\(B_\star\), namely
\[
    \chi_\star(u_\star)=0,
    \qquad
    \chi(u_\star)=t.
\]
Since \(u_\star\) is the unique color-\(0\) vertex in
\(c_\star+[0,1]^d\) under the default coloring, the recoloring proposed in Lemma \ref{lem:canonical-discrete-attachment} destroys the unique fully colored simplex in that cube. After the attachment,
every vertex of \(c_\star+[0,1]^d\) has a positive color.

We next consider a unit cube that meets the canonical straight pattern and crosses an upper face of \(B_\star\) in some coordinate \(k\ne t\).
Such a cube has
\(k\)-interval
$[c_{\star,k}+1,c_{\star,k}+2]$.
Among its vertices, the only value compatible with the support
condition for the straight rule is \(v_k=c_{\star,k}+1\). However,
color \(k\) under the straight rule requires
\(v_k=c_{\star,k}\). Hence the straight rule never assigns color \(k\)
in this cube. The default rule also omits color \(k\), so no Kuhn simplex in the cube is fully colored. If the \(k\)-interval is farther
from \([c_{\star,k},c_{\star,k}+1]\), then the cube does not meet the
straight pattern at all.

It remains to consider unit cubes that cross the upper phase face of
\(B_\star\), so that
$v_t\in[c_{\star,t}+1,c_{\star,t}+2].$
Suppose first that, for some \(j\ne t\), the \(j\)-interval lies below
the supported interval and the cube still meets the straight pattern.
Then necessarily
$v_j\in[c_{\star,j}-1,c_{\star,j}]$.
Every supported vertex in this cube has
\(v_j=c_{\star,j}\). Thus the test for coordinate \(j\) succeeds under
the straight rule, and color \(t\) is never assigned. For vertices of
the cube that lie inside \(B_\star\), the same coordinate \(j\)
precedes \(t\) in the default priority order. For vertices outside
\(B_\star\), the default rule could assign color \(t\) only if
\(v_t=0\), which is impossible here. Hence color \(t\) is absent from
the entire cube.

If, on the other hand, every coordinate \(j\ne t\) uses its supported
interval
$[c_{\star,j},c_{\star,j}+1]$,
then every vertex of the cube is governed by the positive
straight-segment coloring, so color \(0\) is absent. Finally, cubes
contained in \(B_\star\) other than \(c_\star+[0,1]^d\) retain their
default colors and were already shown not to contain a fully colored
simplex. This exhausts all cubes meeting both the canonical attachment
and the boundary of \(B_\star\).

Finally, we show that no other route pattern can interact with this
attachment. Every route edge other than the canonical phase edge has
some route coordinate fixed at physical bit one. Otherwise it would be
a binary-cube edge incident to the all-zero physical state, whose only
incident route edge is the canonical phase edge. In integer coordinates
such a fixed coordinate has value \(96\). Straight and turn rules extend
by at most two units from their incident segments, so every pattern
associated with another route edge has some coordinate at least \(94\).
By contrast, every route coordinate in \(B_\star\) is at most \(33\).
Even after enlarging by one unit to account for a neighboring
interpolation cube, these regions remain disjoint. Therefore no
straight or turn pattern associated with another route edge meets a
unit cube containing the canonical attachment.
\end{proof}

\paragraph{The color of an arbitrary grid vertex.}
We now combine the above discrete rules. The order in Table~\ref{tab:global-color-rule}
defines the global coloring rule $\chi$. A turn table specifies the color on its entire stated region, including its entries equal to zero. An edge from \(c\) to \(c+64\sigma e_i\) supports a grid vertex
\(v\) precisely when
\[
 0\le \sigma(v_i-c_i)\le64,\qquad
 v_j-c_j\in\{0,1\}\quad(j\ne i).
\]

\begin{table}[t]
\centering
\small
\renewcommand{\arraystretch}{1.15}
\begin{tabularx}{\textwidth}{@{}p{0.32\textwidth}X@{}}
\toprule
Condition, in order & Color assigned to \(v\) \\
\midrule
\(v\) lies in a turn region & The turn rule defined in \eqref{eq:turn-coloring}. \\
\(v\) is supported on a route edge & The straight rule defined in \eqref{eq:discrete-straight-rule}. \\
Neither condition holds & The default color \(\chi_\star(v)\). \\
\bottomrule
\end{tabularx}
\caption{The \textit{global coloring rule} \(\chi\). The first applicable rule (top-down) is used.}
\label{tab:global-color-rule}
\end{table}

Here a turn region consists of the integer offsets in \([-2,2]\) on
its two route axes and in \(\{0,1\}\) on all other axes. Distinct turn
regions are disjoint because their centers differ by \(64\) in some
route coordinate. The priority orders of consecutive pieces agree at
their common vertices by the construction in
Section~\ref{sec:discrete-local-rules}. Outside the turn regions, at most
one edge pattern applies. In particular, the definition assigns one
color to every grid vertex.

\begin{lemma}[A polynomial size color circuit]
\label{lem:discrete-color-circuit}
The global coloring \(\chi\) is computable by a Boolean circuit of size
polynomial in the encoding length of the \SoL\ instance.
\end{lemma}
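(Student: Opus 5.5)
The plan is to show that, given a grid vertex \(v\in\{0,\ldots,M\}^d\) written in binary (each coordinate has \(O(1)\) bits, since \(M=128\)), the color \(\chi(v)\) is determined by a constant number of candidate route states near \(v\), each tested with the polynomial-size predicates \(\mathsf{Valid}\), \(\mathsf{Next}\), \(\mathsf{Prev}\), \(\mathsf{hasNext}\), \(\mathsf{hasPrev}\) of Section~\ref{sec:signed-route-geometry}. The first step is \emph{local decoding}. For each route coordinate \(i\le D\), compute \(v_i\) modulo the lattice of centers \(32+64\{0,1\}\), and record whether \(v_i-c_i\) can lie in \(\{-2,\ldots,2\}\) or in \(\{0,1\}\) for \(c_i\in\{32,96\}\); also record the auxiliary offset \(v_a-35\). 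Every relevant pattern has offsets in \(\{0,1\}\), or in \([-2,2]\) on at most two axes, or in \([0,64]\) along one moving axis. Hence at most two coordinates are "free", and all other coordinates determine a unique physical bit. This yields a constant number (at most a few per choice of the moving or turning axes, hence \(O(d^2)\) in total) of candidate centers \(c(r)\) and candidate edges \(c(r)\to c(r)\pm64e_i\). Each candidate is a fixed-size arithmetic test on \(O(\log M)\)-bit numbers.

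The second step converts each candidate into logical states and tests it. For a candidate center \(r\), undo the phase complement, obtain \((A,B,\beta)\), and evaluate \(\mathsf{Valid}\), \(\mathsf{Next}\), \(\mathsf{Prev}\) and the flags. From these, decide whether \(r\) is a turn, what its incoming direction \(\sigma e_i\) and outgoing direction \(\tau e_j\) are (read off from which coordinate differs between \(r\) and \(\mathsf{Prev}(r)\), resp.\ \(\mathsf{Next}(r)\), and the sign of the change), or whether it is an endpoint (a flag equals zero). For a candidate edge \(r\to r'\), test \(\mathsf{hasNext}(r)\wedge[\mathsf{Next}(r)=r']\). Either orientation may match, which determines \(\sigma\). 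The canonical source is recognized by \(r=0^D\) with phase stored at level \(1/4\).

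The third step implements the priority list of Table~\ref{tab:global-color-rule} as a multiplexer. If some valid turn region contains \(v\), output the color from \eqref{eq:turn-coloring}; the tables \eqref{eq:discrete-turn-tables} are constant-size lookups, and the extension to the coordinates in \(J\) is a first-zero selection, i.e.\ a chain of \(f_j\)-type gates. Else, if some verified edge supports \(v\), output \eqref{eq:discrete-straight-rule}. The priority order \(\pi\) is chosen by the position test \(s\le32\) versus \(s>32\), together with the turn/endpoint type at the relevant end (which is obtained from the same predicates). The endpoint rule is applied when the relevant end is a sink or a noncanonical source beyond \(c_t\). Else output the default \(\chi_\star(v)\), a comparison of each \(v_j\) with \(c_{\star,j}\) and with \(0\), again followed by a first-index selection. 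Uniqueness of the applicable pattern (disjointness of turn regions, at most one edge pattern outside turns) means the multiplexer never has to break ties between conflicting valid candidates. Nevertheless, we use a fixed OR/priority so the circuit is well defined on every input. Each step uses \(O(1)\) calls to \(S,P\) per candidate and polynomially many candidates, giving polynomial size.

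The main obstacle is the first step: arguing rigorously that the finite candidate enumeration is \emph{complete}. Every turn region, supported edge vertex, and endpoint region containing \(v\) must arise from one of the enumerated candidates, even near the \(\pm2\) margins and at the \(s=32\) order switch. I would handle this using the separation facts already proved: centers lie on \(32+64\mathbb Z\), edges have length \(64\), and patterns extend at most two units. These facts imply that every fixed coordinate of a pattern containing \(v\) is the unique center value within distance \(2\) of \(v_i\). Only the moving coordinate, or the two turn coordinates, can be ambiguous, and those ambiguities are enumerated explicitly.
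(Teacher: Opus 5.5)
Your proposal is correct and follows essentially the same route as the paper. Both decode candidate physical states from the coordinate ranges: the paper uses $[30,34]\cup[94,98]$ to get a single turn candidate, and uses $\{32,33,96,97\}$, $\{35,36\}$ and $[32,96]$ to get $D$ straight-edge candidates. Both then test these candidates with $\mathsf{Valid}$, $\mathsf{hasPrev}$, $\mathsf{hasNext}$, $\mathsf{Prev}$, $\mathsf{Next}$ and select turn, then straight, then default. Your $O(d^2)$ candidate count and your explicit endpoint rule are harmless redundancies: there is only one turn candidate, and the vertices beyond a noncanonical endpoint already receive color $0$ from the default rule $\chi_\star$.
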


\begin{proof}
Each integer coordinate belongs to $\{0,\ldots,128\}$ and is
represented by eight bits. Let $F:\{0,1\}^D\to\{0,1\}^D$ be the map
$F(b_1,\ldots,b_{D-1},b_D)
=
(b_1,\ldots,b_{D-1},1-b_D)$.
Since $t=D$ is the phase coordinate and is stored complemented,
$F$ converts between physical and logical state representations,
with $F(F(b))=b$.

We first describe how to recognize a turn region. If a grid vertex
$v$ belongs to such a region, every route coordinate satisfies
$v_k\in[30,34]\cup[94,98]$ for $k\in[D]$.
These intervals are disjoint, so they determine a unique candidate
physical state $b$: its $k$th bit is zero in the first interval and
one in the second. If any coordinate belongs to neither interval,
no turn region contains $v$. Otherwise, the circuit evaluates
$\mathsf{Valid}$, $\mathsf{hasPrev}$, and $\mathsf{hasNext}$ on the
logical state $r=F(b)$. A turn requires
\[
\mathsf{Valid}(r)=\mathsf{hasPrev}(r)=\mathsf{hasNext}(r)=1.
\]
When these tests succeed, set
$b^-=F(\mathsf{Prev}(r))$ and
$b^+=F(\mathsf{Next}(r))$.
By the route construction, there are distinct axes $i,j\in[D]$
and signs $\sigma,\tau\in\{-1,+1\}$ such that
$b-b^-=\sigma e_i$,
and $b^+-b=\tau e_j$.
Thus the incoming and outgoing signed axes are locally computable.
Writing $c=c(b)$ for the candidate center, the circuit checks
\[
|v_i-c_i|\le 2,
\qquad
|v_j-c_j|\le 2,
\qquad
v_k-c_k\in\{0,1\}
\quad(k\in[d]\setminus\{i,j\}).
\]
Together with the preceding tests, these conditions characterize
membership in the candidate turn region. On this region, the circuit
evaluates \eqref{eq:discrete-turn-tables} and
\eqref{eq:turn-coloring}, and applies the stated extension to
$d$ dimensions. The tables have constant size, and the extension
requires only coordinate tests and selection of the first successful
index in increasing order. In particular, the circuit records turn
membership separately from the resulting color, so a turn entry of
color zero still takes precedence over the straight rule.

For a straight edge, the circuit tries each moving coordinate
$i\in[D]$. Every other route coordinate must belong to
$\{32,33,96,97\}$, with $32,33$ encoding bit zero and $96,97$
encoding bit one. The auxiliary coordinate must belong to
$\{35,36\}$, and the moving coordinate must satisfy
$32\le v_i\le96$. If these tests succeed, setting the moving bit
to zero and one gives the only two candidate physical endpoints
$b^{(0)}$ and $b^{(1)}$. Let $r^{(h)}=F(b^{(h)})$ for
$h\in\{0,1\}$. The two possible orientations are tested by
\[
\begin{aligned}
E_i^+
&=
\mathsf{hasNext}(r^{(0)})
\wedge
[\mathsf{Next}(r^{(0)})=r^{(1)}],\\
E_i^-
&=
\mathsf{hasNext}(r^{(1)})
\wedge
[\mathsf{Next}(r^{(1)})=r^{(0)}].
\end{aligned}
\]
If neither test succeeds, no route edge with moving coordinate $i$
supports $v$. Otherwise, the successful test identifies the directed
edge, which we write as $c\to c+64\sigma e_i$. These tests are
exhaustive because the perpendicular coordinates uniquely determine
all its fixed physical bits.

The priority orders required at the two endpoints are also locally
computable. Write $\operatorname{inc}(I)$ for the increasing list
of the indices in $I$. If the initial endpoint is a source, its
order is $\operatorname{inc}([d]\setminus\{i\})$. Otherwise,
its predecessor identifies the incoming axis $k$, and the order is
$\bigl(
\operatorname{inc}([d]\setminus\{k,i,a\}),\,a,\,k
\bigr)$.
Similarly, if the final endpoint is a sink, its order is
$\operatorname{inc}([d]\setminus\{i\})$. Otherwise, its successor
identifies the outgoing axis $\ell$, and the order is
$\bigl(
\operatorname{inc}([d]\setminus\{i,\ell,a\}),\,a,\,\ell
\bigr)$.
These are precisely the orders prescribed by the endpoint and turn
rules. The circuit uses the initial order when
$s=\sigma(v_i-c_i)\le32$ and the final order when $s>32$,
and then evaluates \eqref{eq:discrete-straight-rule}.
All these operations use polynomially many Boolean gates.

The default box and boundary rules require only comparisons with
fixed integers and selection of the first successful index in the
fixed order $\pi_\star$. The circuit selects the applicable rule
according to Table~\ref{tab:global-color-rule}: a turn rule first,
then a supporting straight edge, and otherwise the default rule.
Distinct turn regions are disjoint, and outside them at most one
straight-edge pattern applies, as established above. Hence this
selection computes exactly $\chi$.

There is at most one turn candidate and there are $D$ straight-edge
candidates. The construction therefore uses $\mathcal O(D)$ evaluations of
the polynomial-size route circuits and polynomially many additional
gates in $d$. Since $d=D+1=2n+4$, the resulting circuit has size
polynomial in the encoding length of the \textsc{Sink-of-Line}
instance. All loops have polynomially bounded length and all tables
are fixed, so the circuit can also be constructed in polynomial time.
No enumeration of the route or the grid is required.
\end{proof}

\medskip
\noindent
The following proposition classifies the fully colored simplices of the construction. 

\begin{definitionbox}
\begin{proposition}[Classification of all fully colored simplices]
\label{lem:discrete-global-classification}
Under the global coloring \(\chi\), every sink and every noncanonical source
of \(H\) gives exactly one fully colored simplex. There are no others.
The corresponding signs \(\sign\det(-A_p)\) are \(+1\) at sinks and
\(-1\) at noncanonical sources.
\end{proposition}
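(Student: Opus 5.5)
The plan is a case analysis over unit grid cubes according to which row of Table~\ref{tab:global-color-rule} governs their vertices. A Kuhn simplex lies in a single unit cube, and a cube that misses a color contains no fully colored simplex. It therefore suffices to show three things: every cube misses some color, or else every simplex in it misses one, except the endpoint cubes \(C^\pm\) of Lemma~\ref{lem:discrete-endpoints}; each such cube contains exactly one fully colored simplex; and its sign is as stated. The sign and uniqueness claims at sinks and noncanonical sources are exactly Lemma~\ref{lem:discrete-endpoints}, once we know the global rule restricts there to the endpoint rule. So the substance is exhaustiveness, which I will handle through the following localization.

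Fix a unit cube \(C\) containing a simplex that carries a positive color. Every positive color produced by the straight or turn rules comes from a vertex within distance \(2\) of some route segment. Default positive colors arise in two places: inside \(B_\star\), or on the outer faces \(v_j=0\). The first step is to handle cubes touching neither any route pattern nor the interior of \(B_\star\). Such a cube is colored by the outer default rule alone, so it misses color \(0\) when all its lower coordinates are positive. Otherwise some coordinate \(k\) has lower coordinate \(0\) while some coordinate has lower value at least \(c_{\star,k}+1\); the argument already given in the box paragraph then shows that color is absent. Cubes inside \(B_\star\) away from the attachment are handled by that paragraph together with Lemma~\ref{lem:canonical-discrete-attachment}. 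That lemma also disposes of \(c_\star+[0,1]^d\) and of every cube meeting both the attachment and \(\partial B_\star\).

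Next come cubes meeting a route pattern. The separation property of Section~\ref{sec:signed-route-geometry} says nonincident segments differ by \(64\) in a fixed coordinate, while patterns have radius at most \(2\). Hence any cube meets the patterns of at most one segment, or of two consecutive segments near their shared turn, or of one endpoint. In each case the cube's coloring coincides, on all its vertices, with a single local rule padded by color \(0\): either the straight rule (Lemma~\ref{lem:discrete-straight}, including the order switch at \(s=32\)), the extended turn rule (Lemmas~\ref{lem:discrete-turn-tables} and \ref{lem:discrete-turn-lift}), or the endpoint rule. A cube straddling the boundary of a pattern and the default region needs a separate check. Such a cube has some perpendicular interval outside \([c_j,c_j+1]\), so the argument of Lemma~\ref{lem:discrete-straight} yields a missing positive color. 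The default positive colors cannot fill this in, because away from \(B_\star\) they require some \(v_j=0\), and route patterns keep every coordinate at least \(30\). At endpoints, the vertices beyond the endpoint in the \(t\)-direction are colored \(0\) under Table~\ref{tab:global-color-rule}. This holds because they are supported by no edge (only the phase edge is incident) and lie in no turn region, and the default rule there gives \(0\), since all coordinates are positive and the point is outside \(B_\star\).

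The main obstacle I expect is exactly this consistency bookkeeping. I must show that the priority-based table (turn before straight before default) reproduces, on every cube, one of the local colorings analyzed in the lemmas. In particular, default colors must not leak into pattern cubes with the wrong value, and the order switch must not occur within two units of a turn. I would first try to reduce everything to the single observation that all route coordinates of pattern vertices lie in \([30,98]\) and the auxiliary coordinate lies in \([35,36]\), so default positive colors never appear there. Distinct patterns are separated by at least \(60\). Once exhaustiveness holds, Lemma~\ref{lem:discrete-endpoints} supplies uniqueness at each endpoint and the signs \(+1\) and \(-1\).
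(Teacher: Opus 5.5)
Your proposal is correct and follows the paper's proof essentially step for step: separation of nonincident patterns, the observation that every cube meeting a route pattern away from the canonical attachment has strictly positive coordinates so the default rule contributes only color \(0\) there, reduction to Lemmas~\ref{lem:discrete-straight}, \ref{lem:discrete-turn-tables}, \ref{lem:discrete-turn-lift}, \ref{lem:discrete-endpoints} and \ref{lem:canonical-discrete-attachment}, and the box paragraph for the remaining default cubes, with uniqueness and signs read off from Lemma~\ref{lem:discrete-endpoints}. One small slip: a default cube outside \(B_\star\) with all coordinates positive has \emph{only} color \(0\), so it misses every positive color rather than color \(0\); in any case the box-paragraph argument already covers every default cube not contained in \(B_\star\), since some coordinate \(k\) has lower endpoint at least \(c_{\star,k}+1\) and color \(k\) is then absent, so your two-way split there is unnecessary.
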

\end{definitionbox}

\begin{proof}
Nonincident route segments differ by \(64\) in a coordinate fixed on
both. A turn changes any coordinate by at most two units from its center,
and a straight rule changes a fixed coordinate by at most one unit.
Even after allowing one further unit for a neighboring interpolation
cube, patterns from nonincident segments cannot occur in the same cube.
Near a route state there are only its two incident segments, or its
single segment if it is an endpoint.

By the separation argument in
Lemma~\ref{lem:canonical-discrete-attachment}, every unit cube
meeting a route pattern other than the canonical phase segment
is disjoint from \(B_\star\). All its vertices have strictly
positive coordinates, so \(\chi_\star\) assigns color \(0\)
wherever no route rule applies. Along the canonical phase
segment, the remaining cubes beyond the attachment lie above
\(B_\star\) in the phase coordinate, so the same conclusion
holds. Thus the above lemmas apply with their neighboring
vertices of color \(0\).

A unit cube meeting a route pattern away from its endpoints is therefore
described either by a straight rule, possibly with a change of priority,
or by one extended turn table. The tables and straight rules give
identical colors at their joining layers, so this description also
covers cubes crossing the boundary of a turn region. The priority change
is at least twenty nine units from either turn. Cubes crossing a boundary in a coordinate outside the three used in
the table are covered by Lemma~\ref{lem:discrete-turn-lift}.
If the cube meets no turn region, it is governed by a single straight
rule with its surrounding vertices of color $0$, possibly with a
change of priority. Lemma \ref{lem:discrete-straight} excludes a fully colored simplex in this case as well.

At a noncanonical endpoint, the rule is exactly the endpoint rule in
Lemma~\ref{lem:discrete-endpoints}, including the adjacent vertices of color \(0\).
It gives one fully colored simplex with the asserted sign. At the
canonical source, Lemma~\ref{lem:canonical-discrete-attachment} excludes
every fully colored simplex. Every remaining cube uses only the default
coloring, whose possible fully colored simplex was precisely the one
removed at this attachment. These cases exhaust the triangulation.
\end{proof}

\paragraph{Boundary directions.}
All route patterns are in the interior. If a grid vertex has \(v_i=0\),
the default rule gives a positive color. If it has \(v_i=M\), it is
outside \(B_\star\) and the default rule cannot give color \(i\).
Thus
\begin{equation}
 v_i=0\ \Longrightarrow\ \chi(v)\ne0,\qquad
 v_i=M\ \Longrightarrow\ \chi(v)\ne i.
 \label{eq:discrete-boundary-colors}
\end{equation}
Affine interpolation on a boundary face now gives \(G_i(x)\ge0\) when
\(x_i=0\) and \(G_i(x)\le0\) when \(x_i=1\).
At the origin the first coordinate in \(\pi_\star\) is \(1\), so
\(G(0)=e_1\). In particular, the origin is not a fixed point.

\subsection{An exact circuit for interpolation}
\label{sec:discrete-interpolation}

The preceding construction assigns a color to every grid vertex by a
Boolean circuit.  We compute its affine interpolation by sorting
coordinate thresholds, whose consecutive differences give the barycentric
coefficients.  Boolean values represented by zero or an interval length
allow the circuit to weight each color contribution using only affine
operations and max gates.  At every simplex barycenter, all thresholds are
distinct and all these intervals have positive length.  This will make
every max gate strict at every zero of the field.

We give the circuit for an arbitrary positive integer \(M\), with running
time polynomial in \(M\).  The construction above uses \(M=128\).
For an integer vertex \(v\in\{0,\ldots,M\}^d\), the color circuit returns
\(\chi(v)\in\{0,\ldots,d\}\).  The assigned field value at \(v/M\) is
\(q_{\chi(v)}\), where \(q_0=-\one\) and \(q_j=e_j\) for \(j\in[d]\).

\begin{definitionbox}
\begin{lemma}[An exact interpolation circuit]
\label{lem:strict-kuhn-interpolation}
If a Boolean circuit of size \(s_\chi\) computes \(\chi\) from the
binary representations of the integer coordinates of a vertex \(v\), a rational
circuit using affine operations and max gates computes the affine
interpolation of the values \(q_{\chi(v)}\) on the standard Kuhn triangulation.
Its size and construction time are polynomial in \(d,M,s_\chi\).
At the barycenter of every simplex of dimension \(d\), the two inputs of
each max gate differ by at least \(1/(d+1)\).
\end{lemma}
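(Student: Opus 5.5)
The plan is to realize the Kuhn interpolation by a sorting-based ``level'' formula. Each level yields one vertex of the simplex, its weight is the length of an interval between consecutive sorted thresholds, and the color circuit is run on weighted Boolean values for that level.

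\textbf{Thresholds and sorting.} For every coordinate \(i\in[d]\) and every \(r\in[M]\), set \(w_{i,r}:=Mx_i-(r-1)\). A Kuhn simplex with lower corner \(b\) and local coordinates \(f\) as in~\eqref{eq: triang} is recovered as follows.
\begin{itemize}
\item A level \(\lambda\in(0,1)\) selects the grid vertex \(v(\lambda)\) with \(v(\lambda)_i=\#\{r: w_{i,r}>\lambda\}\).
\item The barycentric coefficient of that vertex is the length of the maximal \(\lambda\)-interval on which \(v(\lambda)\) is constant.
\end{itemize}
Accordingly, the circuit first sorts the \(dM+2\) numbers \(\{w_{i,r}\}\cup\{0,1\}\) with a sorting network of size \(\mathcal O((dM)^2)\). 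Each comparator is one max gate, with \(\min\{a,b\}=a+b-\max\{a,b\}\). This gives \(s_0\ge s_1\ge\cdots\) and interval lengths \(\omega_k:=s_k-s_{k+1}\). Because \(0\) and \(1\) are themselves sorted tokens, every interval lies either inside \([0,1]\) or outside it. The barycentric coefficients are exactly the \(\omega_k\) of the inside intervals, and they sum to \(1\).

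\textbf{Weighted Boolean evaluation.} For each interval \(k\) and each threshold \((i,r)\), I compute the weighted bit
\[
u_{i,r,k}:=\max\bigl\{0,\min\{\omega_k,\ w_{i,r}-s_{k+1}\}\bigr\}.
\]
Its value is \(\omega_k\) if \(w_{i,r}\ge s_k\) and \(0\) if \(w_{i,r}\le s_{k+1}\), so it is the weighted truth value of ``\((i,r)\) lies above the interval''. The binary digits of the unary count \(v_i=\sum_r[\,w_{i,r}>\lambda\,]\) are Boolean functions of these unary bits. The color circuit \(\chi\) is likewise a Boolean function of those digits. Both are evaluated gate by gate using \(\operatorname{NOT}_\omega(u)=\omega-u\) and \(\operatorname{AND}_\omega(u,v)=\max\{0,2u+2v-3\omega\}\). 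The output is a weighted one-hot vector \((h_{c,k})_{c=0}^d\), with \(h_{c,k}=\omega_k\) when \(\chi(v)=c\) at that level.

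The inside indicator \(\iota_k\in\{0,\omega_k\}\) is obtained from the two weighted tests ``\(s_{k+1}\ge0\)'' and ``\(s_k\le1\)'', each formed like \(u_{i,r,k}\) with \(0\) and \(1\) in place of \(w_{i,r}\), and joined by \(\operatorname{AND}_{\omega_k}\). The field is then
\[
G(x)=\sum_k\sum_{c}\operatorname{AND}_{\omega_k}(h_{c,k},\iota_k)\,q_c,
\]
which is linear because the \(q_c\) are constants. Outside intervals contribute zero, yet their gates still operate on the full positive weight \(\omega_k\).

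\textbf{Correctness and size.} On each Kuhn simplex the circuit reproduces the barycentric formula stated after~\eqref{eq: triang}. The cap \(b_i=\min\{\lfloor Mx_i\rfloor,M-1\}\) at \(x_i=1\) is handled automatically, since the unary count of thresholds above \(\lambda\in(0,1)\) is then exactly \(M\). Continuity across faces holds because the formula agrees with the unique affine interpolation on every simplex. The size is polynomial in \(d\), \(M\) and \(s_\chi\): there are \(\mathcal O(dM)\) intervals, each carrying one copy of a circuit of size \(\mathcal O(dM+s_\chi)\), plus the sorting network.

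\textbf{Strictness, the main obstacle.} At a barycenter, the fractional parts of \(Mx\) are \(1/(d+1),\dots,d/(d+1)\) in some order, as in Lemma~\ref{lem:kuhn-barycenter}. Hence every \(w_{i,r}\) lies in \(\mathbb Z+\{1/(d+1),\dots,d/(d+1)\}\). Consequently all \(dM+2\) sorted numbers are pairwise distinct with gaps at least \(1/(d+1)\). This gives three facts:
\begin{itemize}
\item every comparator is strict with gap \(\ge1/(d+1)\);
\item every \(\omega_k\ge1/(d+1)\);
\item in each \(u_{i,r,k}\), the inner \(\min\) and outer \(\max\) compare values differing by at least \(\omega_k\) or by at least a sorting gap.
\end{itemize}
Each weighted gate compares \(0\) with one of \(-3\omega,-\omega,\omega\), so its gap is at least \(\omega_k\ge1/(d+1)\). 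The delicate part is making sure that \emph{every} gate, including the unary-to-binary conversion and the inside test, is built only from these weighted primitives applied to inputs that are exactly \(0\) or \(\omega_k\). I would enforce this by never using an unweighted comparison after the \(u_{i,r,k}\) stage, and then checking strictness inductively over the circuit.
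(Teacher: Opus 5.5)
\textbf{Verdict: genuine gap in the strictness claim.} Your architecture is the same as the paper's: you sort the thresholds together with $0,1$, read off the barycentric weights as consecutive gaps, and run the counting and color circuits with $\operatorname{NOT}_\omega$ and $\operatorname{AND}_\omega$. Your correctness and size arguments are fine. The strictness claim fails, however, at the step you treat as unproblematic: the weighted threshold bit $u_{i,r,k}=\max\{0,\min\{\omega_k,\,w_{i,r}-s_{k+1}\}\}$.

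The sorted values $s_k$ are the thresholds themselves, so at a barycenter every $w_{i,r}$ equals some $s_j$. This produces two ties.
\begin{itemize}
\item For the interval with $j=k$ (the threshold is its upper endpoint), the inner min compares $\omega_k$ with $w_{i,r}-s_{k+1}=s_k-s_{k+1}=\omega_k$.
\item For the interval with $j=k+1$ (the threshold is its lower endpoint), the min returns $0$ and the outer max compares $0$ with $0$.
\end{itemize}
Your assertion that these inputs differ ``by at least $\omega_k$ or by at least a sorting gap'' overlooks that $w_{i,r}-s_k$ or $w_{i,r}-s_{k+1}$ is zero when a threshold is compared with its own sorted copy. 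The inside tests, built ``like $u_{i,r,k}$ with $0$ and $1$'', tie in the same way at the intervals adjacent to the tokens $0$ and $1$.

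This is not cosmetic. The lemma asserts a gap of at least $1/(d+1)$ at every max gate, and the later LCP reduction needs exactly this to obtain strict complementarity. Your plan to use only weighted primitives after the $u_{i,r,k}$ stage does not help, because the ties occur in the $u_{i,r,k}$ gates themselves.

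\textbf{The fix.} Compare each threshold with the midpoint of the interval rather than with an endpoint, scaled so that the clipped argument stays a full weight away from the band $[0,\omega_k]$. The paper uses $[\,4w_{i,r}-2s_k-2s_{k+1}\,]_0^{\omega_k}$.
\begin{itemize}
\item If $w_{i,r}\ge s_k$, the argument is at least $2\omega_k$.
\item If $w_{i,r}\le s_{k+1}$, the argument is at most $-2\omega_k$.
\end{itemize}
Hence both max gates of the clip have separation at least $\omega_k\ge 1/(d+1)$. The factor matters: using $2w_{i,r}-s_k-s_{k+1}$ would again tie at $w_{i,r}=s_k$. The inside tests should likewise be $[\,2(s_k+s_{k+1})\,]_0^{\omega_k}$ and $[\,4-2(s_k+s_{k+1})\,]_0^{\omega_k}$, joined by $\operatorname{AND}_{\omega_k}$. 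With these replacements, the rest of your argument (distinct sorting inputs spaced by $1/(d+1)$, $\omega_k\ge 1/(d+1)$, and weighted gates comparing $0$ with one of $-3\omega_k,-\omega_k,\omega_k$) goes through as in the paper.
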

\end{definitionbox}

\begin{proof}
The circuit evaluates the Boolean color rule on several integer vertices
and weights their contributions by the barycentric coefficients.  The
weights themselves depend on the real input.  The following Boolean
operations incorporate a weight without multiplying two variable real
quantities.

For \(w\geq0\), we represent a Boolean value $u$ by either \(0\) or \(w\).
The weighted NOT and AND operations are
\begin{equation}
 \operatorname{NOT}_w(u)=w-u,
 \qquad
 \operatorname{AND}_w(u,v)=\max\{0,2u+2v-3w\}.
 \label{eq:weighted-boolean-interpolation}
\end{equation}
Boolean constants are represented by \(0\) and \(w\), and OR is obtained
from NOT and AND by De Morgan's identity.  These formulas compute the
usual Boolean operations with every output multiplied by \(w\).
If \(w>0\), the second input of the max gate in AND belongs to
\(\{-3w,-w,w\}\), so this gate is strict for every Boolean input.
If \(w=0\), every represented value is zero and all formulas remain
valid.  Thus any Boolean circuit can be evaluated in this representation
using only affine operations and max gates.

\paragraph{The intervals and their vertices.}
We will express the interpolated field as a weighted sum of color
vectors at grid vertices. To identify these vertices and their weights
using affine operations and max gates, we define the thresholds
\[
a_{i,r}=Mx_i-(r-1)
\qquad (i\in[d],\ r\in[M]).
\]
Now we introduce a real parameter $\alpha$ distinct
from all the thresholds $a_{i,r}$, and define
\[
v_i(\alpha)=\#\{r\in[M]:a_{i,r}>\alpha\},
\qquad
v(\alpha)=(v_1(\alpha),\ldots,v_d(\alpha)).
\]
Each $v_i(\alpha)$ belongs to $\{0,\ldots,M\}$, so $v(\alpha)$ is an integer grid vertex. As $\alpha$ varies without crossing a threshold, this vertex
remains unchanged. The parameter $\alpha$ is used only to describe the construction; the circuit takes only $x$ as input and never needs
to choose a value of $\alpha$.

Let $L=Md+2$. For each input $x$, let $s_1(x),\ldots,s_L(x)$ be the entries of the list consisting of
all $Md$ thresholds $a_{i,r}(x)$ and the constants $0,1$,
arranged in nonincreasing order, with repetitions retained:
\begin{equation}
s_1\ge s_2\ge\cdots\ge s_L,
\qquad
w_k:=s_k-s_{k+1}
\quad(k\in[L-1]).
\label{eq:interpolation-sorted-intervals}
\end{equation}
For each $k$ with $w_k>0$, we define the interval
$I_k=(s_{k+1},s_k)$.
No threshold lies inside $I_k$, so $v(\alpha)$ is constant on $I_k$.
We denote this vertex by $v^{[k]}$. Thus each positive-width
interval supplies a grid vertex $v^{[k]}$ and a weight $w_k$.
We will show below that the intervals inside $[0,1]$ give exactly
the vertices and barycentric weights needed for Kuhn interpolation.

The sorting is implemented by a fixed sequence of $\mathcal O(L^2)$ operations. Each operation replaces a pair $(u,v)$
by
$\bigl(\max\{u,v\},\min\{u,v\}\bigr)$,
where
$\min\{u,v\}=-\max\{-u,-v\}$.
Hence it uses only affine operations and max gates. Importantly,
we retain the full width $w_k$ even for intervals outside $[0,1]$.
Their contributions will be removed only after their weighted
Boolean computations have been performed.

\paragraph{Computing the weighted color of each interval.}
We next recover the comparisons $a_{i,r}>\alpha$ for $\alpha\in I_k$ in
the weighted Boolean representation with weight $w_k$.
For $w\ge0$, we write
$[u]_0^w=\max\{0,\min\{w,u\}\}$,
which is implemented by two max gates.
To compute the vertex associated with $I_k$, the circuit
needs the truth values of the comparisons $a_{i,r}>\alpha$.
We supply these inputs in the weighted Boolean representation:
a true comparison is encoded by $w_k$ and a false comparison by $0$.
The following clipping formula computes these weighted comparison bits using only affine operations and max gates, without choosing
$\alpha$.
\begin{equation}
\widetilde b_{k,i,r}
=
\bigl[4a_{i,r}-2s_k-2s_{k+1}\bigr]_0^{w_k} = w_k  \mathds{1}[a_{i,r}>\alpha]
\qquad\text{for every }\alpha\in I_k.
\label{eq:interpolation-threshold-bits}
\end{equation}
Equivalently,
\[
v_i^{[k]}
=
\#\{r\in[M]:\widetilde b_{k,i,r}=w_k\},
\qquad w_k>0.
\]
For a fixed coordinate $i$, the thresholds decrease with $r$.
Thus the comparisons $a_{i,r}>\alpha$ form an initial sequence of true
bits followed by false bits, and their number of true bits is
$v_i(\alpha)$. Consider the Boolean circuit that first counts these bits
for each coordinate, then feeds the binary representations of the
counts into the given color circuit, and finally compares its
output with each $j\in\{0,\ldots,d\}$.

We evaluate this entire Boolean circuit in the weighted
representation using \eqref{eq:weighted-boolean-interpolation}.
Every intermediate binary digit is represented by either $0$ or
$w_k$, and no division by $w_k$ is required. The resulting outputs satisfy
\begin{equation}
c_{k,j}=
\begin{cases}
w_k,&\chi(v^{[k]})=j,\\
0,&\chi(v^{[k]})\ne j,
\end{cases}
\qquad j\in\{0,\ldots,d\}.
\label{eq:interpolation-weighted-colors}
\end{equation}
If $w_k=0$, every clipping output $\widetilde b_{k,i,r}$ is zero, and the
weighted Boolean circuit also returns zero on every output.
Thus this case is well defined without assigning a vertex
$v^{[k]}$ to the empty interval.

\paragraph{Selecting the intervals in $[0,1]$.}
Only intervals inside $[0,1]$ should contribute to the interpolation.
Because $0$ and $1$ occur in the sorted list, every positive-width
interval belongs to exactly one of the following cases:
\[
0\le s_{k+1}<s_k\le1,
\qquad
s_k\le0,
\qquad
s_{k+1}\ge1.
\]
We construct a weighted Boolean indicator of the first case:
\begin{equation}
u_k^-=[2(s_k+s_{k+1})]_0^{w_k},
\qquad
u_k^+=[4-2(s_k+s_{k+1})]_0^{w_k},
\qquad
\iota_k=\operatorname{AND}_{w_k}(u_k^-,u_k^+).
\label{eq:interpolation-interval-selection}
\end{equation}
Here $u_k^-$ tests whether the interval is on the nonnegative side,
and $u_k^+$ tests whether it is on the side at most one.

It follows that
\[
(u_k^-,u_k^+,\iota_k)=
\begin{cases}
(w_k,w_k,w_k),&0\le s_{k+1}<s_k\le1,\\
(0,w_k,0),&s_k\le0,\\
(w_k,0,0),&s_{k+1}\ge1.
\end{cases}
\]

We denote by $\widehat G$ the output of the constructed circuit.
Since the $j$th coordinate of the color vector $q_\ell$ is
$\mathds{1}[\ell=j]-\mathds{1}[\ell=0]$, we define
\begin{equation}
\widehat G_j(x)=
\sum_{k=1}^{L-1}
\left(
\operatorname{AND}_{w_k}(\iota_k,c_{k,j})
-
\operatorname{AND}_{w_k}(\iota_k,c_{k,0})
\right),
\qquad j\in[d].
\tag{Circuit}
\label{eq:exact-kuhn-circuit}
\end{equation}
For an interval inside $[0,1]$, its contribution is
$w_kq_{\chi(v^{[k]})}$. Every other interval contributes zero.
Nevertheless, its color computation uses the original weight $w_k$;
only its final contribution is suppressed. This distinction will
ensure strictness of the max gates at simplex barycenters.

\paragraph{Equality with affine interpolation.}
We now prove that $\widehat G=G$. First let $x$ lie in the interior
of a $d$-dimensional Kuhn simplex. By \eqref{eq: triang}, we write
\[
Mx_i=b_i+f_i,
\]
where $b$ is the lower integer corner of its grid cube,
$0<f_i<1$, and the fractional parts are distinct. Let $\sigma$
be the permutation satisfying
$f_{\sigma(1)}>\cdots>f_{\sigma(d)}$.
For $\alpha\in(0,1)$ distinct from the fractional parts, the definition of $v_i(\alpha)$ gives
\[
v_i(\alpha)=
\begin{cases}
b_i,&\alpha>f_i,\\
b_i+1,&\alpha<f_i.
\end{cases}
\]
Thus, as $\alpha$ decreases from one to zero, the vertex $v(\alpha)$ visits
the integer Kuhn chain
\begin{equation}
v^{(0)}=b,
\qquad
v^{(\ell)}
=
b+\sum_{h=1}^{\ell}e_{\sigma(h)}
\quad(1\le\ell\le d).
\label{eq:interpolation-integer-chain}
\end{equation}
The corresponding interval lengths are
\[
\lambda_0=1-f_{\sigma(1)},
\qquad
\lambda_\ell=f_{\sigma(\ell)}-f_{\sigma(\ell+1)}
\quad(1\le\ell<d),
\qquad
\lambda_d=f_{\sigma(d)}.
\]
They are nonnegative and sum to one. 
It follows that
\[
\sum_{\ell=0}^{d}\lambda_\ell v^{(\ell)}=Mx.
\]
Thus the $\lambda_\ell$ are exactly the barycentric coefficients
of $x$ in this simplex.


The only thresholds in $(0,1)$ are the $f_i$. Hence the selected
intervals correspond exactly to the vertices $v^{(0)},\ldots,v^{(d)}$,
with respective widths $\lambda_0,\ldots,\lambda_d$.
Let
\[
\mathcal K
=
\{k\in[L-1]:0\le s_{k+1}<s_k\le1\}
\]
be the set of indices of these intervals.
By \eqref{eq:interpolation-interval-selection},
$\iota_k=w_k$ for $k\in\mathcal K$ and $\iota_k=0$ otherwise.
Since each $c_{k,j}$ belongs to $\{0,w_k\}$, the weighted AND
operation therefore keeps $c_{k,j}$ precisely on the selected
intervals. For every $j\in[d]$, expanding \eqref{eq:exact-kuhn-circuit} we obtain
\begin{align*}
\widehat G_j(x)
&=
\sum_{k=1}^{L-1}
\left(
\operatorname{AND}_{w_k}(\iota_k,c_{k,j})
-
\operatorname{AND}_{w_k}(\iota_k,c_{k,0})
\right)\\
&=
\sum_{k\in\mathcal K}(c_{k,j}-c_{k,0})\\
&=
\sum_{\ell=0}^{d}\lambda_\ell
\left(
\mathds{1}[\chi(v^{(\ell)})=j]
-
\mathds{1}[\chi(v^{(\ell)})=0]
\right)\\
&=
\sum_{\ell=0}^{d}
\lambda_\ell
\bigl(q_{\chi(v^{(\ell)})}\bigr)_j.
\end{align*}
The third equality uses
\eqref{eq:interpolation-weighted-colors}, and the last uses
$q_0=-\mathbf 1$ and $q_j=e_j$.

Finally, $G$ is affine on this simplex and agrees with the prescribed
color vectors at its vertices. Since the $\lambda_\ell$ are the
barycentric coefficients of $x$, it follows that
\begin{align*}
G(x)
&=
G\left(
\sum_{\ell=0}^{d}
\lambda_\ell\frac{v^{(\ell)}}{M}
\right)\\
&=
\sum_{\ell=0}^{d}
\lambda_\ell G\left(\frac{v^{(\ell)}}{M}\right)\\
&=
\sum_{\ell=0}^{d}
\lambda_\ell q_{\chi(v^{(\ell)})}\\
&=\widehat G(x).
\end{align*}

Every operation in the circuit is continuous. The target
interpolant $G$ is also continuous, because the affine
interpolations on adjacent Kuhn simplices agree on their common
faces. Since $\widehat G$ and $G$ agree on the interiors of all
$d$-dimensional simplices, which form a dense subset of $[0,1]^d$,
they agree everywhere. This includes coincident thresholds,
grid faces, and the boundary of the cube. No floor operation or
discontinuous selection of a simplex is used by the circuit.

\paragraph{Strictness at barycenters.}
Fix the barycenter of a $d$-dimensional Kuhn simplex, and set
$\delta=\frac{1}{d+1}$.
At this point, the fractional parts of $Mx_1,\ldots,Mx_d$ are
$\delta,2\delta,\ldots,d\delta$ in some order of the coordinates. 
Thresholds belonging to the same coordinate differ by nonzero integers. 
Thresholds belonging to different coordinates have distinct fractional parts, and none is an integer. 
Therefore all sorting inputs, including $0$ and $1$, are distinct. 
They are all multiples of $\delta$, so any two differ by at least $\delta$.
Hence, every sorting comparison has distinct inputs separated by at least $\delta$. 
In particular, every sorting gate at a barycenter is strict, and
\[
w_k=s_k-s_{k+1}\ge\delta
\qquad(k\in[L-1]).
\]

Next consider a clipping operation $[\xi]_0^{w_k}$ in
\eqref{eq:interpolation-threshold-bits} or
\eqref{eq:interpolation-interval-selection}.
The preceding calculations show that either
$\xi\le-2w_k$ or $\xi\ge2w_k$.
The inner minimum is implemented as
$\min\{w_k,\xi\}=-\max\{-w_k,-\xi\}$.
If $\xi\le-2w_k$, the inputs of this inner max gate differ by
at least $3w_k$, and the minimum returns $\xi$. The outer max
then compares $0$ with $\xi$, with separation at least $2w_k$.
If $\xi\ge2w_k$, the inner comparison has separation at least
$w_k$ and returns the minimum $w_k$; the outer comparison is
then between $0$ and $w_k$. Thus both max gates in every
clipping operation have input separation at least
$w_k\ge\delta$.

Every remaining max gate belongs to a weighted Boolean operation.
Its Boolean inputs are represented by $0$ or $w_k$, so
\eqref{eq:weighted-boolean-interpolation} gives input separation
at least $w_k\ge\delta$.
This includes the computations for intervals outside $[0,1]$
and the final selection gates in \eqref{eq:exact-kuhn-circuit}.
Although those intervals contribute zero, their weights remain
positive at the barycenter. Every max gate therefore has the
claimed input separation of at least $1/(d+1)$.

\paragraph{Circuit size.}
The Boolean counting circuits can be built using binary adders,
and the color indicators use bitwise equality tests. These circuits
have size polynomial in $d,M$. Let $s$ be their combined size
together with the given color circuit, so
$s=\operatorname{poly}(d,M,s_\chi)$.
Replacing their Boolean gates by weighted gates increases their
size by at most a constant factor.

Sorting uses $\mathcal O(L^2)$ gates, threshold encoding uses $\mathcal O(LMd)$
gates, and the weighted Boolean evaluations, interval selection,
and final summations use $\mathcal O(L(s+d))$ gates. Since $Md<L$, the total
size is
$\mathcal O\bigl(L^2+L(s+d)\bigr)$,
which is polynomial in $d,M,s_\chi$.
All numerical constants and coefficients used in these gates are
integers of bit length $\mathcal O(\log(M+1))$.
The prescribed sorting network, Boolean circuits, and remaining
gates can therefore be constructed in polynomial time.
The construction never enumerates the exponentially many grid
vertices.
\end{proof}

For our coloring, \(M=128\), so the lemma gives a polynomial size circuit
for \(G\) that is strict at every zero.

\subsection{Finalizing the reduction}
\label{sec:discrete-completion}

We now combine the above coloring and interpolation results. The construction has
fixed numerical scales, summarized in Table~\ref{tab:discrete-bounds}.

\begin{table}[t]
\centering
\small
\renewcommand{\arraystretch}{1.14}
\begin{tabularx}{\textwidth}{@{}p{0.50\textwidth}X@{}}
\toprule
Quantity & Value or bound \\
\midrule
Dimension & \(d=2n+4\) \\
Grid spacing & \(1/128\) \\
Stored Boolean levels & \(1/4,\ 3/4\) \\
Auxiliary coordinate at route centers & \(35/128\) \\
Field bound \(M_G\) & \(1\) \\
Stepsize \(\eta\) & \(1/32\) \\
Coordinates of every fixed point & \([31/128,97/128]\subset[1/8,7/8]\) \\
Absolute local determinant & \(\lvert\det A_p\rvert=(d+1)128^d\) \\
Gap at every max gate computing \(G\) & At least \(1/(d+1)\) at every zero \\
\bottomrule
\end{tabularx}
\caption{Uniform bounds for the  construction.}
\label{tab:discrete-bounds}
\end{table}

\begin{proof}[Proof of Theorem~\ref{thm:axis-unconditional-map}]
The coloring circuit of Lemma~\ref{lem:discrete-color-circuit} and the
interpolation circuit of Lemma~\ref{lem:strict-kuhn-interpolation} give a
polynomial size rational max affine circuit for \(G\). We define
\begin{equation}
 \Phi(x)=[x+\eta G(x)]_0^1,\qquad \eta=\frac1{32}.
 \label{eq:certified-phi}
\end{equation}

The boundary conditions~\eqref{eq:discrete-boundary-colors} exclude
additional fixed points from clipping. Indeed, if a boundary point \(x\)
were fixed, then at a lower face the equality
\([\eta G_i(x)]_0^1=0\), together with \(G_i(x)\ge0\), would force
\(G_i(x)=0\). At an upper face the analogous equality and
\(G_i(x)\le0\) give the same conclusion. At every coordinate strictly
between zero and one, clipping can return \(x_i\) only if
\(x_i+\eta G_i(x)=x_i\). Hence every coordinate of \(G(x)\) would
vanish. Lemma~\ref{lem:kuhn-barycenter} excludes a zero on a boundary
face. This argument works for every \(\eta>0\).

At an interior point, \(\Phi(x)=x\) is equivalent to \(G(x)=0\).
By Proposition~\ref{lem:discrete-global-classification}, the fixed points
therefore correspond bijectively to the sinks and noncanonical sources.
Each is the barycenter of its fully colored simplex, and its local matrix
is the nonsingular matrix in~\eqref{eq:kuhn-local-matrix}. In a
neighborhood of such a point \(p\), clipping is inactive and
\begin{equation}
 I_d-D\Phi(p)=-\eta A_p,\qquad
 \det(I_d-D\Phi(p))=\eta^d\det(-A_p),\qquad
 \ind_\Phi(p)=\sign\det(-A_p).
 \label{eq:discrete-index-map}
\end{equation}
The last equality uses \(\eta^d>0\). This proves regularity and the
asserted signs.

The barycenter formulas in Lemma~\ref{lem:discrete-endpoints} differ
from their route centers by less than one integer grid unit in every
coordinate. After division by \(128\), all fixed points therefore lie
in \([31/128,97/128]^d\), which is contained in \([1/8,7/8]^d\).
Every max gate computing \(G\) is strict there by
Lemma~\ref{lem:strict-kuhn-interpolation}. The added clipping gates in \eqref{eq:certified-phi} compare \(p_i\) with zero and one, so they too are strict.

For decoding, every stored route coordinate is within \(1/128<1/8\)
of its encoded level. The ~\eqref{eq:route-boolean-levels}
therefore recovers the physical bits exactly. Complementing the phase
gives the logical endpoint. If the index is positive, this endpoint is
a sink with state \((u,u,0)\), where \(u=x^{\rm out}\) and
\(P(S(x))\ne x\). Removing the final in or out bit returns the required
solution \(x\).

All coordinates of a fixed point have denominator dividing
\(128(d+1)\). Its simplex vertices and the matrix \(E\) are determined
by its endpoint and the fixed coordinate order. Rational matrix
inversion computes \(A_p\) in polynomial time with polynomial bit
length. An explicit neighborhood also follows from the simplex
inequalities. At its barycenter the fractional coordinates of \(128p\)
are spaced by \(1/(d+1)\), including their distances from zero and one.
The open ball of radius \(1/[4\cdot128(d+1)]\) in the
\(\ell_\infty\) norm stays in the same simplex, since each difference
of two coordinates changes by less than \(1/[2(d+1)]\) after scaling.
Thus the local affine description is effective. The field bound,
boundary property, strict circuit, and decoder establish all five
properties of Definition~\ref{def:certified-signed-field}, and every
step of the construction is polynomial in the input size.
\end{proof}

\section{From fixed points to complementarity}
\label{sec:fixp-to-lcp}

We next replace the \LinFIXP circuit by a single linear
complementarity problem, see \citet{CottlePangStone1992} for background.  The construction is the one block
construction underlying Section~4.3 of \citet{Mehta2018}.  We record it
in some detail because, beyond preserving the set of solutions, we need
it to preserve their signs.  This is the reason for working directly
with the complementarity conditions of the max gates.

\subsection{A normal form for circuits strict at their fixed points}
\label{subsec:gate-normal-form}

Let
\[
        \Phi\colon [0,1]^d\longrightarrow [0,1]^d
\]
and its rational max-affine circuit \(\mathcal C\) guaranteed by
Theorem~\ref{thm:axis-unconditional-map}.  We use the following two
certified properties of \(\mathcal C\).  First, every fixed point is
\emph{circuit strict}, i.e., when the circuit is evaluated at a fixed point,
the two inputs of every max gate are different.  Second, every fixed
point is regular.  Because of continuity, the same input of every max gate remains selected throughout a
neighborhood of the fixed point.  Replacing each max gate by that selected
affine input yields the local affine formula for \(\Phi\) and its Jacobian,
and
\[
        \det\bigl(I_d-D\Phi(\boldsymbol{\lambda})\bigr)\ne 0
        \qquad\text{whenever }
        \Phi(\boldsymbol{\lambda})=\boldsymbol{\lambda}.
\]

Following \citet{Mehta2018}, additions and multiplications by rational
constants can be absorbed into the affine inputs of the max gates.
Moreover, a gate
\[
        y=\max\{a,b\}
\]
can be replaced by
\[
        x=\max\{0,b-a\},\qquad y=a+x.
\]
This replacement preserves strictness. Observe the new max gate ties exactly when
\(b-a=0\), which is exactly when the inputs \(a\) and \(b\) of the original
max gate tie.
After processing the gates in topological order, the circuit 
has \(m_0\) distinguished max-gate variables
\(\mathbf{x}=(x_1,\ldots,x_{m_0})^{\transpose}\), and gate \(i\) has the
form
\begin{equation}
    x_i=\max\{0,\ell_i(\mathbf{x}_{<i},
                        \boldsymbol{\lambda})\},
    \qquad
    \ell_i
       =b_i+\mathbf{u}_i^{\transpose}\boldsymbol{\lambda}
          +\sum_{j<i}c_{ij}x_j .
    \label{eq:relu-gate}
\end{equation}
Here \(\boldsymbol{\lambda}\in\mathbb R^d\) is the input to the
circuit, \(b_i,c_{ij}\in\mathbb Q\), and
\(\mathbf{u}_i\in\mathbb Q^d\).

We next modify the representation of each output coordinate without changing
the map \(\Phi\).  Let \(h_\ell\) denote the value currently designated as
the \(\ell\)th output coordinate.  The circuit clips this value to
\([0,1]\).  For every \(\ell\in[d]\), append the two max gates
\begin{equation}
  p_\ell=\max\{0,1-h_\ell\},
  \qquad
  o_\ell=\max\{0,1-p_\ell\}.
  \label{eq:anchored-output-gates}
\end{equation}
Since \(0\leq h_\ell\leq1\), we have
\(p_\ell=1-h_\ell\) and \(o_\ell=h_\ell\) for every input.  We designate
\(o_\ell\) as the \(\ell\)th output, so the computed map remains unchanged
and every output coordinate is the value of a nonnegative max gate.  The
use of both gates also creates the structural row identity used in
Section~\ref{sec:lcp-to-game} to force the anchoring strategy \(\star\) into
every equilibrium. By
Theorem~\ref{thm:axis-unconditional-map},
at a fixed point \(\boldsymbol{\lambda}\), we have
\(h_\ell=\Phi_\ell(\boldsymbol{\lambda})=\lambda_\ell\in[1/8,7/8] \subset (0,1)\)
and hence both appended max gates are strict at every fixed point.
We place \(p_\ell\) immediately before \(o_\ell\) in the topological ordering
of the max gates. 

Set \(m:=m_0+2d\) and extend
\(\mathbf x\) by these appended variables to obtain
\(\mathbf{x}=(x_1,\ldots,x_m)^{\transpose}\).  After reindexing, the
normal form~\eqref{eq:relu-gate} applies to every \(i\in[m]\), including
the appended gates. There is a selector matrix
$
        V\in\{0,1\}^{m\times d}
$
such that, if \(\mathbf{x}(\boldsymbol{\lambda})\) denotes the complete
vector of max-gate values, then
\begin{equation}
        \Phi(\boldsymbol{\lambda})
             =V^{\transpose}\mathbf{x}(\boldsymbol{\lambda}).
    \label{eq:output-selector}
\end{equation}
Each column of \(V\) has a single nonzero entry.
We define \(A\in\mathbb Q^{m\times m}\),
\(U\in\mathbb Q^{m\times d}\), and
\(\mathbf b\in\mathbb Q^m\) by
\[
 A_{ii}=1,\qquad
 A_{ij}=-c_{ij}\ \text{ for }j<i,\qquad
 A_{ij}=0\ \text{ for }j>i,
\]
by taking row \(i\) of \(U\) to be \(\mathbf u_i^{\transpose}\), and
by taking coordinate \(i\) of \(\mathbf b\) to be \(b_i\).  Thus \(A\)
is lower triangular with diagonal equal to one.  If
\[
        \mathbf w(\boldsymbol{\lambda},\mathbf x)
               :=A\mathbf x-U\boldsymbol{\lambda}-\mathbf b,
\]
then \(w_i=x_i-\ell_i\).  Hence the complete execution of the circuit
on input \(\boldsymbol{\lambda}\) is characterized by
\begin{equation}
 \mathbf x\geq 0,\qquad
 \mathbf w\geq 0,\qquad
 x_iw_i=0
 \quad(i\in[m]).
 \label{eq:circuit-complementarity}
\end{equation}

The anchored output gates imply a structural identity that will exclude
spurious equilibria in the game construction.  In the row of $o_\ell$,
the affine input is $1-p_\ell$ and has no direct dependence on the circuit
input \(\boldsymbol{\lambda}\).  Thus the entire \(o_\ell\)-row of \(U\)
is zero, and
\begin{equation}
 \begin{aligned}
  A_{o_\ell,o_\ell}&=A_{o_\ell,p_\ell}=1,
  & A_{o_\ell,j}&=0
     &&\text{for }j\notin\{p_\ell,o_\ell\},\\
  U_{o_\ell,:}&=0^{\transpose},
  & b_{o_\ell}&=1.
 \end{aligned}
  \label{eq:anchored-output-row}
\end{equation}

\begin{lemma}
\label{lem:gate-normal-form}
For every \(\boldsymbol{\lambda}\in\mathbb R^d\),
system~\eqref{eq:circuit-complementarity} has exactly one solution
\(\mathbf x\), namely the vector
\(\mathbf{x}(\boldsymbol{\lambda})\) obtained by evaluating
\(\mathcal C\).  The matrices \(A,U,V\) and vector \(\mathbf b\) can
be constructed in polynomial time and have encoding length polynomial
in the encoding length of \(\mathcal C\).
\end{lemma}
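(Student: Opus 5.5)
The plan is to exploit the unit lower triangular structure of \(A\) and argue by induction along the topological order of the max gates. Fix \(\boldsymbol{\lambda}\in\mathbb R^d\). Since \(w_i=x_i-\ell_i(\mathbf x_{<i},\boldsymbol{\lambda})\), the \(i\)th row of \eqref{eq:circuit-complementarity} involves only \(x_1,\ldots,x_i\). Hence the system decouples into \(m\) scalar problems, the \(i\)th of which is solved once \(\mathbf x_{<i}\) is known.

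The key scalar fact is the following. For a fixed real number \(\ell\), the conditions
\[
x\ge0,\qquad x-\ell\ge0,\qquad x(x-\ell)=0
\]
have the unique solution \(x=\max\{0,\ell\}\). To see this, note that if \(x>0\) then complementarity forces \(x=\ell\), so \(\ell>0\). If \(x=0\) then \(x-\ell\ge0\) gives \(\ell\le0\). Conversely, \(\max\{0,\ell\}\) satisfies all three conditions.

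The induction proceeds as follows. Suppose \(x_j=x_j(\boldsymbol{\lambda})\) for all \(j<i\), which holds vacuously when \(i=1\). Then \(\ell_i\) is determined and equals the value of the affine input of gate \(i\) during circuit evaluation. The scalar fact then forces \(x_i=\max\{0,\ell_i\}\), which is exactly the gate value in \eqref{eq:relu-gate}. This establishes existence and uniqueness together: the evaluation vector \(\mathbf x(\boldsymbol{\lambda})\) solves the system, and any solution must coincide with it coordinate by coordinate.

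One point needs care, namely that the normal form \eqref{eq:relu-gate} computes the same circuit as \(\mathcal C\). This includes the rewriting \(\max\{a,b\}=a+\max\{0,b-a\}\), the absorption of affine operations into the inputs, and the appended gates \eqref{eq:anchored-output-gates}. The appended gates reproduce \(h_\ell\) because \(0\le h_\ell\le1\) holds for every input, since \(h_\ell\) is a clipped output. Note that \(\mathbf x(\boldsymbol{\lambda})\) is the vector of normal-form gate values. For \(\boldsymbol{\lambda}\) outside the cube, the same argument still applies to the normal-form circuit, since the induction never uses \(\boldsymbol{\lambda}\in[0,1]^d\).

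For the complexity claim, each of the original circuit's max gates gives rise to one normalized gate, and each output adds two more, so \(m=m_0+2d\) is polynomial. To build the normal form, propagate affine expressions in \((\boldsymbol{\lambda},\mathbf x_{<i})\) through the affine gates in topological order. I expect this to be the only real obstacle: composing affine maps along long chains could in principle blow up coefficient sizes. I would handle it by representing each affine gate's value as an explicit affine combination of the circuit inputs and earlier max-gate variables. The coefficients then arise as sums of products along the paths of the affine-only subcircuit. If necessary, I would instead introduce auxiliary identity variables, as in \citet{Mehta2018}, so that each row only references immediate predecessors and bit lengths stay linear in the circuit encoding. With either representation, \(A\), \(U\), \(\mathbf b\), and the 0/1 selector \(V\) are read off directly.
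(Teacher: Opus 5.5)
Your proposal is correct and follows the paper's proof. Existence and uniqueness come from the same induction along the topological order, using the fact that \(x\ge0\), \(x-\ell\ge0\), \(x(x-\ell)=0\) force \(x=\max\{0,\ell\}\); the encoding bound comes from the same propagation of affine expressions in \((\boldsymbol{\lambda},\mathbf x_{<i})\) through the gates, with coefficients that are sums of path products. That path-sum description already settles the bit-length worry you raise, so the auxiliary-variable fallback is not needed: on any path each gate constant occurs at most once, all denominators divide the product of the circuit's constant denominators, and a circuit of size \(s\) has at most \(2^{s}\) paths.
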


\begin{proof}
For a circuit evaluation, equation~\eqref{eq:relu-gate} gives
\(x_i\geq0\), \(w_i = x_i-\ell_i\geq0\), and
\(x_i(x_i-\ell_i)=0\), so the resulting vector satisfies
\eqref{eq:circuit-complementarity}.

Conversely, suppose that \(\mathbf x\) satisfies
\eqref{eq:circuit-complementarity}.  For \(i=1\), complementarity
forces \(x_1=\max\{0,\ell_1(\boldsymbol{\lambda})\}\).  Inductively,
once \(x_1,\ldots,x_{i-1}\) have been fixed, the same two inequalities
and the complementary equality force
\[
        x_i=\max\{0,\ell_i(\mathbf{x}_{<i},
                            \boldsymbol{\lambda})\}.
\]
Thus all coordinates agree with the unique circuit evaluation.

The circuit can be transformed into this normal form by processing its gates
in topological order.  At each
circuit node we maintain its affine expression in
\(\boldsymbol{\lambda}\) and in the preceding max-gate variables.
There are at most \(m+d+1\) coefficients per expression.  Addition
and multiplication by a rational constant increase their bit
length by at most a polynomial amount over the entire circuit.
Thus both the number and the encoding length of the resulting
coefficients are polynomial in size of \(\mathcal C\).
\end{proof}

\subsection{The complementarity problem}
\label{subsec:one-block-lcp}

The notation
\[
             \boldsymbol{\lambda}\longmapsto
             \mathbf x(\boldsymbol{\lambda})
\]
denotes the deterministic map that evaluates every max gate of the circuit
on input \(\boldsymbol{\lambda}\).  At a fixed point,
equations~\eqref{eq:output-selector} and
\eqref{eq:circuit-complementarity} allow us to substitute
\(\boldsymbol{\lambda}=V^{\transpose}\mathbf x\).  Define
\begin{equation}
        M:=A-UV^{\transpose}\in\mathbb Q^{m\times m}.
    \label{eq:def-M}
\end{equation}

\begin{lemma}[Output anchoring]
\label{lem:output-anchoring}
For every $z\in\mathbb R_{\geq0}^m$ and every output coordinate
$o_\ell$,
\begin{equation}
       (Mz)_{o_\ell}=z_{p_\ell}+z_{o_\ell}
       \geq z_{o_\ell}.
  \label{eq:output-anchor-identity}
\end{equation}
Moreover,
\begin{equation}
       M=A-\sum_{\ell=1}^{d}u^\ell e_{o_\ell}^{\transpose},
  \label{eq:anchored-rank-form}
\end{equation}
where $u^\ell$ is the $\ell$th column of $U$.
\end{lemma}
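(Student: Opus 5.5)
The rank form~\eqref{eq:anchored-rank-form} comes directly from the structure of the selector matrix \(V\), and the anchor identity~\eqref{eq:output-anchor-identity} then follows from the anchored row~\eqref{eq:anchored-output-row}. So we prove~\eqref{eq:anchored-rank-form} first.

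\emph{Step 1: the rank form.} By~\eqref{eq:output-selector} and the designation of \(o_\ell\) as the \(\ell\)th output, the \(\ell\)th coordinate of \(V^{\transpose}\mathbf x\) is \(x_{o_\ell}\). Hence the \(\ell\)th column of \(V\) is \(e_{o_\ell}\), and \(V=\sum_{\ell=1}^d e_{o_\ell}e_\ell^{\transpose}\). Writing \(U=\sum_{\ell}u^\ell e_\ell^{\transpose}\) by columns, we get
\[
UV^{\transpose}=\sum_{\ell,k}u^\ell e_\ell^{\transpose}e_k e_{o_k}^{\transpose}=\sum_{\ell=1}^d u^\ell e_{o_\ell}^{\transpose}.
\]
Substituting this into~\eqref{eq:def-M} gives~\eqref{eq:anchored-rank-form}. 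The only point to check is that each column of \(V\) is a single unit vector located at \(o_\ell\). This holds by construction after the appended gates~\eqref{eq:anchored-output-gates} are introduced and the outputs are redesignated.

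\emph{Step 2: the anchored row.} Fix \(\ell\). Row \(o_\ell\) of \(UV^{\transpose}\) equals \(U_{o_\ell,:}V^{\transpose}\). By~\eqref{eq:anchored-output-row}, \(U_{o_\ell,:}=0^{\transpose}\), so this row vanishes. Therefore row \(o_\ell\) of \(M\) coincides with row \(o_\ell\) of \(A\).

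Again by~\eqref{eq:anchored-output-row}, that row of \(A\) has entries \(1\) in columns \(p_\ell\) and \(o_\ell\) and zeros elsewhere. It is worth checking the sign here. The affine input of \(o_\ell\) is \(\ell_{o_\ell}=1-x_{p_\ell}\), so \(c_{o_\ell,p_\ell}=-1\), \(b_{o_\ell}=1\), and \(A_{o_\ell,p_\ell}=-c_{o_\ell,p_\ell}=1\). This is consistent with~\eqref{eq:anchored-output-row}.

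It follows that \((Mz)_{o_\ell}=z_{p_\ell}+z_{o_\ell}\) for every \(z\in\mathbb R^m\). When \(z\geq0\), nonnegativity of \(z_{p_\ell}\) gives the inequality \((Mz)_{o_\ell}\geq z_{o_\ell}\).

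\emph{Main obstacle.} The argument is essentially bookkeeping, so the real difficulty is confirming that the normal-form reduction of Lemma~\ref{lem:gate-normal-form} does not alter the affine input of \(o_\ell\). Two things could go wrong. First, absorbing additions might fold \(\boldsymbol\lambda\)-terms or other gate variables into \(\ell_{o_\ell}\). This cannot happen, because \(p_\ell\) is itself a max-gate variable and the input \(1-p_\ell\) is already in normal form. Second, the replacement \(y=\max\{a,b\}\mapsto a+\max\{0,b-a\}\) might not be applied to these gates. It is not, because the appended gates already have the form \(\max\{0,\cdot\}\). With these two checks, the identity holds exactly.
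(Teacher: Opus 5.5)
Your proposal is correct and follows the paper's proof: the vanishing row $U_{o_\ell,:}=0^{\transpose}$ makes row $o_\ell$ of $M$ equal to row $o_\ell$ of $A$, which yields the anchor identity, and the columns $e_{o_\ell}$ of $V$ give the rank form when you expand $UV^{\transpose}$. You prove the two parts in the opposite order, and your sign check of $A_{o_\ell,p_\ell}$ and your remark that the appended gates are already in normal form are additional verifications that the paper leaves implicit.
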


\begin{proof}
Since row $o_\ell$ of $U$ is zero, the same row of $M$ equals the
corresponding row of $A$.  Equation~\eqref{eq:anchored-output-row}
then gives \eqref{eq:output-anchor-identity}.  Since the $\ell$th
column of $V$ is $e_{o_\ell}$, expanding $UV^{\transpose}$ gives
\eqref{eq:anchored-rank-form}.
\end{proof}
We obtain the following LCP:
\begin{equation}
 \mathbf x\geq0,\qquad
 \mathbf w:=M\mathbf x-\mathbf b\geq0,\qquad
 x_iw_i=0\quad(i\in[m]).
 \label{eq:one-block-lcp}
\end{equation}
We use the convention that \(\operatorname{LCP}(M,\mathbf q)\) denotes
\[
 \mathbf x\geq0,\qquad
 M\mathbf x+\mathbf q\geq0,\qquad
 x_i(M\mathbf x+\mathbf q)_i=0\quad(i\in[m]).
\]
As a result, \eqref{eq:one-block-lcp} is precisely
\(\operatorname{LCP}(M,-\mathbf b)\), the vector \(\mathbf b\)
is the right-hand side, while the LCP offset is \(-\mathbf b\).

\begin{lemma}[Exact correspondence]
\label{lem:fixp-lcp-bijection}
There is a bijection
\[
 \operatorname{Fix}(\Phi)
   \longleftrightarrow
 \operatorname{Sol}\bigl(\operatorname{LCP}(M,-\mathbf b)\bigr).
\]
The forward map sends a fixed point
\(\boldsymbol{\lambda}\) to the circuit execution
\(\mathbf{x}(\boldsymbol{\lambda})\).  The inverse map is
\[
        \mathbf x\longmapsto V^{\transpose}\mathbf x .
\]
\end{lemma}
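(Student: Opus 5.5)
We verify that both maps are well defined and that they are mutually inverse. Throughout we use Lemma~\ref{lem:gate-normal-form}, which says that for each fixed \(\boldsymbol{\lambda}\) the system~\eqref{eq:circuit-complementarity} has the circuit execution \(\mathbf x(\boldsymbol{\lambda})\) as its unique solution. We also use the identity
\[
 M\mathbf x-\mathbf b=A\mathbf x-U(V^{\transpose}\mathbf x)-\mathbf b=\mathbf w(V^{\transpose}\mathbf x,\mathbf x),
\]
which is immediate from~\eqref{eq:def-M}. The whole proof is the observation that \eqref{eq:one-block-lcp} is exactly \eqref{eq:circuit-complementarity} with the input \(\boldsymbol{\lambda}\) replaced by the selected outputs \(V^{\transpose}\mathbf x\).

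\emph{Forward map.} Let \(\boldsymbol{\lambda}\in\operatorname{Fix}(\Phi)\) and put \(\mathbf x=\mathbf x(\boldsymbol{\lambda})\). By \eqref{eq:output-selector}, \(V^{\transpose}\mathbf x=\Phi(\boldsymbol{\lambda})=\boldsymbol{\lambda}\). Hence \(M\mathbf x-\mathbf b=\mathbf w(\boldsymbol{\lambda},\mathbf x)\), and Lemma~\ref{lem:gate-normal-form} gives \(\mathbf x\geq0\), \(\mathbf w\geq0\), and \(x_iw_i=0\). So \(\mathbf x\) solves \(\operatorname{LCP}(M,-\mathbf b)\), and applying \(V^{\transpose}\) returns \(\boldsymbol{\lambda}\).

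\emph{Inverse map.} Let \(\mathbf x\) solve \eqref{eq:one-block-lcp} and set \(\boldsymbol{\lambda}:=V^{\transpose}\mathbf x\). By the identity above, \(\mathbf x\) satisfies \eqref{eq:circuit-complementarity} for the input \(\boldsymbol{\lambda}\). Uniqueness in Lemma~\ref{lem:gate-normal-form} then forces \(\mathbf x=\mathbf x(\boldsymbol{\lambda})\), so \(\Phi(\boldsymbol{\lambda})=V^{\transpose}\mathbf x(\boldsymbol{\lambda})=V^{\transpose}\mathbf x=\boldsymbol{\lambda}\).

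One point needs care: Lemma~\ref{lem:gate-normal-form} holds for every \(\boldsymbol{\lambda}\in\mathbb R^d\), but \(\Phi\) is defined only on \([0,1]^d\), so we must check that \(\boldsymbol{\lambda}\) lies in the cube. This follows from the anchored output gates~\eqref{eq:anchored-output-gates}, which we apply to the unique execution \(\mathbf x(\boldsymbol{\lambda})=\mathbf x\). Each coordinate is \(\lambda_\ell=o_\ell=\max\{0,1-p_\ell\}\geq0\), and \(p_\ell\geq0\) gives \(\lambda_\ell\leq1\). Thus \(\boldsymbol{\lambda}\in[0,1]^d\). Here we treat the circuit formula as defining a map on all of \(\mathbb R^d\) that agrees with \(\Phi\) on the cube, so the fixed point equation \(\Phi(\boldsymbol{\lambda})=\boldsymbol{\lambda}\) is meaningful.

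\emph{Mutual inverses.} The forward map followed by \(V^{\transpose}\) is the identity on \(\operatorname{Fix}(\Phi)\), as shown above. In the other direction, \(V^{\transpose}\) followed by the forward map sends \(\mathbf x\) to \(\mathbf x(V^{\transpose}\mathbf x)=\mathbf x\), by the uniqueness just used. Hence the two maps are inverse bijections.

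The only step that is not immediate is the domain check in the inverse direction. The anchoring gates~\eqref{eq:anchored-output-gates} handle it, because they hold for arbitrary real inputs.
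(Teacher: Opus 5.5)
Your proof is correct and takes essentially the same route as the paper: substitute $\boldsymbol{\lambda}=V^{\transpose}\mathbf x$ into~\eqref{eq:circuit-complementarity} and use the uniqueness in Lemma~\ref{lem:gate-normal-form} in both directions. The only difference is cosmetic. The paper checks $\boldsymbol{\lambda}\in[0,1]^d$ directly from the $o_\ell$-row of the LCP (via $w_{o_\ell}=x_{p_\ell}+x_{o_\ell}-1$ and complementarity) before invoking uniqueness, whereas you first invoke uniqueness and then read the same bound off the gate formulas; this ordering is equally valid because Lemma~\ref{lem:gate-normal-form} holds for all real inputs.
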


\begin{proof}
Let \(\Phi(\boldsymbol{\lambda})=\boldsymbol{\lambda}\), and set
\(\mathbf x=\mathbf{x}(\boldsymbol{\lambda})\).  By
\eqref{eq:output-selector}, we have that
\(\boldsymbol{\lambda}=V^{\transpose}\mathbf x\).  Substituting this
identity into~\eqref{eq:circuit-complementarity} gives
\[
 A\mathbf x-UV^{\transpose}\mathbf x-\mathbf b
        =M\mathbf x-\mathbf b\geq0
\]
and the required complementary equalities.  Hence \(\mathbf x\)
solves~\eqref{eq:one-block-lcp}.

Conversely, let \(\mathbf x\) solve~\eqref{eq:one-block-lcp} and put
\(\boldsymbol{\lambda}=V^{\transpose}\mathbf x\).  Then
\(\lambda_\ell=x_{o_\ell}\).  At the \(o_\ell\)-coordinate,
\eqref{eq:output-anchor-identity} and the definition of \(\mathbf w\) give
\[
       w_{o_\ell}=x_{p_\ell}+x_{o_\ell}-1.
\]
If \(x_{o_\ell}=0\), then \(\lambda_\ell=0\).  If
\(x_{o_\ell}>0\), complementarity gives \(w_{o_\ell}=0\), and hence
\(x_{o_\ell}=1-x_{p_\ell}\leq1\).  Since \(\mathbf x\geq0\), this proves
\(\boldsymbol{\lambda}\in[0,1]^d\) directly from the anchor gates, before
we invoke the map on its declared domain.

Now
\[
 A\mathbf x-U\boldsymbol{\lambda}-\mathbf b
       =M\mathbf x-\mathbf b,
\]
so \((\boldsymbol{\lambda},\mathbf x)\) satisfies
\eqref{eq:circuit-complementarity}.  Lemma~\ref{lem:gate-normal-form}
implies that \(\mathbf x=\mathbf{x}(\boldsymbol{\lambda})\).
Equation~\eqref{eq:output-selector} now gives
\[
        \Phi(\boldsymbol{\lambda})
          =V^{\transpose}\mathbf{x}(\boldsymbol{\lambda})
          =V^{\transpose}\mathbf x
          =\boldsymbol{\lambda}.
\]
The same argument shows that the two maps are inverses.
\end{proof}

Strictness of the circuit becomes strict complementarity of the LCP.

\begin{lemma}[Strict complementarity]
\label{lem:lcp-strict-complementarity}
Every solution of~\eqref{eq:one-block-lcp} is strictly complementary, i.e.,
\[
        x_i+w_i>0\qquad\text{for every }i\in[m].
\]
In particular, if
\[
        I:=\{i\in[m]:x_i>0\},
\]
then \(w_i=0\) precisely for \(i\in I\), while \(w_i>0\) for
\(i\notin I\).
\end{lemma}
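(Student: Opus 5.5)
Let \(\mathbf x\) be a solution of~\eqref{eq:one-block-lcp}. By Lemma~\ref{lem:fixp-lcp-bijection}, \(\boldsymbol{\lambda}:=V^{\transpose}\mathbf x\) is a fixed point of \(\Phi\) and \(\mathbf x=\mathbf x(\boldsymbol{\lambda})\) is the complete circuit execution at \(\boldsymbol{\lambda}\). We also have \(\mathbf w=M\mathbf x-\mathbf b=A\mathbf x-U\boldsymbol{\lambda}-\mathbf b\). Hence, for every \(i\in[m]\),
\[
 w_i=x_i-\ell_i(\mathbf x_{<i},\boldsymbol{\lambda}),
\]
where \(\ell_i\) is the affine input of gate \(i\) in the normal form~\eqref{eq:relu-gate}, evaluated along the actual execution at the fixed point. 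The whole statement therefore reduces to showing that \(\ell_i\ne0\) for every normalized gate at every fixed point.

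Granting this, the case split is immediate. If \(\ell_i>0\), then \(x_i=\max\{0,\ell_i\}=\ell_i>0\) and \(w_i=0\). If \(\ell_i<0\), then \(x_i=0\) and \(w_i=-\ell_i>0\). In both cases \(x_i+w_i=|\ell_i|>0\). The characterization \(I=\{i:x_i>0\}=\{i:w_i=0\}\), with \(w_i>0\) off \(I\), follows at once. By complementarity \(w_i=0\) on \(I\); off \(I\) we have \(x_i=0\), so strictness gives \(w_i>0\).

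The main step is the claim \(\ell_i\ne0\). I would handle the three kinds of normalized gates separately.
\begin{itemize}
\item For a gate \(x=\max\{0,b-a\}\) arising from an original gate \(y=\max\{a,b\}\) of the circuit of Theorem~\ref{thm:axis-unconditional-map}, a tie \(b-a=0\) would be a tie of the original gate. Item~2 of Theorem~\ref{thm:axis-unconditional-map} excludes such ties at every fixed point, and this replacement preserves the values of all circuit nodes.
\item Absorbing additions and scalar multiplications into affine inputs changes no gate values, so no new ties arise from that step.
\item For the appended anchor gates~\eqref{eq:anchored-output-gates}, the fixed point gives \(h_\ell=\lambda_\ell\in[1/8,7/8]\). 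Then the input of \(p_\ell\) is \(1-h_\ell\ge1/8\), and the input of \(o_\ell\) is \(1-p_\ell=h_\ell\ge1/8\), so both are nonzero.
\end{itemize}

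The one point needing care is that the clipping gates defining \(\Phi=[\,\cdot\,]_0^1\) are themselves max gates of the circuit and are also covered. They compare \(p_i\) with \(0\) and \(1\), and are strict because \(p_i\in[1/8,7/8]\), as stated in the proof of Theorem~\ref{thm:axis-unconditional-map}.
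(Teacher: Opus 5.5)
Your proof is correct and follows the same route as the paper. Via Lemma~\ref{lem:fixp-lcp-bijection} you identify \(\mathbf x\) with the circuit execution at the fixed point, write \(w_i=x_i-\ell_i\), and split on the sign of \(\ell_i\ne0\). Your gate-by-gate check of \(\ell_i\ne0\) covers the \(\max\{0,b-a\}\) rewriting, the absorbed affine operations, and the anchor gates with \(\lambda_\ell\in[1/8,7/8]\); this spells out what the paper's one-line appeal to circuit strictness takes from Section~\ref{subsec:gate-normal-form}.
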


\begin{proof}
By Lemma~\ref{lem:fixp-lcp-bijection}, a solution \(\mathbf x\) is the
vector of max-gate values obtained by evaluating the circuit at the fixed
point \(\boldsymbol{\lambda}=V^{\transpose}\mathbf x\).  At gate \(i\) we have $
        w_i=x_i-\ell_i.
$ 
Circuit-strictness gives \(\ell_i\ne0\).  If \(\ell_i>0\), then
\(x_i=\ell_i>0\) and \(w_i=0\).  If \(\ell_i<0\), then \(x_i=0\)
and \(w_i=-\ell_i>0\).
\end{proof}

\subsection{Preservation of the index}
\label{subsec:lcp-index}

Let \(\mathbf x\) be an LCP solution,
\(\boldsymbol{\lambda}=V^{\transpose}\mathbf x\) its corresponding
fixed point, and \(I=\operatorname{supp}(\mathbf x)\).  We set
\(k=|I|\).  For any square matrix \(Q\), \(Q_{II}\) means that we retain
exactly the rows and columns indexed by \(I\).  Thus
\(A_{II},M_{II}\in\mathbb Q^{k\times k}\) are principal submatrices.
The notation \(U_I,V_I\in\mathbb Q^{k\times d}\) means that only rows in
\(I\) are retained.  Finally, \(D\Phi(\boldsymbol{\lambda})\) denotes the
\(d\times d\) Jacobian of the active affine branch of \(\Phi\).

Because the circuit is strict, the same gates remain active in a
neighborhood of \(\boldsymbol{\lambda}\).  On this neighborhood,
\(\mathbf x_{[m]\setminus I}=0\), and the active gate equations are
\[
        A_{II}\mathbf x_I
             =U_I\boldsymbol{\lambda}+\mathbf b_I.
\]
Since \(A_{II}\) is lower triangular with diagonal equal to one, it is
invertible and
\[
        \mathbf x_I
           =A_{II}^{-1}
              (U_I\boldsymbol{\lambda}+\mathbf b_I).
\]
It follows from \(\Phi(\boldsymbol{\lambda})=V_I^{\transpose}\mathbf x_I\)
that
\begin{equation}
        D\Phi(\boldsymbol{\lambda})
             =V_I^{\transpose}A_{II}^{-1}U_I
             \in\mathbb Q^{d\times d}.
    \label{eq:jacobian-active-set}
\end{equation}

\begin{lemma}[Determinant identity]
\label{lem:lcp-fixed-point-determinant}
For every solution \(\mathbf x\) of
\eqref{eq:one-block-lcp}, with corresponding fixed point
\(\boldsymbol{\lambda}=V^{\transpose}\mathbf x\) and active set
\(I=\operatorname{supp}(\mathbf x)\), it holds
\begin{equation*}
        \det M_{II}
          =\det\bigl(I_d-D\Phi(\boldsymbol{\lambda})\bigr).
    \label{eq:determinant-identity}
\end{equation*}

\noindent Moreover it holds
\begin{definitionbox}
\[
 \sign\det M_{II}
   =\sign
       \det\bigl(I_d-D\Phi(\boldsymbol{\lambda})\bigr)
   =\operatorname{ind}_\Phi(\boldsymbol{\lambda}).
\]
\end{definitionbox}
\end{lemma}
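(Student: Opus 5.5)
The identity will follow from the Sylvester (Weinstein--Aronszajn) determinant identity applied to the principal submatrix $M_{II}$. Using the definition \eqref{eq:def-M}, the principal submatrix on $I$ is
\[
M_{II}=A_{II}-U_IV_I^{\transpose},
\]
because restricting rows and columns of $UV^{\transpose}$ to $I$ retains only the rows of $U$ and of $V$ indexed by $I$. Since $A_{II}$ is lower unitriangular, $\det A_{II}=1$ and $A_{II}$ is invertible. Factoring it out gives
\[
\det M_{II}=\det A_{II}\,\det\bigl(I_k-A_{II}^{-1}U_IV_I^{\transpose}\bigr)=\det\bigl(I_k-A_{II}^{-1}U_IV_I^{\transpose}\bigr).
\]

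Next I apply $\det(I_k-XY)=\det(I_d-YX)$ with $X=A_{II}^{-1}U_I\in\mathbb Q^{k\times d}$ and $Y=V_I^{\transpose}\in\mathbb Q^{d\times k}$. This yields
\[
\det M_{II}=\det\bigl(I_d-V_I^{\transpose}A_{II}^{-1}U_I\bigr)=\det\bigl(I_d-D\Phi(\boldsymbol{\lambda})\bigr)
\]
by the Jacobian formula \eqref{eq:jacobian-active-set}. If $I$ were empty, both sides would be $1$ under the empty-determinant convention. That case does not arise anyway, since every output gate $o_\ell$ is positive at a fixed point with coordinates in $[1/8,7/8]$.

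The only step needing care is the justification of \eqref{eq:jacobian-active-set} itself. One must check that $V^{\transpose}\mathbf x=V_I^{\transpose}\mathbf x_I$ on a neighborhood of $\boldsymbol\lambda$. This requires that the output columns of $V$ select only indices in $I$ there, or at least that the off-$I$ coordinates vanish. Strict complementarity (Lemma~\ref{lem:lcp-strict-complementarity}) together with circuit strictness at $\boldsymbol\lambda$ gives exactly this: locally $\mathbf x_{[m]\setminus I}\equiv0$, and $\mathbf x_I$ is the affine function $A_{II}^{-1}(U_I\boldsymbol\lambda+\mathbf b_I)$. Hence the Jacobian is as stated, with no limiting argument needed.

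For the sign statement, Theorem~\ref{thm:axis-unconditional-map} guarantees that every fixed point is regular and interior, so $\det(I_d-D\Phi(\boldsymbol\lambda))\neq0$. Taking signs of both sides of the identity and applying the definition \eqref{eq:fixed-point-index} then gives $\sign\det M_{II}=\ind_\Phi(\boldsymbol\lambda)$. I expect no real obstacle. The main point is the bookkeeping that the principal minor of the low-rank correction $UV^{\transpose}$ is $U_IV_I^{\transpose}$, and that $\det A_{II}=1$ so no sign is lost.
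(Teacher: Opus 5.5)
Your proposal is correct and follows the paper's proof essentially line for line: write $M_{II}=A_{II}-U_IV_I^{\transpose}$, factor out the unit lower triangular $A_{II}$, apply Sylvester's identity to obtain $\det\bigl(I_d-V_I^{\transpose}A_{II}^{-1}U_I\bigr)$, and substitute the Jacobian formula \eqref{eq:jacobian-active-set}. Your extra remarks only spell out what the paper leaves implicit: the strictness-based justification of that Jacobian formula, and the appeal to regularity (nonzero determinant) for the sign statement.
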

\begin{proof}
By~\eqref{eq:def-M},
\[
        M_{II}=A_{II}-U_IV_I^{\transpose}.
\]
For \(B\in\mathbb R^{k\times d}\) and \(C\in\mathbb R^{d\times k}\), Sylvester's determinant identity gives 
$$\det (I_k - BC) = \det (I_d-CB).$$ Therefore it holds that
\begin{align*}
 \det M_{II}
   &=
   \det A_{II}\,
   \det\!\left(
       I_k-A_{II}^{-1}U_IV_I^{\transpose}
   \right)                                                    \\
   &=
   \det A_{II}\,
   \det\!\left(
       I_d-V_I^{\transpose}A_{II}^{-1}U_I
   \right).
\end{align*}
Every diagonal entry of \(A_{II}\) is one, so
\(\det A_{II}=1\). Substituting
\eqref{eq:jacobian-active-set} and the definition of the fixed point
index in~\eqref{eq:fixed-point-index},
$
\sign\det\bigl(I_d-D\Phi(\boldsymbol{\lambda})\bigr)
=\ind_\Phi(\boldsymbol{\lambda}),
$ proves the claim.
\end{proof}

We call an LCP solution
\emph{regular} when its active principal block $M_{II}$ is nonsingular.
The identity shows that this holds if and only if the corresponding fixed
point is regular.  To see explicitly that the LCP solution is isolated, use
strict complementarity.  If \(\mathbf x'\) is a sufficiently nearby LCP
solution, then \(x'_i>0\) for every \(i\in I\) and
\(w'_j>0\) for every \(j\notin I\).  Complementarity therefore forces
\(w'_I=0\) and \(\mathbf x'_{[m]\setminus I}=0\), so
\(M_{II}\mathbf x'_I=\mathbf b_I\).  Nonsingularity of \(M_{II}\) gives
\(\mathbf x'_I=\mathbf x_I\).  Thus every LCP solution produced here is
isolated.
\begin{definitionbox}
\begin{lemma}[Bit complexity and exact decoding]
\label{lem:lcp-bit-decoder}
Every solution of~\eqref{eq:one-block-lcp} is rational and has
encoding length polynomial in the size of \(\mathcal C\), and its fixed point
can be recovered in polynomial time using exact rational arithmetic.  If
\(\sign\det M_{II}=+1\), the certified decoder returns a valid solution of the
original \SoL\ instance.
\end{lemma}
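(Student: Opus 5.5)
The plan is to combine the exact correspondence of Lemma~\ref{lem:fixp-lcp-bijection}, strict complementarity (Lemma~\ref{lem:lcp-strict-complementarity}), and the determinant identity of Lemma~\ref{lem:lcp-fixed-point-determinant}. Fix a solution $\mathbf x$ of~\eqref{eq:one-block-lcp} and let $I=\supp(\mathbf x)$.

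\emph{Rationality and bit length.} By strict complementarity, $\mathbf x$ satisfies $\mathbf x_{[m]\setminus I}=0$ and $M_{II}\mathbf x_I=\mathbf b_I$. By Lemma~\ref{lem:lcp-fixed-point-determinant}, $\det M_{II}=\det(I_d-D\Phi(\boldsymbol\lambda))$, which is nonzero because Theorem~\ref{thm:axis-unconditional-map} makes every fixed point regular. So $\mathbf x_I=M_{II}^{-1}\mathbf b_I$ is the unique solution of a nonsingular rational system. By Lemma~\ref{lem:gate-normal-form}, the entries of $M=A-UV^{\transpose}$ and $\mathbf b$ have polynomial encoding length. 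Cramer's rule together with Hadamard's bound (as in the membership section) then gives polynomial encoding length for every coordinate.

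\emph{Recovering the fixed point.} Given $\mathbf x$, compute $\boldsymbol\lambda=V^{\transpose}\mathbf x$; this only selects the coordinates $x_{o_\ell}$, so it is exact and takes polynomial time. Lemma~\ref{lem:fixp-lcp-bijection} shows that $\boldsymbol\lambda$ is a fixed point of $\Phi$ and that $\mathbf x=\mathbf x(\boldsymbol\lambda)$. If one also wants to certify the input, this can be checked by evaluating the circuit exactly at $\boldsymbol\lambda$ with rational arithmetic. The intermediate values have polynomial size, since they are the coordinates of $\mathbf x$ itself.

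\emph{Decoding a positive sign.} If $\sign\det M_{II}=+1$, then Lemma~\ref{lem:lcp-fixed-point-determinant} gives $\ind_\Phi(\boldsymbol\lambda)=+1$. By Theorem~\ref{thm:axis-unconditional-map}(3), positive-index fixed points are exactly those at sinks of the split graph $H$, and part (4) of that theorem supplies the polynomial-time decoder. Concretely, the decoder applies $\mathsf{bit}$ from~\eqref{eq:route-boolean-levels} to the first $D$ coordinates of $\boldsymbol\lambda$, complements the phase bit, and reads off $(u,u,0)$ with $u=x^{\rm out}$. It then outputs $x$, which satisfies $P(S(x))\ne x$. The decoder involves only comparisons and clipping of rationals, so it runs in polynomial time.

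\emph{Main obstacle.} The only delicate point is the bit-length claim: the bound must be uniform over all solutions, and it must not depend on the solution set being small. The support-system argument handles this, because every solution is determined by some $I$ through a nonsingular principal subsystem of the fixed data $(M,\mathbf b)$. So I expect that step to need only careful bookkeeping of the polynomial bound on the coefficients of $M$.
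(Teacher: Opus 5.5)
Your proposal is correct and follows the paper's proof essentially step for step. Like the paper, you reduce each solution to the nonsingular support system $M_{II}\mathbf x_I=\mathbf b_I$, with nonsingularity coming from the determinant identity and regularity, and you get the bit-length bound from Cramer's rule and Hadamard's inequality. You then recover $\boldsymbol\lambda=V^{\transpose}\mathbf x$ and use $\sign\det M_{II}=\ind_\Phi(\boldsymbol\lambda)$ to invoke the certified decoder of Theorem~\ref{thm:axis-unconditional-map}. Two minor remarks: $w_I=0$ already follows from plain complementarity, and the optional circuit re-evaluation is not needed.
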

\end{definitionbox}
\begin{proof}
Let \(I=\operatorname{supp}(\mathbf x)\).  Since \(w_I=0\) and
\(\mathbf x_{[m]\setminus I}=0\),
\begin{equation}
        M_{II}\mathbf x_I=\mathbf b_I.
    \label{eq:active-lcp-system}
\end{equation}
Lemma~\ref{lem:lcp-fixed-point-determinant} and regularity imply that
\(M_{II}\) is nonsingular.  Hence
\[
        \mathbf x_I=M_{II}^{-1}\mathbf b_I,\qquad
        \mathbf x_{[m]\setminus I}=0.
\]
The entries of \(M\) and \(\mathbf b\) have polynomial encoding length
by Lemma~\ref{lem:gate-normal-form}. Cramer's rule and Hadamard's inequality
give numerator and denominator bit lengths polynomial in \(m\) and in
the maximum input coefficient length.  Since \(m\) itself is
polynomial in the size of \(\mathcal C\), the claimed bound follows. Finally, given the exact solution vector, compute
\[
        \boldsymbol{\lambda}=V^{\transpose}\mathbf x
\]
and apply the certified decoder supplied by
Theorem~\ref{thm:axis-unconditional-map}.  By
Lemma~\ref{lem:lcp-fixed-point-determinant}, a positive sign of
\(\det M_{II}\) is exactly positive fixed point index, so that decoder returns
a valid \SoL\ output.

In fact, the values of \(\mathbf x\) need not be supplied if its support
\(I\) is known. We can solve
\eqref{eq:active-lcp-system}, check the complementary inequalities,
and then compute \(V^{\transpose}\mathbf x\).  All comparisons and
linear solves are exact and take polynomial time.
\end{proof}

\section{From complementarity to bimatrix games}
\label{sec:lcp-to-game}

In this section, we turn the complementarity problem constructed in
Section~\ref{sec:fixp-to-lcp} into a bimatrix game.  The reduction follows
the route of \citet{Mehta2018}. We first represent the complementarity
solutions as symmetric equilibria and then pass to an imitation game.  Two
refinements are needed for our purposes.  First, we keep track of the sign of
every solution.  Second, we perturb the final game into a globally
nondegenerate game without changing its equilibria at the level of supports.
The section has three steps: (a) establish an exact LCP equilibrium bijection,
(b) compute its index without a parity loss, and then (c) make the whole game
nondegenerate by an effective rational perturbation.

Recall the LCP
\begin{equation}
  \mathbf x\geq \zero,\qquad
  \mathbf w:=M\mathbf x-\mathbf b\geq \zero,\qquad
  x_iw_i=0\quad\text{for every }i\in[m].
  \label{eq:lcp-game-input}
\end{equation}
For a solution \(\mathbf x\), write
\[
  I(\mathbf x):=\{i\in[m]:x_i>0\}.
\]
By Lemmas~\ref{lem:lcp-strict-complementarity} and
\ref{lem:lcp-fixed-point-determinant}, every solution is strictly
complementary and satisfies
\[
  \det M_{I(\mathbf x),I(\mathbf x)}\neq 0.
\]
We will also use the output anchoring established in
Lemma~\ref{lem:output-anchoring}.  In the notation of
Section~\ref{sec:fixp-to-lcp},
\begin{equation}
  M=A-\sum_{\ell=1}^{d}\mathbf u^\ell e_{o_\ell}^{\transpose},
  \label{eq:anchored-lcp-form}
\end{equation}
where \(A\) is lower triangular with diagonal equal to one, the coordinates
\(o_1,\ldots,o_d\) are the distinguished circuit output coordinates, and
there is a predecessor coordinate \(p_\ell\) such that
\begin{equation}
  (M\mathbf z)_{o_\ell}=z_{p_\ell}+z_{o_\ell}
  \qquad\text{for every }\mathbf z\geq\zero
  \text{ and every }\ell\in[d].
  \label{eq:output-anchor-recalled}
\end{equation}
The only role of this last identity is to exclude equilibria supported
entirely on the original \(m\) strategies.  This is the same anchoring
argument as in Lemma~36 of \citet{Mehta2018}.
\subsection{The symmetric and imitation games}
\label{subsec:imitation-game}

We introduce one additional strategy, denoted by \(\star\), and set
\(N=m+1\).  We also define
\begin{equation}
  S:=
  \begin{pmatrix}
       -M & \mathbf b+\one\\
       \zero^{\transpose} & 1
  \end{pmatrix}
  \in\mathbb Q^{N\times N}.
  \label{eq:mehta-matrix}
\end{equation}
Consider the symmetric game \((S,S^{\transpose})\).  The strategy
\(\star\) is the homogenizing or anchoring coordinate. Once it receives
positive probability, dividing the first \(m\) probabilities by its
probability recovers the unnormalized LCP vector.  Lemma~\ref{lem:star-positive}
is exactly what guarantees that this division is always legal.
For any mixed strategy \(y\), write
\[
 \BR_S(y):=\left\{i\in[N]:e_i^{\transpose}Sy
       =\max_{j\in[N]}e_j^{\transpose}Sy\right\}
\]
for its set of pure best responses in the row payoff matrix \(S\).

\begin{lemma}[The anchoring strategy is used]
\label{lem:star-positive}
Every symmetric Nash equilibrium $(\mathbf y,\mathbf y)$ of \((S,S^{\transpose})\) with \(\mathbf y=(\widehat{\mathbf x},t)\)
 satisfies \(t>0\).
\end{lemma}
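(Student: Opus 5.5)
Suppose, for contradiction, that $t=0$. Then $\mathbf y=(\widehat{\mathbf x},0)$ with $\widehat{\mathbf x}\geq\zero$ and $\one^{\transpose}\widehat{\mathbf x}=1$. The payoffs against $\mathbf y$ are
\[
  (S\mathbf y)_i=-(M\widehat{\mathbf x})_i \quad (i\in[m]),
  \qquad
  (S\mathbf y)_\star=0 .
\]
Since $(\mathbf y,\mathbf y)$ is a symmetric equilibrium, every $i\in\supp(\widehat{\mathbf x})$ is a best response. Hence the common equilibrium payoff $v:=\max_j(S\mathbf y)_j$ satisfies $v\geq(S\mathbf y)_\star=0$, and therefore $(M\widehat{\mathbf x})_i=-v\leq0$ for every $i$ in the support. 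We will contradict this using only the triangular structure of $A$ and the output anchoring \eqref{eq:output-anchor-recalled}.

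First, suppose $\widehat{\mathbf x}$ puts positive weight on some output coordinate $o_\ell$. By \eqref{eq:output-anchor-recalled},
\[
  (M\widehat{\mathbf x})_{o_\ell}=\widehat x_{p_\ell}+\widehat x_{o_\ell}\geq\widehat x_{o_\ell}>0,
\]
which contradicts $(M\widehat{\mathbf x})_{o_\ell}\leq0$. So $\widehat x_{o_\ell}=0$ for every $\ell$. By \eqref{eq:anchored-lcp-form}, the correction term $\sum_\ell \mathbf u^\ell e_{o_\ell}^{\transpose}$ only sees the output coordinates, so it vanishes on $\widehat{\mathbf x}$. Hence $M\widehat{\mathbf x}=A\widehat{\mathbf x}$.

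Next, let $i_\ast$ be the smallest index in $\supp(\widehat{\mathbf x})$. Since $A$ is lower triangular with unit diagonal,
\[
  (A\widehat{\mathbf x})_{i_\ast}=\widehat x_{i_\ast}+\sum_{j<i_\ast}A_{i_\ast j}\widehat x_j=\widehat x_{i_\ast}>0,
\]
because every $j<i_\ast$ has $\widehat x_j=0$. This gives $(M\widehat{\mathbf x})_{i_\ast}>0$ with $i_\ast$ in the support, again contradicting $(M\widehat{\mathbf x})_{i_\ast}\leq 0$. Therefore $t>0$.

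The main point to check carefully is the reduction $M\widehat{\mathbf x}=A\widehat{\mathbf x}$. This needs \eqref{eq:anchored-lcp-form} in exactly the stated column form, namely that $UV^{\transpose}$ is nonzero only in the output columns. That is why the output coordinates have to be eliminated before the minimal-index argument can be applied.
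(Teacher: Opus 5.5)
Your proof is correct and follows essentially the same route as the paper's. You assume $t=0$, use the zero payoff of $\star$ to get a nonnegative equilibrium payoff, rule out positive weight on output coordinates via the anchoring identity so that $M\widehat{\mathbf x}=A\widehat{\mathbf x}$, and then reach a contradiction at the smallest supported index using the unit lower triangular structure of $A$.
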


\begin{proof}
Suppose, towards contradiction, that \(t=0\).  Let \(\pi\) be the
equilibrium payoff.  Since the payoff of the pure strategy \(\star\) against
\(\mathbf y\) is zero, we have \(\pi\geq 0\).  Moreover, every coordinate
\(i\) with \(\widehat x_i>0\) satisfies
\[
    -(M\widehat{\mathbf x})_i=\pi.
\]
If \(\widehat x_{o_\ell}>0\), then
\eqref{eq:output-anchor-recalled} gives
\[
    -(M\widehat{\mathbf x})_{o_\ell}
       =-\widehat x_{p_\ell}-\widehat x_{o_\ell}<0,
\]
and hence $\pi<0$, contradicting \(\pi\geq0\).  Thus
\(\widehat x_{o_\ell}=0\) for every \(\ell\).  It follows from
\eqref{eq:anchored-lcp-form} that
\(M\widehat{\mathbf x}=A\widehat{\mathbf x}\).

Let \(i\) be the smallest index for which \(\widehat x_i>0\).  Since
\(A\) is lower triangular, its diagonal is one, and all earlier coordinates
of \(\widehat{\mathbf x}\) vanish,
\[
      (M\widehat{\mathbf x})_i
       =(A\widehat{\mathbf x})_i=\widehat x_i>0.
\]
Consequently, the payoff of strategy \(i\) is
\(-\widehat x_i<0\), again contradicting 
\(\pi\geq0\).  The proof is complete.
\end{proof}

\begin{lemma}[LCP--symmetric equilibrium correspondence]
\label{lem:lcp-symmetric}
The solutions of \eqref{eq:lcp-game-input} are in one to one correspondence
with the symmetric Nash equilibria of \((S,S^{\transpose})\).  More
precisely, if \(\mathbf x\) solves \eqref{eq:lcp-game-input}, then setting
\begin{equation}
      \mathbf y
        :=\frac{(\mathbf x,1)}{1+\one^{\transpose}\mathbf x},
      \label{eq:lcp-to-symmetric}
\end{equation}
it follows $(\mathbf y,\mathbf y)$ is a symmetric Nash equilibrium.  Conversely, if
\(\mathbf y=(\widehat{\mathbf x},t)\) and $(\mathbf y,\mathbf y)$ is a symmetric Nash equilibrium,
then \(\widehat{\mathbf x}/t\) solves \eqref{eq:lcp-game-input}.
\end{lemma}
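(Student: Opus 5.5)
The plan is to verify the two directions directly from the best-response conditions of the symmetric game, and to use Lemma~\ref{lem:star-positive} for the inverse direction. Throughout, for a mixed strategy \(\mathbf y=(\widehat{\mathbf x},t)\) we have
\[
  S\mathbf y=\begin{pmatrix}-M\widehat{\mathbf x}+t(\mathbf b+\one)\\ t\end{pmatrix},
\]
so the payoff of pure strategy \(i\in[m]\) is \(-(M\widehat{\mathbf x})_i+t(b_i+1)\) and the payoff of \(\star\) is \(t\). A symmetric pair \((\mathbf y,\mathbf y)\) is a Nash equilibrium of \((S,S^{\transpose})\) exactly when \(\supp(\mathbf y)\subseteq\BR_S(\mathbf y)\); the column player's condition is the same by symmetry.

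\textbf{Forward direction.} Let \(\mathbf x\) solve \eqref{eq:lcp-game-input} and define \(\mathbf y\) by \eqref{eq:lcp-to-symmetric}, so that \(t=1/(1+\one^{\transpose}\mathbf x)>0\) and \(\widehat{\mathbf x}=t\mathbf x\). The payoff of strategy \(i\) is
\[
  t\bigl(-(M\mathbf x)_i+b_i+1\bigr)=t(1-w_i),
\]
and the payoff of \(\star\) is \(t\). Since \(\mathbf w\geq\zero\), every pure strategy earns at most \(t\), so \(\star\) is a best response, and \(\star\) lies in the support. A coordinate \(i\) with \(x_i>0\) has \(w_i=0\) by complementarity, and therefore also earns \(t\). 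Hence \(\supp(\mathbf y)\subseteq\BR_S(\mathbf y)\), and \((\mathbf y,\mathbf y)\) is an equilibrium.

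\textbf{Converse.} Let \((\mathbf y,\mathbf y)\) be a symmetric equilibrium with \(\mathbf y=(\widehat{\mathbf x},t)\). By Lemma~\ref{lem:star-positive}, \(t>0\), so \(\star\in\supp(\mathbf y)\) and the equilibrium payoff equals \(t\). Put \(\mathbf x:=\widehat{\mathbf x}/t\geq\zero\). Dividing the payoff of strategy \(i\) by \(t\) gives \(1-w_i\), where \(\mathbf w=M\mathbf x-\mathbf b\).
\begin{itemize}
  \item Optimality of the payoff \(t\) gives \(1-w_i\leq1\), that is, \(\mathbf w\geq\zero\).
  \item If \(x_i>0\), then \(i\) is in the support and is therefore a best response, so \(w_i=0\).
\end{itemize}
Thus \(\mathbf x\) solves the LCP.

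\textbf{Bijection.} It remains to check that the two maps are mutually inverse. Starting from \(\mathbf x\), the vector \(\mathbf y\) has \(\widehat{\mathbf x}/t=\mathbf x\). Conversely, starting from \(\mathbf y\), the vector \(\mathbf x=\widehat{\mathbf x}/t\) satisfies \((\mathbf x,1)/(1+\one^{\transpose}\mathbf x)=(\widehat{\mathbf x},t)/(t+\one^{\transpose}\widehat{\mathbf x})=\mathbf y\), because \(\mathbf y\) is a probability vector and so \(t+\one^{\transpose}\widehat{\mathbf x}=1\).

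The only nontrivial step is ensuring \(t>0\), which is what makes the division by \(t\) legitimate. This is already supplied by Lemma~\ref{lem:star-positive}, so the remainder is routine bookkeeping.
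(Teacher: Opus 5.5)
Your proposal is correct and follows essentially the same argument as the paper: the same payoff computation $(S\mathbf y)_i=t(1-w_i)$, $(S\mathbf y)_\star=t$ for the forward direction, and Lemma~\ref{lem:star-positive} to justify dividing by $t$ in the converse. Your explicit check that both compositions of the two maps are the identity is a slightly more detailed version of the paper's remark that the normalization $\one^{\transpose}\widehat{\mathbf x}+t=1$ uniquely recovers $(\widehat{\mathbf x},t)$.
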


\begin{proof}
Let \(\mathbf x\) solve \eqref{eq:lcp-game-input}, let
    \(\mathbf w=M\mathbf x-\mathbf b\), and set
\(Z=1+\one^{\transpose}\mathbf x\).  Directly from
\eqref{eq:mehta-matrix},
\begin{equation}
    (S\mathbf y)_i=\frac{1-w_i}{Z}\quad(i\in[m]),
    \qquad
    (S\mathbf y)_\star=\frac{1}{Z}.
    \label{eq:symmetric-payoffs}
\end{equation}
If \(x_i>0\), complementarity gives \(w_i=0\), so strategy \(i\)
has the same payoff as \(\star\).  If \(x_i=0\), then \(w_i\geq0\),
so its payoff is no larger.  Hence every strategy in the support of
\(\mathbf y\) is a best response to \(\mathbf y\),
proving that \((\mathbf y,\mathbf y)\) is a Nash
equilibrium.

Conversely, let \(\mathbf y=(\widehat{\mathbf x},t)\) be a symmetric
equilibrium.  Lemma~\ref{lem:star-positive} gives \(t>0\).  Because
\(\star\) is in the support, the equilibrium payoff is \(t\).  The
best response conditions for the first \(m\) strategies are
\[
    -M\widehat{\mathbf x}+(\mathbf b+\one)t\leq t\one,
\]
or equivalently,
\[
    M\frac{\widehat{\mathbf x}}{t}-\mathbf b\geq\zero.
\]
Since every strategy used with positive probability must attain the
equilibrium payoff, the Nash equilibrium conditions also give for every $i$
\[
   \widehat x_i
   \bigl(-(M\widehat{\mathbf x})_i+b_it\bigr)=0.
\]
Dividing by \(t^2\) shows that
\(\mathbf x=\widehat{\mathbf x}/t\) satisfies
\eqref{eq:lcp-game-input}.  Finally, the normalization
\(\one^{\transpose}\widehat{\mathbf x}+t=1\) uniquely recovers
\((\widehat{\mathbf x},t)\) from \(\mathbf x\), and hence the
correspondence is one to one.
\end{proof}

Strict complementarity strengthens this correspondence in a way that will be
useful below.  If \(I=I(\mathbf x)\) and
\(K=I\cup\{\star\}\), then \eqref{eq:symmetric-payoffs} and
Lemma~\ref{lem:lcp-strict-complementarity} give
\begin{equation}
      \supp\mathbf y
      =\BR_{S}(\mathbf y)
      =K.
      \label{eq:quasi-strict-symmetric}
\end{equation}
An equilibrium is \emph{quasi-strict} when every pure best response is used
with positive probability, equivalently when its support equals its full
best response set.  Thus every symmetric equilibrium obtained above is
quasi-strict.

We now consider the imitation game
\begin{equation}
       \mathcal G=(S,I_N),
       \label{eq:imitation-game}
\end{equation}
where \(I_N\) is the \(N\times N\) identity matrix.  The correspondence
between symmetric equilibria and equilibrium strategies of an imitation game
goes back to \citet{McLennanTourky2010}.  In our strict setting, it gives a
correspondence between complete equilibrium profiles.

\begin{lemma}[Full imitation game correspondence]
\label{lem:full-imitation}
Let \(\mathbf x\) be a solution of \eqref{eq:lcp-game-input}, let
\(I=I(\mathbf x)\), \(K=I\cup\{\star\}\).  Define the row
strategy coordinatewise by
\begin{equation}
       u_i:=
       \begin{cases}1/|K|,&i\in K,\\0,&i\notin K,\end{cases}
       \qquad(i\in[N]),
       \qquad
       \mathbf y
          :=\frac{(\mathbf x,1)}{1+\one^{\transpose}\mathbf x}.
       \label{eq:imitation-equilibrium}
\end{equation}
Then \((\mathbf u,\mathbf y)\) is a Nash
equilibrium of \(\mathcal G\).  Moreover, every Nash equilibrium of
\(\mathcal G\) is obtained uniquely in this way.
\end{lemma}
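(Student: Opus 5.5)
The proof splits into two directions. The forward direction is a direct verification of the best-response conditions in $\mathcal G=(S,I_N)$. The converse uses the imitation structure to force the column strategy to be a symmetric equilibrium, and then strict complementarity pins down the row strategy.

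For the forward direction, take a solution $\mathbf x$ and $(\mathbf u,\mathbf y)$ as in \eqref{eq:imitation-equilibrium}. The row player's payoff vector is $S\mathbf y$. By \eqref{eq:quasi-strict-symmetric} its maximizers are exactly $K$, and $\supp\mathbf u=K$, so the row player best-responds. The column player's payoff vector is $I_N^{\transpose}\mathbf u=\mathbf u$, whose maximizers are exactly $K$. Since $\supp\mathbf y=K$ by \eqref{eq:quasi-strict-symmetric}, the column player also best-responds.

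For the converse, let $(\mathbf u,\mathbf y)$ be any equilibrium of $\mathcal G$.
\begin{enumerate}
\item Because the column payoffs are $\mathbf u$, every $j\in\supp\mathbf y$ maximizes $u_j$. Hence $\supp\mathbf y\subseteq\arg\max\mathbf u\subseteq\supp\mathbf u$, since the maximum of a probability vector is positive.
\item The row best-response condition gives $\supp\mathbf u\subseteq\BR_S(\mathbf y)$.
\item Combining the two, $\supp\mathbf y\subseteq\BR_S(\mathbf y)$, so $(\mathbf y,\mathbf y)$ is a symmetric equilibrium of $(S,S^{\transpose})$.
\item By Lemma~\ref{lem:lcp-symmetric} (with Lemma~\ref{lem:star-positive} guaranteeing the $\star$-coordinate is positive), $\mathbf y$ has the form \eqref{eq:lcp-to-symmetric} for a unique LCP solution $\mathbf x$.
\item Lemma~\ref{lem:lcp-strict-complementarity} and \eqref{eq:quasi-strict-symmetric} give $\BR_S(\mathbf y)=\supp\mathbf y=K$.
\item The chain $K=\supp\mathbf y\subseteq\arg\max\mathbf u\subseteq\supp\mathbf u\subseteq\BR_S(\mathbf y)=K$ collapses to equalities. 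So $\supp\mathbf u=K$ and every element of $K$ maximizes $\mathbf u$.
\item Thus $\mathbf u$ is constant on its support $K$, i.e.\ $\mathbf u$ is uniform $1/|K|$ on $K$.
\end{enumerate}

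For uniqueness, $\mathbf x$ is determined by $\mathbf y$ through Lemma~\ref{lem:lcp-symmetric}, and $\mathbf u$ is then determined by $K=I(\mathbf x)\cup\{\star\}$. Distinct LCP solutions give distinct $\mathbf y$, so the correspondence is a bijection.

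The main obstacle is the converse. The key point is that quasi-strictness (strict complementarity) closes the chain of support inclusions, which pins $\mathbf u$ down uniquely. Without strictness, $\mathbf u$ could spread over additional best responses of $\mathbf y$, and the bijection would fail.
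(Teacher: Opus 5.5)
Your proposal is correct and follows the paper's own proof essentially step for step. The forward direction is verified through \eqref{eq:quasi-strict-symmetric}. For the converse you use the same chain $\supp\mathbf y\subseteq\arg\max\mathbf u\subseteq\supp\mathbf u\subseteq\BR_S(\mathbf y)$, reduce to a symmetric equilibrium and Lemma~\ref{lem:lcp-symmetric}, and let strict complementarity collapse the chain, which forces $\mathbf u$ to be uniform on $K$.
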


\begin{proof}
By \eqref{eq:quasi-strict-symmetric}, the row player's best responses to
\(\mathbf y\) are exactly the strategies in \(K\), and
\(\mathbf u\) is supported on \(K\).  Against
\(\mathbf u\), each column in \(K\) receives payoff \(1/|K|\),
while every column outside \(K\) receives payoff zero.  Since
\(\mathbf y\) is supported on \(K\), the profile in
\eqref{eq:imitation-equilibrium} is a Nash equilibrium.

For the converse, let \((\mathbf u,\mathbf y)\) be an arbitrary equilibrium
of \(\mathcal G\), and we define the set
\[
       T:=\arg\max_{j\in[N]}u_j.
\]
The column payoff of pure strategy \(j\) is \(u_j\).  Therefore 
$
  \supp\mathbf y\subseteq T.
$
Every coordinate in \(T\) is strictly positive, and hence
\(T\subseteq\supp\mathbf u\).  The row player's best response condition
gives the inclusion
\[
  \supp\mathbf y
       \subseteq T
       \subseteq\supp\mathbf u
       \subseteq\BR_{S}(\mathbf y).
\]
In particular, \(\supp\mathbf y\subseteq\BR_{S}(\mathbf y)\), so
\((\mathbf y,\mathbf y)\) is a symmetric equilibrium of
\((S,S^{\transpose})\).  By Lemma~\ref{lem:lcp-symmetric}, there is a unique LCP solution
\(\mathbf x\) corresponding to
\(\mathbf y=(\widehat{\mathbf x},t)\), namely
\(\mathbf x=\widehat{\mathbf x}/t\).  Equation~\eqref{eq:quasi-strict-symmetric} (due to strict complementarity) now gives
\[
      \BR_{S}(\mathbf y)=\supp\mathbf y=K.
\]
Thus every inclusion in the preceding chain is an equality.  The entries of
\(\mathbf u\) indexed by \(K\) are all equal to \(\max_j u_j\), while
every entry outside \(K\) is zero.  Since \(\mathbf u\) is a probability
vector,
\[
 u_i=\frac1{|K|}\quad(i\in K),\qquad
 u_i=0\quad(i\notin K).
\]
This proves both surjectivity and uniqueness.
\end{proof}

\begin{remark}[Why anchoring is necessary]
\label{rem:anchoring-necessary}
Strict complementarity and regularity of the LCP alone do not imply
Lemma~\ref{lem:star-positive}.  For example, take \(M=(-1)\) and
\(\mathbf b=(-1)\).  The LCP has the two solutions \(x=0\) and \(x=1\),
both strictly complementary and regular.  Nevertheless,
\[
    S=\begin{pmatrix}1&0\\0&1\end{pmatrix},
\]
and the first pure strategy is a symmetric equilibrium assigning zero
probability to \(\star\).  Thus the anchored structure in
Lemma~\ref{lem:output-anchoring} is a genuine part of the reduction.
\end{remark}

\subsection{Preservation of the index}
\label{subsec:lcp-game-index}

We next calculate the index of the equilibrium
\((\mathbf u(\mathbf x),\mathbf y(\mathbf x))\), using the bordered
determinant \(\Delta\) from \eqref{eq:bordered-determinant}.
For a regular equilibrium of a bimatrix game \((R,C)\) with row and
column supports \(I\) and \(J\), respectively, where \(|I|=|J|=k\),
Shapley's determinant formula gives
\begin{equation}
   \ind(\mathbf p,\mathbf q)
      =(-1)^{k+1}
       \sign\!\left(
          \Delta(R_{I,J})\,
          \Delta\bigl((C_{I,J})^{\transpose}\bigr)
       \right).
   \label{eq:shapley-index}
\end{equation}
The transpose in the second factor reflects that the column player's
support equations use \((C_{I,J})^{\transpose}\).  Our convention in
\eqref{eq:bordered-determinant} is chosen so that a pure equilibrium has
index \(+1\).

\begin{lemma}[Exact index identity]
\label{lem:lcp-game-index}
Let \(\mathbf x\) solve \eqref{eq:lcp-game-input}, set
\(I=I(\mathbf x)\), and let
\((\mathbf u(\mathbf x),\mathbf y(\mathbf x))\) be its equilibrium in
the imitation game.  Then this equilibrium is regular and quasi-strict, and
\begin{equation}
       \ind(\mathbf u(\mathbf x),\mathbf y(\mathbf x))
          =\sign\det M_{I,I}.
       \label{eq:index-identity}
\end{equation}
\end{lemma}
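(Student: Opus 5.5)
The plan is to evaluate the two bordered determinants in Shapley's formula~\eqref{eq:shapley-index} directly for the game $\mathcal G=(S,I_N)$. Lemma~\ref{lem:full-imitation} gives the supports of the equilibrium. Strict complementarity makes both bordered determinants explicit.

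\emph{Regularity and quasi-strictness.} By Lemma~\ref{lem:full-imitation}, both the row support of $\mathbf u(\mathbf x)$ and the column support of $\mathbf y(\mathbf x)$ equal $K=I\cup\{\star\}$, so their common size is $k=|I|+1$.

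By~\eqref{eq:quasi-strict-symmetric}, the row player's best responses to $\mathbf y$ are exactly $K$. Against $\mathbf u$, column $j$ earns $u_j$, which is $1/|K|$ on $K$ and $0$ off $K$. Hence the column player's best responses are also exactly $K$. So no unused pure strategy is a best response, and the equilibrium is quasi-strict.

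It remains to show that both bordered matrices are nonsingular. This will follow from the computation below, once we note that $\det M_{II}\neq0$ by Lemma~\ref{lem:lcp-fixed-point-determinant} together with regularity of the fixed point (Theorem~\ref{thm:axis-unconditional-map}).

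\emph{The column factor.} Here $(C_{K,K})^{\transpose}=I_k$. The Schur complement with respect to the identity block gives
\[
\Delta(I_k)=\det I_k\cdot\bigl(0+\one^{\transpose}I_k^{-1}\one\bigr)=k>0 .
\]
So this factor contributes a positive sign.

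\emph{The row factor.} With rows and columns ordered as $(I,\star)$,
\[
S_{KK}=\begin{pmatrix}-M_{II}&\mathbf b_I+\one\\ \zero^{\transpose}&1\end{pmatrix},
\qquad
\det S_{KK}=(-1)^{|I|}\det M_{II}\neq0 .
\]
Since $S_{KK}$ is nonsingular, the Schur complement identity used after Definition~\ref{def:index} gives
\[
\Delta(S_{KK})=\det S_{KK}\cdot\one^{\transpose}S_{KK}^{-1}\one .
\]
To find the sign of $\one^{\transpose}S_{KK}^{-1}\one$, use~\eqref{eq:symmetric-payoffs} and strict complementarity. These give $S_{KK}\mathbf y_K=\pi\one$ with $\pi=1/Z>0$, where $Z=1+\one^{\transpose}\mathbf x$. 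Therefore $S_{KK}^{-1}\one=\mathbf y_K/\pi$, and
\[
\one^{\transpose}S_{KK}^{-1}\one=\frac1\pi=Z>0 .
\]
Hence $\Delta(S_{KK})$ is nonzero and $\sign\Delta(S_{KK})=(-1)^{|I|}\sign\det M_{II}$.

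\emph{Assembling the index.} Substituting into~\eqref{eq:shapley-index} with $k=|I|+1$,
\[
\ind=(-1)^{|I|+2}\,(-1)^{|I|}\,\sign\det M_{II}=\sign\det M_{II}.
\]

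The main point needing care is identifying $\one^{\transpose}S_{KK}^{-1}\one$ with the reciprocal of the positive equilibrium payoff. This requires that the support equations hold with equality on all of $K$, including the anchoring strategy $\star$. That is exactly what strict complementarity and Lemma~\ref{lem:star-positive} provide. Without them, the sign of this factor could not be controlled.
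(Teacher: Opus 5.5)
Your proof is correct and follows the paper's argument essentially step for step. You take the supports from Lemma~\ref{lem:full-imitation}, use $\det S_{KK}=(-1)^{|I|}\det M_{II}$ and the Schur-complement formula with $S_{KK}^{-1}\one=(\mathbf x_I,1)$, so that $\one^{\transpose}S_{KK}^{-1}\one=1+\one^{\transpose}\mathbf x_I>0$, and then $\Delta(I_{|K|})=|K|>0$. One small correction to your closing remark: the identity $S_{KK}(\mathbf x_I,1)=\one$ needs only $M_{II}\mathbf x_I=\mathbf b_I$, and $\star$ lies in the support of $\mathbf y(\mathbf x)$ by construction, so Lemma~\ref{lem:star-positive} plays no role in controlling this factor.
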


\begin{proof}
Quasi-strictness was proved in \eqref{eq:quasi-strict-symmetric} and
Lemma~\ref{lem:full-imitation}.  Put \(k=|I|\), so that
\(K=I\cup\{\star\}\) has size \(k+1\).  We order \(K\) with
\(\star\) last.  The two support matrices are
\[
     S_{K,K}
       =
       \begin{pmatrix}
           -M_{I,I}&\mathbf b_I+\one\\
           \zero^{\transpose}&1
       \end{pmatrix},
     \qquad
     (I_N)_{K,K}^{\transpose}=I_{k+1}.
\]
Since \(M_{I,I}\mathbf x_I=\mathbf b_I\),
\begin{equation}
      S_{K,K}^{-1}\one
        =
        \begin{pmatrix}\mathbf x_I\\1\end{pmatrix}.
      \label{eq:inverse-times-one}
\end{equation}
Moreover,
\begin{equation}
      \det S_{K,K}
         =\det(-M_{I,I})
         =(-1)^k\det M_{I,I}.
      \label{eq:det-support-s}
\end{equation}
Applying the Schur complement formula to
\eqref{eq:bordered-determinant} and using
\eqref{eq:inverse-times-one}--\eqref{eq:det-support-s}, we obtain
\begin{align}
   \Delta(S_{K,K})
      &=
      \det S_{K,K}\,
      \one^{\transpose}S_{K,K}^{-1}\one \notag\\
      &=
      (-1)^k\det M_{I,I}
      \bigl(1+\one^{\transpose}\mathbf x_I\bigr),
      \label{eq:bordered-s}
\end{align}
where the last factor is strictly positive.  Similarly,
\begin{equation}
      \Delta(I_{k+1})
       =
       \det
       \begin{pmatrix}
          I_{k+1}&-\one\\
          \one^{\transpose}&0
       \end{pmatrix}
       =k+1>0.
      \label{eq:bordered-identity}
\end{equation}
Lemma~\ref{lem:lcp-fixed-point-determinant} and~\eqref{eq:bordered-s}
make the first bordered determinant nonzero, while
\eqref{eq:bordered-identity} makes the second positive.  Hence the
equilibrium is regular.
Finally, substituting \eqref{eq:bordered-s} and
\eqref{eq:bordered-identity} into \eqref{eq:shapley-index} gives
\[
 \begin{split}
   \ind(\mathbf u(\mathbf x),\mathbf y(\mathbf x))
      &=
       (-1)^{k+2}
       \sign\bigl((-1)^k\det M_{I,I}\bigr)\\
      &=
       (-1)^{k+2}(-1)^k\sign\det M_{I,I}\\
      &=\sign\det M_{I,I}.
 \end{split}
\]
The proof is complete.
\end{proof}

Combining Lemma~\ref{lem:lcp-game-index} with the determinant identity
proved in Lemma~\ref{lem:lcp-fixed-point-determinant}, we have
\begin{definitionbox}
\begin{equation*}
  \ind(\mathbf u(\mathbf x),\mathbf y(\mathbf x))
   =\sign\det M_{I,I}
   =\sign\det(I_d-D\Phi(\boldsymbol{\lambda})),
  \label{eq:three-indices}
\end{equation*}
\end{definitionbox}
where \(\boldsymbol{\lambda}\) is the fixed point represented by
\(\mathbf x\).  Thus the passage from the signed \LinFIXP instance to
the imitation game preserves the sign.

\subsection{A globally nondegenerate rational perturbation}
\label{subsec:global-perturbation}

Although every equilibrium of \(\mathcal G\) is regular, the game itself
may be degenerate away from its equilibrium set.  We next give a
deterministic perturbation that makes the whole game nondegenerate.  The
perturbation has polynomial encoding length and preserves every equilibrium
support and index. We choose an integer \(c>1+\max_{i,j}|S_{ij}|\) and define the strategically
equivalent game
\begin{equation}
     R:=S+c\one\one^{\transpose},
     \qquad
     C:=I_N
     \label{eq:positive-shift}
\end{equation}
Adding a constant to all the payoffs of one player does not change that
player's best responses.  The bordered determinants are unchanged as well.
Indeed, in the matrix defining
\(\Delta(D+c\one\one^{\transpose})\), adding \(c\) times the last column
to each of the first \(k\) columns removes
\(c\one\one^{\transpose}\) without changing the determinant.
As a result, \((R,C)\) has exactly the same equilibria and indices as
\((S,I_N)\).

Only the payoff matrix derived from \(S\) needs to be perturbed.  To see why,
let \(\mathbf p\) be any mixed strategy of the row player and let
\(K=\supp(\mathbf p)\).  Against \(\mathbf p\), the payoff of column \(j\)
under \(C\) is
\(p_j\).  Since \(\max_j p_j>0\), every best response belongs to \(K\), and
hence
\[
   |\BR_{C}(\mathbf p)|\leq |\supp(\mathbf p)|.
\]
Thus the nondegeneracy condition concerning the column player's best
responses already holds for \(C\), for every row strategy.

For \(0\leq t<1\), we define the sequence $\alpha_i(t):=1+t^i$ with $i\in[N],$ $\alpha(t) = (\alpha_i(t))_{i\in [N]}$ and let
\[
   D(t):=\operatorname{diag}
      (\alpha_1(t),\ldots,\alpha_N(t)).
\]
We perturb multiplicatively only the row payoff matrix
\begin{equation}
  \mathcal G(t):=(R(t),C),
  \qquad
  R(t):=D(t)^{-1}R.
  \label{eq:lex-game-perturbation}
\end{equation}
At \(t=0\), this gives \(\mathcal G(0)=(R,C)\), and
\(\mathcal G(t)\) converges to \(\mathcal G(0)\) as \(t\downarrow0\). The main claim of the section is the following lemma.

\begin{definitionbox}
\begin{lemma}[Effective perturbation]
\label{lem:effective-perturbation}
Let \(L_{\rm enc}\) be the binary encoding length of \((R,C)\).  One can
compute, in time polynomial in \(N+L_{\rm enc}\), an integer
$\kappa=\poly(N,L_{\rm enc})$
such that, for \(\varepsilon=2^{-\kappa}\), the following statements hold.
\begin{enumerate}
  \item The rational game \(\mathcal G(\varepsilon)\) is globally
  nondegenerate and has encoding length polynomial in \(N+L_{\rm enc}\).
  \item There is a one to one correspondence between the equilibria of
  \(\mathcal G(0)\) and those of \(\mathcal G(\varepsilon)\).
  Corresponding equilibria have the same row support, the same column
  support, and the same index.
  \item In particular, \(\mathcal G(\varepsilon)\) has no additional
  equilibria.
\end{enumerate}
\end{lemma}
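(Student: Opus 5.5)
The proof reduces nondegeneracy and the equilibrium correspondence to a finite family of support systems. Each degeneracy certificate is then a polynomial identity in $t$ that is not identically zero. Since $C=I_N$ already satisfies the column-player condition, only the row condition $|\BR_{R(t)}(\mathbf q)|\le|\supp\mathbf q|$ needs work. A violation means there exist a column support $J$ and a row set $T$ with $|T|=|J|+1$ such that the rows of $R(t)$ indexed by $T$ tie on some $\mathbf q$ with $\supp\mathbf q=J$. This forces the $(|J|+1)\times(|J|+1)$ matrix
\[
\begin{pmatrix}D(t)^{-1}_{TT}R_{T,J} & -\one\end{pmatrix}
\]
to be singular. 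After multiplying the row of $i\in T$ by $\alpha_i(t)$, this is equivalent to
\[
\det\begin{pmatrix}R_{T,J} & -\alpha_T(t)\end{pmatrix}=0.
\]
Expanding along the last column gives $-\sum_{i\in T}\pm(1+t^i)\det R_{T\setminus\{i\},J}$.

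I would split into two cases. If some minor $\det R_{T\setminus\{i\},J}\neq0$, the coefficient of $t^i$ is nonzero, since the powers $t^i$ are distinct. Then the polynomial $f_{T,J}(t)$ is not identically zero and has degree at most $N$. Otherwise every $|J|\times|J|$ minor of $R_{T,J}$ vanishes, so $R_{T,J}$ has rank $<|J|$. Because $R>0$ has positive rows, a tie on $\supp\mathbf q=J$ can then be re-expressed as a smaller support $J'\subset J$ with a tie on more rows than $|J'|$. So I would instead formulate nondegeneracy directly via the standard equivalent criterion: for every $J$, every $|J|+1$ rows restricted to $J$ together with the $-\one$ column are nonsingular, and every $|J|\times|J|$ submatrix $R_{T,J}$ is nonsingular. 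The positive shift is what makes this criterion apply. Handling the rank-deficient case of $R_{T,J}$ itself is the main obstacle. A row scaling cannot repair a singular $R_{T,J}$, so degeneracy of this type must be ruled out another way. I would first try to show that such ties cannot occur on the row player's genuine best-response sets: ties of the form $(R\mathbf q)_i=(R\mathbf q)_{i'}$ are the only relevant ones, and the bordered form above already captures them. If that fails, I would modify the perturbation additively as $R+t\,E$ with $E_{ij}=2^{-(iN+j)}$-type generic weights, so that all relevant minors become nonzero polynomials. I would keep the paper's form only if the bordered polynomials suffice.

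For the quantitative step, each nonzero polynomial has degree $\le N$ and integer coefficients after clearing denominators. By Hadamard, these coefficients are bounded by $2^{\poly(N,L_{\rm enc})}$. Cauchy's root bound then shows that every nonzero root has modulus at least $2^{-\poly}$, and choosing $\kappa$ larger gives item 1. The same bound makes $\mathcal G(\varepsilon)$ have polynomial encoding length.

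For items 2 and 3, I would check supports. For each original equilibrium with support $K$ on both sides (Lemma~\ref{lem:full-imitation}), the column strategy for $\mathcal G(\varepsilon)$ solves the perturbed support system. This system is continuous in $t$ and, by regularity (Lemma~\ref{lem:lcp-game-index}), nonsingular at $t=0$. Its solution therefore stays positive, and the strict off-support inequalities persist for $t\le\varepsilon$ once the relevant polynomial gaps are bounded away from zero, again by root bounds. The row strategy is unchanged, since $C=I_N$. The index persists because $\Delta(R(t)_{K,K})$ is a nonzero polynomial with no root in $(0,\varepsilon]$. Conversely, for any candidate support pair $(K,K')$ I would show that it is an equilibrium of $\mathcal G(t)$ for small $t>0$ only if it is one at $t=0$. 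The imitation structure forces $K'\subseteq K$, and nondegeneracy forces equal sizes, hence $K=K'$. The defining sign conditions are polynomial inequalities in $t$, so for $t$ below all roots each sign equals its limiting sign as $t\downarrow0$. A closed limit of equilibria is an equilibrium of $\mathcal G(0)$ with support contained in $K$. Quasi-strictness of the original equilibria then forces equality of supports, which gives the bijection.
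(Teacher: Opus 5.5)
\textbf{Items 2 and 3.} Your treatment matches the paper: sign-stable support polynomials, persistence of each original support, the $t\downarrow0$ limit, and quasi-strictness forcing equal supports.

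\textbf{The gap is in item 1.} Your only certificate for a row-player degeneracy is the bordered polynomial $\det\begin{pmatrix}R_{T,J}&-\boldsymbol\alpha_T(t)\end{pmatrix}$ with $|T|=|J|+1$. It is a nonzero polynomial only when $R_{T,J}$ has full column rank. When the rank is smaller, the polynomial is identically zero and certifies nothing. You call this the main obstacle but leave it unresolved.

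The criterion you then adopt requires every $|J|\times|J|$ submatrix $R_{T,J}$ to be nonsingular. That is not necessary for nondegeneracy. For $R=\begin{pmatrix}1&2\\2&4\end{pmatrix}$, the second row is the unique best response to every $\mathbf q\in\Delta_2$, yet $R$ is singular. And, as you note yourself, no row scaling can produce such a criterion.

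Neither fallback closes the gap. Replacing $D(t)^{-1}R$ by $R+tE$ changes the game $\mathcal G(\varepsilon)$, so it does not prove the lemma as stated. The other fallback assumes the bordered form captures all relevant ties, which is exactly what fails in the rank-deficient case.

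\textbf{The missing idea.} Every degenerate tie can be pushed to a full-rank configuration, so rank-deficient cases never need their own certificate.

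The paper does this with the polytope $\mathcal Q(t)=\{\mathbf q\geq\zero:R\mathbf q\leq\boldsymbol\alpha(t)\}$. A violation gives a feasible point with more than $N$ tight inequalities, and a vertex of its minimal face inherits all of them. At that vertex one selects $N$ linearly independent tight inequalities, so $B=\overline R_{I,F}$ is nonsingular and $\mathbf q_F=B^{-1}\delta\boldsymbol\alpha_I(t)$. Each extra tight inequality then yields a polynomial with a nonzero coefficient at a power of $t$ that no other term can cancel.

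Within your framework, the reduction you mention in passing and then abandon does the same job:
\begin{enumerate}
\item Suppose $R(t)_{T,J}\mathbf q_J=\rho\one$ with $\rho>0$ and $R_{T,J}$ rank deficient. Pick $\mathbf w\neq\zero$ in the kernel of $R_{T,J}$, which equals the kernel of $R(t)_{T,J}$.
\item Because $R>0$, $\mathbf w$ has entries of both signs. So $\mathbf q_J+s\mathbf w$ keeps the same tie and the same $\rho$ until some coordinate reaches zero.
\item This gives a nonzero nonnegative vector with strictly smaller support $J'$ and at least $|J'|+2$ tied rows. Maximality of the tied value plays no role in the determinant argument.
\item Iterate until $R_{T,J^*}$ has full column rank. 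This happens at the latest when $|J^*|=1$, since a positive column has rank one.
\item Choose $|J^*|+1$ rows of $T$ that contain a nonsingular $|J^*|\times|J^*|$ minor. This gives a nonzero bordered polynomial that vanishes at $t$.
\end{enumerate}
With this step added, your root-bound choice of $\kappa$ proves item 1 for the paper's perturbation unchanged.
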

\end{definitionbox}

\begin{proof}
We first clear the denominators of \(R\).  For a rational number \(a\), we
denote by \(\operatorname{den}(a)\) its positive denominator after all common
factors have been cancelled, and we set
\[
 \delta:=\prod_{i,j\in[N]}\operatorname{den}(R_{ij}),
 \qquad
 \overline R:=\delta R.
\]
Thus \(\delta\) and \(\overline R\) are integral and can be computed in
polynomial time.  We define
\begin{equation}
 h:=\left\lceil
   \log_2\!\left(2+\delta+\max_{i,j}|\overline R_{ij}|\right)
 \right\rceil,
 \qquad
 \ell_N:=\lceil\log_2(N+2)\rceil.
 \label{eq:perturbation-size-parameters}
\end{equation}
The quantity \(h\) bounds the binary length of \(\delta\) and the binary length of every entry of \(\overline R\).  Since
\(L_{\rm enc}\) is the total binary encoding length, we have
\(h\leq 2L_{\rm enc}+3\).

For an integer polynomial
\(f(t)=a_0+a_1t+\cdots+a_st^s\), we define
\[
  \operatorname{ht}(f)
   :=\left\lceil
      \log_2\left(1+\max_{0\leq r\leq s}|a_r|\right)
     \right\rceil.
\]
Thus \(\operatorname{ht}(f)\) measures, up to an additive constant, the
maximum number of bits needed to represent any coefficient of \(f\).

The proof proceeds in three parts, beginning with families of integer
polynomials that record every possible violation of row player nondegeneracy
and the conditions that determine an equilibrium on a given support.  We
then obtain uniform bounds on their degrees and coefficients and choose
\(\varepsilon=2^{-\kappa}\) so that every nonzero polynomial in these
families has a fixed sign on the relevant interval.  Finally, we use these
signs to prove global nondegeneracy and to match the equilibria of
\(\mathcal G(0)\) and \(\mathcal G(\varepsilon)\) while preserving their
supports and indices.

We begin with the condition concerning the row player's best responses.  The
condition concerning the column player's best responses was established
before the lemma.  It therefore remains to ensure that
\begin{equation}
  \bigl|\BR_{R(t)}(\mathbf y)\bigr|
  \leq \bigl|\supp(\mathbf y)\bigr|
  \qquad\text{for every }\mathbf y\in\Delta_N.
  \label{eq:row-best-response-nondegeneracy}
\end{equation}
Since \(R(t)\) is strictly positive, we consider the normalized best response
polytope
\begin{equation}
  \mathcal Q(t)
    =\{\mathbf q\geq\zero:R(t)\mathbf q\leq\one\}
    =\{\mathbf q\geq\zero:R\mathbf q
          \leq\boldsymbol\alpha(t)\}.
  \label{eq:perturbed-q}
\end{equation}
This polytope is bounded and full dimensional.  Indeed, strict positivity of
one row of \(R(t)\) gives an upper bound on every coordinate of
\(\mathbf q\), while every sufficiently small strictly positive vector
satisfies all the defining inequalities strictly.

We fix \(\mathbf y\in\Delta_N\), let
\(\mu=\max_i(R(t)\mathbf y)_i>0\), and set
\(\widehat{\mathbf q}=\mathbf y/\mu\).  Then
\(\widehat{\mathbf q}\in\mathcal Q(t)\).  Exactly
\(N-|\supp(\mathbf y)|\) of its nonnegativity inequalities hold with
equality.  Moreover, the payoff inequality indexed by \(i\) holds with
equality precisely when \(i\in\BR_{R(t)}(\mathbf y)\).  Hence the total
number of defining inequalities that hold with equality at
\(\widehat{\mathbf q}\) is
\[
  N-|\supp(\mathbf y)|+|\BR_{R(t)}(\mathbf y)|.
\]

For the converse, we take a point
\(\widehat{\mathbf q}\in\mathcal Q(t)\) that satisfies
more than \(N\) defining inequalities with equality.  At least one payoff
inequality must then be tight, and therefore
\(\max_i(R(t)\widehat{\mathbf q})_i=1\).  In particular,
\(\widehat{\mathbf q}\neq\zero\).  We define
\(\mathbf y=\widehat{\mathbf q}/(\one^{\transpose}
\widehat{\mathbf q})\).  Then \(\mathbf y\in\Delta_N\) and
\[
 \widehat{\mathbf q}
 =\frac{\mathbf y}{\max_i(R(t)\mathbf y)_i}.
\]
It follows that condition~\eqref{eq:row-best-response-nondegeneracy} fails if
and only if a feasible point of \(\mathcal Q(t)\) satisfies more than
\(N\) defining inequalities with equality.

We now assume that such a feasible point exists.  A vertex of the smallest
face of \(\mathcal Q(t)\) containing that point satisfies the same
equalities.  Since
\(\mathcal Q(t)\) is full dimensional, one can select \(N\) tight
inequalities at that vertex whose normal vectors are linearly independent,
and at least one additional defining inequality is also tight.  We will rule
out this possibility by considering every selection of \(N\) linearly
independent defining inequalities, whether or not the point determined by
the corresponding equalities is feasible.  We call such a selection a
basis.

We now consider a basis of \(\mathcal Q(t)\).  We use \(Z\) to index its
selected nonnegativity inequalities and \(I\) to index its selected payoff
inequalities, and we set \(F=[N]\setminus Z\).  Because the basis contains exactly \(N\)
inequalities,
\[
  |Z|+|I|=N
  \qquad\text{and hence}\qquad
  |F|=|I|.
\]
The selected equalities determine a point, which need not be feasible.  If
\(F=I=\varnothing\), this point is \(\mathbf q=\zero\), where every payoff
inequality is strict.  Thus no additional equality is possible in this case,
and we may assume that \(|F|=|I|\geq1\).

To characterize linear independence of the selected inequalities, order the
coordinates with those in \(Z\) first and those in \(F\) second, and place
the selected nonnegativity inequalities before the selected payoff
inequalities.  Up to row and column permutations and multiplication of rows
by \(-1\), the matrix of their normal vectors is
\[
 \begin{pmatrix}
   I_{|Z|} & 0\\
   \overline R_{I,Z} & \overline R_{I,F}
 \end{pmatrix}.
\]
This matrix is block lower triangular.  Its upper left block is an identity
matrix, so the selected normal vectors are linearly independent if and only
if
$
       B:=\overline R_{I,F}
$
is nonsingular.

The selected nonnegativity equalities give \(\mathbf q_Z=\zero\).  After the
selected payoff equalities are multiplied by \(\delta\), they give
\(\overline R_{I,F}\mathbf q_F
=\delta\boldsymbol\alpha_I(t)\).  Therefore, the point determined by the
basis satisfies
\begin{equation}
      \mathbf q_Z=\zero,
      \qquad
      \mathbf q_F=B^{-1}\delta\boldsymbol\alpha_I(t).
      \label{eq:q-basis-solution}
\end{equation}

We first consider an unselected payoff inequality indexed by \(j\notin I\).  If
this inequality also held with equality at the point in
\eqref{eq:q-basis-solution}, then
\(\overline R_{j,F}B^{-1}\delta\boldsymbol\alpha_I(t)
=\delta\alpha_j(t)\).  Multiplication by \(\det B\) gives
\begin{equation}
  \overline R_{j,F}\operatorname{adj}(B)\,
       \delta\boldsymbol\alpha_I(t)
  -\delta\det(B)\alpha_j(t)=0.
  \label{eq:extra-payoff-polynomial}
\end{equation}
The first term on the left hand side contains only a constant term and powers
\(t^i\) with \(i\in I\).  Since \(j\notin I\), it has no term in
\(t^j\).  The coefficient of \(t^j\) in the second term is
\(-\delta\det(B)\neq0\).  Hence the polynomial on the left hand side is not
identically zero.

We next consider an unselected nonnegativity inequality indexed by \(j\in F\).
It holds with equality precisely when \(q_j=0\).  By
\eqref{eq:q-basis-solution}, this is equivalent to
\[
 \bigl(\operatorname{adj}(B)\delta
       \boldsymbol\alpha_I(t)\bigr)_j=0.
\]
Since \(B\) is nonsingular, the \(j\)th row of
\(\operatorname{adj}(B)\) is nonzero.  Thus
\(\operatorname{adj}(B)_{j,i}\neq0\) for some \(i\in I\).  In the
expansion
\[
 \bigl(\operatorname{adj}(B)\delta
       \boldsymbol\alpha_I(t)\bigr)_j
 =\delta\sum_{i\in I}
   \operatorname{adj}(B)_{j,i}(1+t^i),
\]
the powers \(t^i\), \(i\in I\), are distinct.  Therefore a nonzero
coefficient of one of these powers cannot be cancelled by another term, and
\(\bigl(\operatorname{adj}(B)\delta
\boldsymbol\alpha_I(t)\bigr)_j\) is not identically zero.

The family \(\mathcal I\) consists of the polynomials obtained in these two
ways from all possible bases and all possible additional defining
equalities.  Every member of \(\mathcal I\) is a nonzero integer polynomial.
There are at most \(\binom{2N}{N}\) choices of a basis and at most \(N\)
unselected inequalities for each choice.  Thus
\(|\mathcal I|\leq N\binom{2N}{N}\).  By construction, if \(t>0\) is not a
root of any polynomial in \(\mathcal I\), no basis point satisfies an
additional defining equality.  In particular, no feasible point of
\(\mathcal Q(t)\) satisfies more than \(N\) defining equalities.  Therefore
condition~\eqref{eq:row-best-response-nondegeneracy} holds.

We next construct the polynomials that control the equilibria, leaving the
basis calculation behind.  For now, we fix \(t>0\) that is not a root of any
polynomial in \(\mathcal I\), and we take an equilibrium
\((\mathbf p,\mathbf q)\) of \(\mathcal G(t)\).  We set
\(I=\supp(\mathbf p)\) and \(J=\supp(\mathbf q)\).  Every column in \(J\)
is a best response to \(\mathbf p\), while the direct argument for \(C\)
given before the lemma shows that every column best response belongs to
\(I\).  Hence
\[
       J\subseteq\BR_C(\mathbf p)\subseteq I.
\]
Similarly, because \(\mathbf p\) is a best response to \(\mathbf q\), every
row in \(I\) belongs to \(\BR_{R(t)}(\mathbf q)\).  Applying
\eqref{eq:row-best-response-nondegeneracy} to the column strategy
\(\mathbf q\) gives
\[
       |I|
       \leq |\BR_{R(t)}(\mathbf q)|
       \leq |\supp(\mathbf q)|
       =|J|.
\]
Together with \(J\subseteq I\), these inequalities show that all the
inclusions and cardinality inequalities are equalities.  Therefore
\[
 I=J=:K,\qquad
 \BR_C(\mathbf p)=K,\qquad
 \BR_{R(t)}(\mathbf q)=K.
\]
In particular, every row inequality outside the support is strict.  Moreover, since
\(C=I_N\), all coordinates \(p_i\), \(i\in K\), must be equal.  Thus, with
\(k=|K|\),
\begin{equation}
       \mathbf p_K=\frac1k\one_k,
       \qquad
       \mathbf p_{[N]\setminus K}=\zero.
       \label{eq:uniform-row-strategy}
\end{equation}
Every column in \(K\) then has payoff \(1/k\), whereas every column outside
\(K\) has payoff \(0\).  Hence all column inequalities outside the support
are strict.

It remains to control the column strategy and the row player's payoff
inequalities.  For each nonempty \(K\subseteq[N]\), we set \(k=|K|\) and define 
\begin{align}
 \mathcal R_K(t)
   &:=
   \begin{pmatrix}
      R(t)_{K,K}&-\one_k\\
      \one_k^{\transpose}&0
   \end{pmatrix},
 &
 \widehat{\mathcal R}_K(t)
   &:=
   \begin{pmatrix}
      \overline R_{K,K}&-\delta\boldsymbol\alpha_K(t)\\
      \one_k^{\transpose}&0
   \end{pmatrix}.
 \label{eq:one-sided-support-systems}
\end{align}
The equal payoff equations for the rows in \(K\), together with the
normalization of \(\mathbf q_K\), are
\[
 \mathcal R_K(t)
 \begin{pmatrix}\mathbf q_K\\ \rho\end{pmatrix}
 =\mathbf e_{k+1},
\]
where \(\rho\) is the row player's equilibrium payoff.  Multiplying the
first \(k\) equations by the positive numbers
\(\delta\alpha_i(t)\), \(i\in K\), gives the equivalent system 
\[
 \widehat{\mathcal R}_K(t)
 \begin{pmatrix}\mathbf q_K\\ \rho\end{pmatrix}
 =\mathbf e_{k+1}.
\]

\noindent We define
\begin{equation}
 r_K(t):=\det\widehat{\mathcal R}_K(t)
   =\delta^k\!\prod_{i\in K}\alpha_i(t)\,
      \det\mathcal R_K(t).
 \label{eq:row-support-det}
\end{equation}
Using Cramer's rule, whenever \(r_K(t)\neq0\), we have
\[
       q_j(t)=\frac{q_{j;K}(t)}{r_K(t)},
       \qquad
       \rho(t)=\frac{\rho_K(t)}{r_K(t)}.
\]
Here \(q_{j;K}(t)\) and \(\rho_K(t)\) denote the determinants obtained from
\(\widehat{\mathcal R}_K(t)\) by replacing, respectively, the column
corresponding to \(q_j\) and the last column with \(\mathbf e_{k+1}\). 
For every \(i\notin K\), we define
\begin{equation}
 s_{i;K}(t):=
    \delta\alpha_i(t)\rho_K(t)
      -\sum_{j\in K}\overline R_{ij}q_{j;K}(t).
 \label{eq:cleared-row-slack}
\end{equation}
Then
\[
  \rho(t)-(R(t)\mathbf q(t))_i
   =\frac{s_{i;K}(t)}
          {\delta\alpha_i(t)r_K(t)}.
\]
Since \(\alpha_i(t)>0\), the candidate determined by \(K\) has
positive probabilities and all the row inequalities outside \(K\) are strict
precisely when
\[
       r_K(t)q_{j;K}(t)>0\quad(j\in K)
       \;\textrm{ and }\;
       r_K(t)s_{i;K}(t)>0\quad(i\notin K).
\]

We define
\begin{align*}
 \mathcal D&:=\{r_K:\varnothing\neq K\subseteq[N]\} \textrm{ with }|\mathcal D|
   \leq 2^N-1,\\
 \mathcal P_{\rm prob}
   &:=\{r_Kq_{j;K}:\varnothing\neq K\subseteq[N],\ j\in K\} \textrm{ with } |\mathcal P_{\rm prob}|
   \leq N2^{N-1},\\
 \mathcal S 
   &:=\{r_Ks_{i;K}:\varnothing\neq K\subseteq[N],\ i\notin K\} \textrm{ with } |\mathcal S|
   \leq N\bigl(2^{N-1}-1).
\end{align*}
The four families have distinct roles. The family \(\mathcal I\) controls global nondegeneracy for positive \(t\).
The remaining three families track an equilibrium with a fixed support
\(K\).  By~\eqref{eq:row-support-det}, the positive factor relating
\(r_K(t)\) to
\(\Delta(R(t)_{K,K})=\det\mathcal R_K(t)\) implies that
\(\mathcal D\) controls both nonsingularity of the support system and the
sign of the row bordered determinant.  The family
\(\mathcal P_{\rm prob}\) controls positivity of the probabilities on \(K\),
while \(\mathcal S\) controls strictness of the row payoff inequalities
outside \(K\).  Preserving these signs ensures that the same support
continues to define an equilibrium.  The sign information in
\(\mathcal D\) will also be used in the final comparison of the equilibrium
indices. Also notice that the column bordered
determinant is independent of \(t\).  Since \(C=I_N\),
\begin{equation}
 \Delta\bigl((C_{K,K})^{\transpose}\bigr)
   =\Delta(I_k)=k>0.
 \label{eq:constant-column-bordered-det}
\end{equation}

We now derive uniform degree and coefficient bounds.  We set
\begin{equation}
 H:=(N+1)(h+\ell_N)+1,
 \qquad
 H_\ast:=2H+h+2\ell_N,
 \qquad
 d_\ast:=2N.
 \label{eq:degree-height-bounds}
\end{equation} 

\noindent First, we bound the degrees of the relevant polynomials.  Each polynomial in
\(\mathcal I\), each \(r_K\), and each Cramer numerator \(q_{j;K}\) and
\(\rho_K\) is the determinant of a square matrix of dimension at most
\(N+1\).  In each such matrix, at most one column depends on \(t\).  If a
column depends on \(t\), expanding the determinant along that column
expresses the determinant as a linear combination of its entries with
coefficients independent of \(t\).  Every entry in this column has degree at
most \(N\).  Hence every such determinant has degree at most \(N\).

The numerator \(\rho_K\) is independent of \(t\), because its Cramer matrix
replaces the only column containing
\(\boldsymbol\alpha_K(t)\) with \(\mathbf e_{k+1}\).  It follows that
\(s_{i;K}\) has degree at most \(N\).  Since every polynomial in
\(\mathcal P_{\rm prob}\) and \(\mathcal S\) is a product of two
polynomials of degree at most \(N\), we have

$$
 \deg f\leq N
 \quad\text{for }f\in\mathcal I\cup\mathcal D,
 \qquad
 \deg f\leq2N
 \quad\text{for }f\in\mathcal P_{\rm prob}\cup\mathcal S.
$$

\noindent Thus \(d_\ast=2N\) is a common degree bound for all four families.

We next bound their coefficients.  Every matrix entry is either constant or
has the form \(c_0+c_1t^i\) for some \(i\in[N]\), where each coefficient has
absolute value less than \(2^h\).  The determinant expansion of an
\(r\times r\) matrix contains \(r!\) terms, one for each permutation.  Each
such term is a product of \(r\) matrix entries.  The absolute value of each
resulting product of coefficients is therefore less than
$
   (2^h)^r=2^{rh}.
$

At most one factor in each product depends on \(t\), and that factor contains
at most two terms.  Hence every term in the determinant expansion produces
at most two polynomial terms.  It follows that every coefficient of the
determinant has absolute value less than $$
   2r!2^{rh} \leq 2^{r(h+\ell_N)+1}.
$$
The definition of \(H\) therefore gives
$
 \operatorname{ht}(f)\leq H
$
for every \(f\in\mathcal I\cup\mathcal D\) and for every Cramer numerator
\(q_{j;K}\) and \(\rho_K\). Moreover, we get that

\[
\begin{aligned}
 \operatorname{ht}(s_{i;K})
   &\leq H+h+\ell_N,\\
 \operatorname{ht}(r_Kq_{j;K})
   &\leq2H+\ell_N,\\
 \operatorname{ht}(r_Ks_{i;K})
   &\leq2H+h+2\ell_N.
\end{aligned}
\]

\noindent By the definition of \(H_\ast\), these inequalities imply

$$
 \operatorname{ht}(f)\leq H_\ast
 \qquad\text{for every }
 f\in\mathcal I\cup\mathcal D\cup
 \mathcal P_{\rm prob}\cup\mathcal S.
$$

\noindent Thus \(d_\ast\) and \(H_\ast\) are common bounds on the degree and on
\(\operatorname{ht}(f)\), respectively.  Both bounds are polynomial in
\(N+L_{\rm enc}\).

We can now choose \(\kappa\).  We consider any nonzero integer polynomial
\(f(t)=\sum_{\nu=0}^{d_0}a_\nu t^\nu\)
with \(d_0\leq d_\ast\) and \(\operatorname{ht}(f)\leq H_\ast\), and we
denote by \(\nu_0\) the least index for which \(a_{\nu_0}\neq0\).  For
\(0<t<1\),
\[
 |a_{\nu_0}|\geq1,
 \qquad
 \left|\sum_{\nu>\nu_0}a_\nu t^\nu\right|
   \leq d_\ast2^{H_\ast}t^{\nu_0+1}.
\]
We define
\begin{equation}
 \kappa:=
   H_\ast+
   \left\lceil\log_2\bigl(4(d_\ast+1)\bigr)\right\rceil+2,
 \qquad
 \varepsilon:=2^{-\kappa}.
 \label{eq:choice-epsilon}
\end{equation}
For \(0<t\leq\varepsilon\), the choice of \(\kappa\) gives
\[
 d_\ast2^{H_\ast}t
 \leq\frac{d_\ast}{16(d_\ast+1)}<1.
\]
Hence the lowest degree nonzero term of \(f\)
has strictly larger absolute value than the sum of all remaining terms.
Therefore \(f(t)\) has the sign of \(a_{\nu_0}\) throughout
\((0,\varepsilon]\), and in particular it has no root in this interval.  If
\(f(0)\neq0\), then \(\nu_0=0\), so the sign of \(f\) is constant on the
closed interval \([0,\varepsilon]\).

Every member of \(\mathcal I\) is nonzero, so no member of
\(\mathcal I\) vanishes for \(0<t\leq\varepsilon\).  The argument leading
to~\eqref{eq:row-best-response-nondegeneracy} now shows that the condition
concerning the row player's best responses holds throughout this interval.
The condition concerning the column player's best responses holds for \(C\)
by the direct argument preceding the lemma.  Therefore \(\mathcal G(t)\) is
globally nondegenerate for every \(0<t\leq\varepsilon\).

It remains to compare the equilibria.  Every equilibrium of
\(\mathcal G(0)\) is regular and quasi strict by
Lemmas~\ref{lem:full-imitation} and~\ref{lem:lcp-game-index}.  We denote the
common row and column support of such an equilibrium by \(K\).  Regularity gives
\(r_K(0)\neq0\).  Positivity of its column probabilities gives
\(r_K(0)q_{j;K}(0)>0\) for every \(j\in K\), and
the quasi strict property gives \(r_K(0)s_{j;K}(0)>0\) for every \(j\notin K\).
These polynomials are therefore nonzero, and the choice of
\(\varepsilon\) preserves all their signs on \([0,\varepsilon]\).  Thus,
for every \(t\in[0,\varepsilon]\), the support system has a unique solution
with positive probabilities and strict row inequalities outside the support.  With
the row strategy in~\eqref{eq:uniform-row-strategy}, this solution is an
equilibrium with support pair \((K,K)\).  Hence no equilibrium of
\(\mathcal G(0)\) is destroyed.

For the converse, we take an equilibrium of
\(\mathcal G(t)\) with support pair \((K,K)\) for some
\(t\in(0,\varepsilon]\).  Since
\(\mathcal G(t)\) is globally nondegenerate, its equilibrium support system
is nonsingular, and therefore \(r_K(t)\neq0\).  Its support probabilities
are positive, and the equality
\(\BR_{R(t)}(\mathbf q)=K\) proved above shows that every row inequality
outside the support is strict.  Thus \(r_K(t)\) is nonzero, while the
corresponding members of \(\mathcal P_{\rm prob}\) and \(\mathcal S\) are
positive.  All these polynomials are nonzero, and their signs are constant
throughout \((0,\varepsilon]\).  It follows that the
same support pair determines an equilibrium of \(\mathcal G(s)\) for every
\(s\in(0,\varepsilon]\).

We choose a sequence \(t_s\to0\).  The corresponding row strategy is the
uniform distribution on \(K\) for every \(s\).  Compactness of the column
player's simplex gives a subsequence of the column strategies converging to
some \(\mathbf q^0\).  Since the payoff matrices and the Nash inequalities
vary continuously with \(t\), the limit
\((\mathbf p^0,\mathbf q^0)\) is an equilibrium of \(\mathcal G(0)\), where
\(\mathbf p^0\) is uniform on \(K\).  Since \(C=I_N\), we have
\(\BR_C(\mathbf p^0)=K\).  The limiting equilibrium is quasi strict, and
therefore
\[
       \supp(\mathbf q^0)=\BR_C(\mathbf p^0)=K.
\]
Thus the limiting equilibrium has support pair \((K,K)\).

Regularity of the limiting equilibrium makes its support system
nonsingular, so there is at most one equilibrium of \(\mathcal G(0)\) with
support pair \((K,K)\).  Likewise, for \(t>0\), the equality
\(r_K(t)\neq0\) and the fact that the row strategy is uniform on \(K\) imply
that there is at most one equilibrium of \(\mathcal G(t)\) with this support
pair.  Matching the common supports therefore gives a bijection between the
equilibria of \(\mathcal G(0)\) and those of
\(\mathcal G(\varepsilon)\).  In particular, no additional equilibrium is
created.

Finally, we consider corresponding equilibria with support pair \((K,K)\).  By
\eqref{eq:constant-column-bordered-det}, the column bordered determinant is
the fixed positive number \(k=|K|\).  By~\eqref{eq:row-support-det}, the row
bordered determinant has the same sign as \(r_K(t)\).  Since
\(r_K(0)\neq0\), this sign is constant on \([0,\varepsilon]\).  The support
size is also unchanged, so Shapley's formula shows that the equilibrium
index is unchanged.

It remains to verify the encoding length.  Since
\(\varepsilon^i=2^{-\kappa i}\) and \(i\leq N\), every entry of
\(R(\varepsilon)\) has binary length
\(\mathcal O(L_{\rm enc}+N\kappa)\), while \(C\) is unchanged.  Summed over all
entries, the encoding length is \(\mathcal O(L_{\rm enc}+N^3\kappa)\), which is
polynomial in \(N+L_{\rm enc}\).  This completes the proof.
\end{proof}

\subsection{Exact decoding and the hardness theorem}
\label{subsec:exact-decoding}

The perturbation changes the probabilities in an equilibrium, so one cannot
decode by simply rescaling the column player's strategy.  The support,
however, contains all the required exact information.

\begin{lemma}[Support decoder]
\label{lem:support-decoder}
Let \((\mathbf p,\mathbf q)\) be an exact Nash equilibrium of
\(\mathcal G(\varepsilon)\).  Its row and column supports are equal to a
set
\[
       K=I\cup\{\star\}.
\]
The corresponding solution of the original LCP is recovered exactly by
\begin{equation}
      \mathbf x_I=M_{I,I}^{-1}\mathbf b_I,
      \qquad
      \mathbf x_{[m]\setminus I}=\zero.
      \label{eq:support-decoder}
\end{equation}
This decoding takes polynomial time.
\end{lemma}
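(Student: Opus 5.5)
The plan is to read off everything from Lemma~\ref{lem:effective-perturbation} and the earlier correspondences, without using the perturbed probabilities themselves.

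\textbf{Step 1: the supports.} Lemma~\ref{lem:effective-perturbation} gives a one to one correspondence between equilibria of $\mathcal G(\varepsilon)$ and those of $\mathcal G(0)=(R,C)$. Corresponding equilibria have the same row and column supports. The game $(R,C)$ is strategically equivalent to the imitation game $(S,I_N)$. By Lemma~\ref{lem:full-imitation}, every equilibrium of that game has the form $(\mathbf u(\mathbf x),\mathbf y(\mathbf x))$ for a unique LCP solution $\mathbf x$. Both of its supports equal $K=I(\mathbf x)\cup\{\star\}$: the row support by definition of $\mathbf u$, and the column support by \eqref{eq:quasi-strict-symmetric}, with $\star$ included by Lemma~\ref{lem:star-positive}. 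Hence the given equilibrium $(\mathbf p,\mathbf q)$ has both supports equal to such a $K$. Setting $I=K\setminus\{\star\}$ recovers $I=I(\mathbf x)$ for the corresponding solution $\mathbf x$.

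\textbf{Step 2: recovering $\mathbf x$.} By strict complementarity (Lemma~\ref{lem:lcp-strict-complementarity}) we have $w_I=0$ and $\mathbf x_{[m]\setminus I}=\zero$, so $M_{I,I}\mathbf x_I=\mathbf b_I$, as in \eqref{eq:active-lcp-system}. The matrix $M_{I,I}$ is nonsingular by Lemma~\ref{lem:lcp-fixed-point-determinant} together with regularity of the fixed point from Theorem~\ref{thm:axis-unconditional-map}. Therefore \eqref{eq:support-decoder} returns exactly this $\mathbf x$. The only subtle point is that the decoder must not depend on the perturbed $\mathbf q$. It does not: it uses only $K$, which is read off as the set of positive coordinates of $\mathbf p$ (or of $\mathbf q$).

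\textbf{Step 3: running time.} Reading $K$ from the exact rational vector $\mathbf p$ takes linear time. Solving the rational linear system $M_{I,I}\mathbf x_I=\mathbf b_I$ by Gaussian elimination is polynomial, with polynomial bit length by Cramer's rule and Hadamard's bound, as in Lemma~\ref{lem:lcp-bit-decoder}.

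The main obstacle is Step 1: justifying that the supports of a perturbed equilibrium are exactly the unperturbed supports. This is supplied directly by the support-preservation clause of Lemma~\ref{lem:effective-perturbation}, so the argument reduces to citing it correctly and checking that $\star\in K$.
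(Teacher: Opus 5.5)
Your proposal is correct and follows essentially the same route as the paper. You transfer the support pair from $\mathcal G(\varepsilon)$ to $\mathcal G(0)$ via Lemma~\ref{lem:effective-perturbation}, identify it as $(K,K)$ with $K=I(\mathbf x)\cup\{\star\}$ via Lemma~\ref{lem:full-imitation}, and then solve the nonsingular active system $M_{I,I}\mathbf x_I=\mathbf b_I$ using Lemma~\ref{lem:lcp-fixed-point-determinant}. (A minor point: $\star\in K$ is already immediate from the formula for $\mathbf y(\mathbf x)$, whose $\star$-coordinate is $1/(1+\one^{\transpose}\mathbf x)>0$, so your appeal to Lemma~\ref{lem:star-positive} at that step is harmless but not needed.)
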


\begin{proof}
By Lemma~\ref{lem:effective-perturbation}, \((\mathbf p,\mathbf q)\) has
the same support pair as a unique equilibrium of \(\mathcal G(0)\).  By
Lemma~\ref{lem:full-imitation}, that support pair is \((K,K)\), where
\(K=I(\mathbf x)\cup\{\star\}\) for a unique LCP solution
\(\mathbf x\).  On the active coordinates, complementarity gives
\(M_{I,I}\mathbf x_I=\mathbf b_I\), and
Lemma~\ref{lem:lcp-fixed-point-determinant} makes \(M_{I,I}\)
nonsingular.  Equation~\eqref{eq:support-decoder} therefore recovers
\(\mathbf x\) uniquely.  The solve is over a rational matrix of polynomial
dimension and encoding length, and hence takes polynomial time.
\end{proof}

We can now complete the reduction.

\begin{statementbox}
\begin{theorem}[\PPADS-hardness]
\label{thm:positive-index-hardness}
There is a polynomial-time reduction from \SoL\ to the following problem:
given a rational nondegenerate bimatrix game, find a Nash equilibrium of
index \(+1\).  In particular, every output instance of the reduction
satisfies the nondegeneracy promise.
\end{theorem}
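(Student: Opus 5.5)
The reduction composes the three stages already established; the proof mainly chains their guarantees and checks polynomial-time computability and correct sign transport. Given a \SoL\ instance, we normalize it (Definition~\ref{def:sink}) and apply Theorem~\ref{thm:axis-unconditional-map} to obtain in polynomial time a strict, regular \LinFIXP\ circuit for $\Phi$. Its fixed points are the sinks and noncanonical sources of $H$, with index $+1$ exactly at sinks. Next we pass to normal form and build $M=A-UV^{\transpose}$ and $\mathbf b$ (Lemma~\ref{lem:gate-normal-form}); both have polynomial encoding length. We then form $S$ from \eqref{eq:mehta-matrix}, the shifted game $(R,C)$ of \eqref{eq:positive-shift}, and the perturbed game $\mathcal G(\varepsilon)$ with $\varepsilon=2^{-\kappa}$, where $\kappa$ is computed as in Lemma~\ref{lem:effective-perturbation}. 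The output instance is $\mathcal G(\varepsilon)$. It is globally nondegenerate by part 1 of that lemma, so the promise holds for \emph{every} input instance. This is the key point: the degenerate normalization fallback also outputs a fixed instance that we send through the same pipeline, so it too yields a nondegenerate game.

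For correctness, we take any exact equilibrium $(\mathbf p,\mathbf q)$ of $\mathcal G(\varepsilon)$ with index $+1$. Lemma~\ref{lem:effective-perturbation} matches it with a unique equilibrium of $\mathcal G(0)$ having the same support pair $(K,K)$ and the same index; shifting by $c$ preserves indices, so this is an equilibrium of $(S,I_N)$. By Lemma~\ref{lem:full-imitation} it corresponds to an LCP solution $\mathbf x$ with $K=I(\mathbf x)\cup\{\star\}$. Lemma~\ref{lem:lcp-game-index} together with Lemma~\ref{lem:lcp-fixed-point-determinant} gives
\[
+1=\sign\det M_{I,I}=\ind_\Phi(\boldsymbol\lambda),
\qquad \boldsymbol\lambda=V^{\transpose}\mathbf x .
\]
The decoder then runs the support decoder (Lemma~\ref{lem:support-decoder}) to recover $\mathbf x$ exactly and computes $\boldsymbol\lambda$. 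It applies the certified decoder (Lemma~\ref{lem:lcp-bit-decoder}, item 4 of Theorem~\ref{thm:axis-unconditional-map}) to get $x$ with $P(S(x))\ne x$. If normalization was trivial, the decoder returns $0^n$ instead.

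The main obstacle is not any single computation, since every ingredient has been proved, but the bookkeeping of the composition. Three things must be checked. First, the sign is genuinely transported at every step, with no parity loss from the shift or the perturbation; the bordered-determinant shift invariance \eqref{eq:shift-invariance} and the sign-preservation clause of Lemma~\ref{lem:effective-perturbation} handle this. Second, the map from equilibria back to fixed points is total on the output instance, meaning there are no extra equilibria; parts 2 and 3 of Lemma~\ref{lem:effective-perturbation} and the anchoring of Lemma~\ref{lem:star-positive} handle this. Third, all sizes, including $\kappa$, $N=m+1$, and the entries of $R(\varepsilon)$, stay polynomial. I would verify these three points explicitly and then conclude hardness; combined with Theorem~\ref{thm:membership}, this yields Theorem~\ref{thm:main-intro}.
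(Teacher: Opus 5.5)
Your proposal is correct and follows the paper's proof essentially step for step. You compose Theorem~\ref{thm:axis-unconditional-map}, the strictly complementary LCP of Section~\ref{sec:fixp-to-lcp}, the imitation game, and the perturbation of Lemma~\ref{lem:effective-perturbation}, and you decode a $+1$ equilibrium through its support via Lemmas~\ref{lem:support-decoder}, \ref{lem:lcp-game-index}, and \ref{lem:lcp-fixed-point-determinant}; your explicit treatment of the normalization fallback is bookkeeping that the paper leaves to its remark after Definition~\ref{def:sink}.
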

\end{statementbox}
\begin{proof}
Given a normalized \SoL\ instance,
Theorem~\ref{thm:axis-unconditional-map} provides a rational
max-affine map and a circuit that is strict at every fixed point. Its
positive index fixed points
correspond precisely to the sinks,
and Section~\ref{sec:fixp-to-lcp} constructs the LCP
\eqref{eq:lcp-game-input}.  Its solutions are strictly complementary and
regular, satisfy the output anchoring of Lemma~\ref{lem:output-anchoring},
and are in one to one correspondence with the fixed points of the map.
For a solution with active set \(I\),
Lemma~\ref{lem:lcp-fixed-point-determinant} gives
\[
      \sign\det M_{I,I}
       =\sign\det(I_d-D\Phi(\boldsymbol{\lambda})).
\]

We construct the imitation game \((S,I_N)\), shift the row payoffs to obtain the game
\((R,I_N)\), and apply Lemma~\ref{lem:effective-perturbation}.  The resulting game
\(\mathcal G(\varepsilon)\) is rational, globally
nondegenerate, and has polynomial encoding length.  If
\((\mathbf p,\mathbf q)\) is an equilibrium of index \(+1\),
Lemmas~\ref{lem:effective-perturbation}, \ref{lem:support-decoder}, and
\ref{lem:lcp-game-index} give
\[
 \begin{split}
   +1
      &=\ind(\mathbf p,\mathbf q)\\
      &=\sign\det M_{I,I}\\
       &=\sign\det(I_d-D\Phi(\boldsymbol{\lambda})).
 \end{split}
\]
Thus \(\boldsymbol{\lambda}\) is a positive index fixed point and corresponds
to a sink.  Lemma~\ref{lem:support-decoder} recovers the exact LCP solution from
the equilibrium support, and Theorem~\ref{thm:axis-unconditional-map}
recovers the corresponding \SoL\ output in polynomial time.
\end{proof}

\begin{proof}[Proof of Theorem~\ref{thm:main-intro}]
Membership is Theorem~\ref{thm:membership}, and hardness is
Theorem~\ref{thm:positive-index-hardness}.  The former is a
promise preserving reduction on nondegenerate games, while the latter always
outputs a rational globally nondegenerate game, so together they prove the
stated completeness.
\end{proof}

\subsection{Allowing nonregular equilibria}
\label{subsec:positive-or-nonregular}

We prove the extension to arbitrary bimatrix games stated in
Corollary~\ref{cor:positive-or-nonregular}.  The reduction below constructs
a rational nondegenerate perturbation of the input game and converts
every positive-index equilibrium of that perturbation into an exact
equilibrium of the input game.  The decoded equilibrium may have smaller
supports, but if it is regular, its index is still \(+1\).


\paragraph{Roadmap for the promise-free membership argument.}
The hardness direction needs no modification: the reduction of
Theorem~\ref{thm:positive-index-hardness} already outputs globally
nondegenerate games.  The only new issue is membership for an arbitrary,
possibly degenerate, input game.  The argument has four ingredients.
First, we perturb both payoff matrices by distinct powers of one scalar
parameter~\(t\).  In the complementary formulation this leaves the
constraint matrix fixed and moves only the right-hand side.  Second, we
choose one explicit rational value \(\varepsilon>0\) for which every basis
of the resulting linear system has a fixed sign pattern on
\((0,\varepsilon]\).  This makes every perturbed game in that interval
globally nondegenerate.  Third, we apply the signed Lemke--Howson
membership construction at \(t=\varepsilon\) and retain the complementary
basis of the returned positive-index equilibrium.  Finally, we keep this
same basis while sending \(t\) to zero.  Its basic variables are rational
polynomials in \(t\); their signs cannot change for \(t>0\), and evaluating
them at \(t=0\) gives an exact equilibrium of the original game.

The last step is where allowing a nonregular output becomes essential.  A
basic variable that is positive for every \(t>0\) may become zero exactly
at \(t=0\).  If this causes a supported strategy to disappear while it
remains a best response, the decoded equilibrium is nonregular and is
therefore already an accepted output.  Otherwise the support does not
shrink, and the positive diagonal scalings used in the perturbation
preserve the determinant signs in Shapley's formula, so the decoded
regular equilibrium has index \(+1\).

\begin{proof}[Proof of Corollary~\ref{cor:positive-or-nonregular}]
Hardness follows from Theorem~\ref{thm:positive-index-hardness}: its output games are globally nondegenerate, so all their equilibria are regular.
Therefore, allowing nonregular equilibria introduces no additional
valid outputs on those instances.

For membership, we give a polynomial-time reduction to the
nondegenerate problem covered by Theorem~\ref{thm:membership}.

\paragraph{Positive payoffs.}
Let \(A,B\in\mathbb Q^{m\times n}\) be the input matrices and let
\(d=m+n\).  We add
\[
 c_A=1+\max_{i,j}|A_{ij}|,
 \qquad
 c_B=1+\max_{i,j}|B_{ij}|
\]
to every entry of the respective payoff matrices.  These shifts have
polynomial encoding length and make every payoff strictly positive.
They preserve all best-response comparisons, hence all equilibria and
all strict off-support inequalities.  They also preserve the bordered
determinants, since
$\Delta(M+c\one\one^{\transpose})=\Delta(M)$.
It follows that the shifts
preserve regularity and the indices of regular equilibria.
We continue to denote the shifted, strictly positive matrices by
\(A,B\).

\paragraph{The perturbation and its complementarity formulation.}
For \(0\leq t<1\), define
\[
 \alpha_i(t)=1+t^i\quad(i\in[m]),
 \qquad
 \beta_j(t)=1+t^{m+j}\quad(j\in[n]),
\]
and set
\begin{equation}\label{eq:general-game-perturbation}
 \begin{aligned}
 D_A(t)&=\operatorname{diag}(\alpha_1(t),\ldots,\alpha_m(t)),\\
 D_B(t)&=\operatorname{diag}(\beta_1(t),\ldots,\beta_n(t)),\\
 A(t)&=D_A(t)^{-1}A,
 & B(t)&=BD_B(t)^{-1}.
 \end{aligned}
\end{equation}
Write \(\mathcal G(t)=(A(t),B(t))\).  In particular,
\(\mathcal G(0)=(A,B)\).

Let
\[
 C=\begin{pmatrix}0&A\\ B^{\transpose}&0\end{pmatrix},
 \qquad
 q(t)=\binom{\alpha(t)}{\beta(t)}
     =\one+(t,t^2,\ldots,t^d)^{\transpose},
 \qquad
 z=\binom{u}{v},
\]
where \(u\in\mathbb R^m\) and \(v\in\mathbb R^n\).  The matrix \(C\)
is independent of \(t\).  Now consider
\begin{equation}\label{eq:general-game-complementarity}
 z\geq\zero,\qquad
 s=q(t)-Cz\geq\zero,\qquad
 z_rs_r=0\quad(r\in[d]).
\end{equation}
Its feasible polytope, before imposing the product equalities, is
\[
 \mathcal R(t)
 =\{z\geq\zero:Cz\leq q(t)\}
 =\{u\geq\zero:B^{\transpose}u\leq\beta(t)\}
  \times
  \{v\geq\zero:Av\leq\alpha(t)\}.
\]

We verify directly the correspondence with Nash equilibria.
In any nonzero solution of
\eqref{eq:general-game-complementarity}, both \(u\) and \(v\) are
nonzero.  Indeed, \(u=\zero\) would make every column-player slack
equal to \(\beta_j(t)>0\), forcing \(v=\zero\); the converse follows
in the same way.  Thus
\[
 U=\one^{\transpose}u>0,\qquad
 V=\one^{\transpose}v>0,\qquad
 x=u/U,\qquad y=v/V
\]
are well defined.  Dividing the row inequalities by
\(\alpha_i(t)V\) gives
\[
 (A(t)y)_i\leq 1/V,
 \qquad
 (A(t)y)_i=1/V\quad\text{whenever }x_i>0.
\]
Likewise,
\[
 (B(t)^{\transpose}x)_j\leq 1/U,
 \qquad
 (B(t)^{\transpose}x)_j=1/U
       \quad\text{whenever }y_j>0.
\]
Each player therefore assigns positive probability only to best
responses, so \((x,y)\) is a Nash equilibrium of \(\mathcal G(t)\).

Conversely, let \((x,y)\) be an equilibrium of \(\mathcal G(t)\)
with payoffs
\[
 a=x^{\transpose}A(t)y>0,
 \qquad
 b=x^{\transpose}B(t)y>0.
\]
Then \(u=x/b\) and \(v=y/a\) satisfy
\eqref{eq:general-game-complementarity}.  This proves the claimed
correspondence, including the normalization used by the decoder.

\paragraph{Bases and their basic variables.}
We write the feasibility constraints in standard form as
\[
 W\binom{s}{z}=q(t),\qquad s,z\geq\zero,
 \qquad W=[\,I_d\ \ C\,].
\]
Thus there are \(d\) equations and \(2d\) nonnegative variables.
A \emph{basis} \(\mathcal B\) is a choice of \(d\) linearly independent
columns of \(W\).  The variables corresponding to those columns are
the \emph{basic variables}.  The remaining, \emph{nonbasic variables}
are set to zero, after which the basic variables are uniquely
determined by
\[
 \xi_{\mathcal B}(t)=W_{\mathcal B}^{-1}q(t).
\]
A basis is \textit{feasible} at \(t\) if every coordinate of
\(\xi_{\mathcal B}(t)\) is nonnegative.  Basic variables need not be
positive, or even nonnegative, for an arbitrary basis.

\paragraph{An effective choice of the perturbation size.}
We write every entry \(W_{rh}\) in reduced rational form as
\(p_{rh}/b_{rh}\), where \(b_{rh}\geq1\); integers, including zero,
have denominator \(1\).  Define
\[
 \delta=\prod_{r=1}^{d}\prod_{h=1}^{2d}b_{rh},
 \qquad
 \overline W=\delta W,
 \qquad
 H=\max\{1,\delta,\max_{r,h}|\overline W_{rh}|\},
 \qquad
 H_0=d!\,H^d.
\]
Thus \(\delta\) is a positive integer that clears all denominators,
and \(\overline W\) is an integer matrix.

For every basis \(\mathcal B\), the adjugate formula gives
\begin{equation}\label{eq:general-basis-polynomials}
 W_{\mathcal B}^{-1}q(t)
 =
 \frac{\operatorname{adj}(\overline W_{\mathcal B})\,
             \delta q(t)}
      {\det\overline W_{\mathcal B}}.
\end{equation}
The denominator is a fixed nonzero integer.  Each coordinate of the
numerator is an integer polynomial of degree at most \(d\).
Moreover, none of these polynomials is identically zero.  To see
this, let \((c_1,\ldots,c_d)\) be the corresponding row of
\(\operatorname{adj}(\overline W_{\mathcal B})\).  This row is
nonzero because the adjugate of an invertible matrix is invertible.
Its numerator polynomial is
\[
 f(t)=\delta\sum_{r=1}^{d}c_r
      +\delta\sum_{r=1}^{d}c_rt^r.
\]
If \(f\) were identically zero, the coefficient of \(t^r\) would
give \(\delta c_r=0\) for every \(r\), a contradiction.

Expanding a cofactor as a sum over permutations bounds its absolute
value by \((d-1)!\,H^{d-1}\).  Since \(\delta\leq H\), each
nonconstant coefficient of \(f\) has absolute value at most
\((d-1)!\,H^d\).  The constant coefficient is a sum of \(d\) such
terms, so every coefficient has absolute value at most \(H_0\).
These bounds hold simultaneously for every basis, including
infeasible bases.

Choose
\begin{equation}\label{eq:general-perturbation-size}
 \kappa=\left\lceil\log_2\bigl(4(d+1)H_0\bigr)\right\rceil,
 \qquad
 \varepsilon=2^{-\kappa}.
\end{equation}
For any numerator polynomial
\(f(t)=\sum_{\nu=0}^{d}a_\nu t^\nu\), let \(\nu_0\) be the
smallest index with \(a_{\nu_0}\neq0\).  For
\(0<t\leq\varepsilon<1\),
\[
 f(t)=t^{\nu_0}
 \left(a_{\nu_0}
       +\sum_{\nu>\nu_0}a_\nu t^{\nu-\nu_0}\right),
 \qquad
 \left|\sum_{\nu>\nu_0}a_\nu t^{\nu-\nu_0}\right|
 \leq dH_0t
 \leq\frac{d}{4(d+1)}
 <1\leq|a_{\nu_0}|.
\]
The final inequality uses the fact that \(a_{\nu_0}\) is a
nonzero integer.  Therefore, \(f(t)\) never vanishes on
\((0,\varepsilon]\), and its sign throughout that interval is the
sign of \(a_{\nu_0}\).  By
\eqref{eq:general-basis-polynomials}, every basic variable of every
basis has a nonzero, constant sign on this interval.  It follows that
a basis feasible at one parameter in \((0,\varepsilon]\) is feasible
throughout the interval, with every basic variable strictly positive.

This is the only place where we need a bound that is uniform over
\emph{all} bases.  The reduction never enumerates these bases.  The
uniform coefficient bound above merely lets us choose one value of
\(\varepsilon\) that works simultaneously for every possible basis that
could arise later.  Once the signed Lemke--Howson procedure returns one
particular equilibrium at \(t=\varepsilon\), the continuation argument
will use only the finitely many basic-variable polynomials belonging to
that one complementary basis.  Keeping these two uses of the polynomial
argument separate will be useful below.

All quantities used to compute \(\varepsilon\) have polynomial
binary length.  Indeed, \(\log_2\delta\) is bounded by the sum of
the binary lengths of the denominators of \(W\); the entries of
\(\overline W\) and the integer \(H\) therefore have polynomial
binary length.  Furthermore,
$\log_2 H_0=\mathcal O(d\log(d+1)+d\log H)$,
and  $\kappa=\mathcal O(d\log(d+1)+d\log H)$.
These integers can be computed in polynomial time, without
enumerating the bases.  Since the powers in \(q(t)\) have degree at
most \(d\), the entries of \(q(\varepsilon)\), and hence of
\(A(\varepsilon)\) and \(B(\varepsilon)\), also have polynomial
binary length and can be computed exactly in polynomial time.

\paragraph{The perturbed game is globally nondegenerate.}
Fix \(0<t\leq\varepsilon\).  The polytope \(\mathcal R(t)\) is
bounded: positivity of the payoff entries gives,
\[
 0\leq u_i\leq\frac{\beta_1(t)}{B_{i1}},
 \qquad
 0\leq v_j\leq\frac{\alpha_1(t)}{A_{1j}}.
\]
It is full dimensional because a sufficiently small vector
\(z>\zero\) satisfies \(Cz<q(t)\).

The affine map \(z\mapsto(q(t)-Cz,z)\) identifies
\(\mathcal R(t)\) with the standard-form polytope
\[
 \{\xi\geq\zero:W\xi=q(t)\}.
\]
At a vertex of this latter polytope, the columns of \(W\)
corresponding to positive coordinates of \(\xi\) must be linearly
independent.  Otherwise a nonzero vector \(h\), supported on those
coordinates and satisfying \(Wh=\zero\), would make both
\(\xi+\eta h\) and \(\xi-\eta h\) feasible for sufficiently small
\(\eta>0\), contradicting extremality.  Extend those independent
columns to a basis of \(W\).  The resulting basic solution is
\(\xi\), and the sign argument above says that all its \(d\) basic
coordinates are strictly positive.  Hence every vertex has exactly
\(d\) positive coordinates and exactly \(d\) zero coordinates.

A zero coordinate \(z_r\) means that the corresponding nonnegativity
inequality binds; a zero coordinate \(s_r\) means that the
corresponding payoff inequality binds.  Thus exactly \(d\) defining
inequalities bind at each vertex of \(\mathcal R(t)\).
No feasible point can have more than \(d\) binding inequalities.

We now deduce game nondegeneracy directly.  Suppose a mixed row
strategy \(x\) with support size \(h\) had \(b>h\) pure column best
responses in \(\mathcal G(t)\).  We set
\[
 p=\max_j(B(t)^{\transpose}x)_j>0,\qquad u=x/p,\qquad v=\zero.
\]
Then \((u,v)\in\mathcal R(t)\).  At this point, \(m-h\)
nonnegativity inequalities for \(u\), all \(n\) nonnegativity
inequalities for \(v\), and \(b\) column-payoff inequalities bind.
Their total is \(d-h+b>d\), a contradiction.  Interchanging the
players gives the same conclusion for every mixed column strategy.
Thus no mixed strategy has more pure best responses than the size
of its support, which is precisely global nondegeneracy.

\paragraph{The signed Lemke--Howson step at \(t=\varepsilon\).}
We recall the precise consequence of the nondegenerate membership theorem
that is used here.  For a rational nondegenerate bimatrix game, shift the
payoffs to be positive and consider the usual labeled product of the two
best-response polytopes.  Fixing one missing label gives a complementary
pivot graph whose components are paths and cycles.  Nondegeneracy makes
every internal vertex incident to exactly two pivot edges and every
completely labeled vertex an endpoint.  The Shapley--Todd determinant
orientation directs each path from its negative endpoint to its positive
endpoint; the artificial completely labeled origin has sign \(-1\), while
a genuine equilibrium endpoint has sign equal to its Shapley index.
Consequently, orienting each pivot edge by exact determinant signs gives a
polynomial-time \(\SoL\) instance in which every sink decodes to a Nash
equilibrium of index \(+1\).  A node is represented by its complementary
basis, and its predecessor and successor are obtained by the exact ratio
test, so all computations use rational arithmetic of polynomial bit
length.

We apply this construction only to the nondegenerate game
\(\mathcal G(\varepsilon)\).  Nothing in the argument below requires an
orientation of the possibly degenerate game at \(t=0\); instead, the output
basis at \(t=\varepsilon\) will be continued algebraically to the original
game.

\paragraph{Obtaining a complementary basis from the perturbed equilibrium.}
We apply Theorem~\ref{thm:membership} to the rational nondegenerate
game \(\mathcal G(\varepsilon)\).  It produces, in polynomial time,
a \SoL\ instance whose every solution decodes to an equilibrium
\((x^\varepsilon,y^\varepsilon)\) of index \(+1\).
Let
\[
 I=\supp(x^\varepsilon),\qquad
 J=\supp(y^\varepsilon).
\]
At equilibrium, every supported strategy is a best response.
Nondegeneracy therefore gives \(|I|=|J|=k\).  It also shows that there are no unused best
responses.

Let \((s^\varepsilon,z^\varepsilon)\) be the associated nonzero
complementarity solution.  Complementarity makes at least one
coordinate in each pair \((s_r^\varepsilon,z_r^\varepsilon)\)
equal to zero.  Hence at least \(d\) inequalities bind at
\(z^\varepsilon\), and the preceding argument shows that exactly
\(d\) bind.  This point must be a vertex: otherwise its minimal
face would have positive dimension, and a vertex of that face
would satisfy an additional binding inequality.  This would give
more than \(d\) binding inequalities, which is impossible.

Define
\[
 K=I\cup\{m+j:j\in J\}\subseteq[d],
 \qquad K^c=[d]\setminus K.
\]
The positive variables at this vertex are exactly
$z_r$ for $r\in K$ and
$s_r$ for $r\in K^c$.
Their columns form a basis \(\mathcal B\) of \(W\), called here
the complementary basis.  To see explicitly what its
nonsingularity implies, order its columns as \(z_K,s_{K^c}\)
and its rows as \(K,K^c\).  Its matrix then has the form
\[
 W_{\mathcal{B}}=\begin{pmatrix}
  C_{K,K}&0\\
  C_{K^c,K}&I_{d-2k}
 \end{pmatrix},
 \qquad
 C_{K,K}=
 \begin{pmatrix}
  0&A_{I,J}\\
  (B_{I,J})^{\transpose}&0
 \end{pmatrix}.
\]
Since $W_{\mathcal{B}}$ is nonsingular, it follows that \(C_{K,K}\) is nonsingular, and consequently both
\(A_{I,J}\) and \((B_{I,J})^{\transpose}\) are nonsingular.

\paragraph{Exact continuation and decoding.}
Keep the complementary basis \(\mathcal B\) obtained above fixed and vary
only the right-hand side \(q(t)\).  Its nonbasic variables are
\(s_K\) and \(z_{K^c}\), and hence they remain identically zero.  Solving
the basic equalities gives
\begin{equation}\label{eq:general-support-continuation}
 \begin{aligned}
 u_I(t)&=((B_{I,J})^{\transpose})^{-1}\beta_J(t),
 &u_{[m]\setminus I}(t)&=\zero,\\
 v_J(t)&=A_{I,J}^{-1}\alpha_I(t),
 &v_{[n]\setminus J}(t)&=\zero.
 \end{aligned}
\end{equation}
The remaining basic variables are the off-support slacks (i.e., corresponding to a pure strategy that is not used with positive probability in the equilibrium).  To make the
continuation explicit, for \(i\notin I\) and \(j\notin J\) define
\begin{equation}\label{eq:general-continuation-slacks}
 \begin{aligned}
 r_i(t)&:=\alpha_i(t)-A_{i,J}v_J(t),\\
 c_j(t)&:=\beta_j(t)-B_{I,j}^{\transpose}u_I(t).
 \end{aligned}
\end{equation}
Here \(r_i(t)\) is the slack of the row-payoff inequality indexed by
\(i\), and \(c_j(t)\) is the slack of the column-payoff inequality indexed
by \(j\).  On the support, the complementary slacks vanish identically:
\begin{equation}\label{eq:general-supported-slacks-zero}
 \alpha_I(t)-A_{I,J}v_J(t)\equiv\zero,
 \qquad
 \beta_J(t)-(B_{I,J})^{\transpose}u_I(t)\equiv\zero.
\end{equation}
Thus the \(d\) basic variables of \(\mathcal B\) are precisely
\[
 \{u_i(t):i\in I\}\cup\{v_j(t):j\in J\}
 \cup\{r_i(t):i\notin I\}\cup\{c_j(t):j\notin J\}.
\]

The matrices \(A_{I,J}^{-1}\) and
\(((B_{I,J})^{\transpose})^{-1}\) are fixed rational matrices, while
each coordinate of \(\alpha(t)\) and \(\beta(t)\) has the form
\(1+t^h\).  Therefore, every coordinate of \(u_I(t)\) and
\(v_J(t)\) is a polynomial in \(t\) with rational coefficients.  The same
is then true of every slack \(r_i(t)\) and \(c_j(t)\) in
\eqref{eq:general-continuation-slacks}.  Equivalently, these are exactly
the basic-variable polynomials obtained from
\eqref{eq:general-basis-polynomials} for the returned basis
\(\mathcal B\).

At \(t=\varepsilon\), the equilibrium of the nondegenerate game has
support pair \((I,J)\) and no unused best responses.  Hence
\[
 u_i(\varepsilon)>0\quad(i\in I),\qquad
 v_j(\varepsilon)>0\quad(j\in J),
\]
and
\[
 r_i(\varepsilon)>0\quad(i\notin I),\qquad
 c_j(\varepsilon)>0\quad(j\notin J).
\]
The uniform sign-stability argument above applies to every one of these
basic-variable polynomials.  None can vanish on \((0,\varepsilon]\), and
therefore
\begin{equation}\label{eq:general-positive-continuation}
 \begin{aligned}
 u_i(t)&>0 &&(i\in I),&
 v_j(t)&>0 &&(j\in J),\\
 r_i(t)&>0 &&(i\notin I),&
 c_j(t)&>0 &&(j\notin J)
 \end{aligned}
 \qquad (0<t\leq\varepsilon).
\end{equation}
Together with \eqref{eq:general-supported-slacks-zero}, this proves
feasibility and complementarity for the same basis throughout
\((0,\varepsilon]\).  In particular, after normalization we obtain the mixed strategies $x$ and $y$ whose support pair
is exactly \((I,J)\) for every positive parameter in this interval.

This explicit polynomial description is stronger than an abstract
continuity argument.  We do not need to select a convergent subsequence of
equilibria, nor do we need to approximate a limit numerically.  The basis
is fixed once at \(t=\varepsilon\), and all of its variables are explicit
polynomials whose values at the original game are obtained by direct
substitution \(t=0\).

\paragraph{Evaluation at the original game.}
Substituting \(t=0\) in
\eqref{eq:general-support-continuation} gives
\begin{equation}\label{eq:general-exact-decoder}
 \begin{aligned}
 u_I^0&=((B_{I,J})^{\transpose})^{-1}\one,
 &u_{[m]\setminus I}^0&=\zero,\\
 v_J^0&=A_{I,J}^{-1}\one,
 &v_{[n]\setminus J}^0&=\zero.
 \end{aligned}
\end{equation}
Since all the functions in \eqref{eq:general-positive-continuation} are
polynomials, they are continuous at zero.  Taking \(t\downarrow0\) in the
strict inequalities therefore gives
\[
 u_i^0\geq0\quad(i\in I),\qquad
 v_j^0\geq0\quad(j\in J),
\]
and, with
\[
 r_i^0:=1-A_{i,J}v_J^0,
 \qquad
 c_j^0:=1-B_{I,j}^{\transpose}u_I^0,
\]
we obtain
\[
 r_i^0\geq0\quad(i\notin I),\qquad
 c_j^0\geq0\quad(j\notin J).
\]
The supported slack equalities
\eqref{eq:general-supported-slacks-zero} remain exact at zero.  Hence, if
\(z^0=(u^0,v^0)\) and \(s^0=\one-Cz^0\), then
\[
 z^0\geq\zero,\qquad s^0\geq\zero,
 \qquad z_r^0s_r^0=0\quad(r\in[d]).
\]
Thus \((s^0,z^0)\) is an exact complementary solution for the original
shifted game.

Some of the basic variables in
\eqref{eq:general-positive-continuation} may become zero at exactly
\(t=0\); the nonvanishing statement was only for positive \(t\).  This is
precisely the phenomenon that can make the decoded equilibrium degenerate.
Nevertheless,
\[
 A_{I,J}v_J^0=\one,
 \qquad
 (B_{I,J})^{\transpose}u_I^0=\one,
\]
so neither \(u^0\) nor \(v^0\) is zero.  Consequently
\[
 x^0=\frac{u^0}{\one^{\transpose}u^0},
 \qquad
 y^0=\frac{v^0}{\one^{\transpose}v^0}
\]
is an exact Nash equilibrium of \(\mathcal G(0)\), and hence of the
original, unshifted input game.

The decoder only needs the supports \(I,J\) and the two rational linear
systems in \eqref{eq:general-exact-decoder}.  Their matrices have
polynomial dimension and binary length.  After clearing denominators,
Cramer's rule and the standard determinant bound give polynomial binary
length for every coordinate of \(u^0,v^0\), and therefore for the
normalized strategies \(x^0,y^0\).  All linear solves and comparisons are
exact rational computations, so the decoder runs in polynomial time.

\paragraph{What can change at \(t=0\).}
For every positive \(t\leq\varepsilon\), the support pair is exactly
\((I,J)\).  At \(t=0\), however, one or more of the supported-coordinate
polynomials \(u_i(t)\) or \(v_j(t)\) may evaluate to zero, and one or more
off-support slacks may also evaluate to zero.  Both events are harmless for
the search problem.  If a supported probability disappears, the
corresponding pure strategy still satisfies the equal-payoff equation in
\eqref{eq:general-supported-slacks-zero}; it therefore remains a best
response while being unused.  Likewise, a vanishing off-support slack
creates an unused best response directly.  Either event makes the decoded
equilibrium nonregular.  Thus the only case in which further work is
needed is when \((x^0,y^0)\) is regular; regularity then forces the support
and strict off-support inequalities to be exactly the ones present for
\(t>0\).

\paragraph{A regular decoded equilibrium has index \(+1\).}
If \((x^0,y^0)\) is nonregular, it is an accepted output.
Suppose instead that it is regular.  Its supports are contained
in \(I,J\).  For every \(i\in I\), the equality
\((Av^0)_i=1\) still holds, and every row payoff is at most \(1\).
Thus all strategies in \(I\) remain best responses after
normalization.  Similarly, all strategies in \(J\) remain best
responses.  Since a regular equilibrium has no unused best
response, no supported probability can have disappeared:
\[
 \supp(x^0)=I,\qquad \supp(y^0)=J.
\]

For completeness, the index comparison can be made directly with
bordered determinants.  For every nonsingular square matrix \(M\),
the block determinant identity gives
\[
 \Delta(M)=\det(M)\,\one^{\transpose}M^{-1}\one.
\]
For \(0\leq t\leq\varepsilon\), the support matrices are
nonsingular and
\[
 \begin{aligned}
 (A(t)_{I,J})^{-1}\one&=v_J(t),\\
 ((B(t)_{I,J})^{\transpose})^{-1}\one&=u_I(t).
 \end{aligned}
\]
Both coordinate sums are positive, including at \(t=0\).
Moreover, the scalings in
\eqref{eq:general-game-perturbation} give
\[
 \det A(t)_{I,J}
   =\frac{\det A_{I,J}}{\prod_{i\in I}\alpha_i(t)},
 \qquad
 \det B(t)_{I,J}
   =\frac{\det B_{I,J}}{\prod_{j\in J}\beta_j(t)}.
\]
All factors in these denominators are positive.  Then it follows that
\[
 \begin{aligned}
 \ind_{A,B}(x^0,y^0)
 &=(-1)^{k+1}\sign\!\left(\det A_{I,J}\det B_{I,J}\right)\\
 &=\ind_{A(\varepsilon),B(\varepsilon)}
            (x^\varepsilon,y^\varepsilon)
 =+1.
 \end{aligned}
\]
The initial payoff shifts preserve this index.  Thus every
solution of the constructed \SoL\ instance decodes to a valid
output of \PINNR.

\paragraph{Verification and polynomial bounds.}

The construction of \(\mathcal G(\varepsilon)\), the reduction
from that game to \SoL, and the exact decoder are all polynomial
time.  The decoder also proves the existence of an accepted
rational output of polynomial encoding length for every input
game.  This establishes membership in \PPADS\ without a promise.
Together with the hardness argument, it proves the corollary.
\end{proof}
\section{Conclusions and open questions}\label{sec:conclusion}

We proved that computing an exact Nash equilibrium of index \(+1\) in a
rational bimatrix game promised to be nondegenerate is
\(\PPADS\)-complete.  The hardness reduction
always produces a game satisfying the nondegeneracy promise.  Membership uses the ordinary
oriented Lemke--Howson graph: for each fixed missing label, nondegeneracy makes
the corresponding complementary pivot graph a disjoint union of paths and
cycles, and its determinant orientation directs every path from the endpoint
of incidence sign \(-1\) to the endpoint of incidence sign \(+1\).  In
particular, membership requires no perturbation.

For hardness, a route in dimension \(2n+4\) determines colors on a fixed
grid. Interpolation gives a map with one regular fixed point at each sink
and noncanonical source, with indices \(+1\) and \(-1\), respectively.
The canonical source contributes none, and the boundary coloring excludes
additional fixed points from clipping. The exact interpolation circuit is
strict at every fixed point. The subsequent LCP and imitation game
reductions preserve this correspondence and its indices. A final rational
perturbation gives global nondegeneracy while preserving supports and
indices, allowing every positive index equilibrium to decode a sink.

An immediate question is whether
there is a natural approximate version for which \(\PPADS\)-hardness holds at
inverse polynomial accuracy. Finally, it would also be interesting to seek analogous sink only completeness results for standard equilibrium refinements, or for restricted classes of bimatrix games in which index has additional economic or combinatorial structure.

\paragraph{Acknowledgments.} Ioannis Panageas and Jingming Yan were supported by NSF CCF-2454115. 
This research was also supported in part by project MIS 5154714 of the National Recovery and Resilience Plan Greece 2.0 funded by the European Union under the NextGenerationEU Program. 
We acknowledge the use of ChatGPT 5.6 and 6 for assistance with auditing the paper, Lemmas from Section 4, Lemma 6.6, and Section 6.5. The authors have rewritten the relevant portions and take full responsibility for the content and correctness of the final manuscript.

\begingroup
\small
\bibliographystyle{plainnat}
\bibliography{main}

@article{Nash1951,
  author  = {Nash, John F.},
  title   = {Non-Cooperative Games},
  journal = {Annals of Mathematics},
  volume  = {54},
  number  = {2},
  pages   = {286--295},
  year    = {1951},
  doi     = {10.2307/1969529}
}

@article{MegiddoPapadimitriou1991,
  author  = {Megiddo, Nimrod and Papadimitriou, Christos H.},
  title   = {On Total Functions, Existence Theorems and Computational Complexity},
  journal = {Theoretical Computer Science},
  volume  = {81},
  number  = {2},
  pages   = {317--324},
  year    = {1991},
  doi     = {10.1016/0304-3975(91)90200-L}
}

@article{Papadimitriou1994,
  author  = {Papadimitriou, Christos H.},
  title   = {On the Complexity of the Parity Argument and Other Inefficient Proofs of Existence},
  journal = {Journal of Computer and System Sciences},
  volume  = {48},
  number  = {3},
  pages   = {498--532},
  year    = {1994},
  doi     = {10.1016/S0022-0000(05)80063-7}
}

@article{BeameEtAl1998,
  author  = {Beame, Paul and Cook, Stephen A. and Edmonds, Jeff and Impagliazzo, Russell and Pitassi, Toniann},
  title   = {The Relative Complexity of {NP} Search Problems},
  journal = {Journal of Computer and System Sciences},
  volume  = {57},
  number  = {1},
  pages   = {3--19},
  year    = {1998},
  doi     = {10.1006/jcss.1998.1575}
}

@article{GoldbergPapadimitriou2018,
  author  = {Goldberg, Paul W. and Papadimitriou, Christos H.},
  title   = {Towards a Unified Complexity Theory of Total Functions},
  journal = {Journal of Computer and System Sciences},
  volume  = {94},
  pages   = {167--192},
  year    = {2018},
  doi     = {10.1016/j.jcss.2017.12.003}
}

@article{LemkeHowson1964,
  author  = {Lemke, Carlton E. and Howson, Jr., Joseph T.},
  title   = {Equilibrium Points of Bimatrix Games},
  journal = {Journal of the Society for Industrial and Applied Mathematics},
  volume  = {12},
  number  = {2},
  pages   = {413--423},
  year    = {1964},
  doi     = {10.1137/0112033}
}

@incollection{Shapley1974,
  author    = {Shapley, Lloyd S.},
  title     = {A Note on the {Lemke--Howson} Algorithm},
  booktitle = {Pivoting and Extensions},
  editor    = {Balinski, Michel L.},
  series    = {Mathematical Programming Studies},
  volume    = {1},
  pages     = {175--189},
  publisher = {North-Holland},
  year      = {1974},
  doi       = {10.1007/BFb0121248}
}

@article{Harsanyi1973,
  author  = {Harsanyi, John C.},
  title   = {Oddness of the Number of Equilibrium Points: A New Proof},
  journal = {International Journal of Game Theory},
  volume  = {2},
  pages   = {235--250},
  year    = {1973},
  doi     = {10.1007/BF01737572}
}

@article{Jansen1981,
  author  = {Jansen, M. J. M.},
  title   = {Regularity and Stability of Equilibrium Points of Bimatrix Games},
  journal = {Mathematics of Operations Research},
  volume  = {6},
  number  = {4},
  pages   = {530--550},
  year    = {1981},
  doi     = {10.1287/moor.6.4.530}
}

@article{Todd1976,
  author  = {Todd, Michael J.},
  title   = {Orientation in Complementary Pivot Algorithms},
  journal = {Mathematics of Operations Research},
  volume  = {1},
  number  = {1},
  pages   = {54--66},
  year    = {1976},
  doi     = {10.1287/moor.1.1.54}
}

@book{CottlePangStone1992,
  author    = {Cottle, Richard W. and Pang, Jong-Shi and Stone, Richard E.},
  title     = {The Linear Complementarity Problem},
  publisher = {Academic Press},
  address   = {San Diego},
  year      = {1992}
}

@book{Schrijver1986,
  author    = {Schrijver, Alexander},
  title     = {Theory of Linear and Integer Programming},
  publisher = {John Wiley \& Sons},
  address   = {Chichester},
  year      = {1986}
}

@article{DaskalakisGoldbergPapadimitriou2009,
  author  = {Daskalakis, Constantinos and Goldberg, Paul W. and Papadimitriou, Christos H.},
  title   = {The Complexity of Computing a {Nash} Equilibrium},
  journal = {SIAM Journal on Computing},
  volume  = {39},
  number  = {1},
  pages   = {195--259},
  year    = {2009},
  doi     = {10.1137/070699652}
}

@article{ChenDengTeng2009,
  author  = {Chen, Xi and Deng, Xiaotie and Teng, Shang-Hua},
  title   = {Settling the Complexity of Computing Two-Player {Nash} Equilibria},
  journal = {Journal of the ACM},
  volume  = {56},
  number  = {3},
  pages   = {14:1--14:57},
  year    = {2009},
  doi     = {10.1145/1516512.1516516}
}

@article{SavaniVonStengel2006,
  author  = {Savani, Rahul and von Stengel, Bernhard},
  title   = {Hard-to-Solve Bimatrix Games},
  journal = {Econometrica},
  volume  = {74},
  number  = {2},
  pages   = {397--429},
  year    = {2006},
  doi     = {10.1111/j.1468-0262.2006.00667.x}
}

@article{VeghVonStengel2015,
  author  = {V{\'e}gh, L{\'a}szl{\'o} A. and von Stengel, Bernhard},
  title   = {Oriented {Euler} Complexes and Signed Perfect Matchings},
  journal = {Mathematical Programming},
  volume  = {150},
  pages   = {153--178},
  year    = {2015},
  doi     = {10.1007/s10107-014-0770-4}
}

@article{vonStengel2021,
  author  = {von Stengel, Bernhard},
  title   = {Finding {Nash} Equilibria of Two-Player Games},
  journal = {arXiv preprint arXiv:2102.04580},
  year    = {2021},
  doi     = {10.48550/arXiv.2102.04580}
}

@article{EtessamiYannakakis2010,
  author  = {Etessami, Kousha and Yannakakis, Mihalis},
  title   = {On the Complexity of {Nash} Equilibria and Other Fixed Points},
  journal = {SIAM Journal on Computing},
  volume  = {39},
  number  = {6},
  pages   = {2531--2597},
  year    = {2010},
  doi     = {10.1137/080720826}
}

@article{Mehta2018,
  author  = {Mehta, Ruta},
  title   = {Constant Rank Two-Player Games are {PPAD}-Hard},
  journal = {SIAM Journal on Computing},
  volume  = {47},
  number  = {5},
  pages   = {1858--1887},
  year    = {2018},
  doi     = {10.1137/15M1032338}
}

@article{McLennanTourky2010,
  author  = {McLennan, Andrew and Tourky, Rabee},
  title   = {Imitation Games and Computation},
  journal = {Games and Economic Behavior},
  volume  = {70},
  number  = {1},
  pages   = {4--11},
  year    = {2010},
  doi     = {10.1016/j.geb.2009.08.003}
}

@inproceedings{FriedlEtAl2006,
  author    = {Friedl, Katalin and Ivanyos, G{\'a}bor and Santha, Mikl{\'o}s and Verhoeven, Yves F.},
  title     = {Locally 2-Dimensional {Sperner} Problems Complete for the Polynomial Parity Argument Classes},
  booktitle = {Algorithms and Complexity (CIAC 2006)},
  series    = {Lecture Notes in Computer Science},
  volume    = {3998},
  pages     = {380--391},
  publisher = {Springer},
  year      = {2006},
  doi       = {10.1007/11758471_36}
}

@techreport{HollenderGoldberg2018,
  author      = {Hollender, Alexandros and Goldberg, Paul W.},
  title       = {The Complexity of Multi-source Variants of the {End-of-Line} Problem, and the Concise Mutilated Chessboard},
  institution = {Electronic Colloquium on Computational Complexity},
  number      = {TR18-120},
  year        = {2018},
  url         = {https://eccc.weizmann.ac.il/report/2018/120/}
}

@incollection{Daskalakis2019,
  author    = {Daskalakis, Constantinos},
  title     = {Equilibria, Fixed Points, and Computational Complexity---{Nevanlinna Prize Lecture}},
  booktitle = {Proceedings of the International Congress of Mathematicians},
  volume    = {1},
  pages     = {147--210},
  publisher = {World Scientific},
  year      = {2019},
  doi       = {10.1142/9789813272880_0009}
}

@article{GoosEtAl2024,
  author  = {G{\"o}{\"o}s, Mika and Hollender, Alexandros and Jain, Siddhartha and Maystre, Gilbert and Pires, William and Robere, Robert and Tao, Ran},
  title   = {Further Collapses in {TFNP}},
  journal = {SIAM Journal on Computing},
  volume  = {53},
  number  = {3},
  pages   = {573--587},
  year    = {2024},
  doi     = {10.1137/22M1498346}
}

@article{Du2013,
  author  = {Du, Ye},
  title   = {On the Complexity of Deciding Degeneracy in a Bimatrix Game with Sparse Payoff Matrix},
  journal = {Theoretical Computer Science},
  volume  = {472},
  pages   = {104--109},
  year    = {2013},
  doi     = {10.1016/j.tcs.2012.10.053}
}

@article{IckstadtTheobaldVonStengel2025,
  author  = {Ickstadt, Constantin and Theobald, Thorsten and von Stengel, Bernhard},
  title   = {A Stable-Set Bound and Maximal Numbers of {Nash} Equilibria in Bimatrix Games},
  journal = {Mathematics of Operations Research},
  year    = {2025},
  doi     = {10.1287/moor.2024.0809}
}

@inproceedings{VonSchemdeVonStengel2008,
  author    = {{von Schemde}, Arndt and {von Stengel}, Bernhard},
  title     = {Strategic Characterization of the Index of an Equilibrium},
  booktitle = {Algorithmic Game Theory},
  series    = {Lecture Notes in Computer Science},
  volume    = {4997},
  pages     = {242--254},
  publisher = {Springer},
  year      = {2008},
  doi       = {10.1007/978-3-540-79309-0_22}
}

@article{GovindanLarakiPahl2023,
  author  = {Govindan, Srihari and Laraki, Rida and Pahl, Lucas},
  title   = {On Sustainable Equilibria},
  journal = {Journal of Economic Theory},
  volume  = {213},
  pages   = {105736},
  year    = {2023},
  doi     = {10.1016/j.jet.2023.105736}
}

@article{Balthasar2010,
  author  = {Balthasar, Anne},
  title   = {Equilibrium Tracing in Strategic-Form Games},
  journal = {Economic Theory},
  volume  = {42},
  pages   = {39--54},
  year    = {2010},
  doi     = {10.1007/s00199-009-0442-4}
}

@inproceedings{GhoshGoldbergHollender2026,
  author    = {Ghosh, Abheek and Goldberg, Paul W. and Hollender, Alexandros},
  title     = {The Complexity of Computing a Unique {Nash} Equilibrium},
  booktitle = {Proceedings of the 67th Annual IEEE Symposium on Foundations of
               Computer Science (FOCS)},
  year      = {2026},
  publisher = {IEEE}
}

@article{GoldbergPapadimitriouSavani2013,
  author    = {Goldberg, Paul W. and Papadimitriou, Christos H. and Savani, Rahul},
  title     = {The Complexity of the Homotopy Method, Equilibrium Selection,
               and {Lemke--Howson} Solutions},
  journal   = {ACM Transactions on Economics and Computation},
  volume    = {1},
  number    = {2},
  pages     = {1--25},
  year      = {2013},
  doi       = {10.1145/2465769.2465774}
}

@inproceedings{DaskalakisPapadimitriou2011,
  author    = {Daskalakis, Constantinos and Papadimitriou, Christos H.},
  title     = {Continuous Local Search},
  booktitle = {Proceedings of the Twenty-Second Annual ACM-SIAM Symposium on Discrete Algorithms},
  pages     = {790--804},
  publisher = {SIAM},
  year      = {2011},
  doi       = {10.1137/1.9781611973082.62}
}

@article{FearnleyGordonMehtaSavani2020,
  author  = {Fearnley, John and Gordon, Spencer and Mehta, Ruta and Savani, Rahul},
  title   = {Unique End of Potential Line},
  journal = {Journal of Computer and System Sciences},
  volume  = {114},
  pages   = {1--35},
  year    = {2020},
  doi     = {10.1016/j.jcss.2020.05.007}
}

@article{FearnleyGoldbergHollenderSavani2023,
  author  = {Fearnley, John and Goldberg, Paul W. and Hollender, Alexandros and Savani, Rahul},
  title   = {The Complexity of Gradient Descent: {CLS} = {PPAD} $\cap$ {PLS}},
  journal = {Journal of the ACM},
  volume  = {70},
  number  = {1},
  pages   = {7:1--7:74},
  year    = {2023},
  doi     = {10.1145/3568163}
}
\endgroup

\end{document}